\documentclass[a4paper,UKenglish,cleveref, autoref, thm-restate]{lipics-v2021}
\usepackage{macros}
\usepackage{figures}
\title{Tight Bounds for some Classical Problems Parameterized by Cutwidth}

\hideLIPIcs

\nolinenumbers

\author{Narek Bojikian}{Humboldt-Universität zu Berlin, Germany}{bojikian@hu-berlin.de}{https://orcid.org/0000-0003-1072-4873}{}

\author{Vera Chekan}{Humboldt-Universität zu Berlin, Germany}{vera.chekan@informatik.hu-berlin.de}{https://orcid.org/0000-0002-6165-1566}{}

\author{Stefan Kratsch}{Humboldt-Universität zu Berlin, Germany}{kratsch@informatik.hu-berlin.de}{https://orcid.org/0000-0002-0193-7239}{}

\authorrunning{N.\ Bojikian, V.\ Chekan, and S.\ Kratsch} %TODO mandatory. First: Use abbreviated first/middle names. Second (only in severe cases): Use first author plus 'et al.'

\date{} 
\bibliographystyle{plainurl}

\ccsdesc[100]{Theory of computation~Parameterized complexity and exact algorithms}

\keywords{Parameterized complexity, cutwidth, Hamiltonian cycle, triangle packing, max cut, induced matching}

\begin{document}

\listoftodo
\maketitle

\begin{abstract} 
	Cutwidth is a widely studied parameter and it quantifies how well a graph can be decomposed along small edge-cuts.
	It complements pathwidth, which captures decomposition by small vertex separators, and it is well-known that cutwidth upper-bounds pathwidth.
	The SETH-tight parameterized complexity of problems on graphs of bounded pathwidth (and treewidth) has been actively studied over the past decade while for cutwidth the complexity of many classical problems remained open.
	
	For \textsc{Hamiltonian Cycle}, it is known that a $(2+\sqrt{2})^{\operatorname{pw}} n^{\bigoh(1)}$ algorithm is optimal for pathwidth under SETH~[Cygan et al.\ JACM 2022]. 
	Van Geffen et al.~[J.\ Graph Algorithms Appl.\ 2020] and Bojikian et al.~[STACS 2023] asked which running time is optimal for this problem parameterized by cutwidth.
	We answer this question with $(1+\sqrt{2})^{\operatorname{ctw}} n^{\bigoh(1)}$ by providing matching upper and lower bounds.
	Second, as our main technical contribution, we close the gap left by van Heck~[2018] 	for \textsc{Partition Into Triangles} (and \textsc{Triangle Packing}) by improving both upper and lower bound and getting a tight bound of $\sqrt[3]{3}^{\operatorname{ctw}} n^{\bigoh(1)}$, which to our knowledge exhibits the only known tight non-integral basis apart from \textsc{Hamiltonian Cycle}.
    We show that cuts inducing a disjoint union of paths of length three (unions of so-called $Z$-cuts) lie at the core of the complexity of the problem---usually lower-bound constructions use simpler cuts inducing either a matching or a disjoint union of bicliques.
	Finally, we determine the optimal running times for \textsc{Max Cut} ($2^{\operatorname{ctw}} n^{\bigoh(1)}$) and \textsc{Induced Matching} ($3^{\operatorname{ctw}} n^{\bigoh(1)}$) by providing matching lower bounds for the existing algorithms---the latter result also answers an open question for treewidth by Chaudhary and Zehavi~[WG 2023]. 
\end{abstract}

\section{Introduction}

In parameterized complexity the (worst-case) complexity of problems is expressed in terms of input size $n$ and one or more parameters, often denoted $k$. 
The parameter can, for example, be the size of the sought solution 
or some measure of the structure of the input.
The goal is to understand the influence of solution size or structure on the complexity.
A problem is said to be fixed-parameter tractable with parameter $k$ if it admits an algorithm with running time of $f(k) \cdot n^{\bigoh(1)}$ (also denoted by $\bigoh^*(f(k))$) for some computable function $f$. For $\mathsf{NP}$-hard problems, this function $f$ is usually exponential and it may be doubly exponential or worse.

This motivates a line of research devoted to the study of the smallest possible functions $f$ for various problem-parameter combinations.
In this context, one is often interested in $\mathsf{NP}$-hard problems and hence, conjectures stronger than $\mathsf{P} \neq \mathsf{NP}$ are assumed for conditional lower bounds. 
For example, it has been shown that unless the Exponential Time Hypothesis (ETH) fails, some problems do not admit algorithms with running time $\bigoh^*(c^k)$ for any constant $c$ (e.g., \cite{doi:10.1137/16M1104834})---algorithms with such a running time are called \emph{single-exponential}.
For problems admitting single-exponential algorithms, it is natural to search for the smallest value of $c$ for which such an algorithm exists.
An even stronger conjecture called the Strong Exponential Time Hypothesis (SETH) has been assumed to prove for many problems that existing algorithms with some running time $\bigoh^*(c^k)$ are essentially optimal, i.e., for any $\varepsilon > 0$, there is no algorithm for this problem running in time $\bigoh^*((c - \varepsilon)^k)$.
This conjecture states, informally speaking, that the \textsc{SAT} problem cannot be solved much more efficiently than brute-forcing all truth-value assignments.

SETH-tight complexity of problems parameterized by treewidth has been actively studied over the last decade. 
This was initiated by Lokshtanov et al.~\cite{DBLP:journals/talg/LokshtanovMS18} who showed that for many classical graph problems (e.g., \textsc{Independent Set} or \textsc{Max Cut}) the folklore dynamic-programming (DP) algorithms are essentially optimal under SETH.
In parallel, there is also a line of research devoted to accelerating the existing DP algorithms by employing more careful analysis and advanced tools like fast subset convolution (e.g., \cite{DBLP:conf/stoc/BjorklundHKK07,DBLP:conf/csr/Rooij21,DBLP:journals/talg/BjorklundHKKNP16}), Discrete Fourier Transform (e.g., \cite{DBLP:conf/birthday/Rooij20}), rank-based approach (e.g., \cite{DBLP:journals/iandc/BodlaenderCKN15}), isolation lemma (\cite{DBLP:journals/combinatorica/MulmuleyVV87}), and Cut\&Count (e.g., \cite{DBLP:journals/corr/abs-1103-0534/CyganNPPRW11}), we refer to the survey by Nederlof~\cite{DBLP:conf/birthday/Nederlof20} for more details.   
 
Such dynamic-programming algorithms on graphs of bounded treewidth employ the fact that those graphs can be decomposed along small vertex separators and therefore, when processing this decomposition in a bottom-up way, one only needs to remember how a partial solution interacts with the current small separator, also called a bag.
Thus it is also natural to study parameters based on small edge-cuts (as edge-counterparts of vertex separators).
A linear arrangement of a graph places its vertices on a horizontal line so that no two vertices have the same $x$-coordinate.
Now suppose every edge is drawn as an $x$-monotone curve, then the cutwidth of this linear arrangement is the maximum number of edges crossing any vertical line---observe that the edges crossing such a line separate the vertices on the left side of the vertical line from the vertices on the right side so they form an edge-cut. 
The cutwidth of the graph, denoted $\operatorname{ctw}$, is then the minimum over all of its linear arrangements.
It is well-known that pathwidth, denoted $\operatorname{pw}$, can be defined similarly but instead of the edges crossing the vertical line, one counts its end-vertices on, say, the right side of the cut.
In particular, this implies that the cutwidth of a graph upper-bounds its pathwidth. 

Due to this relation, every $\bigoh^*(f(\operatorname{pw}))$ algorithm is also a  $\bigoh^*(f(\operatorname{ctw}))$ algorithm.
However, it is possible that a problem admits a more efficient algorithm when parameterized by cutwidth than when parameterized by pathwidth.
For example, \textsc{Edge Disjoint Paths} is $\mathsf{paraNP}$-hard for treewidth~\cite{NISHIZEKI2001177} but becomes $\mathsf{FPT}$ for cutwidth~\cite{DBLP:journals/algorithmica/GanianO21}.
On a finer level, i.e., in terms of SETH-tight bounds, the complexity parameterized by cutwidth is known for \textsc{Independent Set} and \textsc{Dominating Set}~\cite{DBLP:journals/jgaa/GeffenJKM20}, \textsc{Odd Cycle Transversal}~\cite{DBLP:conf/stacs/BojikianCHK23}, \textsc{Chromatic Number}~\cite{DBLP:journals/tcs/JansenN19}, \#$q$-\textsc{Coloring}~\cite{DBLP:conf/stacs/GroenlandMNS22}, as well as a list of connectivity problems (e.g., \textsc{Feedback Vertex Set} and \textsc{Steiner Tree})~\cite{DBLP:conf/stacs/BojikianCHK23}.
The \textsc{Graph Coloring} problem exposes a particularly interesting behavior: for treewidth, it is known that for any $q \geq 3$, the folklore $\bigoh^*(q^{\operatorname{\tw}})$ algorithm is optimal under SETH~\cite{DBLP:journals/talg/LokshtanovMS18}, but for cutwidth Jansen and Nederlof~\cite{DBLP:journals/tcs/JansenN19} proved that there is a (randomized) algorithm computing the chromatic number of the graph in time $\bigoh^*(2^{\operatorname{\ctw}})$, i.e., independent of the number of colors in question.
Recently, a notable progress was also made for \textsc{List-$H$-Homomorphism}: Groenland et al.~\cite{DBLP:conf/icalp/GroenlandMNPR24} provided a non-algorithmic proof of the existence of so-called representative sets of certain small size on which a dynamic-programming algorithm can rely, and they also provided a lower-bound construction matching the size of these representative sets.
They leave it open, though, whether representative sets of small size can also be found efficiently.

The above-mentioned paper by Lokshtanov et al.~\cite{DBLP:journals/talg/LokshtanovMS18} provided SETH-tight lower bounds for six classical graph problems, namely \textsc{Independent Set}, \textsc{Dominating Set}, \textsc{Partition Into Triangles}, \textsc{Odd Cycle Transversal}, \textsc{$q$-Coloring} (for any fixed $q \geq 3$), and \textsc{Max Cut} when parameterized by treewidth \cite{DBLP:journals/talg/LokshtanovMS18}.
The SETH-tight complexity of these problems parameterized by cutwidth was only partially known till now. 

As the first contribution of our work, we resolve the complexity of the two remaining problems from this paper, namely \textsc{Partition Into Triangles} (and also \textsc{Triangle Packing}) as well as \textsc{Max Cut} when parameterized by cutwidth.
The study of \textsc{Partition into Triangles} parameterized by cutwidth was initiated by van Heck~\cite{thesisTrianglePacking}: it was shown that the problem admits a $\bigoh^*(\sqrt[4]{8}^{\operatorname{ctw}})$ algorithm and no $\bigoh^*((\sqrt{2} - \varepsilon)^{\operatorname{ctw}})$ algorithm exists for any $\varepsilon > 0$ unless SETH fails.
In this work, we improve both lower and upper bound: first, we show that \textsc{Triangle Packing} can be solved in $\bigoh^*(\sqrt[3]{3}^{\operatorname{ctw}})$ while no algorithm solves \textsc{Partition Into Triangles} quicker than this running time unless SETH fails.
There is a trivial reduction from \textsc{Partition Into Triangles} to \textsc{Triangle Packing} and hence, $\bigoh^*(\sqrt[3]{3}^{\operatorname{ctw}})$ is the SETH-optimal running time for both problems.
For \textsc{Max Cut} we provide a lower bound showing that no $\bigoh^*((2-\varepsilon)^{\operatorname{ctw}})$ algorithm can solve this problem for any $\varepsilon > 0$ implying that the folklore algorithm for tree decompositions is optimal for cutwidth as well.

Apart from those two problems, we solve two further open questions.
We show that SETH-tight complexity of \textsc{Hamiltonian Cycle} parameterized by cutwidth is $\bigoh^*((1+\sqrt{2})^{\operatorname{ctw}})$---this was asked by van Geffen et al.~\cite{DBLP:journals/jgaa/GeffenJKM20} and Bojikian et al.~\cite{DBLP:conf/stacs/BojikianCHK23}.
Let us remark, that although for pathwidth it is known that the $\bigoh^*((2+\sqrt{2})^{\operatorname{pw}})$ algorithm is optimal for \textsc{Hamiltonian Cycle}, the complexity relative to treewidth remains a challenging open problem.

Finally, Chaudhary and Zehavi~\cite{DBLP:conf/wg/ChaudharyZ23a} developed a $\bigoh^*(3^{\operatorname{tw}})$ algorithm for \textsc{Induced Matching} problem and a SETH-based lower bound excluding $\bigoh^*((\sqrt{6} - \varepsilon)^{\operatorname{ctw}})$ algorithms for any $\varepsilon > 0$.
They conjectured that their algorithm is optimal and asked for a matching lower bound.
We confirm their conjecture and provide a stronger result, namely that \textsc{Induced Matching} cannot be solved in $\bigoh^*((3 - \varepsilon)^{\operatorname{ctw}})$ for any $\varepsilon > 0$ when parameterized by cutwidth---this resolves the complexity of the problem for both treewidth and cutwidth.  

\begin{table}
\begin{center}
    \begin{tabular}{ |c|c|c|c| }
     \hline
     \textsc{Triangle Packing} (TP) & $2^{\tw}$ & $\sqrt[3]{3}^{\ctw}$\\
     \textsc{Partition into Trianlges} (PT) & $2^{\tw}$ & $\sqrt[3]{3}^{\ctw}$\\
     \textsc{Hamiltonian Cycle} (HC) & $(2+\sqrt{2})^{\pw}$ &$(1+\sqrt{2})^{\ctw}$\\
     \textsc{Max Cut} (MC) & $2^{\tw}$ &$2^{\ctw}$\\
     \textsc{Induced Matching} (IM) & $3^{\tw}$ & $3^{\ctw}$\\
     \hline
    \end{tabular}
    \end{center}
    \caption{Tight bounds for parameterizations by treewidth / pathwidth, and cutwidth. The results of this paper are in the right column. \label{table:results}}
\end{table}

\subparagraph{Technical contribution}
Given the strong relation between linear arrangements and path decompositions, many tight algorithms for problems parameterized by cutwidth start by constructing a path decomposition from a given linear arrangement. 
A crucial property is that, say, the right end-points of the edges of a cut form a vertex separator and therefore, a path decomposition of small width can be obtained by, essentially, creating one bag per cut of the linear arrangement.
Then one employs a dynamic-programming algorithm on this path decomposition and shows that small cutwidth of the original linear arrangement implies an upper bound on the number of states occurring in this algorithm. 
This approach was used, for example, in
\cite{DBLP:conf/stacs/BojikianCHK23, DBLP:conf/icalp/GroenlandMNPR24,DBLP:conf/stacs/GroenlandMNS22}, partly without calling it a path decomposition explicitly though.
This idea turns out to be useful for our algorithm for \textsc{Hamiltonian Cycle} as well.

For \textsc{Triangle Packing}, however, this approach fails to provide a tight bound: it can be shown that this ``straight-forward'' path decomposition may yield too many states, i.e., more than $\bigoh^*(\sqrt[3]{3}^{\operatorname{ctw}})$.
The main bottleneck for this problem are what we call \emph{Z-cuts} 
(i.e., a bipartite graph whose edge set is a path on three edges).
It can be shown that if the left side of this cut is used as a bag, then four states (i.e., four intersections with a triangle packing) are possible---this is too much for the desired $\bigoh^*(\sqrt[3]{3}^{\operatorname{ctw}})$ algorithm.
And if we use the right side of the cut instead, only three states are possible.
At the same time, there exists a family of cuts such that using the right side of the cut as a bag results in too many states as well. 
To overcome the issues of these extremal cases, we define a more involved path decomposition by carefully deciding which edges of the corresponding cut of the linear arrangement to process next, i.e., which vertices on the both sides of a cut to put into the bag. We achieve this by defining a degree constraint on the vertices of the cut to determine which vertices we are allowed to forget. 
Unfortunately, while the eventual DP itself is then straight-forward, the proof for not exceeding the time bound during the transition from one bag to the next becomes quite involved.
Altogether, we show that by carefully choosing not only which vertices of the cut to put into the bag but also in which ordering to do so allows to bound the number of possible states of each bag and obtain the desired tight algorithm.

Our lower-bound constructions for all problems follow the standard scheme for lower bounds for structural parameters as initiated in the work Lokshtanov et al.~\cite{DBLP:journals/talg/LokshtanovMS18}. 
As mentioned above, the $Z$-cuts constitute a bottleneck for our algorithm solving \textsc{Triangle Packing}. 
Our lower-bound construction relies on these cuts and justifies that $Z$-cuts are an obstacle inherent to the problem itself and are not just an artifact of our algorithm.
We show that a $Z$-cut can distinguish three different ``states'' of a solution while contributing three edges to the cutwidth. 
We emphasize that these cuts are novel as the cuts and separators in similar lower bounds 
are usually a single vertex, a single edge, or a biclique. 
The lower-bound construction for \textsc{Max Cut} follows the construction by Lokshtanov et al.~\cite{DBLP:journals/talg/LokshtanovMS18} for pathwidth but employs some adaptations to avoid too many paths between the same end-vertices leading to too large cutwidth.
For \textsc{Induced Matching} 
we define certain gadgets in such a way that any maximum induced matching has some specific structural properties and use these properties in our reduction.
Finally, our matching lower bound for \textsc{Hamiltonian Cycle} relies on the construction by Cygan et al.~\cite{DBLP:journals/jacm/CyganKN18} where we employ a careful splitting argument to obtain small edge-cuts instead of small vertex separators.

\subparagraph{Related Work}  
The (S)ETH tight complexity of problems parameterized by structural parameters has been widely studied.
Apart from the already mentioned papers, there is a long list of papers related to such algorithms on graphs of bounded treewidth (e.g., \cite{DBLP:conf/iwpec/BorradaileL16,DBLP:conf/soda/CurticapeanLN18,DBLP:conf/soda/CurticapeanM16,DBLP:journals/talg/CyganNPPRW22,DBLP:journals/corr/abs-2210-10677/EsmerFMR22,DBLP:conf/soda/FockeMINSSW23,DBLP:conf/soda/FockeMR22,DBLP:journals/dam/KatsikarelisLP22,DBLP:conf/esa/OkrasaPR20,DBLP:journals/siamcomp/OkrasaR21}), treedepth (e.g., \cite{HegerfeldK20,DBLP:journals/dam/KatsikarelisLP19,DBLP:journals/dam/KatsikarelisLP22,NederlofPSW20}), clique-width (e.g., \cite{DBLP:journals/tcs/BergougnouxK19,DBLP:journals/siamdm/BergougnouxK21,DBLP:journals/corr/abs-2307-14264/BojikianK23,DBLP:conf/icalp/GanianHKOS22,DBLP:conf/esa/HegerfeldK23,DBLP:journals/dam/KatsikarelisLP19,DBLP:journals/siamdm/Lampis20}), rank-width (e.g., \cite{DBLP:conf/stacs/BergougnouxKN23,Bui-XuanTV10}), and cutwidth (e.g., \cite{DBLP:conf/stacs/BojikianCHK23,DBLP:conf/icalp/MarxSS21}).
There is also a line of work devoted to conjectures weaker than SETH and yielding the same lower bounds for structural parameterizations (e.g., \cite{canesmer_et_al:LIPIcs.ICALP.2024.34,DBLP:conf/soda/Lampis25,lampis2024circuitsbackdoorsshadesseth}). 

\subparagraph{Organization}  
We start by providing a short summary of the used notation.
\cref{sec:tp} is devoted to \textsc{Triangle Packing}, there we provide our algorithm together with the main steps required to justify its running time.
After that we provide a lower bound matching this running time.
After that in
\cref{sec:hc-ub}
we present the 
algorithm and 
the lower bound 
for \textsc{Hamiltonian Cycle}.
Next in \cref{sec:mim} we constrcut the lower bound for \textsc{Induced Matching} and finally in \cref{sec:maxcut} we provide the lower bound for \textsc{Max Cut}.
We conclude in \cref{sec:conclusion} by providing some open questions.

\section{Preliminaries}
We use $\ostar$ notation to suppress factors polynomial in the input size.
We use $\bigoh_c$ notation to suppress any dependency on $c$. We only use this for constant values $c$ to emphasize that the hidden constant is independent of the input size.
For an integer $i \in \mathbb{N}_0$ by $[i]$ we denote the set $\{1, \dots, i\}$ (in particular, we have $[0] = \emptyset$) and by $[i]_0$ we denote the set $[i] \cup \{0\}$.
For a vector $a= (a_1, \dots, a_n)$ over a ground set $U$ and a mapping $f:U\rightarrow V$ for some set $V$, we denote by $\big(f(v)\big)_{v \in a}$ the vector $(f(a_1), \dots, f(a_n))$. 

Given a graph $G = (V,E)$ and a vertex $v\in V$, the neighborhood of $v$ in $G$ is defined as $N_G(v) = \{w\in V(G)\colon \{v,w\}\in E(G)\}$. We omit the index $G$ when clear from the context. For an edge set $F\subseteq E$, we define $G[F] = (V, F)$ and we define $V(F)$ as the set of end-points of $F$. We define $N_F(v) = N_{G[F]}(v)$ and $\deg_F(v) = |N_F(v)|$. For a vertex set $S\subseteq V$ and a vertex $v\in V$ we define $\deg_S(v) = N_G(v) \cap S$.
We define $\cc(G)$ as the set of all connected components of $G$, where a connected component of $G$ is a maximal connected subgraph of $G$.

We base our lower bound on the Strong Exponential Time Hypothesis (SETH)~\cite{DBLP:journals/jcss/ImpagliazzoP01}:
\begin{conjecture}[SETH]
	For every $\varepsilon > 0$, there exists a constant $d > 0$ such that $d$-SAT cannot be solved in time $\ostar((2-\varepsilon)^n)$, where $n$ is the number of variables.
\end{conjecture}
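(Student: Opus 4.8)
The statement above is the Strong Exponential Time Hypothesis, so strictly speaking there is no proof to sketch: SETH is a \emph{conjecture}, it is strictly stronger than $\mathsf{P}\neq\mathsf{NP}$, and any proof of it would in particular separate $\mathsf{P}$ from $\mathsf{NP}$, which is well beyond current techniques. What I can do instead is describe the standing of the hypothesis and the way the rest of the paper uses it, since all of the forthcoming lower bounds \emph{assume} SETH rather than establish it.

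For the plausibility of the hypothesis I would point to the long-standing failure to beat exhaustive search for CNF-SAT: for each fixed clause width $d$, the fastest known algorithms run in time $\ostar\!\big(2^{(1-c_d)n}\big)$ with savings $c_d = \Theta(1/d) \to 0$ as $d \to \infty$, and SETH asserts precisely that this decay is unavoidable, i.e.\ that no algorithm has savings bounded away from $0$ uniformly over all $d$. One should also recall that SETH implies the weaker Exponential Time Hypothesis (no $2^{o(n)}$-time algorithm for $3$-SAT); ETH already suffices to rule out running times of the shape $c^{k}$ for \emph{some} constant $c$, whereas SETH is exactly what is needed to pin down the \emph{precise} base $c$ in the $\ostar(c^{\operatorname{ctw}})$ bounds studied here.

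The way SETH feeds into the rest of the paper is through fine-grained, polynomial-time reductions. For a target problem with conjectured-tight base $c \in \{\,1+\sqrt{2},\ \sqrt[3]{3},\ 2,\ 3\,\}$, the plan in each lower bound is the following: start from a $d$-SAT instance on $n$ variables (for an arbitrary fixed $d$), partition the variables into blocks, and encode each block with a gadget that contributes roughly one unit of cutwidth per $\log_c 2$ encoded bits, so that the whole constructed instance is equivalent to the formula and has cutwidth at most $(\log_c 2)\,n + o(n)$. An $\ostar\!\big((c-\varepsilon)^{\operatorname{ctw}}\big)$ algorithm for the target problem would then solve $d$-SAT in time $\big((c-\varepsilon)^{\log_c 2}\big)^{n}\cdot 2^{o(n)} = \ostar\!\big((2-\varepsilon')^{n}\big)$ for some $\varepsilon' > 0$, and applying SETH with a clause width $d$ chosen for this $\varepsilon'$ yields the contradiction.

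The genuine obstacle is therefore not SETH itself but the gadget design inside these reductions: one must hold the cutwidth of the produced instance down to about $(\log_c 2)n$ while still transmitting a full truth assignment and checking every clause. For \textsc{Triangle Packing} this is exactly where the novel $Z$-cut gadgets are forced---a cut inducing only a matching or a disjoint union of bicliques would either distinguish too few solution states per unit of cutwidth or push the cutwidth above the target---and for \textsc{Max Cut} and \textsc{Hamiltonian Cycle} the delicate point is to avoid creating many parallel paths between the same pair of vertices, since such paths would inflate the cutwidth beyond $(\log_c 2)n$.
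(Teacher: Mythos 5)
You are correct that this is a conjecture, not a theorem: the paper also does not prove SETH but cites it from Impagliazzo and Paturi and uses it solely as an assumption for its lower bounds, exactly as you describe. Your account of how the reductions exploit SETH (via block-encoding of variables and the $d$-CSP-$B$ intermediate problem) matches the paper's methodology.
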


In order to prove some of our lower bounds, instead of reducing from the $d$-SAT problem, we reduce from the $d$-CSP-$B$ problem for some value $B$ that we choose to suit our reduction:

\begin{definition}
Let $d, B$ be two integers. The $d$-CSP-$B$ problem is defined as follows: Given a set $\Var$ of $n$ \emph{variables}, and a set $\mathcal{C}$ of $m$ \emph{constraints}, where each constraint is a $d$-ary relation over $\Var$, i.e.\ a constraint $C\in \mathcal{C}$ is a $d$-tuple $V_C = (v_1, \dots, v_d) \in \Var^d$ of variables (with $v_i \neq v_j$ for $i \neq j \in [d]$), and a set $R_C\subseteq \big([B-1]_0\big)^{d}$ of tuples of values. An \emph{assignment} $\pi:\Var \rightarrow [B-1]_0$ satisfies a constraint $C$, if $(\pi(v))_{v\in V_C} \in R_C$. The goal is to decide whether there exists an assignment that satisfies all constraints.
\end{definition}

We use the following result by Lampis~\cite{DBLP:journals/siamdm/Lampis20}:

\begin{theorem}[{\cite[Theorem 2]{DBLP:journals/siamdm/Lampis20}}]\label{lb:theo:lampis}
    For every $B\geq 2$ and every $\varepsilon > 0$, unless SETH fails, there exists a positive integer $d$ such that the $d$-CSP-$B$ problem cannot be solved in time $\ostar\big((B-\varepsilon)^{n}\big)$.
\end{theorem}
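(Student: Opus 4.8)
The plan is to reduce $d'$-SAT, for a suitable $d'$ supplied by SETH, to $d$-CSP-$B$ in a way that shrinks the number of variables by essentially a factor of $\log_2 B$: an $\ostar((B-\varepsilon)^{n'})$ algorithm for the CSP, run on an instance with $n' \approx n / \log_2 B$ variables, would then decide $d'$-SAT in time $\ostar((2-\varepsilon')^n)$ for some $\varepsilon' > 0$, contradicting SETH.

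I would start by fixing the parameters. Given $B \geq 2$ and $\varepsilon > 0$, I may assume $0 < \varepsilon < B - 1$ (for larger $\varepsilon$ the bound $(B-\varepsilon)^n \leq 1$ already follows from the case $\varepsilon = (B-1)/2$), so that $1 < B - \varepsilon < B$ and hence $\log_B 2 < \log_{B-\varepsilon} 2$. I then pick a block length $t$ large enough that, putting $k := \lceil t \log_B 2 \rceil$, the ratio $k/t$ still lies strictly below $\log_{B-\varepsilon} 2$; this is possible because $k/t \to \log_B 2$ as $t \to \infty$. From $k/t \geq \log_B 2$ we get $B^k \geq 2^t$, so there exists an injection $\phi \colon \{0,1\}^t \hookrightarrow \big([B-1]_0\big)^k$. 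Finally I invoke SETH with the value $\varepsilon' > 0$ defined by $2 - \varepsilon' := (B-\varepsilon)^{k/t} \in (1,2)$ to obtain a constant $d'$ for which $d'$-SAT has no $\ostar((2-\varepsilon')^n)$ algorithm, and I set $d := d' \cdot k$.

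The reduction itself: from a $d'$-SAT formula on $n$ variables (padded with unconstrained dummy variables so that $t \mid n$), group the variables into $n/t$ blocks of size $t$ and represent each block by $k$ fresh CSP variables over $[B-1]_0$, for a total of $n' = (k/t)\,n$ variables. For each block add one constraint on its $k$ CSP variables whose relation is exactly the image of $\phi$; this forces the tuple of every block to be a valid encoding, decoded by $\phi^{-1}$ to a Boolean assignment of that block. For each clause add one constraint on the at most $d'k = d$ CSP variables of the blocks it touches, accepting precisely the tuples for which either some block-tuple is not in the image of $\phi$ or else the $\phi^{-1}$-decoded partial assignment satisfies the clause; pad constraints of smaller arity to arity $d$ with dummy coordinates. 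Applying $\phi$ blockwise yields a bijection between the satisfying assignments of the formula and those of the constructed instance, and the construction runs in time polynomial in the formula size.

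A hypothetical $\ostar((B-\varepsilon)^{n'})$ algorithm for $d$-CSP-$B$ would then decide $d'$-SAT in time $\ostar\big((B-\varepsilon)^{(k/t)n}\big) = \ostar\big((2-\varepsilon')^n\big)$, contradicting the choice of $d'$; since $d$ depends only on $B$ and $\varepsilon$, this is exactly the statement. The one genuinely delicate point is the choice of the block length $t$: one must make the rounding overhead $k/t - \log_B 2$ forced by the requirement $B^k \geq 2^t$ smaller than the available slack $\log_{B-\varepsilon} 2 - \log_B 2$, and it is precisely this that pushes the resulting SAT exponent strictly below $2$ and makes the reduction SETH-tight.
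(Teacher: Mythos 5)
The paper does not prove this theorem---it is quoted directly from Lampis~\cite{DBLP:journals/siamdm/Lampis20}---so there is no in-paper argument to compare against. Your blockwise-encoding reduction is correct and is essentially the argument Lampis uses: group the Boolean variables into blocks of size $t$, encode each block by $k$ domain-$[B-1]_0$ variables with $B^k \geq 2^t$, and choose $t$ large enough that the rounding slack $k/t - \log_B 2$ stays strictly below $\log_{B-\varepsilon} 2 - \log_B 2$, which is exactly what pushes the resulting SAT exponent below $2$. The only step worth phrasing more carefully is the padding to arity $d$: the definition of $d$-CSP-$B$ used here requires the $d$ variables in each constraint to be pairwise distinct, so a constraint of smaller arity should be padded with fresh (or at least otherwise-unused, distinct) variables on which the relation is unconstrained, rather than with repeated or arbitrary ``dummy coordinates.''
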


\subparagraph{Cutwidth}
A linear arrangement $\ell= v_1,\dots, v_n$ of a graph $G=(V,E)$ is an ordering of~$V$. We define $V_i = \{v_1,\dots, v_i\}$ for $i\in[n]$, $V_0=\emptyset$, and $\overline{V}_i = V\setminus V_i$ for $i \in [n]_0$. We define the \emph{cut-graph} at $i \in [n]_0$ as the bipartite graph $H_i = G[V_i, \overline{V}_i]$. 
The set $E_i$ denotes the edge set of $H_i$.
The cutwidth of $\ell$ is defined as $\ctw(\ell) = \max_{i\in [n]} |E(H_i)|$. The cutwidth of $G$ is defined as $\ctw(G) = \min_{\ell} \ctw(\ell)$, where the minimum is taken over all linear arrangements of~$G$. 
We define $L_i$ and $R_i$ as the set of the left and right endpoints of edges in $E_i$, respectively. 
Finally, for $i \in [n]$, we define $Y_i = L_{i-1}\cup \{v_i\}$, i.e., $Y_i$ contains all left end-points of the edges of $E_{i-1}$ together with $v_i$.

\subparagraph{Path decompositions}
A \emph{path decomposition} of a graph $G$ is a pair $(P, \mathcal{B}:V(P)\rightarrow 2^V)$, where $P$ is a simple path and the following properties hold:
\begin{enumerate}
	\item For every vertex $v \in V(G)$, the set $\{x\in V(P)\colon v\in \mB(x)\}$ induces a non-empty connected subgraph of $P$.
	\item For every edge $\{u, v\} \in E(G)$, there exists a node $x \in V(P)$ with $\{u, v\} \subseteq \mB(x)$.
\end{enumerate}
Let $x_1,\dots, x_r$ be the \emph{nodes} of $P$ in the order they occur on $P$. 
The sets $\mB(x_1), \dots, \mB(x_r)$ are called \emph{bags}.
For every $x_i\in V(P)$, we define $B_{x_i} = \mB(x_i)$, $V_{x_i} = \cup_{j \leq i}\mB(x_i)$, and $G_{x_i} = G[V_{x_i}]$.
A path decomposition $(P, \mathcal{B})$ is \emph{nice}, if for any two consecutive nodes $x,x'$ on $P$, it holds that $|\mB(x)\Delta\mB(x')|\leq 1$, and for each endpoint $x$ of $P$ it holds that $\mB(x)=\emptyset$. Hence, one can define a nice path decomposition by providing a sequence of introduce- and forget-vertex operations. It is sometimes useful to have designated \emph{introduce-edge} operations as well, in this case, we call the path decomposition \emph{very nice}. For a node $x$ of a very nice path decomposition of a graph $G$, by $G_x = (V_x, E_x)$ we denote the (not necessarily induced) subgraph of $G$ whose vertex resp.\ edge set consists of the vertices resp.\ edges of $G$ introduced in this decomposition up to the node $x$.

\section{Triangle Packing}\label{sec:tp}
 A \emph{triangle packing} of a graph $G=(V,E)$ is a subgraph $T$ of $G$ such that each connected component of $T$ is a cycle of length three.  
 The \emph{size} of a triangle packing $T$ is the number of connected components of $T$.
 In the \Tpacp problem, given a graph $G$ and a positive integer $\budget$, we are asked whether there exists a triangle packing of size $\budget$ in $G$. 
 In the \Tpartp problem, we are asked whether a triangle packing in $G$ of size $|V|/3$ exists.
\begin{definition}
    A \emph{triangle packing} in a graph $G=(V,E)$ is a subgraph $T$ of $G$ such that each connected component of $T$ is a simple cycle of length three. The size of a triangle packing $T$ is the number of connected components of $T$.
    We define the \Tpacp and the \Tpartp problems as follows:
    
    \probdef{\Tpacp}{A graph $G=(V,E)$, and a positive integer $\budget$.}{Is there a triangle packing in $G$ of size $\budget$?}

    \vspace{.2em}

    \probdef{\Tpartp}{A graph $G=(V,E)$.}{Is there a triangle packing in $G$ of size $|V|/3$?}
\end{definition}

\subsection{Upper Bound}\label{subsec:tripack-algo}

The main result of this subsection can be stated as follows:
\begin{theorem}\label{thm:tripack}
    There exists an algorithm that given an instance $(G, \budget)$ of \Tpacp together with a linear arrangement of $G$ of width at most $\ctw$, runs in time $\ostar(\sqrt[3]{3}^{\ctw})$ and outputs whether $G$ admits a triangle packing of size $\budget$.
\end{theorem}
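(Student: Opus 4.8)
The plan is to construct a "very nice" path decomposition from the given linear arrangement, but not the naive one that just uses one side of each cut as a bag. Instead, I will interleave the introduction and forgetting of vertices around each cut $i$ according to a carefully chosen order, governed by a degree constraint on vertices of the cut. The DP on this path decomposition is then the standard triangle-packing DP where the state of a bag $B_x$ records, for each vertex in $B_x$, whether it is unused, reserved as an endpoint of a partial triangle (one pending edge), or reserved in the middle of a partial triangle (two pending edges) — plus a counter for how many triangles have been completed, tracked polynomially. Correctness of this DP is routine; the whole difficulty lies in bounding the number of reachable states at every node by $\sqrt[3]{3}^{\,\ctw}\cdot n^{O(1)}$.

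**Next I would** set up the bag-ordering. At cut $i$ of the linear arrangement, the relevant vertices are those incident to edges of $E_{i-1}$ on the left together with $v_i$ (the set $Y_i$ from the preliminaries), and the edges still "pending" are exactly those of $E_{i-1}$ plus the edges from $v_i$ going right. The key observation is that a vertex may be forgotten only once all its triangles' edges have been introduced, i.e., once it has no pending edge to the right; such a vertex contributes either a "used internally" or "unused" token, but crucially the number of joint states over a set of already-forgotten vertices collapses. I will impose a rule: before introducing $v_{i+1}$, forget every left vertex whose remaining right-degree in the cut drops to $0$, and among the remaining ones order the introduce/forget operations so that at every intermediate bag the number of "active" vertices with positive pending right-degree, weighted by the number of states each can be in, stays bounded. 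The per-vertex state counts are: a vertex with right-degree $\geq 2$ can be in $3$ states restricted to the cut; a vertex with right-degree $1$ in $2$ states; and once its degree is $0$ it carries no additional multiplicative factor beyond a global one. The bound $\sqrt[3]{3}$ per unit of cutwidth emerges because $3$ states "cost" one edge of the cut when the vertex has degree $\geq 1$ toward the relevant side, but the $Z$-cut analysis shows that naively using the left side of a $Z$-cut yields $4$ states on $3$ cut-edges ($4 > 3$), while using the right side yields only $3$ — hence the need to switch sides adaptively.

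**The hard part will be** the transition bound: showing that during the sequence of introduce/forget steps taking us from the bag associated with cut $i-1$ to that associated with cut $i$, no intermediate bag blows up the state count. I expect to proceed by a charging/potential argument. Define a potential $\Phi$ of a configuration of pending degrees on the cut vertices, roughly $\Phi = \sum_v \log_{\sqrt[3]{3}}(\text{number of cut-states of } v)$, and show $\Phi \le |E_i|$ plus lower-order terms, with the crucial point that the extremal cases — cuts inducing a matching, a disjoint union of bicliques, and the new $Z$-cuts (disjoint $P_4$'s) — are each individually analyzed to confirm $3^{1/3}$ is never exceeded. For a $Z$-cut one checks directly: the $P_4$ $a\!-\!b\!-\!c\!-\!d$ contributes $3$ edges; forgetting $a$ and $d$ first (the degree-$1$ ends) before $b,c$ keeps the active-state product at $3\cdot 2 \le 3^{1}\cdot 3^{2/3}$ — I will need the inequality $\sqrt[3]{3}^{3} = 3 \ge 3$ to hold with the right slack, and more generally that any union of paths of length $\le 3$ satisfies the bound. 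I would then handle arbitrary cuts by decomposing the cut-graph into its connected bipartite pieces and arguing the bound is additive in $\log$, reducing to a finite case analysis on small components (since a component contributing many edges also contributes proportionally many "active" vertices). Assembling the per-component inequalities and the forget-first rule gives the global $\ostar(\sqrt[3]{3}^{\,\ctw})$ bound, and combining with the polynomial bookkeeping of the completed-triangle counter and standard DP correctness completes the proof of \cref{thm:tripack}.
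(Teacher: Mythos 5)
Your high-level plan — build a non-naive path decomposition from the linear arrangement using a degree-governed interleaving of introduce/forget operations, then bound the number of DP states per component of the cut graph — is the right skeleton and matches the paper in spirit. However, the quantitative core is wrong where it matters most, and there is a concrete gap.

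First, the DP encoding: you propose a per-vertex three-state table (unused / one pending edge / two pending edges). For triangle packing this is both overcomplicated and detrimental: when a triangle's first vertex is forgotten, its other two vertices must already be in the bag (a forgotten vertex has all its edges introduced), so a triangle is committed atomically and there is never a genuine "one pending edge" state. The paper's table is a \emph{subset} $S\subseteq B_x$ recording which bag vertices are already bound into a triangle that contains some forgotten vertex — i.e., two states per vertex, not three. Your richer encoding only increases the state count you then have to bound. Second, and decisively, your per-vertex accounting does not reach the claimed bound. Take the $Z$-cut $P_4$ $a\!-\!b\!-\!c\!-\!d$ with $\{a,c\}$ on the left and $\{b,d\}$ on the right: it contributes three cut edges, so the target is $\sqrt[3]{3}^{\,3}=3$ states. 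You compute $3\cdot 2=6$ and compare it against $3^{1}\cdot 3^{2/3}=3^{5/3}\approx 6.24$, but $3^{5/3}$ is not the correct budget for a component with three edges; the budget is $3^{3/3}=3$, and $6>3$. So your charging fails on exactly the bottleneck cut you single out. More generally, your per-vertex rates (three states per vertex of right-degree $\ge 2$, two per vertex of right-degree $1$) give $3^{1/2}>\sqrt[3]{3}$ for degree-two vertices and $2>\sqrt[3]{3}$ for degree-one vertices, so a product-over-vertices bound cannot close. The paper instead proves a \emph{joint} per-component bound (its Lemma showing $|\mathcal T_i[C]|\le \sqrt[3]{3}^{\,m_i(C)}$) via a fairly delicate case analysis on the component's structure (tree vs.\ cyclic, presence of $L^2$-vertices, whether $|F|=|I|$, existence of degree-one $I$-leaves), not a per-vertex potential. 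Finally, your forget rule mentions only "forget when right-degree drops to $0$"; the paper's rule additionally allows forgetting a vertex $v$ with one unintroduced neighbor \emph{provided} $\deg(v)\le 2$ — the paper explicitly flags this extra condition as the crucial one for the bound — and tracks a marked/unmarked component distinction to control the transition bags. Without that precise rule and the component-wise (not vertex-wise) counting, the argument does not go through.
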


In the next subsection we will provide a matching lower bound proving the tightness of the algorithm.
Let $(G = (V,E), \budget)$ be an instance of \Tpacp and let $\ell = v_1, \dots, v_n$ be a linear arrangement of $G$ of cutwidth at most $\ctw$.
First, we describe a dynamic-programming algorithm over a nice path decomposition $(P, \mathcal{B})$ of $G$. 
After that, we construct a nice path decomposition of $G$ from the linear arrangement $\ell$ and show that it has certain useful properties, namely, that the number of possible states of our dynamic-programming algorithm is bounded for each bag of this decomposition.

\subsubsection*{Algorithm over a path decomposition}
Let $(P, \mathcal{B})$ be a nice path decomposition of $G$. For every node $x$ of $P$, every set $S \subseteq B_x$, and every integer $b$, we define the family $\mathcal{H}_x^b[S]$ of all triangle packings of $G_x$ of size $b$ whose intersection with $B_x$ is precisely $S$, and we also require that
each triangle in this packing contains at least one vertex of $V_x\setminus B_x$, i.e., at least one forgotten vertex.

\begin{definition}
    Let $x$ be some node of $\mathcal{B}$, $b\in [\lfloor\frac{n}{3}\rfloor]_0$ and $S\subseteq B_x$. We define the family
    \begin{align*}
        \mathcal{H}_x^b[S] = \{H \colon 
        &H \text{ is a triangle packing of $G_x$ of size $b$ s.t.}\\
        &V(H)\cap B_x = S \text{ and each triangle of $H$ contains a vertex in $V_x\setminus B_x$}\}.
    \end{align*}
\end{definition}

We define the set $\mathcal{S}^b_x$ consisting of all subsets $S \subseteq B_x$ such that $\mathcal{H}_x^b[S]$ is non-empty.
We call $\bigcup_{b \in \mathbb{N}} \mathcal{S}_x^b$ the set of \emph{realizable states} at $x$.
A straight-forward algorithm computing all sets $\mathcal{S}^b_x$ can be informally summarized as follows.
Let $x'$ be the node preceding $x$.
At every node $x$ forgetting a vertex, say $v$, we iterate over all triangles containing the vertex $v$ whose other end-points are still in the bag, iterate over all sets in $\mathcal{S}^b_{x'}$, and ``combine'' the two if they are vertex-disjoint.
For every $S \in \mathcal{S}^b_{x'}$ we also add $S \setminus \{v\}$ to $\mathcal{S}^b_x$ as the corresponding triangle packing of $G_{x'}$ remains ``valid'' in $G_x$.
At every introduce-node $x$ we just keep the family $\mathcal{S}^b_{x'}$.
Recall that the root-node, say $r$, of the nice path decomposition $(P, \mathcal{B})$ is an empty bag.
Then the graph $G$ admits a triangle packing of size $b$ if and only if $\emptyset \in \mathcal{S}_r^b$ holds.

We define the tables $T_x^b$ as the characteristic vectors of the sets $\mathcal{S}_x^b$.
After that,
we show that $S\in \mathcal{S}_x^b$ if and only if $\mathcal{H}_x^b[S]$ is not empty for all $S, b$ and $x$. It follows that $G$ contains a triangle packing of size $\budget$ if and only if $\emptyset \in \mathcal{S}_{x_r}^{\budget}$, where $x_r$ is the root of $\mathcal{B}$.

\begin{definition}\label{tripack::def:states}
    We define the tables $T_x^b\in\{0,1\}^{2^{B_x}}$, where for $S\subseteq B_x$ we define $T_x^b[S]$ recursively depending on the type of bag $B_x$. Let $x'$ be the node preceding $x$ in $\mathcal{B}$ if it exists (the child of $x$), and let $v$ be the vertex introduced or forgotten at $x$.
    \begin{itemize}
        \item Leaf bag: we define $T_x^b[\emptyset] = 1$ if $b=0$ and $T_x^b[\emptyset] = 0$ otherwise.
        \item Introduce vertex: We define $T_x^b[S] = T_{x'}^b[S]$ if $v\not\in S$, and $T_x^b[S] = 0$ otherwise.
        \item Forget vertex: we define 
        \[
        T_x^b[S] = T_{x'}^b[S] \lor T_{x'}^b[S \cup \{v\}] \lor \bigvee\limits_{
            \substack{u,w\subseteq S\\
            uvw\text{ is a triangle}}} T_{x'}^{b-1}[S\setminus \{u,w\}].\]
    \end{itemize}
    Finally, we define the set $\mathcal{S}_x^b = \{S \subseteq B_x \colon T_x^b[S] = 1\}$ for every $b \in [\lfloor\frac{n}{3}\rfloor]_0$. 
    We call the set $\mathcal{S}_x = \bigcup_{b\in [\lfloor\frac{n}{3}\rfloor]_0} \mathcal{S}_x^b$ the set of \emph{realizable states} at $x$.
    For $S\in \mathcal{S}_x^b$, we call a subgraph $H \in \mathcal{H}_x^b[S]$ a \emph{certificate} of $S$.
    For a vertex set $V' \subseteq V_x$, let $\mathcal{S}_x^b[V']$ be the restriction of $\mathcal{S}_x^b$ to $B_x\cap V'$, i.e.\
    $\mathcal{S}_x^b[V'] = \{S\cap V'\colon S\in\mathcal{S}_x^b\}$.
\end{definition}

\begin{lemma}
    It holds for all $i\in [r]$, $b\in [\lfloor\frac{n}{3}\rfloor]_0$ and $S\subseteq B_x$, that $T_x^b[S] = \big[|\mathcal{H}_x^b[S]|\geq 1\big]$.
\end{lemma}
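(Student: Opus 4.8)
The statement is the standard correctness lemma for the DP table: the Boolean entry $T_x^b[S]$ equals $1$ exactly when there is at least one valid partial solution, i.e. when $\mathcal{H}_x^b[S] \neq \emptyset$. I would prove this by induction on the index $i$ of the node $x_i$ along the path $P$ (equivalently, on the construction of the nice path decomposition via leaf / introduce-vertex / forget-vertex operations), handling each bag type separately. The engine of the argument is to show both directions of the equivalence: if $T_x^b[S]=1$ then one can exhibit a member of $\mathcal{H}_x^b[S]$, and conversely any $H \in \mathcal{H}_x^b[S]$ forces one of the disjuncts in the recurrence to be satisfied for a strictly smaller subproblem.

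\medskip
\textbf{Base case and introduce-vertex.} For a leaf bag $x$ we have $B_x = \emptyset$, $G_x$ is the empty graph, so $\mathcal{H}_x^0[\emptyset] = \{\text{empty packing}\}$ and $\mathcal{H}_x^b[\emptyset] = \emptyset$ for $b \geq 1$; this matches $T_x^b[\emptyset] = [b=0]$. For an introduce-vertex node $x$ introducing $v$, note $G_x = G_{x'}$ as a graph on vertex set $V_{x'}$ together with the isolated vertex $v$ (edges incident to $v$ are only added later, at introduce-edge operations — but the path decomposition here is merely \emph{nice}, so all edges incident to $v$ are implicitly "present" once both endpoints share a bag; I would phrase $G_x$ as the induced subgraph $G[V_x]$ as defined, observing $v$ has no neighbor in $V_{x'} = V_x \setminus \{v\}$ by property 2 of path decompositions, since any edge at $v$ needs a bag containing $v$, and $x$ is the first such bag). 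Hence no triangle of $G_x$ uses $v$, so $V(H) \cap B_x \ni v$ is impossible for any triangle packing $H$ of $G_x$; thus $\mathcal{H}_x^b[S] = \emptyset$ whenever $v \in S$, matching $T_x^b[S]=0$. If $v \notin S$, then a triangle packing of $G_x$ with the required properties is exactly a triangle packing of $G_{x'}$ with those properties (the "forgotten set" $V_x \setminus B_x = V_{x'}\setminus B_{x'}$, since $B_x = B_{x'} \cup \{v\}$ adds $v$ both to $B_x$ and to $V_x$), so $\mathcal{H}_x^b[S] = \mathcal{H}_{x'}^b[S]$ and the claim follows from the inductive hypothesis $T_{x'}^b[S] = [|\mathcal{H}_{x'}^b[S]| \geq 1]$.

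\medskip
\textbf{Forget-vertex (the main case).} Here $B_x = B_{x'} \setminus \{v\}$, $V_x = V_{x'}$, $G_x = G_{x'}$, but the forgotten set grows: $V_x \setminus B_x = (V_{x'} \setminus B_{x'}) \cup \{v\}$. Given $S \subseteq B_x$, I analyze where $v$ sits and which triangle (if any) contains $v$ in a certificate. For the "$\Leftarrow$" direction: if $T_{x'}^b[S]=1$, take $H \in \mathcal{H}_{x'}^b[S]$; then $v \notin V(H)$ (as $V(H)\cap B_{x'}=S \not\ni v$ and every triangle of $H$ hits $V_{x'}\setminus B_{x'}$, still a subset of the new forgotten set), so $H \in \mathcal{H}_x^b[S]$. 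If $T_{x'}^b[S\cup\{v\}]=1$, take $H \in \mathcal{H}_{x'}^b[S\cup\{v\}]$; then $v \in V(H)$, lying in some triangle of $H$, and every triangle of $H$ still contains a forgotten vertex of $G_x$, so $H \in \mathcal{H}_x^b[S]$ (its intersection with $B_x$ is $(S\cup\{v\})\cap B_x = S$). If for some $u,w \in S$ forming a triangle $uvw$ we have $T_{x'}^{b-1}[S\setminus\{u,w\}]=1$, take $H' \in \mathcal{H}_{x'}^{b-1}[S\setminus\{u,w\}]$; since $V(H') \cap B_{x'} = S\setminus\{u,w\}$ avoids $u,v,w$, the graph $H := H' \cup \{uvw\}$ is a valid triangle packing of $G_x$ of size $b$, its new triangle $uvw$ contains the now-forgotten $v$, and $V(H)\cap B_x = (S\setminus\{u,w\})\cup\{u,w\} = S$ (here I use $u,w \in B_x$, i.e.\ $u,w \neq v$, which holds since $u,w \in S \subseteq B_x$). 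So $H \in \mathcal{H}_x^b[S]$.

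For the "$\Rightarrow$" direction: suppose $H \in \mathcal{H}_x^b[S]$ is nonempty. If $v \notin V(H)$, then $H$ is also a valid certificate witnessing $\mathcal{H}_{x'}^b[S] \neq \emptyset$ — but care is needed: the definition of $\mathcal{H}_{x'}^b$ requires every triangle to hit $V_{x'}\setminus B_{x'} = (V_x\setminus B_x)\setminus\{v\}$; since the triangles of $H$ hit $V_x \setminus B_x$ and none of them contains $v$, each hits $(V_x\setminus B_x)\setminus\{v\}$, so indeed $H \in \mathcal{H}_{x'}^b[S]$ and $T_{x'}^b[S]=1$, which fires the first disjunct. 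If $v \in V(H)$ but $v$ lies in a triangle $t$ of $H$ with $V(t)\setminus\{v\} \not\subseteq B_{x'}$ — i.e.\ $t$ has a vertex in $V_x \setminus B_{x'} \setminus \{v\}$... actually the cleaner split: if $v \in V(H)$, then $V(H)\cap B_{x'} = S \cup \{v\}$, and every triangle of $H$ contains a vertex of $V_x \setminus B_x \supseteq V_{x'}\setminus B_{x'}$; the triangle through $v$ contains $v \in V_{x'}\setminus B_{x'}$, and all others contain a forgotten vertex of $G_{x'}$, hence $H \in \mathcal{H}_{x'}^b[S\cup\{v\}]$, firing the second disjunct — \emph{unless} one of the triangles of $H$ contains $v$ together with a vertex outside $B_{x'}$. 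So the honest trichotomy I will use is: (a) $v\notin V(H)$; (b) $v$ is in a triangle $uvw$ of $H$ with $u,w \in B_{x'}$, equivalently $u,w\in B_x$ (since $u,w\neq v$), equivalently $u,w\in S$ — then remove $uvw$ from $H$ to get $H' \in \mathcal{H}_{x'}^{b-1}[S\setminus\{u,w\}]$, firing the third disjunct with this $u,w$; (c) $v$ is in a triangle of $H$ with a vertex in $V_x\setminus B_x \setminus \{v\}$, i.e.\ a forgotten vertex $\neq v$ — then $H \in \mathcal{H}_{x'}^b[S\cup\{v\}]$, firing the second disjunct. Checking that these three cases are exhaustive and that in each case the produced subgraph lies in the claimed family (right size, right intersection with the smaller bag, every triangle hits the smaller forgotten set) is the bookkeeping-heavy part; the subtle point in case (c) is verifying every triangle of $H$ — including the one through $v$ — hits $V_{x'}\setminus B_{x'}$, which is exactly what the case hypothesis supplies for $v$'s triangle and the original membership supplies for the rest.

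\medskip
\textbf{Main obstacle.} The real work is in the forget-vertex case, specifically in setting up the trichotomy on the location of $v$ so that the three disjuncts of the recurrence line up precisely with the three cases, and in verifying the "every triangle contains a forgotten vertex" invariant across each transformation — particularly that when we peel off the triangle $uvw$ the remaining packing $H'$ still satisfies this with respect to the \emph{old} forgotten set $V_{x'}\setminus B_{x'}$ (it does, because $uvw$ was the only triangle whose forgotten-vertex witness could have been $v$ — any other triangle of $H$ was vertex-disjoint from $uvw$, hence from $v$, so its witness lies in $V_{x'}\setminus B_{x'}$). Everything else is routine unwinding of definitions, and the induction closes by observing the final node $x_r$ has $B_{x_r}=\emptyset$ and $V_{x_r}=V$, so $T_{x_r}^{\budget}[\emptyset] = [\,|\mathcal{H}_{x_r}^{\budget}[\emptyset]| \geq 1\,]$ is exactly the answer to the instance.
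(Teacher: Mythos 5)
Your overall structure matches the paper exactly: induction over the path decomposition, a base case for the leaf, and then the introduce-vertex and forget-vertex cases, with the forget-vertex case carried out via a trichotomy on where $v$ sits relative to the certificate $H$. Your forget-vertex argument is complete and correct, and the "main obstacle" paragraph correctly identifies the key invariant check.

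There is, however, a genuine error in your introduce-vertex case. You claim that $v$ has no neighbor in $V_{x'} = V_x \setminus \{v\}$ and conclude that $G_x$ has no triangle through $v$ at all. Both statements are false. A vertex $u$ already present in $B_{x'}$ (hence in $V_{x'}$) can be adjacent to $v$: the edge $\{u,v\}$ is only required to live in \emph{some} shared bag, and bags at $x$ or later are perfectly fine, while $u$ being in $B_{x'} \subseteq V_{x'}$ is no obstruction. Concretely, take a triangle $abc$ with the nice path decomposition $\emptyset, \{a\}, \{a,b\}, \{a,b,c\}, \dots$: when $c$ is introduced, both its neighbors $a$ and $b$ already lie in $V_{x'}$, and $abc$ is a triangle of $G_x$ through the newly-introduced vertex. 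So "no triangle of $G_x$ uses $v$" fails. What is actually true, and what the paper uses, is the weaker fact that $v$ has no neighbor in the \emph{forgotten} set $V_x \setminus B_x$: any neighbor of $v$ in $V_x$ must still be in the bag $B_x$, since the bag witnessing the edge $\{u,v\}$ lies at $x$ or later, so $u$'s interval of bags extends to $x$ and hence $u \in B_x$. From this the conclusion $\mathcal{H}_x^b[S] = \emptyset$ for $v \in S$ follows only by invoking the extra constraint in the definition of $\mathcal{H}_x^b[S]$ — every triangle of a certificate must contain a forgotten vertex — which your argument for this case never appeals to. Any triangle of a certificate through $v$ would need a forgotten vertex, but its other two vertices are neighbors of $v$ and therefore not forgotten; that contradiction (not the nonexistence of triangles through $v$ in $G_x$) is what kills the case $v \in S$. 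The same issue silently props up your $v \notin S$ subcase, where you assert $\mathcal{H}_x^b[S] = \mathcal{H}_{x'}^b[S]$ on the grounds that $v$ is essentially isolated; again the correct justification is that the forgotten-vertex constraint forces $v \notin V(H)$ for any certificate $H$, after which $H$ lives in $G_x - v = G_{x'}$ and the two definitions coincide because $V_x \setminus B_x = V_{x'} \setminus B_{x'}$.
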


\begin{proof}
    We prove this by induction over $i\in[r]$. For the leaf bag $B_1$, it holds that $B_1 = \emptyset$. The family $\mathcal{H}_1^0$ contains the empty packing, and all other families $\mathcal{H}_1^b$ are empty. This corresponds to the definition of $T_1^b$.
    For an introduce vertex bag introducing a vertex $v$, the set $V_x\setminus B_x$ contains no neighbor of $v$---thus no triangle packing of $G_x$ contains the vertex $v$.
    This already implies $\mathcal{H}_x^b[S] = \emptyset$ whenever $v \in S$. 
    The graphs $G_x - v$ and $G_{x'}$ coincide and we also have $B_x = B_{x'} \setminus \{v\}$ as well as $V_x \setminus \{v\} = V_{x'}$.
    Hence, $\mathcal{H}_x^b[S] = \mathcal{H}_{x'}^b[S]$ holds whenever $v \notin S$ holds.

    For a forget vertex node, let $v$ be the forgotten vertex at $x$, and let $H\in\mathcal{H}_x^b[S]$. If $v\notin V(H)$, then it holds that $H\in\mathcal{H}_{x'}^b[S]$. Otherwise, let $A$ be the triangle containing $v$ in $H$, with $V(A)=\{u,v,w\}$. Then it holds that either $v$ is the only vertex of $A$ in $V_x\setminus B_x$, and hence, $V(A)\subseteq B_{x'}$, and $H$ results from the packing $H - A \in \mathcal{H}_{x'}^{b-1}[S\setminus\{u,w\}]$,
    or there exists another vertex of $A$ in $V_x\setminus B_x$, and hence, $H$ belongs to $\mathcal{H}_{x'}^b[S\cup \{v\}]$.
    For the other direction, note that for all $S\subseteq B_{x'}$, both $\mathcal{H}_{x'}^b[S] \subseteq \mathcal{H}_{x}^b[S]$ and $\mathcal{H}_{x'}^b[S \cup \{v\}] \subseteq \mathcal{H}_{x}^b[S]$ as every triangle packing of $G_{x'}$ in which every triangle contains a vertex already forgotten before $x'$ is also a triangle packing of $G_x$ in which every triangle contains a vertex already forgotten before $x$, and the intersection of the triangle packing with the bags $B_x$ and $B_{x'}$ differ (and only differ on $v$) precisely if $v$ is used in this packing.
    Finally, consider vertices $u, w \in S$ such that $u, v, w$ induce a triangle and consider $H \in \mathcal{H}_{x'}^b[S \setminus \{v, u, w\}]$.
    Since $v, u, w \notin S$ (and hence, not in $H$), the graph $H$ together with the triangle $u, v, w$ form a triangle packing of $G_x$, whose intersection with $B_x$ is precisely $S\cup\{u,w\}$. Hence $T_x^b[S\cup\{u,w\}] = 1$ holds.
\end{proof}

\begin{corollary}\label{tripack::cor:algo-correct}
    The graph $G$ admits a triangle packing $H$ of size $\budget$ if and only if $\emptyset \in \mathcal{S}_{x_r}^{\budget}$.
\end{corollary}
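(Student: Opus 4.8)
The plan is to derive the corollary directly from the preceding lemma, which already establishes $T_x^b[S] = \big[|\mathcal{H}_x^b[S]| \geq 1\big]$ for every node $x$, every $b$, and every $S \subseteq B_x$; in particular this identity holds at the root node $x_r$. The only work left is to unwind the definitions at $x_r$ and observe that $\mathcal{H}_{x_r}^{\budget}[\emptyset]$ is precisely the set of all triangle packings of $G$ of size $\budget$.

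First I would use that $(P, \mathcal{B})$ is nice, so that its endpoint $x_r$ has $B_{x_r} = \emptyset$. Consequently $V_{x_r} = \bigcup_{j \le r} \mathcal{B}(x_j) = V(G)$ and $G_{x_r} = G[V_{x_r}] = G$. Since $B_{x_r} = \emptyset$, the only candidate state is $S = \emptyset$, so it suffices to determine when $\emptyset \in \mathcal{S}_{x_r}^{\budget}$, i.e.\ when $T_{x_r}^{\budget}[\emptyset] = 1$.

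Next I would spell out $\mathcal{H}_{x_r}^{\budget}[\emptyset]$ from its definition: it consists of all triangle packings $H$ of $G_{x_r} = G$ of size $\budget$ with $V(H) \cap B_{x_r} = \emptyset$ (vacuous, since $B_{x_r} = \emptyset$) such that each triangle of $H$ contains a vertex of $V_{x_r} \setminus B_{x_r} = V(G)$ (again vacuous). Hence $\mathcal{H}_{x_r}^{\budget}[\emptyset]$ equals exactly the set of all triangle packings of $G$ of size $\budget$. Combining this with the lemma yields that $T_{x_r}^{\budget}[\emptyset] = 1$ if and only if $G$ has a triangle packing of size $\budget$, which together with $\mathcal{S}_{x_r}^{\budget} = \{S \subseteq B_{x_r} \colon T_{x_r}^{\budget}[S] = 1\}$ is precisely the claimed equivalence.

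There is essentially no obstacle in this last step; all the substance lives in the lemma and its induction over forget-nodes, where the bookkeeping ensuring that every triangle of a counted packing contains an already-forgotten vertex is what makes the recursion sound and the final readout at the root possible. The only point worth stating carefully is that the root bag is genuinely empty — guaranteed by niceness — so that the side condition in the definition of $\mathcal{H}_x^b[S]$ trivializes and the realizable states at $x_r$ range over all triangle packings of $G$ rather than a proper subfamily.
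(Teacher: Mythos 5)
Your proof is correct and is exactly the routine unwinding one would expect here; the paper leaves the corollary without an explicit proof because it is immediate from the preceding lemma, and your argument fills that in faithfully. You correctly identify that niceness forces $B_{x_r} = \emptyset$ and hence $G_{x_r} = G$, that both side conditions in the definition of $\mathcal{H}_{x_r}^{\budget}[\emptyset]$ then trivialize, and that $\mathcal{S}_{x_r}^{\budget}$ can only contain $\emptyset$, so membership reduces to $T_{x_r}^{\budget}[\emptyset]=1$, which by the lemma is equivalent to the existence of a triangle packing of size $\budget$ in $G$.
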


\begin{lemma}\label{lem:time-bound-in-states}
    Let $\alpha$ be the maximum size of $\mathcal{S}_x^b$ over all nodes $x$ and $b\in [\lfloor\frac{n}{3}\rfloor]_0$. Then all families $\mathcal{S}_x^b$ can be computed in time $\ostar(\alpha)$.
\end{lemma}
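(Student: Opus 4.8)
The plan is to implement the recurrence of \cref{tripack::def:states} directly, but to store, at each node $x$ and each budget $b$, only the \emph{support} $\mathcal{S}_x^b$ of the table $T_x^b$ rather than the full $2^{|B_x|}$-dimensional characteristic vector. Each set in $\mathcal{S}_x^b$ is represented as a bit-string of length $|B_x|\le n$, and each family is kept in a dictionary (say, a balanced search tree keyed by these bit-strings) supporting membership tests and insertions in time $n^{\bigoh(1)}$. Processing the path $P = x_1,\dots,x_r$ bottom-up, I would compute all families $\mathcal{S}_{x_i}^b$ from those of the preceding node $x_{i-1}$; since $P$ has polynomially many nodes and $b$ takes only $\bigoh(n)$ values, it suffices to show that each such transition runs in time $\ostar(\alpha)$.

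For leaf nodes and introduce nodes this is immediate: a leaf contributes $\mathcal{S}_x^0 = \{\emptyset\}$ and empty families otherwise, and at an introduce node the recurrence gives $\mathcal{S}_x^b = \mathcal{S}_{x'}^b$, so the previous family is simply reused. The only interesting case is a forget node, say forgetting a vertex $v$, where $B_x = B_{x'}\setminus\{v\}$. Here I would build $\mathcal{S}_x^b$ in two passes that mirror the three disjuncts of the forget-recurrence: first, for every $S'\in\mathcal{S}_{x'}^b$ insert $S'\setminus\{v\}$ (this captures both $T_{x'}^b[S]$ and $T_{x'}^b[S\cup\{v\}]$ at once, since a member of $\mathcal{S}_{x'}^b$, being an arbitrary subset of $B_{x'}$, may or may not contain $v$); second, if $b\ge 1$, for every $S''\in\mathcal{S}_{x'}^{b-1}$ with $v\notin S''$ and every pair $\{u,w\}$ of neighbours of $v$ in $B_x\setminus S''$ with $\{u,w\}\in E(G)$, insert $S''\cup\{u,w\}$ (this captures the triangle-disjunct via $S''=S\setminus\{u,w\}$). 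A short check that no extraneous sets are produced and that every set in the support of $T_x^b$ is generated then establishes correctness of this implementation; the key observation is that $v\notin S$ forces $v\notin S\setminus\{u,w\}$, so restricting to $S''$ with $v\notin S''$ loses nothing.

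For the running-time bound at a forget node: the first pass performs at most $|\mathcal{S}_{x'}^b|\le\alpha$ insertions; for the second pass, the number of triangles through $v$ with both other vertices in $B_x$ is at most $\binom{|B_x|}{2}=n^{\bigoh(1)}$ and can be precomputed once, so the second pass performs at most $\alpha\cdot n^{\bigoh(1)}$ insertions. As each insertion costs $n^{\bigoh(1)}$, each forget node --- and, being even cheaper, each leaf and introduce node --- is handled in time $\ostar(\alpha)$. Summing over the polynomially many pairs $(x,b)$ gives the claimed $\ostar(\alpha)$ bound overall.

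I do not expect a genuine obstacle: this lemma is essentially bookkeeping. The one point that needs care --- the ``main obstacle'' only in a weak sense --- is to ensure that the forget transition iterates solely over \emph{realizable} states and never over all $2^{|B_x|}$ subsets of the bag, and that the combinatorial overhead it incurs, namely the number of triangles through the forgotten vertex inside the bag, is genuinely polynomial rather than exponential. Once both are in place, the $\ostar(\alpha)$ bound follows.
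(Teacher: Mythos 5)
Your proof is correct and takes essentially the same approach as the paper: both iterate only over the realizable-state families, using a dictionary-style data structure with polynomial-time insertion, and handle the forget node by first stripping $v$ from each surviving state and then augmenting with triangle neighbours of $v$ inside the bag, with polynomial overhead per state. The only cosmetic differences are that the paper uses a binary trie rather than a balanced search tree, pushes from budget $b$ to $b+1$ rather than pulling from $b-1$, and does not spell out the restriction $v\notin S''$ in the triangle pass (which you rightly note is harmless since any target state is a subset of $B_x$ and hence excludes $v$).
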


\begin{proof}
    First, we describe a data structure that allows the insertion of a single set in time polynomial in $n$, and allows the iteration over all sets in the data structure in time $\ostar(\alpha)$. There exists multiple such data structures, for example a binary trie, where we define an arbitrary linear ordering over $V$, and for a bag $B_x$ let $v_1,\dots v_k$ be the vertices of $B_x$ in ascending order. 
    Then we assign to a set $S\subseteq B_x$ the string $x_1\dots x_k$ where $x_i = 1$ if $v_i\in S$ and $x_i = 0$ otherwise. This allows an insert operation in time $\ostar(1)$, and we can iterate over all sets in time $\ostar(k)$ traversing the trie in a depth-first manner.

    Assuming such a data structure exists, for the leaf bag $B_1$, we define $\mathcal{S}_1^0 = \{\emptyset\}$, and $\mathcal{S}_1^b = \emptyset$ for all $b>0$. This can be done in time $\ostar(1)$.
    For an introduce bag, we set $\mathcal{S}_x^b = \mathcal{S}_{x'}^b$ for all $b\in [\lfloor\frac{n}{3}\rfloor]_0$. This can be done in time $\ostar(k)$.
    Finally, for a forget bag forgetting a vertex $v$, we iterate over all values of $b$ and initialize $\mathcal{S}_x^b$ to be empty. After that we iterate over all values of $b$ as follows.
    We iterate over each set $S\in \mathcal{S}_{x'}^b$ and do the following:
    First we add $S\setminus \{v\}$ to $\mathcal{S}_x^b$. Then we iterate over all pairs $u,w\in B_{x'}\setminus S$ of adjacent vertices that are neighbors of $v$.
    We add $S\cup\{u,w\}$ to $\mathcal{S}_x^{b+1}$. This can be done in time $\ostar(k)$, since the total number of all pairs of vertices is at most $n^2$.
    Finally, recall that a nice path decomposition has a linear in $n$ number of bags (every vertex is introduced once and forgotten once), this concludes the proof.
\end{proof}

\subsubsection*{From linear arrangement to path decomposition}
Here we describe the bags of the path decomposition $(P, \mathcal{B})$ that we derive from the linear arrangement $\ell$. 
We will start by defining a set of the so-called \emph{checkpoint} bags $X_0, \dots, X_n$ corresponding to the cut graphs $H_0, \dots, H_n$ of the linear arrangement $\ell$.
Later we will show how to add so-called \emph{transition} bags to turn this sequence into a nice path decomposition, while still keeping the number of possible states bounded.

First, for a bipartite graph $H$ (think of $H = H_0, \dots, H_n$), we define the sets $F_H$ and $I_H$ by an iterative process formally defined next.
The set $F_H$ is intended to represent the set of vertices from the left side of the cut that are forgotten already, while the set $I_H$ corresponds to the vertices on the right side of the cut that are introduced already. 
To ensure that the corresponding ``ordering'' of the forget- and introduce-operations even yields a valid path decomposition, i.e., no vertex is forgotten before one of its neighbors is introduced, we will ensure that $I_H$ contains all neighbors of $F_H$.

Clearly, one can safely forget a vertex if all of its neighbors have already been introduced. 
We additionally apply the following non-trivial rule:
We forget a vertex $v$ if (1) it has a single neighbor $w$ in $H$ that is not introduced yet, and (2) the vertex $v$ has the degree at most two in $H$.
In this case we introduce the unique missing neighbor of $v$ before forgetting $v$.
The second constraint plays a crucial role to achieve the desired bound.
We define $Q_H$ as the vertices on the right side that have not been introduced yet, i.e., do not belong to $I_H$.
We provide formal definitions of these sets (see \cref{fig:cut-sets}).

\begin{definition}\label{tripack::def:cut-sets}
    Let $H=(L, R, E)$ be a bipartite graph. 
    First, we define $F^{(0)} = \emptyset$. For $j\geq 0$, let $I^{(j)} = N_H(F^{(j)})$, and  let $Q^{(j)} = R\setminus I^{(j)}$. 
    We define the sets $F^{(j+1)}$ recursively as follows:
    \[
        F^{(j+1)} = \big\{v\in L\colon \deg_{Q^{(j)}}(v) = 0 \lor \big(\deg_{Q^{(j)}}(v) = 1 \land \deg(v) \leq 2\big)\big\}.
    \]
\end{definition}

\begin{figure}[t]
	\centering
	\includegraphics[width=0.9\textwidth]{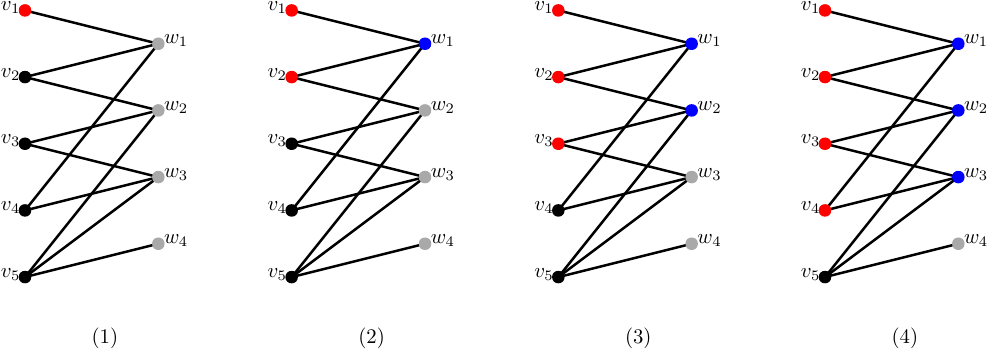}
	\caption{\label{fig:cut-sets}
	Illustration for \cref{tripack::def:cut-sets}.
	For $i=1,2,3,4$, the family $F^{(i)}$ is red on the left-hand side of the cut while the families $I^{(i-1)}$ and $Q^{(i-1)}$ are blue and gray, respectively, on the right-hand side. 
	The sets $F_H, I_H, Q_H$ are the red, blue, and gray vertices, respectively, in $(4)$. 
	The black vertex in $(4)$ belongs to $L^1_H$ as it has a single gray neighbor. The ``bag'' $X_H$ is the set of all black and blue vertices in $(4)$.\\	
	}	
\end{figure}

\begin{lemma}
    It holds that $F^{(j)} \subseteq F^{(j+1)}$. Moreover, there exists some integer number $k$ such that $F^{(k')} = F^{(k')}$ for all $k'\geq k$.
\end{lemma}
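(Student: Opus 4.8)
The plan is to prove both parts by induction, using that the map sending $F^{(j)}$ to $F^{(j+1)}$ is monotone in an appropriate sense. For the inclusion $F^{(j)} \subseteq F^{(j+1)}$ I would induct on $j$. The base case $F^{(0)} = \emptyset \subseteq F^{(1)}$ is immediate. For the inductive step, assume $F^{(j-1)} \subseteq F^{(j)}$. This gives $I^{(j-1)} = N_H(F^{(j-1)}) \subseteq N_H(F^{(j)}) = I^{(j)}$, hence $Q^{(j)} = R \setminus I^{(j)} \subseteq R \setminus I^{(j-1)} = Q^{(j-1)}$, and therefore $\deg_{Q^{(j)}}(v) \leq \deg_{Q^{(j-1)}}(v)$ for every $v \in L$. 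In other words, once we have introduced more right-side vertices, every left-side vertex has at most as many non-introduced neighbors as before.

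Now take $v \in F^{(j)}$. By definition of $F^{(j)}$, either $\deg_{Q^{(j-1)}}(v) = 0$, or $\deg_{Q^{(j-1)}}(v) = 1$ and $\deg(v) \leq 2$. In the first case $\deg_{Q^{(j)}}(v) = 0$, so $v \in F^{(j+1)}$. In the second case $\deg_{Q^{(j)}}(v) \in \{0,1\}$: if it equals $0$ then $v \in F^{(j+1)}$, and if it equals $1$ then the guard $\deg(v) \leq 2$ still holds (it is a property of $H$ alone and does not change), so again $v \in F^{(j+1)}$. This establishes $F^{(j)} \subseteq F^{(j+1)}$ and completes the induction.

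For the stabilization claim, the first part shows that $F^{(0)} \subseteq F^{(1)} \subseteq \cdots$ is a non-decreasing chain of subsets of the finite set $L$, so the integer sequence $|F^{(0)}| \leq |F^{(1)}| \leq \cdots$ is bounded by $|L|$ and hence eventually constant; choose $k$ with $|F^{(k)}| = |F^{(k+1)}|$, which together with $F^{(k)} \subseteq F^{(k+1)}$ forces $F^{(k)} = F^{(k+1)}$. Since $F^{(j+1)}$ is determined entirely by $F^{(j)}$ (through $I^{(j)} = N_H(F^{(j)})$ and $Q^{(j)} = R \setminus I^{(j)}$), a trivial induction then yields $F^{(k')} = F^{(k)}$ for all $k' \geq k$. (I read the statement $F^{(k')} = F^{(k')}$ as a typo for $F^{(k')} = F^{(k)}$.)

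There is no real obstacle here; the only point needing slight care is the case analysis in the second paragraph, namely checking that the degree-$\leq 2$ side condition is preserved as $Q^{(j)}$ shrinks — concretely, that a vertex whose unique non-introduced neighbor lies in $Q^{(j-1)}$ still qualifies to be forgotten in the next round once that neighbor has been introduced, since it then has no non-introduced neighbor at all.
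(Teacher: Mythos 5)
Your proof is correct and follows essentially the same argument as the paper: both establish monotonicity by induction via the chain $F^{(j-1)} \subseteq F^{(j)} \Rightarrow I^{(j-1)} \subseteq I^{(j)} \Rightarrow Q^{(j)} \subseteq Q^{(j-1)} \Rightarrow \deg_{Q^{(j)}}(v) \leq \deg_{Q^{(j-1)}}(v)$, and then derive stabilization from finiteness of $L$ together with the fact that $F^{(j+1)}$ depends only on $F^{(j)}$. Your reading of the typo $F^{(k')} = F^{(k')}$ as $F^{(k')} = F^{(k)}$ matches the paper's intended claim.
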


\begin{proof}
    We prove the claim by induction over $i$. For $i=0$ it holds that $F^{(0)} = \emptyset \subseteq F^{(1)}$. Now, assume that $F^{(i)} \subseteq F^{(i+1)}$. We prove that $F^{(i+1)} \subseteq F^{(i+2)}$. From the definition of $I^{(i)}$, it holds that $I^{(i)} \subseteq I^{(i+1)}$, and hence, $Q^{(i+1)} \subseteq Q^{(i)}$. It follows that $\deg_{Q^{(i+1)}}(v) \leq \deg_{Q^{(i)}}(v)$. Hence, it follows from the definition of $F^{(i)}$ that if a vertex $v\in L$ belongs to $F^{(i+1)}$, then it belongs to $F^{(i+2)}$ as well. Hence, $F^{(i+1)} \subseteq F^{(i+2)}$.

    Now we prove the second claim.
    For every $i \geq 0$, we have $F^{(i)} \subseteq L$ holds. 
    The set $L$ is finite, and hence, not all inclusions in $F^{(0)} \subseteq F^{(1)} \subseteq \dots \subseteq F^{(|L|+1)}$ can be proper. It follows that $k \leq |L|$. Let $k$ be the smallest value such that $F^{(k)} = F^{(k+1)}$. Then it holds that $Q^{(k)} = Q^{(k+1)}$, and hence, $F^{(k+2)} = F^{(k+1)}$. It holds by induction, that $F^{(k)} = F^{(k')}$ for all $k'\geq k$.
\end{proof}

\begin{definition}
    Let $k$ be the smallest value such that $F^{(k)} = F^{(k+1)}$. 
    We define $F_H = F^{(k)}$, $I_H = I^{(k)}$ and $Q_H = Q^{(k)}$. We also define
    \begin{align*}
        L^2_H &:= \{v\in L\colon |N(v)\cap Q_H| \geq 2\},\\
        L^1_H &:= \{v\in L\colon |N(v)\cap Q_H| = 1\}.
    \end{align*}
    We define $X_H = L^2_H \cup L^1_H \cup I_H$.
    We remove the index $H$ from all sets when clear from the context.
\end{definition}

    For every $i\in[n]_0$, we define $L^2_i = L^2_{H_i}$, $L^1_i = L^1_{H_i}$, $F_i = F_{H_i}$, $I_i = I_{H_i}, Q_i = Q_{H_i}$, and $X_i = X_{H_i}$.
    We call $F_i$ the set of \emph{forgotten vertices} at $i$, $I_i$ the set of \emph{introduced vertices}, and $Q_i$ the set of \emph{unintroduced vertices}.

It is can be shown that $X_i$ is a vertex separator for every $i\in[n]$: this follows from the fact that $I_i$ is the neighborhood of $F_i$. 
Furthermore, we show that 
if in \cref{tripack::def:cut-sets} instead of $F^{(0)} = \emptyset$, we start with the set $F^{(0)} = F_{i-1}$ of the vertices already forgotten at the previous cut, then we obtain the same set $F_i$ of vertices forgotten at the current cut.
From this we can then conclude that a nice path decomposition of $G$ can be constructed by using $X_1, \dots, X_n$ as the main building blocks.
We emphasize that the sequence of vertex separators $X_1, \dots, X_n$ itself does not even neecessarily contain every vertex of $G$. 
To resolve this, we will turn it into a nice path decomposition by adding the so-called \emph{transition bags}.
The bags $X_1, \dots, X_n$ are called the \emph{checkpoint bags}, and $x_1, \dots, x_n$ denote the corresponding nodes in the arising path decomposition.

The reason why we distinguish the sets $L^1_i$ and $L^2_i$ is two-fold.
First, to bound the number of states in terms of cutwidth, we will ``assign'' edges of the cut $H_i$ to certain sets of vertices in the bag.
As every vertex in $L^2_i$ has at least two neighbors in $Q_i$ and no vertex of $Q_i$ belongs to the bag $X_i$ by definition, every vertex of $L^2_i$ will ``certify'' that there are enough edges to allow all possible states in a certain component---we will provide more details later. 
Second, we also need to ensure that along the way between the checkpoint bags, i.e., in transition bags, we do not have too many possible states.
This will be achieved by a careful choice of the ordering in which the vertices are forgotten and introduced.
The sets $L^1_i$ and $L^2_i$ will be used to determine this ordering.

Next we will describe a nice path decomposition $(P, \mathcal{B})$ of $G$ obtained from $\ell$.
This path decomposition will contain the sequence $X_0, \dots, X_n$ of bags as subsequence---we will call these bags \emph{checkpoint} bags as each of them corresponds to a cut in $\ell$ and therefore, they represent the crucial checkpoints of the algorithm.
The remaining bags will be called \emph{transition bags} and they should be thought as intermediate bags that first, ensure that the arising sequence is indeed a path decomposition and second, that it is nice---both properties will be ensured by enriching $X_0, \dots, X_n$ with introduce-vertex- and forget-vertex-bags.
Later we will show that for the constructed path decomposition, the number of realizable states (recall \cref{tripack::def:states}) is small.
This will be proven in two steps: first, we argue this for the checkpoint bags and then show for transition bags as well.
Using \cref{lem:time-bound-in-states}, this bounds shows that the dynamic programming algorithm will be quick enough on this decomposition---we state this more formally later.
Now we construct the desired path decomposition $(P, \mathcal{B})$ by describing which introduce- and forget-bags are inserted between $X_i$ and $X_{i+1}$ for every $i \in [n-1]_0$.

Crucially, our bound only holds when we exhaustively forget all vertices, whose neighbors are all introduced (when ever such vertices arise), before forgetting a vertex with an unintroduced neighbor, where this neighbor is introduced in an additional bag. We formalize this in the following definition.

\begin{definition}\label{tripack::def:transition-bags}
    For $i\in [n]$, let $B_i^1 = X_{i-1}$. We define the sequence of bags $B_i^1,\dots B_i^{r_i}$ between $X_{i-1}$ and $X_i$, as follows. 
    First, if $v_i\notin X_{i-1}$, we add an introduce bag, introducing $v_i$. Let $B_i^{z_0}$ be equal to this bag if $v_i \notin X_{i-1}$ and let $B_i^{z_0} = X_{i-1}$ otherwise. 
    
    We repeat the following process exhaustively for $j\geq1$---by $B_i^{z_{j-1}+1}, \dots, B_i^{z_j}$ we will denote the bags added next.
    First, we define the set $S_i^{j,0}$ as the set of all vertices $v\in L_i\cap B_i^{z_{j-1}}$ such that $N_i(v) \setminus B_i^{z_{j-1}}=\emptyset$ holds.
    And second, we define $S_i^{j, 1}$ as the set of all vertices $v\in L_i\cap B_i^{z_{j-1}}$ such that $\deg_i(v)\leq 2$, and $|N_i(v)\setminus B_i^{z_{j-1}}|=1$ hold. 

    If the set $S_i^j = S_i^{j,0} \cup S_i^{j,1}$ is empty, we are done and no more bags are inserted between $X_{i-1}$ and $X_i$.
    Otherwise, we proceed as follows.
    First, for every vertex $v \in S_i^{j,0}$ we add a bag forgetting $v$.
    After that for each vertex $v \in S_i^{j,1}$, let $\{w\} = N_i(v)\setminus B_i^{z_{j-1}}$. 
    We first add a bag introducing $w$, and follow it with a bag forgetting $v$. By $B_i^{z_j}$ we denote the last bag inserted for the current value of $j$. 
    And we define $r_i = z_j$ for the largest value $j$ such that $S_i^j$ was not empty.

    We define the mapping $\pi_i\colon I_i\rightarrow F_i$, where $\pi_i$ is an extension of $\pi_{i-1} - \{v_i\}$. We define $\pi_i(v)$ for $v\in I_i\setminus I_{i-1}$ as the unique vertex of $F_i \setminus F_{i-1}$ forgotten directly after introducing $v$, i.e.\ $v$ was the only neighbor of $\pi_i(v)$ in $N_i(v)\setminus B_i^{z_{j-1}}$ for some value of $j$. The mapping $\pi_i$ is well-defined, since $\pi_i(v)$ is unique for all $v\in I_i$. Moreover, $\pi_i$ is injective, since each vertex is forgotten exactly once, and introduces thereby at most one vertex. We say the $\pi_i(v)$ is the vertex that \emph{introduces} $v$ in the path decomposition.
\end{definition}

Now we show that $B_i^{r_i} = X_i$ for all $i\in[n]$. We achieve this in two steps. First, we show that the families defined in \cref{tripack::def:cut-sets} are invariant under the order of forgetting vertices. Second, we show that after each iteration of $j$ in the definition above, we forget the sets $(F')^{(j)}$ exactly, where $(F')^{(j)}$ are defined in a similar way to $(F_i)^{(j)}$, with the difference that $(F')^{(0)} = F_{i-1}$, instead of $\emptyset$.
 
\begin{lemma}\label{tripack::lem:cut-sets-order-irrelevant}
    Let $H=(L,R,E)$ be a bipartite graph, let $F, I, L^2, L^1, Q$ be the sets defined by \cref{tripack::def:cut-sets}, and let $X\subseteq F$. We define $F^{(0)}_X = X$. We define the sets $F^{(j)}_X$, $I^{(j)}_X$ and $Q^{(j)}_X$ in a similar way as in \cref{tripack::def:cut-sets}, but depending on $F^{(j)}_X$ and $Q^{(j-1)}_X$ instead. Formally speaking, for $j\geq 0$, let $I^{(j)}_X = N_H(F^{(j)}_X)$, and  let $Q^{(j)}_X = R\setminus I^{(j)}_X$. 
    We define the sets $F^{(j+1)}_X$ recursively as follows:
    \[
        F^{(j+1)}_X = \big\{v\in L\colon \deg_{Q^{(j)}_X}(v) = 0 \lor \big(\deg_{Q^{(j)}_X}(v) = 1 \land \deg(v) \leq 2\big)\big\}.
    \]
    Finally, let $j$ be the smallest index such that $F^{(j)}_X = F^{(j+1)}_X$. We define $F_X = F^{(j)}_X$, then it holds that $F_X = F$.
\end{lemma}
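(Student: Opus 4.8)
The plan is to show that the monotone iterative process from \cref{tripack::def:cut-sets}, when started from the larger seed $X \subseteq F$ instead of from $\emptyset$, still converges to the same fixed point $F$. The key structural fact is that the operator $\Phi$ defined by
\[
  \Phi(A) = \big\{v \in L \colon \deg_{R \setminus N_H(A)}(v) = 0 \lor \big(\deg_{R \setminus N_H(A)}(v) = 1 \land \deg(v) \leq 2\big)\big\}
\]
is \emph{monotone} in $A$: if $A \subseteq A'$, then $N_H(A) \subseteq N_H(A')$, hence $R \setminus N_H(A) \supseteq R \setminus N_H(A')$, hence $\deg_{R\setminus N_H(A)}(v) \geq \deg_{R\setminus N_H(A')}(v)$ for every $v$, and therefore $\Phi(A) \subseteq \Phi(A')$. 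Note that $F^{(j+1)} = \Phi(F^{(j)})$ and $F^{(j+1)}_X = \Phi(F^{(j)}_X)$; moreover $F$ is a fixed point of $\Phi$ (this is exactly the stopping condition defining $F$), and so is $F_X$. So the statement amounts to saying that both the Kleene iteration from $\emptyset$ and the Kleene iteration from $X$ reach the \emph{same} fixed point of $\Phi$, given $\emptyset \subseteq X \subseteq F$.

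First I would prove $F \subseteq F_X$. Since $X \subseteq F$ and $F = \Phi(F)$, monotonicity gives $F^{(1)}_X = \Phi(X) \subseteq \Phi(F) = F$, and inductively $F^{(j)}_X \subseteq F$ for all $j$ (the base case is $F^{(0)}_X = X \subseteq F$, and the inductive step applies $\Phi$ to both sides). On the other hand, starting from $F^{(0)}_X = X \supseteq \emptyset = F^{(0)}$, the same induction using monotonicity gives $F^{(j)}_X \supseteq F^{(j)}$ for all $j$; taking $j$ large enough that $F^{(j)} = F$ stabilizes yields $F \subseteq F_X$ (here I use the already-proved lemma that the unseeded sequence stabilizes at $F$, and I also need the seeded sequence to stabilize, which follows by the identical finiteness argument: $F^{(j)}_X$ is a nondecreasing sequence of subsets of the finite set $L$, and once two consecutive terms are equal the sequence is constant thereafter, exactly as in the earlier lemma). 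Combining the two inclusions: $F_X = \lim_j F^{(j)}_X$, and this limit lies in $F$ by the first induction and contains $F$ by the second, so $F_X = F$.

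The one subtlety to check carefully—and the step I expect to be the only real obstacle—is the monotonicity of the ``$\deg = 1 \land \deg(v) \leq 2$'' branch of $\Phi$, because a priori shrinking the set $R \setminus N_H(A)$ could drop $\deg_{R\setminus N_H(A)}(v)$ from $2$ to $1$ and thereby \emph{newly} qualify $v$ once the global degree bound $\deg(v) \leq 2$ is met; but this is fine, since we are only claiming $\Phi(A) \subseteq \Phi(A')$ for $A \subseteq A'$, i.e., qualifying under the \emph{smaller} residual set $R \setminus N_H(A)$ implies qualifying under the \emph{larger} one $R \setminus N_H(A')$ — wait, that is backwards, so let me restate: $A \subseteq A'$ gives $R \setminus N_H(A) \supseteq R \setminus N_H(A')$, so $\deg_{R\setminus N_H(A)}(v) \geq \deg_{R\setminus N_H(A')}(v)$, hence if $v$ satisfies $\deg_{R\setminus N_H(A')}(v) \in \{0\} \cup \{1 : \deg(v)\leq 2\}$ it need \emph{not} satisfy the analogous condition for the larger residual set, so in fact $\Phi(A') \subseteq \Phi(A)$ — i.e.\ $\Phi$ is \emph{antitone} in $N_H(A)$ but \emph{monotone} in $A$ as claimed, and the direction works out because $N_H$ is itself monotone; I would spell out this chain of implications explicitly, tracking the global degree bound $\deg(v) \leq 2$ (which is independent of $A$ and hence causes no trouble), since this is precisely where a careless argument would slip. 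With monotonicity of $\Phi$ in hand, the two-sided induction above is routine and the lemma follows.
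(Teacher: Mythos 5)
Your proposal is correct, and its overall two-sided-inclusion structure matches the paper's. Both proofs establish $F \subseteq F_X$ by the same induction: $F^{(0)} = \emptyset \subseteq X = F^{(0)}_X$, and monotonicity of the update operator (which you call $\Phi$) propagates $F^{(j)} \subseteq F^{(j)}_X$ through the iteration. Where you diverge is the converse inclusion $F_X \subseteq F$: you prove $F^{(j)}_X \subseteq F$ directly by induction, using the observation that $F$ is a fixed point of $\Phi$ (so $F^{(j)}_X \subseteq F$ implies $F^{(j+1)}_X = \Phi(F^{(j)}_X) \subseteq \Phi(F) = F$). The paper instead uses a minimal-counterexample argument: assuming some $v \in F^{(j)}_X \setminus F$ exists with $j$ minimal, it extracts a vertex $w$ in $Q$ but not in $Q^{(j-1)}_X$, traces back to a neighbor $u \in F^{(j-1)}_X \setminus F$, and contradicts minimality. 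Your version is cleaner because it leverages the fixed-point property of $F$ explicitly rather than re-deriving it locally through the counterexample chase; the paper's version is more self-contained in that it never needs to invoke $\Phi(F) = F$ as a separate fact. One small point you glossed over: your stabilization argument for the seeded sequence rests on $F^{(j)}_X$ being nondecreasing, whose base case $X \subseteq \Phi(X)$ deserves the one-line justification that every $v \in X$ has all its neighbors inside $N_H(X)$ and hence degree $0$ in $Q^{(0)}_X$; with that noted, the argument is complete.
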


\begin{proof}
    First, we show that $F_X\supseteq F$. It holds, by induction over $j$, that $F^{(j)}\subseteq F^{(j)}_X$, $I^{(j)}\subseteq I^{(j)}_X$ and $Q^{(j)} \supseteq Q^{(j)}_X$;
    this holds by definition, in particular, $Q^{(j)} \supseteq Q^{(j)}_X$ implies that $|N(v)\cap Q^{(j)}|\geq |N(v)\cap Q^{(j)}_X|$ for all $v\in L$, and hence, $v\in F^{(j+1)}$ implies $v\in F^{(j+1)}_X$.

    Now we prove that $F_X\subseteq F$. Suppose this is not the case and let $j$ be the smallest index such that there exists a vertex $v \in F^{(j)}_X \setminus F$. 
    We have $j\neq 0$ as $F^{(0)}_X = X\subseteq F$. 
    Then there exists a vertex $w\in Q^{(j-1)}_X\setminus Q$ (by definition of $F^{(j)}$ and $F^{(j)}_X$). 
    But this means that there exists a vertex $u\in F^{(j-1)}_X\setminus F$ such that $w\in N(u)$, which contradicts the minimality of $j$.
\end{proof}

\begin{corollary}\label{tripack::cor:process-yeilds-cut-set-bag}
    It holds for each $i\in[n]$ that $B_i^{r_i} = X_i$.
\end{corollary}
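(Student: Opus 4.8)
The plan is to prove $B_i^{r_i}=X_i$ by induction on $i$, reducing the passage from the checkpoint node $x_{i-1}$ to the checkpoint node $x_i$ to \cref{tripack::lem:cut-sets-order-irrelevant} applied \emph{inside} the cut-graph $H_i$. First I would reformulate the claim: in a nice path decomposition the bag at a node equals the set of vertices introduced so far minus the set of vertices forgotten so far; by \cref{tripack::def:transition-bags} every vertex of $V_i$ is introduced by the node $B_i^{z_0}$, and no vertex of $\overline{V}_i$ is forgotten before $x_i$ (forgotten vertices always come from some $L_k$ with $k\le i$), so the bag at $x_i$ is determined by the set of forgotten vertices and it suffices to show the invariant: the set of vertices forgotten up to $x_i$ equals $F_i$. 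Combined with the facts that $V_i=F_i\cup L^1_i\cup L^2_i$ with these three sets pairwise disjoint (at the fixpoint of \cref{tripack::def:cut-sets}, $F_i$ is exactly the set of vertices of $V_i$ with no neighbour in $Q_i$, while $L^1_i$ and $L^2_i$ collect the vertices with exactly one, resp.\ at least two, such neighbours) and that the vertices of $\overline{V}_i$ lying in the bag at $x_i$ are exactly $I_i$, the invariant yields $B_i^{r_i}=(V_i\setminus F_i)\cup I_i=L^1_i\cup L^2_i\cup I_i=X_i$. The base case $i=0$ is immediate since $H_0$ is edgeless, so $F_0=I_0=\emptyset=X_0$, the empty leaf bag.

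For the induction step, assume the claim at $x_{i-1}$; unpacked, the set forgotten up to $x_{i-1}$ is $F_{i-1}$ and the introduced vertices of $\overline{V}_{i-1}$ are exactly $I_{i-1}$. I would like to view the bags inserted between $X_{i-1}$ and $X_i$ as the iteration of \cref{tripack::def:cut-sets} run inside $H_i$ but \emph{seeded} with $F^{(0)}:=F_{i-1}$ instead of $\emptyset$. This first requires $F_{i-1}\subseteq F_i$, which I would prove directly: for every $v\in V_{i-1}$ we have $N_{H_i}(v)=N_{H_{i-1}}(v)\setminus\{v_i\}\subseteq N_{H_{i-1}}(v)$ (passing from $H_{i-1}$ to $H_i$ moves $v_i$ to the left side and deletes its edges into $V_{i-1}$, which can only shrink the right-neighbourhoods of vertices of $V_{i-1}$); running the iteration that defines $F_{i-1}$ in $H_{i-1}$ and proving $F^{(j)}_{i-1}\subseteq F_i$ by induction on $j$, one checks that if $v$ enters $F_{i-1}$ at step $j+1$ via $\deg_{Q^{(j)}_{i-1}}(v)=0$ (resp.\ via $\deg_{Q^{(j)}_{i-1}}(v)=1$ and $\deg_{i-1}(v)\le 2$), then, since $N_{H_i}(F^{(j)}_{i-1})\subseteq N_{H_i}(F_i)=I_i$, one gets $\deg_{Q_i}(v)=0$ (resp.\ $\deg_{Q_i}(v)\le 1$ and $\deg_i(v)\le 2$) in $H_i$, hence $v\in F_i$ by the fixpoint characterisation $F_i=\{v\in V_i\colon\deg_{Q_i}(v)=0\lor(\deg_{Q_i}(v)=1\land\deg_i(v)\le 2)\}$. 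With $F_{i-1}\subseteq F_i$ in hand, \cref{tripack::lem:cut-sets-order-irrelevant} applied to $H=H_i$ with $X=F_{i-1}$ tells us that this seeded iteration has fixpoint $F_i$; write $(F')^{(j)}$, $(I')^{(j)}=N_{H_i}((F')^{(j)})$, $(Q')^{(j)}=\overline{V}_i\setminus(I')^{(j)}$ for its sequence, so $(F')^{(0)}=F_{i-1}$ and $(F')^{(j^*)}=F_i$, $(I')^{(j^*)}=I_i$ at the fixpoint index $j^*$.

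It then remains to run an inner induction on the round index $j$ of the exhaustive loop of \cref{tripack::def:transition-bags}: at the node $B_i^{z_j}$ the set of forgotten vertices is $(F')^{(j)}$ and the introduced vertices of $\overline{V}_i$ are $(I')^{(j)}$, so $B_i^{z_j}=(V_i\setminus(F')^{(j)})\cup(I')^{(j)}$. For $j=0$ this is the induction hypothesis at $x_{i-1}$ together with $N_{H_i}(F_{i-1})=N_{H_{i-1}}(F_{i-1})\setminus\{v_i\}=I_{i-1}\setminus\{v_i\}$, noting that introducing $v_i$ (if $v_i\notin X_{i-1}$) only enlarges the left part of the bag. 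For the general step, at $B_i^{z_j}$ the left, resp.\ right, part of the bag is $V_i\setminus(F')^{(j)}$, resp.\ $(I')^{(j)}$, so $N_i(v)\setminus B_i^{z_j}=N_{H_i}(v)\cap(Q')^{(j)}$ for $v\in L_i\cap B_i^{z_j}$; hence $S_i^{j+1,0}=\{v\in L_i\setminus(F')^{(j)}\colon\deg_{(Q')^{(j)}}(v)=0\}$ and $S_i^{j+1,1}=\{v\in L_i\setminus(F')^{(j)}\colon\deg_{(Q')^{(j)}}(v)=1,\ \deg_i(v)\le 2\}$. Since every vertex of $V_i$ incident to no edge of $E_i$ has already been forgotten by cut $i-1$ and so lies in $F_{i-1}\subseteq(F')^{(j)}$, the union $S_i^{j+1,0}\cup S_i^{j+1,1}$ equals exactly $(F')^{(j+1)}\setminus(F')^{(j)}$, and the unique not-yet-introduced neighbours $w$ introduced for the vertices of $S_i^{j+1,1}$ are exactly the vertices of $(I')^{(j+1)}\setminus(I')^{(j)}$; this establishes the claim at $B_i^{z_{j+1}}$. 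The loop terminates precisely when $S_i^{j+1}=\emptyset$, which by the translation above happens exactly at the fixpoint index $j^*$; hence $B_i^{r_i}=B_i^{z_{j^*}}=(V_i\setminus F_i)\cup I_i=X_i$, completing the induction and the proof.

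The step I expect to be the main obstacle is the inclusion $F_{i-1}\subseteq F_i$ and, more broadly, transporting the iteration of \cref{tripack::def:cut-sets} from $H_{i-1}$ to $H_i$: in contrast with the within-one-graph monotonicity of \cref{tripack::lem:cut-sets-order-irrelevant}, here the bipartite graph itself changes between the two cuts ($v_i$ switches sides while some edges disappear and others appear), so one must argue carefully that the ``degree at most $2$'' rule still triggers correctly against the modified neighbourhoods and the new right-side separator $I_i$. The remaining work---identifying at every node $B_i^{z_j}$ the bag with the pair (forgotten set, introduced part of $\overline{V}_i$), tracking the effect of the initial bag introducing $v_i$, and handling vertices carrying no crossing edge at cut $i$---is routine but must be spelled out to make the translation between $S_i^{j,0}$, $S_i^{j,1}$ and the sets of \cref{tripack::def:cut-sets} exact.
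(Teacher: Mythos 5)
Your proposal is correct and matches the paper's approach: both invoke \cref{tripack::lem:cut-sets-order-irrelevant} inside $H_i$ with the seed $X=F_{i-1}$ and then identify the rounds of the transition process of \cref{tripack::def:transition-bags} with the seeded iteration $(F')^{(j)}$. The difference is one of rigor rather than route. The paper's proof is a single sentence that ``by induction over $j$, it is easy to see'' that $V_i\setminus B_i^{z_j}=F^{(j)}_S$, and it applies the lemma without verifying its hypothesis $S=F_{i-1}\subseteq F_i$, which is genuinely needed and not obvious because the bipartite graph changes from $H_{i-1}$ to $H_i$ (the vertex $v_i$ switches sides, so neighborhoods on the left shrink by $\{v_i\}$). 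You explicitly flag this prerequisite as the crux, prove it by running the iteration for $F_{i-1}$ in $H_{i-1}$ and showing $F^{(j)}_{i-1}\subseteq F_i$ by induction on $j$ against the fixpoint characterization of $F_i$, and you also make exact the bookkeeping the paper waves at: that the bag at any node equals (introduced minus forgotten), that introducing $v_i$ gives $(F')^{(0)}=F_{i-1}$ and $(I')^{(0)}=N_{H_i}(F_{i-1})=I_{i-1}\setminus\{v_i\}$, and that $S_i^{j+1,0}\cup S_i^{j+1,1}=(F')^{(j+1)}\setminus(F')^{(j)}$. In short, you supply the missing $F_{i-1}\subseteq F_i$ argument the paper silently assumes, and otherwise fill in the same inductive translation; the extra care is warranted and your write-up is sound.
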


\begin{proof}
    Let $H = H_i$ and let $S =F_{i-1}$, then by induction over $j$, it is easy to see for that the set $F^{(j)}_S$ defined in \cref{tripack::lem:cut-sets-order-irrelevant} is exactly the same as $V_i\setminus B_i^{z_j}$, since the conditions to forget a vertex in \cref{tripack::def:transition-bags} are exactly the same conditions of inclusion in $F^{(j)}_S$ in \cref{tripack::lem:cut-sets-order-irrelevant}. Hence, $R_i\cap B_i^{z_j} = N_i(L_i\setminus B_i^{z_j}) = I_i^{(j)}$. It follows by \cref{tripack::lem:cut-sets-order-irrelevant} that $B_i^{r_i} = X_{i}$.
\end{proof}

\begin{lemma}
    The sequence of bags
    \[\mathcal{B} = (B_1^1 = X_0 = \emptyset), \dots, (B_1^{r_1} = X_1 = B_2^1), \dots, (B_2^{r_2}= X_2 = B_3^1) \dots (B_{n}^{r_{n}} = X_n = \emptyset)\]
    define a nice path decomposition of $G$.
\end{lemma}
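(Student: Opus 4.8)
To show that $\mathcal{B}$ is a nice path decomposition, I would verify the three requirements in turn: (1) that $\mathcal{B}$ is a valid path decomposition (vertex-connectivity and edge-coverage), (2) that consecutive bags differ by at most one vertex, and (3) that the two endpoints are empty. By construction via \cref{tripack::def:transition-bags}, properties (2) and (3) are essentially immediate: every transition bag is obtained by a single introduce- or forget-operation, so consecutive bags differ by exactly one vertex, and the first bag $B_1^1 = X_0 = \emptyset$ and the last bag $B_n^{r_n} = X_n = \emptyset$ (the latter using \cref{tripack::cor:process-yeilds-cut-set-bag} together with the fact that $H_n$ has no edges, so $F_n = L_n = V$ and $I_n = \emptyset$, whence $X_n = \emptyset$). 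So the bulk of the argument is establishing (1).

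**The core argument: validity of the path decomposition.**

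For the vertex-connectivity property, I would argue that each vertex $v$ is introduced exactly once and forgotten exactly once, and that between these two events it remains continuously in the bags. Using \cref{tripack::cor:process-yeilds-cut-set-bag}, at checkpoint $x_i$ the bag is $X_i = L^2_i \cup L^1_i \cup I_i$, and a vertex $v = v_j$ enters the picture: on the left side, $v_j$ is introduced at the latest as $B_j^{z_0}$ (when $v_j \notin X_{j-1}$), or it was already present; on the right side, $v$ is introduced when its ``introducer'' $\pi_i(v)$ is forgotten. A vertex is forgotten precisely once (via $S_i^{j,0}$ or $S_i^{j,1}$), and once forgotten it never reappears in a later bag, since $F_{i-1} \subseteq F_i$ monotonically (this monotonicity is exactly what \cref{tripack::cor:process-yeilds-cut-set-bag} and the preceding lemmas give us: restarting the iterative process from $F_{i-1}$ yields $F_i \supseteq F_{i-1}$). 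Hence the set of bags containing $v$ forms a contiguous interval. For edge-coverage: given an edge $\{u,w\} \in E$, say with $u = v_a$, $w = v_b$, $a < b$, this edge belongs to every cut $E_i$ for $a \le i < b$. I would show that at the moment $u$ is forgotten (which happens at some transition bag between $X_{i-1}$ and $X_i$ for some $i$ with $a \le i \le b$), the vertex $w$ has already been introduced — this is guaranteed by the forgetting rule in \cref{tripack::def:transition-bags}: a vertex in $S_i^{j,0}$ has all neighbors in the current bag, and a vertex $v \in S_i^{j,1}$ has its unique missing neighbor $w$ introduced immediately before $v$ is forgotten. So at that transition bag both $u$ and $w$ are present.

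**Main obstacle and how to handle it.**

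The subtle point — and the step I expect to require the most care — is ruling out that a vertex is ``forgotten twice'' or, more precisely, showing the global consistency across the block boundaries $X_{i-1} \to X_i$: one must check that the vertex introduced as $v_i$ in the block for $X_i$ was genuinely not already in $X_{i-1}$ and not already forgotten. This is where I would lean on \cref{tripack::cor:process-yeilds-cut-set-bag} and the invariance lemma \cref{tripack::lem:cut-sets-order-irrelevant}: since $B_i^{r_i} = X_i$ exactly, and since $F_i = F_{H_i}$ (computed from scratch) equals the set obtained by continuing from $F_{i-1}$, the forgotten set grows monotonically and the checkpoint bags ``stitch'' correctly. A vertex of $G$ that is never in any $X_i$ — such vertices exist, as the text warns — is introduced and immediately forgotten within a single transition block (it goes into $S_i^{j,0}$ once all its neighbors appear, or it's a $v_i$ of degree $\le 2$ whose last neighbor is introduced alongside it), so it still gets a nonempty contiguous interval of bags. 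I would also note that every vertex $v_j$ is introduced at the latest in the block leading to $X_j$ (as $v_j$ itself), and every vertex is forgotten at the latest in the block leading to $X_n$ (since $F_n = V$), so no vertex is ``left dangling''. Assembling these observations — each vertex introduced once, forgotten once, present in between; each edge covered at a forget-step — yields that $\mathcal{B}$ is a path decomposition, and combined with the single-vertex-difference and empty-endpoint observations, that it is nice.
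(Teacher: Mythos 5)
Your proof plan matches the paper's approach — niceness and the empty endpoints follow from the construction, vertex contiguity from the monotonicity of the forgotten sets, and edge coverage from the forgetting rule — so the overall structure is sound. However, there is a gap in the edge-coverage step. You assert that the left endpoint $u = v_a$ of an edge $\{u,w\}$ with $w = v_b$, $a<b$, is forgotten in a block $X_{i-1}\to X_i$ with $a\le i\le b$. This is false in general: if $u$ has another right-neighbor $v_{b'}$ with $b' > b$, then $u$ persists in the bags past block $b$ (it remains in $L^1_i\cup L^2_i$ as long as it has unintroduced neighbors in $H_i$), and $w$ may have been forgotten before $u$ is. Consequently, ``$w$ has already been introduced'' does not yield your conclusion ``both $u$ and $w$ are present at that transition bag''. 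The paper's proof avoids this by taking the endpoint of the edge that is forgotten \emph{first} — call it $v$, the other $v'$ — rather than the left one. Then the forgetting rule guarantees every neighbor of $v$, in particular $v'$, has already been introduced when $v$ is forgotten, while $v'$ has not yet been forgotten by the choice of $v$; hence the bag preceding the forget-$v$-bag contains both $v$ and $v'$. Your argument needs this WLOG (or an equivalent case split on which endpoint is dropped first) to close the gap; the rest of your reasoning goes through as the paper's does.
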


\begin{proof}
    Each vertex $v_i \in V(G)$ is introduced once, since if $v_i\notin X_{i-1}$, then $v_i$ is introduced in the bag $B_i^2$. It is then forgotten in some bag $B_{i''}^j$ with $j\leq n$, since all vertices have degree zero in $H_n$. Now let $i'$ be the smallest index, such that $v_i$ is a neighbor of $v_{i'}$, and let $i''$ be the largest such index. Now let $\{v, v'\}\in E(G)$, and assume that $v$ is forgotten at a bag $B$ before $v'$. It holds by \cref{tripack::def:transition-bags}, that all neighbors of $v$ (including $v'$) are introduced before $B$, but then the bag preceding $B$ contains both $v$ and $v'$.
\end{proof}

\begin{definition}\label{tripack::def:arrangement-to-path-decomposition}
    We define the nice path decomposition $\mathcal{P} = (P, \mathcal{B})$, given by the sequence
    \[\mathcal{B} = (B_1^1 = X_0 = \emptyset), \dots, (B_1^{r_1} = X_1 = B_2^1), \dots, (B_2^{r_2}= X_2 = B_3^1) \dots (B_{n}^{r_{n}} = X_n = \emptyset).\]
    The path $P$ is defined by the nodes $x_0^1,\dots x_n^1$, where $x_i^j$ is the node corresponding to the bag $B_i^j$, i.e.\ $B_i^j = B_{x_i^j}$. We denote $\mathcal{S}_{x_i^j}^b$ by $\mathcal{S}_{i,j}^b$. Moreover, for $i\in[n]$, we define $x_i = x_i^{r_i}$ and $\mathcal{S}_i^b = \mathcal{S}_{i,r_i}^b$.
\end{definition}

\subsubsection*{Bounding the number of realizable states}

\begin{definition}
    Let $i\in[n]$ and $S\subseteq V(H_i)$. We define
    \[
    E_i^S = \{e\in E(H_i)\colon e\cap V(C) \neq \emptyset\}
    \quad\text{and}\quad
    m_i(S) = |E_i^S|,
    \]
    where $E_i^S$ is the set of all edges that intersect $S$, and $m_i(S)$ is their number.

    For $i\in[n]$, we define $\hat{H}_i = H_i[V_i, I_i]$ as the graph resulting from $H_i$ be removing all vertices of $Q_i$, i.e.\ $\hat{H}_i$ is the restriction of $H_i$ to the vertices introduced so far. Also for $S_0\subseteq R_i$, we define $H_i^{S_0} = H_i[V_i, I_i\setminus S_0]$.
\end{definition}

For some $S_0\subseteq R_i$, 
Since each edge of $H_i$ is either an edge of some connected component of $\hat{H}_i$, or an edge between a connected component of $\hat{H}_i$ and $Q_i$, it holds that each edge of $H_i$ intersects exactly one connected component of $\hat{H}_i$. Hence,

\begin{lemma}\label{obs:partition}
    Let $i\in[n]$ and $S_0\subseteq R_i$. Then it holds that the sets $E_i^{V(C)}$ for $C \in \cc(\hat{H}_i)$ form a partition of $E(H_i)$.
\end{lemma}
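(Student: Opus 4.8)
The plan is to show that every edge of $H_i$ lies in exactly one of the sets $E_i^{V(C)}$ as $C$ ranges over the connected components of $\hat H_i$. Recall that $\hat H_i = H_i[V_i, I_i]$, i.e., $\hat H_i$ is obtained from $H_i$ by deleting the vertex set $Q_i$ of unintroduced right endpoints; its left side is still $L_i \subseteq V_i$ and its right side is $I_i$. First I would fix an arbitrary edge $e = \{a,b\} \in E(H_i)$ with $a \in L_i$ and $b \in R_i$. Since $L_i \subseteq V_i$ and $R_i = I_i \cup Q_i$, either $b \in I_i$ (so $e$ is an edge of $\hat H_i$ itself) or $b \in Q_i$ (so $e$ is an edge of $H_i$ between a left vertex $a$ and an unintroduced right vertex). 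In both cases the left endpoint $a$ belongs to $\hat H_i$, hence lies in a unique connected component $C \in \cc(\hat H_i)$; I claim $e \in E_i^{V(C)}$ and $e \notin E_i^{V(C')}$ for every other component $C'$.

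For \emph{existence}: $a \in V(C)$, so $e \cap V(C) \ni a$, hence $e \cap V(C) \neq \emptyset$ and therefore $e \in E_i^{V(C)}$ by definition of $E_i^S$. For \emph{uniqueness}: suppose $e \in E_i^{V(C')}$ for some $C' \in \cc(\hat H_i)$, i.e., $e \cap V(C') \neq \emptyset$. Then $V(C')$ contains $a$ or $b$. If it contains $a$, then since $a$ lies in exactly one component of $\hat H_i$, we get $C' = C$. If it contains $b$, then $b \in V(\hat H_i) = V_i \cup I_i$, which forces $b \in I_i$ (as $b \in R_i$ and $R_i \cap V_i$ contributes nothing new—$b$ is a right endpoint not among the left vertices, and the right side of $\hat H_i$ is precisely $I_i$); but then $e = \{a,b\}$ is an edge of $\hat H_i$ joining $a$ and $b$, so $a$ and $b$ lie in the same component, and again $V(C') \ni b$ together with $b$ being in component $C$ (the one containing its neighbor $a$) gives $C' = C$. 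Hence each edge lies in exactly one $E_i^{V(C)}$.

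Finally, \emph{every} edge $e \in E(H_i)$ is accounted for, since the argument above assigns each such $e$ to the component of $\hat H_i$ containing its left endpoint, and no $E_i^{V(C)}$ contains anything outside $E(H_i)$ by definition. Therefore the nonempty sets among $\{E_i^{V(C)} : C \in \cc(\hat H_i)\}$ partition $E(H_i)$, which is exactly the claim. The only mildly delicate point—and the one I would write carefully—is the case $b \in Q_i$: here $e$ is \emph{not} an edge of $\hat H_i$, yet it must still be assigned to a component, and the assignment is forced through the left endpoint $a$; conversely one must check that such an $e$ cannot simultaneously be charged to a second component via its right endpoint, which holds vacuously because $b \notin V(\hat H_i)$ at all. (This matches the informal sentence preceding the lemma: each edge is either inside a component of $\hat H_i$ or goes between a component and $Q_i$, and in either case it touches exactly one component.)
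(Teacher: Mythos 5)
Your proof is correct and follows essentially the same argument as the paper: every edge of $H_i$ has its left endpoint in $L_i \subseteq V(\hat H_i)$ and hence meets exactly one component, while the right endpoint is either in the same component (if in $I_i$) or outside $\hat H_i$ entirely (if in $Q_i$). The paper states this tersely; you simply spell out the case analysis with the appropriate care, including the vacuous-uniqueness observation for edges into $Q_i$.
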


\begin{proof}
    On the one hand, every edge in some $E_i^C$ belongs to $E(H_i)$, since it has at least one endpoint in $L_i$.
    On the other hand, every edge in $E(H_i)$ either lies completely in one component $C$, or has one endpoint in $Q_i$ and another in $C$. Hence, it cannot belong to two different sets $E_i^C$.
\end{proof}

The following corollary follows from \cref{obs:partition} by considering $S_0 = \emptyset$.

\begin{corollary}\label{cor:partition}
    Let $i\in[n]$. Then it holds that the sets $E_i^{V(C)}$ for $C \in \cc(\hat{H}_i)$ form a partition of $E(H_i)$.
\end{corollary}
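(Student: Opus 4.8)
The plan is to derive the statement directly from \cref{obs:partition}. Notice that \cref{obs:partition} is stated for an arbitrary $S_0\subseteq R_i$, but its conclusion---that the sets $E_i^{V(C)}$ with $C\in\cc(\hat H_i)$ partition $E(H_i)$---does not mention $S_0$ at all, and neither does the argument behind it. Hence I would simply instantiate \cref{obs:partition} with the particular choice $S_0=\emptyset$, which yields precisely the claimed partition of $E(H_i)$. There is essentially nothing else to do, and correspondingly I do not expect any genuine obstacle here; the corollary is really just the $S_0$-free content of the preceding lemma recorded in the form in which it will later be used (namely with the components of $\hat H_i$ only).

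For completeness, the self-contained reasoning one could substitute if \cref{obs:partition} were unavailable runs as follows. Every edge $e\in E(H_i)$ has a left endpoint in $L_i\subseteq V_i$, which is a vertex of $\hat H_i$ and thus lies in a unique connected component $C\in\cc(\hat H_i)$; therefore $e\in E_i^{V(C)}$, so the sets $E_i^{V(C)}$ cover $E(H_i)$. For disjointness, the right endpoint $w$ of $e$ is either in $I_i$, in which case $e$ is itself an edge of $\hat H_i = H_i[V_i,I_i]$ and so $w$ lies in the same component $C$, or $w\in Q_i$, which is vertex-disjoint from $\hat H_i$ by construction; in both cases $e$ meets the vertex set of exactly one component of $\hat H_i$, so it belongs to exactly one $E_i^{V(C)}$. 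The only point worth a moment's thought is that $Q_i$ is disjoint from $V(\hat H_i)$, but this is immediate from the definition of $\hat H_i$ as $H_i$ with the vertices of $Q_i$ removed.
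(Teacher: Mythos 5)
Your proposal is correct and takes exactly the paper's route: the paper's own justification is precisely ``follows from \cref{obs:partition} by considering $S_0=\emptyset$.'' One small remark on the oddity you noticed: the conclusion of \cref{obs:partition} as printed does not involve $S_0$, which is almost certainly a typo for $\cc(H_i^{S_0})$ (note $\hat H_i = H_i^{\emptyset}$, as the paper itself uses in the proof of \cref{cor:states-bound-at-checkpoint-bags}). Under that reading, the instantiation $S_0=\emptyset$ is genuinely what recovers the corollary, rather than the lemma already being identical to it. Your self-contained fallback argument also matches the paper's proof of \cref{obs:partition}, so there is no gap either way.
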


Now we aim at showing that for each connected component $C\in\cc(\hat{H}_i)$, the number of possible states of $\mathcal{S}_i[V(C)]$ is upper-bounded by $\sqrt[3]{3}^{m_i(C)}$. The bound on $\mathcal{S}_i$ then follows by the fact that the sets $E_i^{V(C)}$ are pairwise disjoint. This will actually be the part where it becomes evident, as we shall see in the following proof, that the so-called $Z$-cuts form the bottleneck of the algorithm: 
We will distinguish different types of components $C$ of $\hat{H}_i$ and the tight upper bound on the number of possible states is reached by the $Z$-cuts. 

We actually prove a more general statement. First, we prove that the claimed bound holds not only for $C\in\cc(\hat{H}_i)$, but for any connected component $C$ of $H_i^{S_0}$ where $S_0\subseteq R_i$ is arbitrary. 
Here it is important to remember that $S_0$ only contains vertices from the right side of the cut. 
Second, we will show that the bound holds even if we allow to add any subset of $L_i\setminus F_i$ to every possible state.
This motivates the next definition of the set $\mathcal{T}_i^b[S]$ which can be considered as a robust generalization of the set of possible states: intuitively, this permits us to say that even if we allow arbitrary triangles on the left-hand side of the cut, 
the number of states is still bounded.
This will later allow us to prove the bound for the transition bags as well, given that the transitions are carried out in the correct order.
For every $i \in [n]$ and every $b\in [\lfloor\frac{n}{3}\rfloor]_0$, we will use $\mathcal{S}_i^b$ as a shorthand for $\mathcal{S}_{x_i}^b$.

\begin{definition}
    For $b\in [\lfloor\frac{n}{3}\rfloor]_0$, $i\in[n]$, and $S\subseteq V(H_i)$, we define  the families 
    \[
    \mathcal{T}_i^b[S] = \{S'\cup T\colon S'\in\mathcal{S}_i^b[S], T \subseteq S\cap (L^1_i\cup L^2_i)\}.
    \]
    We also define 
    $\mathcal{T}_i[S] =\bigcup\limits_{b\in\left[\left\lfloor\frac{n}{3}\right\rfloor\right]_0} \mathcal{T}_i^b[S]$.
\end{definition}

For a connected component $C$ of some subgraph of $H_i$, we use $\mathcal{T}_i[C]$ and $m_i(C)$ as shorthands for $\mathcal{T}_i[V(C)]$ and $m_i(V(C))$, respectively.
Now we are ready to prove the main technical lemma. 

\begin{lemma}\label{lem:states-bound-after-set-remove}
    For all $i\in[n]$, all $S_0 \subseteq R_i$, and all $C\in \cc(H_i^{S_0})$ it holds that 
    $|\mathcal{T}_i[C]| \leq \sqrt[3]{3}^{m_i(C)}$.
\end{lemma}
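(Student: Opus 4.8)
The statement to prove is Lemma~\ref{lem:states-bound-after-set-remove}: for all $i$, all $S_0 \subseteq R_i$, and all connected components $C$ of $H_i^{S_0}$, we have $|\mathcal{T}_i[C]| \leq \sqrt[3]{3}^{m_i(C)}$. The plan is to proceed by strong induction on $i$, and for fixed $i$, to analyze a single connected component $C$ of $H_i^{S_0}$ by a case distinction on the "shape" of $C$ together with how its left-side vertices split into $F_i$, $L^1_i$, and $L^2_i$. The core intuition, flagged already in the text, is that the number of possible intersection patterns of a triangle packing with the vertex set of $C$ is controlled by $m_i(C) = |E_i^{V(C)}|$, and that the extremal ratio $\sqrt[3]{3}$ per edge is achieved exactly by $Z$-cuts (a path on three edges), where $3$ states arise against $3$ edges.

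**Step 1: Reduce to counting via a bound $|\mathcal{T}_i[C]| \le \sqrt[3]{3}^{m_i(C)}$ componentwise.** First I would record that it suffices to bound $|\mathcal{T}_i[C]|$ for each component separately, since distinct components have disjoint edge sets (\cref{cor:partition} / \cref{obs:partition}) and disjoint vertex sets, so $\mathcal{T}_i$ over a union of components is a subset of the product. Then I would set up the induction: the base case $i$ small (or the leftmost nontrivial cut) is trivial because $H_i$ has few edges; for the inductive step I relate $\mathcal{S}_i^b = \mathcal{S}_{x_i}^b$ at the checkpoint bag $x_i$ to $\mathcal{S}_{i-1}$ via the transition bags described in \cref{tripack::def:transition-bags}. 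The key structural facts I would invoke are: $X_i$ is a vertex separator, $I_i = N_{H_i}(F_i)$, and the order-invariance lemma \cref{tripack::lem:cut-sets-order-irrelevant} / \cref{tripack::cor:process-yeilds-cut-set-bag}, which guarantee that the checkpoint bag really is $X_i$.

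**Step 2: The case analysis on the component $C$.** This is the heart of the argument and the main obstacle. I would classify a component $C$ of $H_i^{S_0}$ by (i) how many of its left vertices lie in $F_i$ (already forgotten, hence not in $\mathcal{S}_i[V(C)]$ but still relevant because a triangle may have been completed using forgotten vertices), (ii) how many lie in $L^1_i$ versus $L^2_i$, and (iii) the number of edges $m_i(C)$ it carries (counting also edges to $Q_i$, which are "free" in the sense of not corresponding to bag vertices). For a component with $m_i(C)$ edges, a vertex in $L^2_i$ contributes at least two edges to $Q_i$ that are not "shared" and thus pays for the extra factor caused by that vertex being freely addable in $\mathcal{T}_i$ (the $T \subseteq S \cap (L^1_i \cup L^2_i)$ slack). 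The delicate sub-case is when $C$ is (essentially) a $Z$-cut or a disjoint union of $Z$-cuts: here one must verify by direct enumeration that the number of states of a triangle packing restricted to the three "middle" vertices is exactly $3$, not $4$ — precisely because using the right side of the cut as the bag forbids the fourth pattern — and that $3 = \sqrt[3]{3}^{3}$, matching the three edges. For all other component shapes (a star, a matching edge, a path on fewer edges, a component touching $Q_i$ heavily, etc.) one checks that the state count is at most $2^{\#\text{bag left vertices}}$ times a factor for right-side vertices, and that this is dominated by $\sqrt[3]{3}^{m_i(C)}$ because such components have a better edge-to-state ratio than $Z$-cuts. I would organize this as a finite table of local configurations, each verified by a short explicit count.

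**Step 3: Assemble and propagate the induction.** Finally I would combine: by Step 1 the bound on all components multiplies to $|\mathcal{T}_i| \le \prod_C \sqrt[3]{3}^{m_i(C)} = \sqrt[3]{3}^{\sum_C m_i(C)} = \sqrt[3]{3}^{|E(H_i)|} \le \sqrt[3]{3}^{\ctw}$, which is what feeds into \cref{lem:time-bound-in-states} and hence \cref{thm:tripack}. The reason the $\mathcal{T}$-generalization (rather than plain $\mathcal{S}$) is needed is exactly to make the induction go through across transition bags: when a transition bag introduces a right-side vertex $w$ and then forgets its unique missing neighbor $v$, allowing arbitrary subsets of $L_i \setminus F_i$ in the state means the intermediate bags are "absorbed" by the bound already proven for the endpoints $X_{i-1}$ and $X_i$ — but this part I would defer to a follow-up lemma about transition bags and here only prove the checkpoint bound, since the statement as phrased is purely about $H_i$, $S_0 \subseteq R_i$, and $\mathcal{T}_i$. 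The single hardest point is making the $Z$-cut sub-case airtight: one must be careful that a triangle of the packing may use two cut-edges incident to the same middle vertex plus one forgotten-vertex-to-forgotten-vertex edge on the left, and track which of the $2^3$ subsets of middle vertices are genuinely realizable as $V(H)\cap S$ for some $H \in \mathcal{H}_i^b$ — the claim being that exactly one of the eight is excluded, leaving the required bound of $3$ after accounting for which are forced versus forbidden by the degree-$\le 2$ forgetting rule.
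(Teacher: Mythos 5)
The organizing choice of a strong induction on $i$ is not how the paper argues, and it does not carry any content here: the lemma is a direct, non-inductive structural bound for a \emph{fixed} $i$. Nothing in the statement or its proof relates $\mathcal{T}_i[C]$ to $\mathcal{T}_{i-1}$; that relation (via transition bags) is the subject of the later \cref{lem:bound-states-unmarked-components}, which \emph{uses} the present lemma as a black box. Likewise, your Steps~1 and~3 (product over components, summing $m_i(C)$ to $|E_i|$) are the content of \cref{cor:states-bound-at-checkpoint-bags}, not of this lemma. The actual proof is a one-shot case analysis on the shape of $C$, resting on two quantitative facts you never surface. First, $|F_i \cap V(C)| \geq |I_i \cap V(C)|$, because the pairing $\pi_i$ of each introduced right vertex with the forgotten left vertex that triggered its introduction is injective with image in $F_i$. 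Second, $m_i(C) \geq 2|X_i \cap V(C)| + |L^2_i \cap V(C)| - 1 + |E(C,S_0)|$, which combines $|E(C)| \geq |V(C)| - 1$ with the observation that each $L^2_i$-vertex contributes at least two, and each $L^1_i$-vertex at least one, extra edges to $Q_i$. These dispatch the cases where $C$ has a cycle, or $L^2_i \cap V(C) \neq \emptyset$, or $|F_i \cap V(C)| > |I_i \cap V(C)|$, or $E(C,S_0) \neq \emptyset$: then $m_i(C) \geq 2|X_i \cap V(C)|$ and the trivial $2^{|X_i \cap V(C)|}$ bound already suffices.

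For the remaining trees with $|F_i \cap V(C)| = |I_i \cap V(C)|$ and $L^2_i \cap V(C) = \emptyset$, your $Z$-cut arithmetic is off. In the prototypical three-edge path $C$ with left vertices $u_1, u_2 \in F_i$ and right vertices $v_1, v_2 \in I_i$, the bag restricted to $C$ is $\{v_1, v_2\}$ of size two, so the baseline is $2^2 = 4$ patterns, not $8$; exactly one of these four is excluded (if a certified packing uses the $I$-leaf $v_2$ then, since its unique $F$-neighbor $u_2$ has degree two in $H_i$, the only possible triangle through $v_2$ also uses $v_1$), giving $3 = \sqrt[3]{3}^{m_i(C)}$. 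Your phrase ``exactly one of the eight is excluded, leaving the required bound of $3$'' conflates $|X_i \cap V(C)|$ with $|V(C)|$ and is not arithmetically consistent. More importantly, you do not indicate how to close the genuinely hard sub-case where $L^1_i \cap V(C) \neq \emptyset$: there $|V(C)| \geq 5$ forces $m_i(C) \geq 5$, the paper contracts the $F$-$I$ pieces of $C$, extracts two distinct $I$-leaves whose degree-two $F$-neighbors impose two (partially overlapping) exclusions, and derives $|\mathcal{T}_i[C]| \leq \frac{49}{64} \cdot 2^{|X_i \cap V(C)|}$, which together with $m_i(C) \geq 5$ yields the claimed bound. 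Without this step the proof does not close.
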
 

\begin{proof}
    Let $\hat{\mathcal{T}} = \mathcal{T}_i[C]$.
    Let $L^2 = L^2_i \cap V(C)$, $L^1 = L^1_i \cap V(C)$, $F = F_i \cap V(C)$, $I = I_i \cap V(C)$ and $X = X_i \cap V(C)$.
    Observe that we have 
    \begin{equation}\label{eq:subset-of-x}
        \forall S \in \hat{\mathcal{T}} \colon S \subseteq X,
    \end{equation}
    and therefore, also
    \begin{equation}\label{eq:states-subsets-of-x}
        \hat{\mathcal{T}} \leq 2^{|X|}.
    \end{equation}
    We will now carry out a careful analysis showing that sometimes this bound is already good enough to prove the claim and in the remaining cases we show how to improve it.
    Recall that $F, L^1, L^2$ partition $V(C) \cap  L_i$.
    Also, we emphasize that the left side of $H_i^{S_0}$ consists of all vertices in $L_i$.

    First, we show that the inequality $|F|\geq |I|$ holds. We use this inequality multiple times in this proof.
    It holds by definition, that $S_0 \subseteq R_i$, and therefore, $F \cap S_0 = \emptyset$. Hence, for all $w \in I$, it holds that $\pi_i(w)\in F$ (see \cref{tripack::def:transition-bags}).
    The inequality $|F| \geq |I|$ follows by the fact that $\pi_i$ is injective.

    We start by resolving the cases where $m_{i}(C)\leq 2$.
    First, let $m_{i}(C) \leq 1$ hold, we show that $\hat{\mathcal{T}} = \{\emptyset\}$ holds then.
    In this case, it must hold that $C\cap(L^1\cup L^2) = \emptyset$, as any vertex of $L^1$ or $L^2$ adds at least two edges to $E_i^C$, which would imply that $m_{i}(C) \geq 2$.
    If $C$ consists of a single vertex in $F$, then we have $\hat{\mathcal{T}} \subseteq \{\emptyset\}$, since $X=\emptyset$.
    Now we may assume that $C$ consists of a single edge between an $F$ vertex, say $v$ and an $I$ vertex, say $w$.
    Suppose that the vertex $w$ has a neighbor different from $v$ in $H_i$.
    By definition of the graph $H_i^{S_0}$, the vertex $v$ belongs to the same connected component of this graph as $w$ does, i.e., to $C$, and we have $m_{i} \geq 2$ contradicting the assumption on the current case.
    Hence, $w$ is a degree-1 vertex in $H_i$ and no triangle packing of $G[V_i \cup I_i]$ contains the vertex in $I$. It follows again that $\hat{\mathcal{T}} \subseteq \{\emptyset\}$.
    So in both cases, we have
    $|\hat{\mathcal{T}}| = 1 = \sqrt[3]{3}^{0} \leq \sqrt[3]{3}^{m_{i}(C)}$.
    
    Now assume that $m_{i}(C) = 2$ holds. We claim that this implies that $|X|\leq1$. It follows from this claim, by \cref{eq:states-subsets-of-x}, that $|\hat{\mathcal{T}}| \leq 2^1 \leq \sqrt[3]{3}^{2} = \sqrt[3]{3}^{m_{i}(C)}$.
    In order to prove the claim, we distinguish two cases. If $C$ contains two edges, then $C$ is a path on three vertices $u, v, w$. It must hold in this case, that $L^1 \cup L^2 = \emptyset$, since each vertex of $L^1\cup L^2$ has at least one neighbor in $Q$ which contradicts the assumption that $m_i(C)=2$. Moreover, since $|F|\geq |I|$ holds, $C$ can contain at most one vertex of $I$, and hence, $X = L^1\cup L^2\cup I$ has size at most one.
    Otherwise, $C$ contains at most one edge. In particular, $C$ contains exactly one vertex $v$ in $L_i$ and at most one in $I$. Moreover, if $C$ contains a vertex in $I$, then it holds by $|F|\geq |I|$, that $v\in F$. Hence, in both cases it holds that $|X| = |L^1\cup L^2\cup I| = 1$.

    In the remainder of the proof we assume that $m_{i}(C) \geq 3$ holds.
    By connectivity of $C$, it holds that 
    \begin{equation*}
        |E(C)| \geq |V(C)| - 1 = |L^2|+|L^1|+|F|+|I|-1.
    \end{equation*}
    Moreover, the equality is achieved precisely if $C$ is a tree, i.e., it does not contain a cycle.  
    \begin{equation}\label{eq:eq-1}
        |E(C)| \geq 2|I| + |L^2| + |L^1| - 1.
    \end{equation}
    Moreover, the equality is achieved exactly if $C$ is a tree and $|F| = |I|$ holds.
    
    Further, each vertex of $L^2$ has at least two neighbors in $Q_i$, and each vertex of $L^1$ has one neighbor in $Q_i$, thus we have:
    \[
        m_{i}(C) \geq |E(C)| + 2|L^2| + |L^1| + |E(C, S_0)| \stackrel{\eqref{eq:eq-1}}{\geq} 3|L^2| + 2|L^1| + 2|I| - 1  + |E(C, S_0)|. 
    \]
    Finally, recall that $L^1$, $L^2$, and $I$ partition $X$ so we have $|X| = |L^1| + |L^2| + |I|$, hence
    \begin{equation}\label{eq:eq-2}
        m_{i}(C) \geq 2|X| + |L^2| - 1  + |E(C, S_0)|
    \end{equation}
    holds.
    So if at least one of the following is true:
    \begin{itemize}
     \item 
        $C$ contains a cycle, 
     \item $L^2$ is non-empty, 
     \item $|F|$ is strictly greater than $|I|$, 
     \item or if $C$ has a neighbor in $S_0 \cap I_i$ in $H_i$,
    \end{itemize}
    then we have $m_{i}(C)\geq 2|X|$. 
    It follows that
    \[
    |\hat{\mathcal{T}}| \stackrel{\eqref{eq:states-subsets-of-x}}{\leq} 2^{|X|} \leq 2^{\frac{m_{i}(C)}{2}} = \sqrt{2}^{m_{i}(C)}\leq \sqrt[3]{3}^{m_{i}(C)}
    \]
    holds as claimed.

    In the remainder of the proof we assume that all of the following hold:
    \begin{itemize} 
     \item $C$ is a tree, 
     \item $L^2$ is empty, 
     \item $|F| = |I|$, thus each vertex of $F$ introduced some vertex of $I$,
     \item and all neighbors of $V(C) \cap L_i$ in $H_i$ belong to $Q_i$.
    \end{itemize}
    Then the following properties hold.
    First, by definition (recall \cref{tripack::def:cut-sets}) a vertex in $L_i$ of degree 3 is only forgotten if all of its neighbors have been introduced already---then the third item implies that every vertex in $F$ has at most two neighbors in $H_i$.
    Second, by definition of $L^1_i$, every vertex in $L^1$ has precisely one neighbor in $Q_i$ and (since it was not forgotten, recall \cref{tripack::def:cut-sets}) at least two neighbors in $I$.
    
    First, let $L^1 = \emptyset$, then $C$ is a bipartite tree on the vertex set $F\cup I$, with $|F| = |I|$. 
    Hence, it contains a leaf, say $v$, in $I$. 
    Let $w \in F$ be the unique neighbor of $v$ in $H_i$. 
    If $w$ would have the degree of $1$ in $H_i$, the properties above would imply that $m_{i} = 1$ holds contradicting the assumption $m_{i} \geq 3$.
    So the vertex $w$ has degree exactly $2$ in $H_i$. 
    Let $v' \neq v \in I$ be the other neighbor of $w$. 
    Then for each triangle packing $T$ in $G[V_i\cup I_i]$, it holds that that if $T$ contains $v$, then it contains $v'$ as well. 
    Recall that $v, v' \in I \subseteq X$ holds. 
    Hence, for every $S \in \hat{\mathcal{T}}$, if $v\in S$ then we also have $v'\in S$. So it follows that 
    \[
    |\hat{\mathcal{T}}| \stackrel{\eqref{eq:subset-of-x}}{\leq} \frac{3}{4}2^{|X|} \stackrel{\eqref{eq:eq-2}}{\leq} \frac{3}{4}2^{\frac{m_{i}(C)+1}{2}} = \frac{3}{4}\sqrt{2} \sqrt{2}^{m_{i}(C)} \leq \sqrt[3]{3}^{m_{i}(C)},
    \]
    where the last inequality holds due to $m_{i}(C) \geq 3$.

    Finally, we may assume that $L^1\neq \emptyset$ holds. 
    So consider a vertex $v \in L^1$.
    Since $v$ was not forgotten, it has at least two neighbors, say $w_1$ and $w_2$ in $I$, and we have $w_1, w_2 \in C$.
    For the vertices $w_1$ and $w_2$ in $I$, there exist vertices $u_1 \neq u_2 \in F$ due to which $w_1$ and $w_2$ were introduced, and we have $u_1, u_2 \in C$.
    Thus we have $|C| \geq |\{v, w_1, w_2, u_1, u_2\}| = 5$.
    Thus we have $m_{i} \geq 5$ because $C$ contains at least four edges and there is also an edge from $v$ to its unique neighbor in $Q_i$.
    In the remainder of the proof, we will show that $|\hat{\mathcal{T}}| \leq \frac{49}{64} 2^{|X|}$ holds. This would conclude the proof since then it follows that:
    \[
    |\hat{\mathcal{T}}| \leq \frac{49}{64} 2^{|X|} \leq \frac{49}{64} 2^{\frac{m_{i}(C)+1}{2}}\leq \frac{49}{64}\sqrt{2} \sqrt{2}^{m_{i}(C)} \leq \sqrt[3]{3}^{m_{i}(C)},
    \]
    holds where the last inequality holds due to $m_{i}(C) \geq 5$.

    In order to prove the remaining claim we distinguish two cases:
    If $C$ contains a leaf in $I$, then it holds that $|\hat{\mathcal{T}}| \leq \frac{3}{4} 2^{|X|} \leq \frac{49}{64} 2^{|X|}$ by the same argument as in the case $L^1=\emptyset$ where such a leaf always existed. 
    Otherwise, let $\tilde{C}$ be the graph resulting from $C$ by removing $L^1$ from it.     
    Let $\mathcal{I}$ be the family of all connected components of $\tilde C$.
    First, we claim that every connected component in $\mathcal{I}$ has exactly the same number of $F$ and $I$ vertices. 
    On the one hand, the number of $F$-vertices is at least as large as the number of $I$-vertices because every vertex in $I$ belongs to the same connected component in $\mathcal{I}$ as the vertex in $F$ due to which it was introduced.
    On the other hand, we have $|F| = |I|$ and hence, the equality also holds for each connected component in $\mathcal{I}$.
    Therefore, each connected component $X$ of $\mathcal{I}$ contains a leaf that belongs to $I$.
    
    And let $\hat{C}$ be the graph resulting from $C$ by contracting each connected component $X$ of $\tilde C$ into a single vertex $v_X$ (i.e., we contract all edges between $I$ and $F$.) Then $\hat{C}$ is a bipartite tree consisting of $L^1$ on one side and $\{v_X \mid X \in \mathcal{I}\}$ on the other side. 
    Recall that every vertex in $L^1$ has at least to neighbors in $I$: if these two neighbors would belong to the same connected component in $\mathcal{I}$, then $C$ would contain a cycle contradicting the fact that $C$ is a tree.
    Hence, every vertex in $L^1$ also has at least two neighbors after the contraction of every element of $\mathcal{I}$, i.e., it has at least two neighbors in $\hat{C}$.
    Therefore, $T$ contains at least two leaves $v_{X}$ and $v_{X'}$ with $X \neq X' \in \mathcal{I}$. 
    Moreover, recall that both $X$ and $X'$ are trees and each of them contains a leaf, say $y$ and $y'$, respectively, in $I$. 
    Hence, each of the vertices $y$ and $y'$ is of degree two in $C$ and has one neighbor in $F$ and one neighbor in $L^1$. 
    Let $b$ and $b'$, respectively, be the unique neighbor of $y$ and $y'$ in $L^1$, and $z$ and $z'$, respectively, be their $F$-neighbors. If either $z$ or $z'$ has degree one in $H_i$, w.l.o.g.\ let $z$ be this vertex. Then any triangle containing $y$ must contain $b$ as well. Hence, if $y$ is in a solution, then $b$ must be in this solution as well. This implies that
    $\hat{\mathcal{T}} \leq \frac{3}{4} 2^{|X|} \leq \frac{48}{64}2^{|X|}$.
    Otherwise, there exist two vertices $w,w'\in I$, the neighbor of $z$ and $z'$ respectively.
    Then for a triangle packing $T$ of $G[V_i \cup I_i]$, if $T$ contains $y$ then it contains at least one of $b$ and $w$ as well. Similarly, if $T$ contains $y'$ then it must contain at least one of $b'$ and $w'$ as well.
    
    Finally, we distinguish two subcases. If $b \neq b'$, then there exists no $S\in\hat{\mathcal{T}}$ with $y\in S$ but both $w$ and $b$ are not in $S$. Similarly, there exists no $S\in\hat{\mathcal{T}}$ with $y'\in S$ but both $w'$ and $b'$ are not in $S$. Hence, it holds that $|\hat{\mathcal{T}}| \leq \frac{7}{8}\frac{7}{8} 2^{|X|} \leq \frac{49}{64} 2^{|X|}$ as claimed. 
    
    Now let $b=b'$ hold. 
    Recall that we only care about triangle packings where each triangle contains at least one vertex in $F_i$.
    Note that the triangle $b,y,y'$ does not satisfy this property.
    Therefore, for every triangle packing of $G[V_i \cup I_i]$ with this property and its intersection, say $S \in \hat{\mathcal{T}}$ with $C$, if we have $b,y,y'$, then it contains at least one of the vertices $w$ and $w'$. 
    Moreover, there is no set $S \in \hat{\mathcal{T}}$ that excludes $b,w$ but contains $y$, or excludes $b,w'$ but contains $y'$. 
    Hence, these arguments exclude exactly the fraction of $\frac{1}{32}+(\frac{1}{8}+\frac{1}{8}-\frac{1}{32}) = \frac{1}{4}$ of all subsets of $X$, where the subtracted addend corresponds to the case where we include both $y$ and $y'$ and exclude all $b,w,w'$ and this case is counted in each of the two $\frac{1}{8}$. 
    It follows that $|\hat{\mathcal{T}}| \leq \frac{3}{4}2^{|X|} = \frac{48}{64} 2^{|X|} < \frac{49}{64} 2^{|X|}$.
\end{proof}

\begin{corollary}\label{cor:states-bound-at-checkpoint-bags}
    For all $i\in[n]$ and $b\in [\lfloor\frac{n}{3}\rfloor]_0$, it holds  that $|\mathcal{S}_i^b| \leq \sqrt[3]{3}^{|E_i|}$.
\end{corollary}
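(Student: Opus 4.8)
The plan is to derive \cref{cor:states-bound-at-checkpoint-bags} from the main technical lemma \cref{lem:states-bound-after-set-remove} together with the partition property of \cref{cor:partition}, via a ``product over connected components'' argument. Fix $i\in[n]$ and $b\in[\lfloor\frac{n}{3}\rfloor]_0$. I would instantiate \cref{lem:states-bound-after-set-remove} with $S_0=\emptyset$, so that $H_i^{S_0}=\hat{H}_i$ and the relevant family of components is exactly $\cc(\hat{H}_i)$. By \cref{cor:partition}, the edge sets $E_i^{V(C)}$ for $C\in\cc(\hat{H}_i)$ partition $E_i=E(H_i)$, and therefore $\sum_{C\in\cc(\hat{H}_i)} m_i(C)=|E_i|$.

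The key observation is that each realizable state decomposes losslessly over these components. Recall that $B_{x_i}=X_i$ by \cref{tripack::cor:process-yeilds-cut-set-bag}, and that $X_i=L^2_i\cup L^1_i\cup I_i\subseteq L_i\cup I_i\subseteq V(\hat{H}_i)$; since $V(\hat{H}_i)$ is the disjoint union of the vertex sets $V(C)$ over $C\in\cc(\hat{H}_i)$, every $S\in\mathcal{S}_i^b$ is a subset of $\bigcup_{C}V(C)$ and hence is uniquely determined by the tuple $\big(S\cap V(C)\big)_{C\in\cc(\hat{H}_i)}$. This gives $|\mathcal{S}_i^b|\le\prod_{C\in\cc(\hat{H}_i)}|\mathcal{S}_i^b[V(C)]|$, where $\mathcal{S}_i^b[V(C)]=\{S\cap V(C)\colon S\in\mathcal{S}_i^b\}$.

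It then remains to bound each factor using the lemma. Taking $T=\emptyset$ in the definition of $\mathcal{T}_i^b$ shows $\mathcal{S}_i^b[V(C)]\subseteq\mathcal{T}_i^b[V(C)]\subseteq\mathcal{T}_i[V(C)]$, so \cref{lem:states-bound-after-set-remove} (applied with $S_0=\emptyset$) yields $|\mathcal{S}_i^b[V(C)]|\le|\mathcal{T}_i[V(C)]|\le\sqrt[3]{3}^{m_i(C)}$ for every $C\in\cc(\hat{H}_i)$. Multiplying these inequalities over all components and using $\sum_C m_i(C)=|E_i|$ gives $|\mathcal{S}_i^b|\le\prod_C\sqrt[3]{3}^{m_i(C)}=\sqrt[3]{3}^{|E_i|}$, as claimed.

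The genuine difficulty of the whole argument is already concentrated in \cref{lem:states-bound-after-set-remove}, so I expect the corollary itself to present no real obstacle. The only point requiring care is the state decomposition: one must check that $X_i$ is entirely contained in $V(\hat{H}_i)$, so that no vertex of a state is ``dropped'' when restricting to the components --- which is precisely why $\hat{H}_i$ is defined to retain all of $V_i$ (hence all of $L^1_i\cup L^2_i$) on its left side and $I_i$ on its right side --- and, of course, to apply \cref{cor:partition} with the same family $\cc(\hat{H}_i)$ that underlies \cref{lem:states-bound-after-set-remove}.
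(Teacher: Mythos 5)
Your proposal is correct and follows essentially the same route as the paper: instantiate \cref{lem:states-bound-after-set-remove} with $S_0=\emptyset$ (so $H_i^{S_0}=\hat H_i$), partition the realizable states multiplicatively over $\cc(\hat H_i)$ using the fact that $X_i\subseteq V(\hat H_i)$, bound each factor by the lemma, and finish with the edge partition from \cref{cor:partition}. The one small thing you make more explicit than the paper --- and rightly so --- is the chain $\mathcal{S}_i^b[V(C)]\subseteq\mathcal{T}_i^b[V(C)]\subseteq\mathcal{T}_i[V(C)]$ obtained by taking $T=\emptyset$, which is the step that lets you apply the lemma (stated for $\mathcal{T}_i$) to $\mathcal{S}_i^b$.
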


\begin{proof}
    We apply \cref{lem:states-bound-after-set-remove} with $S_0 = \emptyset$ as follows.
    Observe that we have $\hat{H}_i = H_i^\emptyset$.
    Further, by definition, it holds that $m_{i,\emptyset}(C) = |E_i^C|$ for each component $C\in \cc(\hat{H}_i) = \cc(H_i^{\emptyset})$.
    Recall that $(C \cap X_i)_{C\in \cc(\hat{H}_i)}$ partition the bag $X_i$. 
    It follows that
    \begin{align*} 
    &|\mathcal{S}_i^b| = |\mathcal{S}_i^b[X_i]| = \left|\mathcal{S}_i^b\left[\dot\bigcup_{C \in \cc(\hat{H}_i)} X_i \cap C\right]\right|
    \leq \prod_{C\in\cc(\hat{H}_i)} |\mathcal{S}_i^b[C]|
    = \prod_{C\in\cc(H_i^{\emptyset})} |\mathcal{S}_i^b[C]| 
    \stackrel{\text{\cref{lem:states-bound-after-set-remove}}}{\leq} \\ &\prod_{C\in\cc(H_i^{\emptyset})} \sqrt[3]{3}^{m_{i,\emptyset}(C)}
    = \sqrt[3]{3}^{\sum_{C\in\cc(H_i^{\emptyset})} |E_i^C|} = \sqrt[3]{3}^{\sum_{C\in\cc(\hat{H}_i)} |E_i^C|} = \sqrt[3]{3}^{|E_i|},
    \end{align*}
    where the last inequality holds, since the sets $(E_i^C)_{C \in \cc(\hat{H}_i)}$ partition the set $E(H_i)= E_i$ (recall \cref{obs:partition}).
\end{proof}

We partition the vertices of $B_i^j$ into sets of vertices, corresponding to the connected components of $H_{i-1}^{\{v_i\}}$. 
We achieve this by defining a mapping $\tau$ from the vertex set of $\hat{H}_i$ to the set of connected components of $H_{i-1}^{\{v_i\}}$ together with the ``special'' set $\{v_i\}$ as follows---for simplicity, we also refer to $\{v_i\}$ as a \emph{component}.
First, we map each vertex of $H_{i-1}$ to the component it belongs to. 
After that we proceed as follows:
Our construction of the transition bags ensures that each introduce-node (say, it introduces a a vertex $v$) is directly followed by a forget-node (say, it forgets a vertex $w$). 
Then we map $v$ to the same component to which $w$ is mapped. Intuitively, we assign each vertex to the component that introduces it. 
Since we forget $w$ right after introducing $v$, the number of vertices mapped to the same component remains the same for each forget-node.

More formally, let $\mathcal{C}_i$ denote the family of all connected component of $\cc(H_{i-1}^{\{v_i\}})$ together with the special set $\{v_i\}$. 
We aim at showing that for each $C\in\mathcal{C}_i$, it holds that 
$|\mathcal{S}_{x_i^j}^b[\tau^{-1}[C]]| \leq \sqrt[3]{3}^{m_{i-1}(C)}$.
We will show that the lemma follows from this claim.
We actually prove the claim for forget-nodes only: since the dynamic-programming algorithm only adds triangles to a partial solution when we forget a vertex, the number of possible states does not change at an introduce-node. 

To achieve this, we define a special family of so-called \emph{marked} components $\mathcal{M}_i^j\subseteq \mathcal{C}_i$ for each transition bag $B_i^j$.
The set $\mathcal{M}_i^j$ consists of components of $H_{i-1}^{\{v_i\}}$ that are adjacent (1) to $v_i$ in $H_{i-1}$, or (2) to some vertex in $Q_{i-1}$ that was introduced in a transition bag $B_i^{j'}$ preceding $B_i^j$.
The intuition behind the proof (whose details we omit here) is that the components, that are not marked as $\mathcal{M}_i^j$, are not affected by the changes applied to partial solutions in $B_i^1,\dots, B_i^j$. 
Hence, the possible states of these components are the same as in $X_{i-1}$, and one can apply the bound from \cref{lem:states-bound-after-set-remove} directly.
\begin{definition}
    Let $H_i^0 = H_{i-1}^{\{v_i\}}$, and let $\mathcal{C}_i = \cc(H_i^0) \cup \{\{v_i\}\}$, where we consider each connected component as a vertex set.
    We define a mapping $\tau:V(\hat{H}_{i})\rightarrow \mathcal{C}_i$ as follows.
    First, we assign each vertex of $H_i^0$ to the component it belongs to, and we assign $v_i$ to $\{v_i\}$.
    Then, we assign each vertex $w$ of $I_i\setminus I_{i-1}$ to the same component $\pi_i(w)$ is assigned to, i.e., we set $\tau(w) = \tau(\pi_i(w))$ (see \cref{tripack::def:transition-bags} for reference).
    
    Now we define the set of \emph{marked components} $\mathcal{M}_i^j$ at each bag $B_i^j$ as follows.
    For a vertex $w\in Q_{i-1}$, let $Z_i^j(w)$ be the set of all components in $\mathcal{C}_i$ that contain a vertex $v$ in $L_{i-1}^2$ adjacent to $w$, such that $|N_{H_{i-1}}(v)\setminus B_i^j| = 2$, i.e.\ $v$ has exactly one unintroduced neighbor other that $w$. 
    Now we define $\mathcal{M}_i^1$ as follows depending on the vertex $v_i$.
    \begin{itemize}
	\item If $v_i\in I_{i-1}$, then we define $\mathcal{M}_i^1$ as $\{v_i\}$ together with all components $C \in \mathcal{C}_i \setminus \{\{v_i\}\}$ adjacent to $v_i$ in $H_{i-1}$.
       \item If $v_i\in Q_{i-1}$, then we define $\mathcal{M}_i^1 = \{\{v_i\}\}\cup Z_i^j(v_i)$. 
   \end{itemize}
    Finally, for $j \geq 2$, the set $\mathcal{M}_i^j$ is defined as follows depending on the type of the bag $x_i^j$:
    \begin{itemize}
    	\item If $x_i^j$ is a forget-vertex-$v$-bag, we set $\mathcal{M}_i^j = \mathcal{M}_i^{j-1} \cup \{\tau(v)\}$.
    	\item If $x_i^j$ is an introduce-vertex-$w$-bag for some $w \in Q_{i-1} \cap I_i$, then we set $\mathcal{M}_i^j = \mathcal{M}_i^{j-1} \cup Z_i^j(w)$.
    \end{itemize}
\end{definition}

\begin{lemma}\label{lem:bound-states-unmarked-components}
    For all values $i\in[n]$, $j\in[r_i]$, $b\in [\lfloor\frac{n}{3}\rfloor]_0$ and for each $C \in \mathcal{C}_i \setminus \mathcal{M}_i^j$, it holds that $|\mathcal{S}_{i,j}^b[\tau^{-1}(C)]| \leq \sqrt[3]{3}^{m_{i-1}(C)}$.
\end{lemma}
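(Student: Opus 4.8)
The plan is to bootstrap from \cref{lem:states-bound-after-set-remove} applied one cut earlier. Fix $i\in[n]$ and apply that lemma with index $i-1$ and $S_0=\{v_i\}\cap R_{i-1}$ (so that $H_{i-1}^{S_0}=H_{i-1}^{\{v_i\}}$ in either case); this yields $|\mathcal{T}_{i-1}[C]|\le\sqrt[3]{3}^{\,m_{i-1}(C)}$ for every $C\in\cc(H_{i-1}^{\{v_i\}})$. Since $\{v_i\}\in\mathcal{M}_i^1$ and the sets $\mathcal{M}_i^j$ are non-decreasing in $j$, the special component is always marked, so every $C\in\mathcal{C}_i\setminus\mathcal{M}_i^j$ is a genuine component of $H_{i-1}^{\{v_i\}}$. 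Hence it suffices to establish, for every $j\in[r_i]$ and every $C\in\mathcal{C}_i\setminus\mathcal{M}_i^j$, the containment
\[
\mathcal{S}_{i,j}[\tau^{-1}(C)]=\bigcup_{b}\mathcal{S}_{i,j}^b[\tau^{-1}(C)]\subseteq\mathcal{T}_{i-1}[C],
\]
since then $|\mathcal{S}_{i,j}^b[\tau^{-1}(C)]|\le|\mathcal{T}_{i-1}[C]|\le\sqrt[3]{3}^{\,m_{i-1}(C)}$. I will prove this containment by induction on $j$.

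Alongside the induction I maintain the bookkeeping invariant: as long as $C\notin\mathcal{M}_i^j$, no vertex of $\tau^{-1}(C)$ has been introduced or forgotten in the bags $B_i^1,\dots,B_i^j$, so that $\tau^{-1}(C)\cap B_i^j=V(C)\cap X_{i-1}=\bigl(V(C)\cap(L^1_{i-1}\cup L^2_{i-1})\bigr)\cup\bigl(V(C)\cap I_{i-1}\bigr)$ (recall $v_i\notin V(C)$). The invariant is enforced by the definition of $\mathcal{M}_i^j$: forgetting a vertex $v$ puts $\tau(v)$ into $\mathcal{M}$, so no vertex of $\tau^{-1}(C)$ is forgotten while $C$ stays unmarked; and any vertex $w$ introduced during the transition lies in $I_i\setminus I_{i-1}$ and its introduction is immediately followed by forgetting $\pi_i(w)$ with $\tau(w)=\tau(\pi_i(w))$, so $w\in\tau^{-1}(C)$ would force $C$ into $\mathcal{M}$ at that step; the marking rules involving $Z_i^{j'}(\cdot)$ cover the analogous effect of the auxiliary introductions of vertices of $Q_{i-1}$. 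The base case $j=1$ is immediate: $B_i^1=X_{i-1}$, hence $\mathcal{S}_{i,1}^b=\mathcal{S}_{i-1}^b$ and $\mathcal{S}_{i,1}[\tau^{-1}(C)]=\mathcal{S}_{i-1}[V(C)]\subseteq\mathcal{T}_{i-1}[C]$ by taking $T=\emptyset$ in the definition of $\mathcal{T}_{i-1}[C]$.

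For the inductive step I split on the type of $B_i^j$. If $B_i^j$ is an introduce bag, then $\mathcal{S}_{i,j}$ and $\mathcal{S}_{i,j-1}$ consist of the same sets, $C$ is unmarked at $j-1$ as well, and the claim follows from the inductive hypothesis and the invariant. If $B_i^j$ forgets a vertex $v$, then $\mathcal{M}_i^j=\mathcal{M}_i^{j-1}\cup\{\tau(v)\}$, so $\tau(v)\neq C$ and $C$ is unmarked at $j-1$. Take $S\in\mathcal{S}_{i,j}^b$; by \cref{tripack::def:states} one of the following holds: $S\in\mathcal{S}_{i,j-1}^b$; or $S\cup\{v\}\in\mathcal{S}_{i,j-1}^b$; or there is a triangle $uvw$ with $u,w\in S$ and $S\setminus\{u,w\}\in\mathcal{S}_{i,j-1}^{b-1}$. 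In the first two cases, using $v\notin\tau^{-1}(C)$ we get $S\cap\tau^{-1}(C)\in\mathcal{S}_{i,j-1}[\tau^{-1}(C)]\subseteq\mathcal{T}_{i-1}[C]$ by the inductive hypothesis. In the third case the hypothesis gives $(S\setminus\{u,w\})\cap\tau^{-1}(C)\in\mathcal{T}_{i-1}[C]$, and
\[
S\cap\tau^{-1}(C)=\bigl((S\setminus\{u,w\})\cap\tau^{-1}(C)\bigr)\cup\bigl(\{u,w\}\cap\tau^{-1}(C)\bigr);
\]
since $\mathcal{T}_{i-1}[C]$ is closed under adding arbitrary subsets of $V(C)\cap(L^1_{i-1}\cup L^2_{i-1})$, it remains to show that every vertex of $\{u,w\}$ lying in $\tau^{-1}(C)$ in fact lies in $V(C)\cap(L^1_{i-1}\cup L^2_{i-1})$.

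This last step is the heart of the argument and I expect it to be the main obstacle. By the invariant, any such vertex lies in $\tau^{-1}(C)\cap B_i^{j-1}=V(C)\cap X_{i-1}$, hence in $V(C)\cap(L^1_{i-1}\cup L^2_{i-1})$ or in $V(C)\cap I_{i-1}$; one must exclude the latter. Suppose $u\in V(C)\cap I_{i-1}$, so $u\in I_{i-1}\setminus\{v_i\}$. As $uvw$ is a triangle of $G$, $u$ is a neighbour of the forgotten vertex $v\in L_i$. If $v\neq v_i$, then $v\in V_{i-1}$ and the edge $\{v,u\}$ joins a vertex of $V_{i-1}$ to one of $I_{i-1}\setminus\{v_i\}$, hence lies in $H_{i-1}^{\{v_i\}}$; therefore $v$ belongs to the same component $C$, i.e.\ $\tau(v)=C$, contradicting $\tau(v)\neq C$. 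The remaining case $v=v_i$ is the delicate one: here $v_i$ has just crossed from the right side of cut $i-1$ to the left side of cut $i$, and a triangle $v_iuw$ can a priori involve two forward neighbours $u,w\in I_{i-1}$ of $v_i$, whose components are not directly caught by the rule ``adjacent to $v_i$ in $H_{i-1}$''. Handling this requires a finer analysis tracking exactly which neighbours of $v_i$ can simultaneously be triangle partners and remain in an unmarked component --- for instance by arguing that in such a configuration the component $C$ must fall into one of the cases of \cref{lem:states-bound-after-set-remove} for which a strictly stronger bound than $\sqrt[3]{3}^{\,m_{i-1}(C)}$ already holds, leaving enough slack to absorb the extra states. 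Everything else is the routine induction above.
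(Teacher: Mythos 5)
You have put your finger on the real obstruction: the case where the forgotten vertex is $v_i$ itself is \emph{not} handled by either your argument or by the paper's. The paper's proof of this lemma rests on the assertion that ``a triangle containing vertices of $I_{i-1}\cap C$ must contain a forgotten vertex of $C$ as well'', and this is simply false when the forgotten vertex is $v_i$ and the triangle is $\{v_i,u,w\}$ with $u\in I_{i-1}\cap C$ and $w\in I_{i-1}\setminus V(C)$. Here is a concrete counterexample to the lemma as stated. Take $V(G)=\{p_1,p_2,p_3,a,b,c,v_7,u,x\}$ with edges $p_1p_2,p_2p_3,p_1p_3$, $au,bx,cv_7$, and $v_7u,v_7x,ux$, arranged in the order $p_1,p_2,p_3,a,b,c,v_7,u,x$. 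At cut $6$ one computes $F_6=\{a,b,c\}$, $I_6=\{v_7,u,x\}$, $Q_6=\emptyset$, so $X_6=\{v_7,u,x\}$, and $\cc(H_6^{\{v_7\}})$ contains the component $C_1=\{a,u\}$ with $m_6(C_1)=1$. Since $v_7$'s only $H_6$-neighbour is $c\notin V(C_1)$, the component $C_1$ is not put into $\mathcal{M}_7^1$, and the single transition step (forgetting $v_7$) does not mark it either, so $C_1\in\mathcal{C}_7\setminus\mathcal{M}_7^2$. Yet at the forget-$v_7$ node $x_7^2$ with bag $\{u,x\}$ one has both $\emptyset\in\mathcal{S}_{7,2}^1$ (certified by $\{p_1,p_2,p_3\}$) and $\{u,x\}\in\mathcal{S}_{7,2}^1$ (certified by $\{v_7,u,x\}$), giving $|\mathcal{S}_{7,2}^1[\tau^{-1}(C_1)]|=|\{\emptyset,\{u\}\}|=2>\sqrt[3]{3}^{1}=\sqrt[3]{3}^{\,m_6(C_1)}$. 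So the lemma is false as stated, and the gap you flag cannot be closed by a more careful argument.

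Your own escape route also does not work: in the counterexample, $|\mathcal{T}_6[C_1]|=1$, so \cref{lem:states-bound-after-set-remove} gives no slack beyond $\sqrt[3]{3}^{1}\approx 1.44$, which does not absorb the second state. The genuine issue is that the marking rule $\mathcal{M}_i^1$ (for $v_i\in I_{i-1}$) only marks components of $H_{i-1}^{\{v_i\}}$ that are adjacent to $v_i$ \emph{across} the cut $H_{i-1}$, whereas the triangle $\{v_i,u,w\}$ connects $v_i$ to $u\in I_{i-1}\cap V(C)$ via an edge \emph{within} $\overline{V}_{i-1}$, which $H_{i-1}$ does not see. To prove a lemma of this shape, the marking rule (or the accounting of $m_{i-1}(C)$) would need to be adjusted to also capture components containing $I_{i-1}$-vertices that are $G$-adjacent to $v_i$; as it stands, the per-component bound claimed in this lemma, and hence the product estimate in \cref{lem:bound-states-in-all-bags}, does not hold in the stated generality. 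Everything else in your induction—the handling of introduce bags, the invariant $\tau^{-1}(C)\cap B_i^j = V(C)\cap B_i^j$, and the argument excluding $v\neq v_i$—is correct and does match the paper's intent.
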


\begin{proof}
    Let $S_0 = \{v_i\}$. Then it holds that $C\in\cc(H_{i-1}^{S_0})$. 
    It follows by \cref{lem:states-bound-after-set-remove} that $|\mathcal{T}_{i,1}[C]| = |\mathcal{T}_{i-1}[C]| \leq  \sqrt[3]{3}^{m_{i-1}(C)}$.
    First, we claim that $\tau^{-1}(C) \cap V_{x_i^j} = C$, i.e.\ no introduced vertex was assigned to $C$, and no vertex of $C$ was forgotten in the bags $B_i^1, \dots, B_i^j$. 
    This follows by induction over $j$, where one can show that each forget node $x_i^j$ forgets a vertex $v$ where $\tau(v)\in \mathcal{M}_i^j$. 

    For each vertex in $v\in I_{i-1}\cap C$, it holds that $N_{i-1}(v)\subseteq C$, since $C$ is a connected component of $H_i^0$, and $L_{i-1}\subseteq V(H_i^0)$. Hence, for each $S \in \mathcal{S}_{i-1, j}^b[\tau^{-1}(C)]$, each witness of $S$ results from a packing $T$ in $G_{x_i^1}$ by packing triangles disjoint from $I_{i-1}\cap C$, since triangle containing vertices of $I_{i-1}\cap C$ must contain a forgotten vertex of $C$ as well, and hence, $C$ must be marked by the claim above.

    Let $S' = V(T)\cap C$. Then it holds that $S' \in \mathcal{S}_{i,1}^{|V(T)|}[C] \subseteq \mathcal{S}_i[C]$, and $S$ results from $S'$ by adding some vertices in $(L_{i-1}^1\cup L_{i-1}^2)\cap C$. Hence, it holds by definition of $\mathcal{T}_i$ that $S\in \mathcal{T}_i[C]$. It follows by \cref{lem:states-bound-after-set-remove} that 
    $|\mathcal{S}_{i-1,j}^b[\tau^{-1}(C)]| \leq |\mathcal{T}_{i,1}[C]| = |\mathcal{T}_{i-1}[C]| \leq \sqrt[3]{3}^{m_{i-1, S_0}} = \sqrt[3]{3}^{m_{i-1}(C)}$.
\end{proof}
For a marked component $C\in \mathcal{M}_i^j$, we prove that the number of vertices mapped to $C$ in $B_i^j$ is actually even bounded by a smaller value $m_{i-1}(C)/2$. 
This follows from the fact that whenever we introduce a vertex and map it to some component $C$, we forget another vertex $w$ in $\tau^{-1}(C)$. 
Hence, the number of vertices assigned to the same component does not increase in this process. 
Hence, it holds that $|\mathcal{S}_{x_i^j}[C]| \leq2^{\frac{m_{i-1}(C)}{2}} \leq \sqrt[3]{3}^{m_{i-1}(C)}$.

\begin{lemma}\label{lem:bound-vertices-adjacent-to-vi}
    It holds for each forget node $x_i^j$ and $C \in \mathcal{M}_i^j\setminus\{v_0\}$ that $m_{i-1}(C) \geq 2 |B_i^{j} \cap C|$.
\end{lemma}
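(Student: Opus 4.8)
The plan is to decompose the claim into a \emph{static counting inequality} lower-bounding $m_{i-1}(C)$ by (essentially) $2|X_{i-1}\cap C|$, and a \emph{monotonicity} observation that $|B_i^j\cap C|$ only shrinks as we move from $X_{i-1}$ towards $X_i$; the one configuration in which the counting inequality is off by one is recovered from the fact that a component only becomes marked once one of its own vertices has been forgotten (or $v_i$ has been touched). Fix $i,j$ with $x_i^j$ a forget node and $C\in\mathcal{M}_i^j$, $C\neq\{v_i\}$, and assume $m_{i-1}(C)\geq 1$ (otherwise $C$ is a lone vertex incident to no edge of $H_{i-1}$ and the statement is trivial). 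Write $F_C=F_{i-1}\cap C$, $I_C=I_{i-1}\cap C$, $L^1_C=L^1_{i-1}\cap C$, $L^2_C=L^2_{i-1}\cap C$, which partition $V(C)$; note $X_{i-1}\cap C=I_C\cup L^1_C\cup L^2_C$ and $C\setminus X_{i-1}=F_C$. For monotonicity: every vertex introduced between $X_{i-1}$ and $X_i$ lies in $\{v_i\}\cup Q_{i-1}$, hence outside $V(C)$, while every vertex of $C$ lies in $X_{i-1}$ unless it was forgotten earlier (i.e.\ lies in $F_C$). Thus $B_i^j\cap C$ is exactly $X_{i-1}\cap C$ minus the vertices of $C$ forgotten among $B_i^1,\dots,B_i^j$; in particular $|B_i^j\cap C|\leq|X_{i-1}\cap C|$, strictly if some vertex of $C$ has been forgotten up to $x_i^j$.

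Next I would prove the counting inequality. Since $C$ is connected in $H_{i-1}^{\{v_i\}}$ we have $|E(C)|\geq|V(C)|-1$ with equality iff $C$ is a tree; as in the proof of \cref{lem:states-bound-after-set-remove}, $\pi_{i-1}$ injects $I_C$ into $F_C$ (the vertex introducing $w\in I_C$ lies in the same component of $H_{i-1}^{\{v_i\}}$), so $|F_C|\geq|I_C|$; and by \cref{tripack::def:cut-sets} vertices of $F_C$ have no neighbor in $Q_{i-1}$, vertices of $L^1_C$ have exactly one, and vertices of $L^2_C$ at least two. Counting the edges of $H_{i-1}$ incident to $V(C)$ — those inside $C$ together with those from $C$ into $Q_{i-1}$ — gives
\[
m_{i-1}(C)\ \geq\ |E(C)|+|L^1_C|+2|L^2_C|\ \geq\ \bigl(|V(C)|-1\bigr)+|L^1_C|+2|L^2_C|\ =\ |F_C|+|I_C|+2|L^1_C|+3|L^2_C|-1,
\]
which, using $|F_C|\geq|I_C|$, is at least $2|X_{i-1}\cap C|-1+|L^2_C|$. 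Moreover, when $v_i\in I_{i-1}$ the edges between $C$ and $v_i$ are not yet counted, adding one more. Hence $m_{i-1}(C)\geq 2|X_{i-1}\cap C|\geq 2|B_i^j\cap C|$ and we are done, \emph{unless} all slacks vanish, i.e.\ $C$ is a tree, $|F_C|=|I_C|$, $L^2_C=\emptyset$, and (if $v_i\in I_{i-1}$) $C$ is not adjacent to $v_i$; call this the \emph{tight case}, where $m_{i-1}(C)=2|X_{i-1}\cap C|-1$.

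It remains to handle the tight case through the marking. Here $L^2_C=\emptyset$ rules out $C\in Z_i^{j'}(w)$ for every $w$, and $C\notin\mathcal{M}_i^1$ (it is neither adjacent to $v_i$ when $v_i\in I_{i-1}$, nor does it contain an $L^2_{i-1}$-vertex adjacent to $v_i$ when $v_i\in Q_{i-1}$). So the only way $C$ could have entered $\mathcal{M}_i^{j'}$ for some $j'\leq j$ is through a forget-vertex bag forgetting some vertex $v$ with $\tau(v)=C$. Any vertex introduced between $X_{i-1}$ and $X_i$ lies in $I_i\subseteq X_i=B_i^{r_i}$ (using \cref{tripack::cor:process-yeilds-cut-set-bag}) and hence is never forgotten before $X_i$; therefore $v$ is an original vertex, so $\tau(v)=C$ forces $v\in V(C)$. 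Thus a vertex of $C$ has been forgotten up to $x_i^j$, so the monotonicity step yields $|B_i^j\cap C|\leq|X_{i-1}\cap C|-1$, and finally $2|B_i^j\cap C|\leq 2|X_{i-1}\cap C|-2=m_{i-1}(C)-1\leq m_{i-1}(C)$.

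The main obstacle I expect is this last step: one has to enumerate carefully the ways a component can become marked (via adjacency to $v_i$, via $Z_i^{j'}(w)$, or via forgetting a vertex mapped to it) and check that in the tight configuration every route except ``a vertex of $C$ was forgotten'' is impossible, and that the witness of such a marking is genuinely a vertex of $C$ rather than an introduced vertex — which in turn relies on introduced vertices surviving in the bag all the way up to $X_i$. The counting of the second step is routine once one is careful about which left-side vertices carry $Q_{i-1}$-edges and that $\pi_{i-1}$ respects the components of $H_{i-1}^{\{v_i\}}$.
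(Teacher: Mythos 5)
Your proof is correct and follows essentially the same route as the paper's: first the edge-counting inequality $m_{i-1}(C)\geq 2|X_{i-1}\cap C|-1+|L^2_C|$ (plus one more for a $v_i$-adjacency when $v_i\in I_{i-1}$), then, in the remaining tight case, the observation that $C$ can only enter $\mathcal{M}_i^{j'}$ for some $j'\leq j$ through a forget-bag forgetting a genuine vertex of $C$, which recovers the missing unit via $|B_i^j\cap C|\leq|X_{i-1}\cap C|-1$. The only cosmetic difference is that you argue the forgotten witness is ``some vertex of $C$'' while the paper pinpoints it as a vertex of $L^1_{i-1}\cap C$; both are adequate.
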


\begin{proof}
    Let $L^1 = L^1_{i-1}\cap C$, $L^2 = L^2_{i-1} \cap C$, $F = F_{i-1} \cap C$ and $I = I_{i-1} \cap C$. It holds that $E(C) \geq |L^1| + |L^2| + |F| + |I| - 1$ as $C$ is connected by definition. 
    It also holds that 
    $|E_{H_{i-1}}[C, Q_{i-1}\cup \{v_i\}]| \geq 2|L^2| + |L^1|$.
    Moreover, if at least one of the following holds:
    \begin{itemize}
     	 \item $C$ contains a cycle, 
     	 \item $|F| > |I|$, 
    	 \item $L^2$ is non-empty, 
    	 \item if $v_i \in I_{i-1}$, and $C$ is adjacent to $v_i$, 
     \end{itemize}
     then we get: $m_i(C) \geq 2|I| + 2|L_2| + 2|L_1| = 2|B_i^j \cap C|$: Indeed, in each of these cases we get an additional edge (in the first two cases we get an additional edge in $E(C)$, and in the latter two cases we get an additional edge in $E_{H_{i-1}}[C, Q_{i-1}\cup \{v_i\}]$). Hence, it holds that $m_{i-1}(C) \geq 2|B_i^j \cap C|$.
     
    In the remainder we assume that none of the above applies.
    Let $j'$ be the smallest value, such that $C \in M_i^{j'}$. Then $j'\leq j$, and there exists a vertex $v \in L^1$, that was forgotten in $B_i^{j'}$. Hence, it holds that
    \[
    |B_i^j\cap C|\leq |B_i^{j'}\cap C|\leq |B_i^1\cap C| - 1.
    \]
    If follows that
    \[
    m_{i-1}(C) \geq 2|X_{i-1}\cap C| - 1 \geq 2\big(|B_i^j\cap C| + 1\big) - 1 \geq 2 |B_i^j \cap C| + 1.
    \]
\end{proof}

\begin{lemma}\label{lem:transition-component-size}
It holds for all $i\in[n]$, and $j\in [r_i]$, where $x_i^j$ is a forget node, and for all $C\in \mathcal{C}_i$, that $|B_i^j\cap \tau^{-1}(C)|\leq |B_i^j\cap C|$.
\end{lemma}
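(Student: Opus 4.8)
I would fix $i$ and prove the inequality by induction on $j$ along the transition block $B_i^1, \dots, B_i^{r_i}$, simultaneously for all $C \in \mathcal{C}_i$; in fact I would track both $|B_i^j \cap \tau^{-1}(C)|$ and $|B_i^j \cap C|$ at every bag of the block, not only at forget nodes. For the base case $B_i^1 = X_{i-1}$, no vertex of $I_i \setminus I_{i-1}$ has been introduced yet, so every vertex currently in the bag lies in $V(H_i^0) \cup \{v_i\}$ and is mapped by $\tau$ to the connected component of $H_i^0$ that contains it (or to $\{v_i\}$). Hence $B_i^1 \cap \tau^{-1}(C) = B_i^1 \cap C$ for every $C$, and the inequality holds with equality.

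For the inductive step I would run through the operation types produced by \cref{tripack::def:transition-bags}. The introduce-$v_i$ step puts $v_i$ into both $B \cap \{v_i\}$ and $B \cap \tau^{-1}(\{v_i\})$ and touches nothing else, so both counts move in lockstep. A forget step for a vertex $v \in S_i^{j',0}$ removes $v$ from both $B \cap \tau(v)$ and $B \cap \tau^{-1}(\tau(v))$, since $v$ itself lies in $\tau(v)$ — the $H_i^0$-component of $v$ if $v \ne v_i$, and $\{v_i\}$ if $v = v_i$ — and again nothing else changes. The essential case is a paired ``introduce $w$, then forget $v$'' step for $v \in S_i^{j',1}$ with $\pi_i(w) = v$: by construction $\tau(w) = \tau(\pi_i(w)) = \tau(v) =: C_v$, while $w \in I_i \setminus I_{i-1}$ is not a vertex of $H_i^0$, so $w \notin C_v$ as a vertex set; over these two operations the bag gains $w$ and loses $v$, so $|B \cap \tau^{-1}(C_v)|$ is unchanged whereas $|B \cap C_v|$ drops by one. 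Summing up, the difference $|B \cap C| - |B \cap \tau^{-1}(C)|$ is preserved by every operation except a paired step touching $C$, which lowers it by one.

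Since this difference equals $0$ at $B_i^1$, the crux — and the step I expect to be the main obstacle — is to show that at a forget node it is still non-negative, even though each paired step touching $C$ drags it down; the naive invariant $|B_i^j\cap\tau^{-1}(C)| \le |B_i^j\cap C|$ by itself sits exactly on the boundary and is not preserved by a paired step in isolation. I would therefore strengthen the induction hypothesis to additionally record that each component has already ``paid'', by having one of its $H_i^0$-vertices forgotten, for every introduced vertex currently assigned to it and present in the bag — which is precisely the bookkeeping encoded by the marked sets $\mathcal{M}_i^j$, a component being marked at the first moment one of its vertices becomes eligible to leave the bag. Checking that this strengthened hypothesis survives every operation uses only three structural facts: $\pi_i$ restricts to a bijection between $I_i \setminus I_{i-1}$ and the vertices forgotten via the degree-$\le 2$ rule; no introduced vertex is ever mapped outside the $\tau$-class of a forgotten vertex; and \cref{tripack::def:transition-bags} exhausts all ``free'' forgets of a round before any of its paired introduce/forget operations, so the ``credit'' of paired forgets of $C$ never outruns the ``debit'' of $C$-vertices removed. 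From this the inequality at the forget nodes follows directly, while the base case and the ordinary bookkeeping in the two easy operation types are routine.
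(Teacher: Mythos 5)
Your step-by-step bookkeeping is careful and correct, but it refutes the claimed inequality rather than proving it, and the attempted rescue via a ``strengthened induction hypothesis'' cannot work: any statement implying the lemma would itself have to be false. As you compute, a paired introduce-$w$/forget-$v$ step touching a component $C$ decreases $|B\cap C|-|B\cap\tau^{-1}(C)|$ by one, starting from $0$ at $B_i^1$; so after a single such pair the inequality $|B_i^j\cap\tau^{-1}(C)|\le|B_i^j\cap C|$ is already strictly violated. For a concrete counterexample, take $G$ on $\{v_1,\dots,v_4\}$ with edges $v_1v_2$, $v_1v_3$, $v_1v_4$, $v_2v_3$ and the arrangement $v_1,v_2,v_3,v_4$. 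At $i=2$ the transition is $B_2^1=X_1=\{v_1\}$, introduce $v_2$, introduce $v_3$, forget $v_2$, introduce $v_4$, forget $v_1$, so $\mathcal{C}_2=\{\{v_1\},\{v_2\}\}$, $\pi_2(v_4)=v_1$ and $\tau^{-1}(\{v_1\})=\{v_1,v_4\}$; at the forget-$v_1$ node the bag is $\{v_3,v_4\}$, hence $|B_2^6\cap\tau^{-1}(\{v_1\})|=1>0=|B_2^6\cap\{v_1\}|$.

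Your ``payment'' idea is sound, but it establishes a genuinely different statement. Each introduced vertex assigned to $C$ and still present in the bag is matched, via $\pi_i$, with a distinct already-forgotten vertex of $C\cap X_{i-1}$, giving $|B_i^j\cap(\tau^{-1}(C)\setminus C)|\le|(X_{i-1}\cap C)\setminus B_i^j|$, which rearranges to $|B_i^j\cap\tau^{-1}(C)|\le|X_{i-1}\cap C|$ (with a $+1$, hence $\le 2$, for $C=\{v_i\}$, due to the unpaired initial introduction of $v_i$). That compares against the \emph{initial} bag $X_{i-1}$, not the current bag $B_i^j$, and it is what the downstream corollaries actually need---but it is not the lemma as stated, so your closing claim that ``from this the inequality at the forget nodes follows directly'' is a non sequitur. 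You should instead have concluded that the statement itself needs to be repaired. (The paper's own one-line proof shares this gap: pairing each introduction with the forget that follows it shows that $|B_i^j\cap\tau^{-1}(C)|$ never exceeds its value at $B_i^1$, which is the $|X_{i-1}\cap C|$ bound rather than the stated one, and it ignores the unpaired introduction of $v_i$.)
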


\begin{proof}
    This holds by induction over $j$, since each bag that introduces a node $v$ with $\tau(v) = C$ is followed directly by a forget bag of a node $w$ with $\tau(v) = C$ as well.
\end{proof}

\begin{corollary}\label{cor:marked-component-size-at-all-bags}
    It holds for each forget node $x_i^j$, and for each connected component $C$ of $\mathcal{M}_i^j \setminus \{v_i\}$, that $|B_i^j \cap \tau^{-1}(C)| \leq \frac{m_{i-1}(C)}{2}$.
\end{corollary}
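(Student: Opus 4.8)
The plan is to obtain this corollary by simply chaining the two preceding lemmas. Fix a forget node $x_i^j$ and a component $C \in \mathcal{M}_i^j \setminus \{v_i\}$. First I would apply \cref{lem:transition-component-size}, which (since $x_i^j$ is a forget node) yields $|B_i^j \cap \tau^{-1}(C)| \leq |B_i^j \cap C|$; intuitively this records that the map $\tau$ never increases the population of a component at a forget bag, because each introduce bag assigning a new vertex to $C$ is immediately followed by a forget bag removing a vertex of $\tau^{-1}(C)$.

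Next I would invoke \cref{lem:bound-vertices-adjacent-to-vi}, which applies verbatim here ($x_i^j$ a forget node and $C \in \mathcal{M}_i^j \setminus \{v_i\}$, matching the lemma's hypothesis up to the evident typo $\{v_0\}$ for $\{v_i\}$) and gives $m_{i-1}(C) \geq 2\,|B_i^j \cap C|$, i.e.\ $|B_i^j \cap C| \leq \tfrac{m_{i-1}(C)}{2}$. Combining the two inequalities,
\[
    |B_i^j \cap \tau^{-1}(C)| \;\leq\; |B_i^j \cap C| \;\leq\; \frac{m_{i-1}(C)}{2},
\]
which is exactly the claimed bound.

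There is essentially no genuine obstacle here: the content has already been isolated into \cref{lem:transition-component-size} and \cref{lem:bound-vertices-adjacent-to-vi}, and the corollary is the mechanical composition of the two. The only point worth double-checking is that the hypotheses line up — both lemmas are stated for a forget node $x_i^j$ and for components of $\mathcal{M}_i^j$ with the trivial component $\{v_i\}$ excluded — so no case distinction or additional argument is needed.
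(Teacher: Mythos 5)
Your proof is correct and matches the paper's argument exactly: the paper also derives the corollary by composing \cref{lem:transition-component-size} and \cref{lem:bound-vertices-adjacent-to-vi}, which you have simply spelled out. The observation about the typo $\{v_0\}$ for $\{v_i\}$ in the lemma statement is also correct.
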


\begin{proof}
    This follows from \cref{lem:bound-vertices-adjacent-to-vi} and \cref{lem:transition-component-size}.
\end{proof}

\begin{lemma}\label{lem:bound-states-in-all-bags}
    It holds for all $i\in[n]$, $j\in [r_i]$ and $b\in [\lfloor\frac{n}{3}\rfloor]_0$ that $|\mathcal{S}_{i,j}^b| \leq 4\cdot\sqrt[3]{3}^{\ctw}$.
\end{lemma}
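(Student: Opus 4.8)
The plan is to bound $|\mathcal{S}_{i,j}^b|$ by splitting the bag $B_i^j$ along the components in $\mathcal{C}_i$ via the map $\tau$, exactly as in the proof of \cref{cor:states-bound-at-checkpoint-bags}, and then handle marked and unmarked components separately. First I would observe that the sets $\tau^{-1}(C)\cap B_i^j$ for $C\in\mathcal{C}_i$ partition $B_i^j$, so that $|\mathcal{S}_{i,j}^b| \le \prod_{C\in\mathcal{C}_i} |\mathcal{S}_{i,j}^b[\tau^{-1}(C)]|$. For the ``special'' component $\{v_i\}$ we have the trivial bound $|\mathcal{S}_{i,j}^b[\tau^{-1}(\{v_i\})]| \le 2^{|\tau^{-1}(\{v_i\})\cap B_i^j|}$; by construction of the transition bags every introduce-node (other than the one introducing $v_i$ itself) is immediately followed by a forget-node, so at most a constant number of vertices—in fact at most, say, $2$—are ever simultaneously in the bag and mapped to $\{v_i\}$ at a forget node, contributing only a factor of $4$. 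This is where the leading constant $4$ in the statement comes from.

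Next I would bound the product over the genuine components $C\in\mathcal{C}_i\setminus\{\{v_i\}\}$. For an unmarked component $C\in\mathcal{C}_i\setminus\mathcal{M}_i^j$, \cref{lem:bound-states-unmarked-components} gives directly $|\mathcal{S}_{i,j}^b[\tau^{-1}(C)]| \le \sqrt[3]{3}^{\,m_{i-1}(C)}$. For a marked component $C\in\mathcal{M}_i^j\setminus\{\{v_i\}\}$ at a forget node, \cref{cor:marked-component-size-at-all-bags} gives $|B_i^j\cap\tau^{-1}(C)|\le m_{i-1}(C)/2$, hence $|\mathcal{S}_{i,j}^b[\tau^{-1}(C)]| \le 2^{|B_i^j\cap\tau^{-1}(C)|}\le 2^{m_{i-1}(C)/2} = \sqrt{2}^{\,m_{i-1}(C)} \le \sqrt[3]{3}^{\,m_{i-1}(C)}$. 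So in both cases the per-component bound is $\sqrt[3]{3}^{\,m_{i-1}(C)}$.

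Combining, for a forget node $x_i^j$ we obtain
\[
|\mathcal{S}_{i,j}^b| \le 4\cdot \prod_{C\in\mathcal{C}_i\setminus\{\{v_i\}\}} \sqrt[3]{3}^{\,m_{i-1}(C)} = 4\cdot \sqrt[3]{3}^{\,\sum_{C} m_{i-1}(C)} \le 4\cdot\sqrt[3]{3}^{\,|E(H_{i-1})|} \le 4\cdot\sqrt[3]{3}^{\,\ctw},
\]
where the middle step uses that the sets $E_{i-1}^{V(C)}$ for $C\in\cc(H_{i-1}^{\{v_i\}})$ are pairwise disjoint subsets of $E(H_{i-1})$ (\cref{obs:partition} with $S_0=\{v_i\}$), and the last step uses $|E(H_{i-1})|\le\ctw$. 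For an introduce node $x_i^j$, the set of realizable states does not change compared to the preceding forget node (the DP only modifies states at forget nodes), so the bound carries over; the same applies to the initial introduce-$v_i$ bag $B_i^2$. Finally, the checkpoint bags $X_i = B_i^{r_i}$ are covered either by this argument or directly by \cref{cor:states-bound-at-checkpoint-bags}.

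I expect the main obstacle to be the bookkeeping for the special component $\{v_i\}$ and, more importantly, making precise the claim that at every transition bag each component is either unmarked (so \cref{lem:bound-states-unmarked-components} applies verbatim) or marked and controlled by \cref{cor:marked-component-size-at-all-bags}—i.e.\ that the case analysis in those two lemmas is exhaustive over all $C\in\mathcal{C}_i$ and that $\tau^{-1}$ indeed partitions $B_i^j$. Once the partition and the dichotomy marked/unmarked are in place, the inequality chain above is essentially a restatement of \cref{cor:states-bound-at-checkpoint-bags} with the per-component bounds supplied by the two preceding lemmas, so no further calculation is needed.
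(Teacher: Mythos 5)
Your proposal follows the paper's proof essentially verbatim: partition $B_i^j$ via $\tau$ over $\mathcal{C}_i$, bound the special component $\{v_i\}$ by a factor $2^2 = 4$ (the paper cites \cref{lem:transition-component-size} for $|\tau^{-1}(\{v_i\})\cap B_i^j|\leq 2$, which is the formalization of your informal introduce-then-forget argument), handle unmarked components by \cref{lem:bound-states-unmarked-components} and marked ones by \cref{cor:marked-component-size-at-all-bags}, reduce introduce nodes to the preceding forget node, and sum $m_{i-1}(C)$ over the disjoint edge sets to get $|E_{i-1}|\leq\ctw$. This matches the paper's argument in both structure and all key steps.
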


\begin{proof}
    First, we claim that $|\mathcal{S}_{i,j}^b[\tau^{-1}(C)]|\leq \sqrt[3]{3}^{m_{i-1}(C)}$ holds for all $C\in \mathcal{C}_i\setminus \{v_i\}$.
    Note that the families $(E_{i-1}^C)_{C\in\cc(\hat{H}_i)}$ form a partition of $E_{i-1}$. Hence, it follows from this claim that
    \[|\mathcal{S}_{i,j}^b| 
    \leq \prod\limits_{C \in \mathcal{C}_i} |\mathcal{S}_{i,j}^b[\tau^{-1}(C)]|
    \leq 2^2\cdot \sqrt[3]{3}^{\sum_{C\in \mathcal{C}_i\setminus \{v_0\}} m_{i-1}(C)} 
    = 4\cdot\sqrt[3]{3}^{\ctw}\]
    holds, where $|\tau^{-1}(\{v_i\})\cap B_i^j| \leq 2$ follows from \cref{lem:transition-component-size}.

    Now it remains to prove the claim.
    If $x_i^j$ is an introduce bag, then by \cref{tripack::def:states}, we have $\mathcal{S}_{i,j}^b = \mathcal{S}_{i,j-1}^b$. 
    Hence, it suffices to prove the claim for each forget bag $x_i^j$. If $C \notin \mathcal{M}_i^j$, then the claim holds by \cref{lem:bound-states-unmarked-components}. 
    So we may assume $C\in \mathcal{M}_i^j$. 
    Then it holds by \cref{cor:marked-component-size-at-all-bags} that $|B_i^j \cap \tau^{-1}(C)| \leq \frac{m_i(C)}{2}$, and hence,
    \[
    \mathcal{S}_{i,j}^b[\tau^{-1}(C)] \leq 2^{|B_i^j\cap \tau^{-1}(C)|} \leq 2^{\frac{m_{i-1}(C)}{2}} = \sqrt{2}^{m_{i-1}(C)} \leq \sqrt[3]{3}^{m_{i-1}(C)}.
    \]
\end{proof}

\begin{proof}[Proof of \cref{thm:tripack}]
    The algorithm first builds the path decomposition $\mathcal{P}$ from \cref{tripack::def:arrangement-to-path-decomposition}---this can be clearly done in polynomial time. 
    After that, the algorithm computes all sets 
    $\mathcal{S}_x^b$ for all nodes $x\in V(P)$ and all values $b\in [\lfloor\frac{n}{3}\rfloor]_0$ in time 
    \[
    	\bigoh^*(\max_{x \in P, b\in [\lfloor\frac{n}{3}\rfloor]_0} |\mathcal{S}_x^b|) \in \bigoh^*(\sqrt[3]{3}^{\ctw})
     \]
     as shown by \cref{lem:time-bound-in-states}, where $|\mathcal{S}^b_x| \leq \sqrt[3]{3}^{\ctw}$ follows from \cref{lem:bound-states-in-all-bags}.
    The algorithm accepts, if $\emptyset \in \mathcal{S}_{x_n}^b$.
\end{proof}

\subsection{Lower Bound}

Now we show the tightness of our algorithm, by proving the following theorem:

\begin{theorem}\label{lb:theo:partition}
Assuming SETH, there exists no algorithm that solves the \Tpartp problem in time $\ostar\big((\sqrt[3]{3}-\varepsilon)^{\ctw}\big)$ for any positive value $\varepsilon$, when a linear arrangement of cutwidth at most $\ctw$ of the graph part of the input.
\end{theorem}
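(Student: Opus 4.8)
The plan is to reduce from $d$-CSP-$3$ using \cref{lb:theo:lampis} with $B = 3$: given an instance with $n$ variables and $m$ constraints, we build a graph $G$ together with a linear arrangement of cutwidth roughly $n \cdot \log_3\sqrt[3]{3}^{-1}$—more precisely, we want the cutwidth to be about $n \cdot \frac{3}{\log_2 3} \cdot \log_2(\text{something})$ so that $\sqrt[3]{3}^{\ctw} \approx 3^{n}$; concretely the target is $\ctw \approx 3n$ in the "natural" encoding where each value in $[2]_0$ of a variable is represented by $3$ edges of a $Z$-cut, since $\sqrt[3]{3}^{3} = 3$. We will use long ``information highways'' consisting of repeated $Z$-cut gadgets, one per variable, that carry a value from $[2]_0$ across the arrangement; crucially a single $Z$-cut (a path on three edges, contributing $3$ to the cutwidth) must be able to encode $3$ distinct states of a partial triangle packing, which is exactly the phenomenon identified as the bottleneck of the algorithm.

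The key steps, in order, are: (1) Design the variable gadget: a gadget $P_v$ whose intended partial solutions come in exactly three types, distinguishable by which triangle(s) a $Z$-cut separating consecutive copies can belong to, and such that any valid global triangle packing of size $|V(G)|/3$ restricted to $P_v$ is forced to be one of these three ``canonical'' states and this state is propagated consistently along the whole highway for $v$. (2) Design the constraint gadget: for each constraint $C$ on variables $v_1,\dots,v_d$ with relation $R_C$, build a gadget that can be completed to a perfect triangle packing if and only if the tuple of incoming states of the $d$ highways lies in $R_C$; this is the standard ``decoding + verification'' block and we may allow it to blow up the cutwidth by an $O_d(1)$ additive term since $d$ is a constant. (3) Lay everything out on a line: place the $n$ highways in parallel and route them so that at any vertical cut we see at most $3$ edges per highway (plus $O_d(1)$ inside at most one active constraint gadget), giving $\ctw(G) \le 3n + O_d(1)$. (4) Correctness: a perfect triangle packing of $G$ corresponds to a choice of one of the three canonical states per variable, i.e.\ an assignment $\pi:\Var\to[2]_0$, that satisfies every constraint, and conversely. (5) Running-time transfer: an $\ostar((\sqrt[3]{3}-\varepsilon)^{\ctw})$ algorithm would solve $d$-CSP-$3$ in time $\ostar((\sqrt[3]{3}-\varepsilon)^{3n+O_d(1)}) = \ostar((3-\varepsilon')^{n})$ for some $\varepsilon'>0$ depending only on $\varepsilon$, contradicting \cref{lb:theo:lampis}; here one must be careful that the $O_d(1)$ additive overhead in the cutwidth only contributes a constant factor and that $(\sqrt[3]{3}-\varepsilon)^3 = 3 - \varepsilon'$ with $\varepsilon'>0$.

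The main obstacle will be step (1) together with the forcing part of step (4): we must engineer the variable highway so that (a) a $Z$-cut really does admit exactly three solution-states and no more—so the encoding is tight, not wasteful—and (b) these states cannot be ``mixed'' along the highway, i.e.\ the perfect-packing constraint genuinely forces a single value to be transmitted, with no cheating by leaving vertices uncovered or by locally switching states between two constraint gadgets. Making the $Z$-cut carry three states while contributing only three edges is precisely what makes the non-integral base $\sqrt[3]{3}$ achievable, and it is the novel part relative to prior lower bounds, which use matchings or bicliques as their cuts; the gadget must be designed so that the three triangle-packing patterns across the $Z$-cut are in bijection with $[2]_0$ and that consistency is enforced by a parity/covering argument on the intervening vertices.

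Finally, for cutwidth bookkeeping I would route the constraint gadgets ``sequentially'' so that at most one is active at any cut, interleave the highways compactly, and check that the canonical-state edges of each highway contribute exactly $3$ to every cut that separates two consecutive highway-copies; summing over the $n$ highways and adding the constant contribution of the unique active constraint gadget yields $\ctw(G) \le 3n + O_d(1)$, which is what the running-time transfer needs.
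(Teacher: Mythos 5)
Your high-level plan matches the paper's proof: reduce from $d$-CSP-$3$, encode each variable's value in $[2]_0$ as one of three states of a partial packing across a $Z$-cut costing $3$ edges, propagate that state along a ``highway'' of gadgets, verify each constraint in a local gadget with only $O_d(1)$ extra cut edges, and finish with the arithmetic $(\sqrt[3]{3}-\varepsilon)^{3n+O_d(1)} = \ostar((3-\varepsilon')^n)$. You have also correctly isolated the novel point (a single $Z$-cut carrying three states) and flagged it as the crux.

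However, the proposal stops exactly at the crux. You state ``we must engineer the variable highway so that (a) a $Z$-cut really does admit exactly three solution-states\ldots and (b) these states cannot be mixed,'' but you do not give the gadgets, and these constructions are the entire technical content of the lower bound. The paper needs three nontrivial ingredients you have not supplied. First, a \emph{selection gadget}: a fixed small graph attached to boundary vertices $\{v_z : z\in U\}$ so that every perfect packing $Q$-blocks exactly the boundary vertices in some $F\in\mathcal{F}$ and, conversely, any $F\in\mathcal{F}$ is realizable---the paper builds this from a union of cycles $M_1^i\cup M_2^i$ linked by fan vertices $x_j^i$, and the ``exactly one option chosen'' property requires a careful parity argument along those cycles (see \cref{tripack-lb::lem:state-selection-gadget}). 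Second, a \emph{connection gadget}: you want a $Z$-cut, but a single $Z$-graph does not force the state to propagate; the paper chains three $Z$-graphs and proves a nontrivial lemma (\cref{tripack-lb::lem:state-connection-gadget}) that the number of blocked entry vertices equals the number of free exit vertices, which is what makes the three states transmit faithfully. Third, a \emph{boundary cleanup}: the ends of the highways leave a variable-dependent number of uncovered vertices, which the paper absorbs with a chain of $5$-cliques and a divisibility-by-three argument. Without concrete instantiations of at least the first two ingredients, and a proof of the forcing lemmas for them, steps (1) and (4) of your outline remain unproven, so the proposal as written is a correct plan but not a proof.
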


In order to prove \cref{lb:theo:partition}, we provide a reduction from the $d$-CSP-$B$ problem for $B=3$, to the \Tpartp problem. We use \cref{lb:theo:lampis} to prove the correctness of our results.
We call a partition of a graph $G$ into triangles a \emph{solution}.
 
We fix $B=3$ and let $d$ be an arbitrary but fixed positive integer. 
Further, let $I$ be an instance of the $d$-CSP-$B$ problem.
Let $\Var$ be the variables of $I$ and $\mathcal{C}$ the set of constraints. Let $\mathcal{C} = \{C_1, \dots C_m\}$. For the sake of readability, we denote $V_{C_i}$ by $V_i$ and $R_{C_i}$ by $R_i$ for all $i\in[m]$. 
We will now show how to construct, in polynomial time, a graph $G = (V, E)$ and a linear arrangement of $G$ of cutwidth at most $3n + \bigoh(1)$ such that $G$ admits a partition into triangles if and only if $I$ is satisfiable.

\subparagraph*{Overview of the construction.} We will start by describing two main gadgets to be used as the building clocks in the construction of $G$: a so-called \emph{selection gadget} and a so-called \emph{connection gadget}. A \emph{gadget} here refers to a subgraph of constant size. We will combine different gadgets by taking the disjoint union of them, and identifying specified vertices of one gadget with some vertices of the other. The graph $G$ will consist of path-like structures (we call them \emph{paths} for ease of description) consisting of selection gadgets. Between each pair of consecutive selection gadgets of the same path, we will add a connection gadget to combine them. We will achieve this by identifying some vertices of the connection gadget with vertices of the two selection gadgets preceding and following it.
When combining two different gadgets, we only identify vertices of their boundaries.
Finally, we will add a selection gadget for each constraint $C$.

We remark that in comparison to similar SETH-based lower bounds for different problems and structural parameters, our reduction uses a more involved connection gadget. 
In most reductions the connection graph is given implicitly as a single vertex, a single edge (or a matching), or a constant biclique:
In our case, the connection gadget will yield the cuts on three edges having the so-called $Z$-shape (i.e., a path on four vertices) which, as already emphasized by our algorithm in \cref{subsec:tripack-algo}, lies at the core of the problem's complexity.

The intuition behind this construction is the following.
Each path corresponds to a variable in $\Var$. 
A solution of the \Tpartp problem restricted to some selection gadget yields one of $B = 3$ so-called \emph{states}, and these states correspond to assignments of the corresponding variable. The connection gadgets ensure that a solution defines the same state in all selection gadgets of the same path. The selection gadgets corresponding to constraints ensure that the states defined by a solution are consistent with the constraints (i.e., the corresponding assignment satisfy the constraints). 

\subparagraph*{Connection Gadget.} A \emph{connection gadget} consists of three consecutive $Z$-graphs: 
Here a $Z$-graph is a graph on $4$-vertices $u_1,u_2,v_1, v_2$ with the edges $\{u_1,v_1\}$, $\{u_2, v_2\}$, $\{u_1, u_2\}$, $\{v_1,v_2\}$ and $\{u_1,v_2\}$. We identify the vertices $v_1$ and $v_2$ of each of the first two $Z$-graphs with the vertices $u_1$ and $u_2$ of the following $Z$-graph, respectively. 
We refer to \Cref{fig:connection-gadget} for illustration.

For a connection gadget $R$, we denote the vertices $u_1$ and $u_2$ of the first $Z$-graph by $u_1(R)$ and $u_2(R)$, and call them the \emph{entry vertices}.
Similarly, we denote the vertices $v_1, v_2$ of the last $Z$-graph by $v_1(R)$ and $v_2(R)$, and call them the \emph{exit vertices}. We call the entry and the exit vertices together the \emph{boundary} of $R$.
We also denote the two vertices adjacent to $u_1(R)$ and $u_2(R)$ (namely, the vertices $v_1$ and $v_2$ of the first $Z$-graph) by $a_1(R)$ and $a_2(R)$ respectively, and the two vertices adjacent to $v_1(R)$ and $v_2(R)$ (the vertices $u_1$ and $u_2$ of the last $Z$-graph) by $b_1(R)$ and $b_2(R)$.
We omit the symbol $R$ when clear from context.

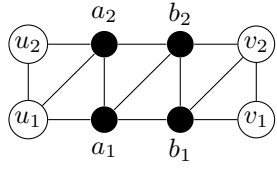
\begin{figure}[h]
    \centering
\begin{tikzpicture}
\pic {connection-gadget};
\end{tikzpicture}
\caption{\label{fig:connection-gadget}A connection gadget $R$.}
\end{figure}

\subparagraph*{Selection Gadget.} Let $U$ be a finite ground set, and $\mathcal{F} = \{F_1, \dots, F_r\}\subseteq \binom{U}{b}$ a family of subsets of size $b$ each for some integer $b$. We define the \emph{selection gadget} $Q = Q(U, \mathcal{F})$ as follows: First, we add a set vertices $\{v_z \colon z \in U\}$, called the boundary of $Q$. For $i\in[r]$, we add $2 \cdot b$ vertices $u^i_1,\dots, u^i_b$ and $w^i_1, \dots, w^i_b$ to $Q$. 
We also define the matchings
\[
    M^i_1 = \big\{\{u^i_j, w^i_j\} \colon j\in[b]\big\},
\]
and 
\begin{equation*}
    M^i_2 = 
    \begin{cases}
        \big\{e^i_j = \{u^i_j, u^{i}_{j+1}\}\colon j \equiv_2 1\big\} \cup \big\{e^i_j = \{w^i_j, w^{i}_{j+1}\}\colon j \equiv_2 0\big\} \cup \{e^i_b = \{u_b,w_1\}\} & b\equiv_2 1,\\
        \big\{e^i_j = \{u^i_j, u^{i}_{j+1}\}\colon j \equiv_2 1\big\} \cup \big\{e^i_j = \{w^i_j, w^{i}_{j+1}\}\colon j \equiv_2 0\big\} \cup \{e^i_b = \{w_b, w_1\}\} & b\equiv_2 0.\\
    \end{cases}
\end{equation*}
We add the edges of $M^i_1$ and $M^i_2$ to $Q$. Note that $M^i_1$ and $M^i_2$ are two disjoint matchings, whose union forms a simple cycle of length $2b$. Finally, for each $i\in[r-1]$, we add a set of $b$ vertices $x^i_1, \dots x^i_b$, and for $j\in[b]$ we add the two edges between $x^i_j$ and both endpoints of $e^i_j$, as well as the two edges between $x^i_j$ and both endpoints of $e^{i+1}_j$. For each $i\in[r]$, let $F_i = \{z^i_1, \dots z^i_b\}$. We add the edges $\{v_{z^i_j}, u^i_j\}$ and $\{v_{z^i_j}, w^i_j\}$ for all $j\in[b]$. An example of a selection gadget is depicted in \Cref{fig:selection-gadget}.

\medskip

We will use two kinds of selection gadgets: The first one is defined by the ground set
\[U = \{x, x', y, y', \szero, \sone, \stwo\}, \text{ and }
\mathcal{F}=\big\{\{x,x',\szero, \sone\},\{x,y',\szero, \stwo\},\{y,y',\sone, \stwo\}\big\}.\]
 We call the vertices $v_{\szero}, v_{\sone}, v_{\stwo}$ the \emph{assignment vertices}, the vertices $v_x, v_{x'}$ the \emph{entry vertices}, and $v_y, v_{y'}$ the \emph{exit vertices}. We call this gadget a \emph{path gadget}, and denote it by $Q$. 
 
 The second kind of selection gadgets (so-called \emph{clause gadgets}) will correspond to the constraints of the input instance. We denote them by $Z = Q(U, \mathcal{F})$, where $U = [d] \times [2]_0$, and the family $\mathcal{F}$ depends on $C$. We will use the clause gadget $Z$ to ensure that a solution defines a state in some selection gadget that corresponds to an assignment that satisfies $C$.
 
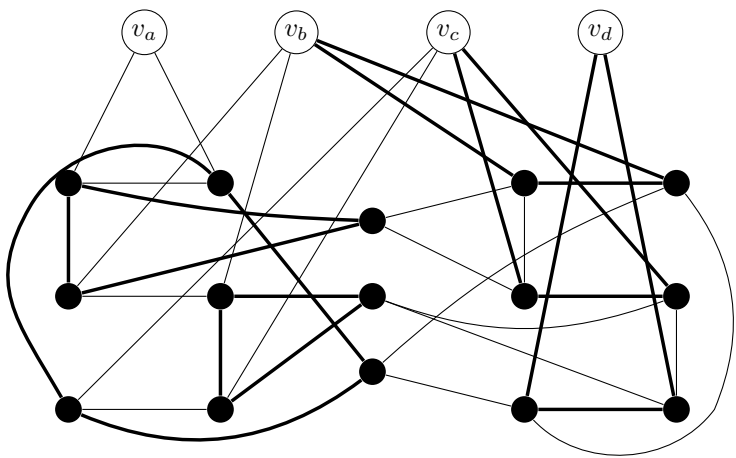
\begin{figure}[h]
    \centering
\begin{tikzpicture}
\pic {selection-gadget};
\end{tikzpicture}
\caption{\label{fig:selection-gadget}Selection Gadget $Q(\mathcal{F})$ for the family $\mathcal{F}=\big\{F_1 = \{a,b,c\}, F_2 = \{b,c,d\}\big\}$. Thick lines depict a solution defining the state $F_2$ in $Q$.}
\end{figure}

For a gadget $X$ that will occur in the graph $G$ and a vertex $v$ of this gadget, we will denote by $v(X)$ the vertex $v$ of the gadget $X$ of $G$.
When we identify two vertices $v(X)$ and $w(Y)$ of two gadgets $X$ and $Y$, both $v(X)$ and $w(Y)$ will be used to denote the same resulting vertex. 

Now we are ready to describe the construction of the graph $G$---we start with the empty graph. 
For each $i \in [n]$, let $I_i = \{j \mid v_i \in \Var(C_j)\}$ be the set of indices of the constraints containing $v_i$, and let $\ell_i = |I_i|$. 
Then we add to $G$ for every $j \in I_i$, 
a path gadget $Q_i^j$ consisting of fresh vertices.
And second we add two further selection gadgets 
$Q_i^0$ and $Q_i^{m+1}$ on fresh vertices each 
defined by $U = \{x,x',y,y'\}$ and $\mathcal{F} = \big\{\{x,x'\},\{y,y'\},\{x,y'\}\big\}$---for the ease of notation we also refer to $Q_i^0$ and $Q_i^{m+1}$ as path gadgets.
Let $Q_i^0, \dots Q_i^{m+1}$ now be the sequence of these gadgets in the order of increasing superscripts. We connect each pair of consecutive gadgets $Q_i^j$ and $Q_i^{j'}$ in this sequence by adding a connection gadget $R_i^j$ on fresh vertices: For this connection, we identify the exit vertices $v_y$ and $v_{y'}$ of $Q_i^j$ with the entry vertices $u_1$ and $u_2$ of $R_i^j$ respectively, and we identify the exit vertices $v_1$ and $v_2$ of $R_i^j$ with the entry vertices $v_x$ and $v_{x'}$ of $Q_i^{j'}$ respectively.

Next for each constraint $j \in [m]$, we add a so-called \emph{clause gadget} $Z_j = Q(U, \mathcal{F}_j)$, where $U = [d]\times [2]_0$, and
\[
    \mathcal{F}_j= \big\{\{(i,x_i)\colon i\in[d]  \} \colon x \in R_j\big\}.
\]
For every $s \in [d]$, let $v_i = (V_j)_s$: We identify $v_{(s, 0)}, v_{(s,1)}, v_{(s,2)}$ with the vertices $v_{\szero}, v_{\sone}, v_{\stwo}$, respectively, of the selection gadget $Q_i^j$. 

Finally, for each $i\in [n]$, we add two cliques $K_i$ and $K'_i$ on five fresh vertices each. We add all edges between $K_i$ and the entry vertices of $Q_i^0$, as well as all edges between $K'_i$ and the exit vertices of $Q_i^{m+1}$. For $i\in[n-1]$, we add all edges between $K_i$ and $K_{i+1}$, and between $K'_i$ and $K'_{i+1}$. We add all edges between $K_n$ and $K'_n$. This concludes the construction of $G$ (see \cref{fig:hc-lb} for an illustration). 

\begin{figure}[t] 
    \centering
    \includegraphics[width=0.8\textwidth]{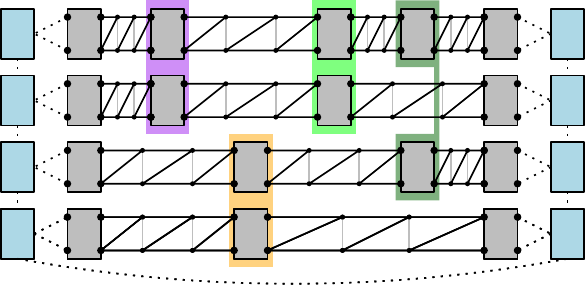}
    \caption{Sketch of the lower-bound construction for $n = 4$ and $m = 4$. Every gray box represents a path gadget, each of them may have one of the three states. Blue boxes are cliques on five vertices. Dotted edges reflect that there exist all possible edges between the sets. Every colored box represents a constraint gadget. The cutwidth of the construction is essentially upper-bounded by $3 n$ as every $Z$-cut contributes three edges to the size of a cut.}
    \label{fig:tp-lb}
\end{figure}

\begin{definition}
    Let $X$ be a gadget in $G$, let $v\in V(X)$, and let $T$ be a partition of $G$ into triangles. 
    We say that a triangle $S$ is \emph{contained} in $X$ if all edges of $S$ belong to $X$.
    We call $v$ \emph{$X$-blocked} in $T$ if the triangle in $T$ containing $v$ is contained in $X$. 
    We call $v$ \emph{$X$-free} otherwise.
    We omit both $X$ and $T$ when clear from context.
\end{definition}

\begin{lemma}\label{tripack-lb::lem:state-selection-gadget}
    Let $T$ be a partition of $G$ into triangles, and let $Q = Q(U, \mathcal{F})$ be a selection gadget of $G$. Let $S_Q$ be the set of $Q$-blocked vertices in $T$, and $I_Q = \{x \colon v_x \in S_Q\}$. Then it holds that $I_Q\in \mathcal{F}$. Moreover, let $G'$ be the graph resulting from $G$ by removing $Q$ but keeping its boundary. If there exists a triangle packing $T'$ of $G'$ keeping exactly a set $\{v_x \colon x \in F\}$ not used in $T'$, for some $F\in \mathcal{F}$, then $T'$ can be extended to a solution in $G$.
\end{lemma}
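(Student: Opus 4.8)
The plan is to analyze the structure of a selection gadget $Q = Q(U, \mathcal{F})$ by understanding how its cycle of length $2b$ (formed by $M_1^i \cup M_2^i$ for each branch $i$) interacts with a partition into triangles. First I would observe that every triangle of $T$ contained in $Q$ and touching the internal vertices $u_j^i, w_j^i$ must use one edge of the matching $M_1^i$ together with either a boundary vertex $v_{z_j^i}$ or, via the $x$-gadget vertices $x_j^i$, an edge $e_j^i$ of $M_2^i$. The key combinatorial claim is that exactly one branch $i^\star$ can be ``fully internal'' — i.e., none of its boundary vertices $v_{z_j^{i^\star}}$ is $Q$-blocked — because the $x$-vertices $x_j^i$ link consecutive branches and a counting argument on how triangles cover the $2b$ internal vertices of each branch forces consistency. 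Concretely, I would show: for the branch $i^\star$ selected, all $2b$ internal vertices $u_j^{i^\star}, w_j^{i^\star}$ are covered by triangles internal to $Q$, which (using that $M_1^{i^\star} \cup M_2^{i^\star}$ is a $2b$-cycle and the $x$-vertices must then be covered on the other side) forces the boundary vertices $v_z$ with $z \in F_{i^\star}$ to be matched only to internal $u$'s and $w$'s — hence $Q$-free — while for $i \neq i^\star$ the corresponding boundary vertices must be $Q$-blocked. This yields $I_Q = U \setminus F_{i^\star} = \dots$; after fixing notation I would get $I_Q \in \mathcal{F}$ (possibly up to the complement convention used in the paper, which I would match carefully).

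For the first part, the main steps in order are: (1) fix a branch $i$ and note the internal vertices together with the $x$-vertices adjacent to $e_j^i$ must be perfectly partitioned into triangles by $T$ when restricted to what lies inside $Q$; (2) argue by a parity/cycle argument that the set of $j$ for which $x_j^{i}$ ``attaches to branch $i$'' is all-or-nothing, propagating along $i = 1, \dots, r$, so there is a single distinguished branch $i^\star$; (3) translate ``branch $i^\star$ is internally closed'' into the statement that exactly the boundary vertices $\{v_z : z \in F_{i^\star}\}$ are the $Q$-free ones among $\{v_z : z \in U\}$; (4) conclude $I_Q \in \mathcal{F}$. I would use \Cref{fig:selection-gadget} as a running guide (the thick lines there already display one such solution).

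For the ``moreover'' part, I would argue by explicit extension: given $T'$ in $G'$ that leaves unused exactly $\{v_x : x \in F\}$ for some $F \in \mathcal{F}$, say $F = F_{i^\star}$, I would build the triangles inside $Q$ realizing state $F_{i^\star}$. Namely, for the distinguished branch $i^\star$, cover the $2b$-cycle $M_1^{i^\star} \cup M_2^{i^\star}$ by alternating triangles, each using one $M_1^{i^\star}$-edge $\{u_j^{i^\star}, w_j^{i^\star}\}$ plus the boundary vertex $v_{z_j^{i^\star}}$ (possible since these are exactly the $F_{i^\star}$-vertices left free by $T'$); for every other branch $i \neq i^\star$, cover its internal vertices using the $M_2^i$-edges together with the $x$-vertices $x_j^i$, and pair the corresponding boundary vertices $v_{z_j^i}$ (which are used by $T'$, hence not available) — here one checks the edge set of the selection gadget indeed provides the needed triangles $\{v_{z_j^i}, u_j^i, w_j^i\}$ or the triangles through $x$-vertices, and a short parity check ensures the $2b$-cycle and the chain of $x$-vertices are consistently tiled. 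The disjointness with $T'$ is immediate since all new triangles live strictly inside $Q$ and only touch boundary vertices in the prescribed free/used pattern.

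The main obstacle I expect is step (2) — proving that the ``attachment pattern'' of the $x$-vertices is all-or-nothing and hence singles out a unique branch $i^\star$: this is where the interaction between the two matchings $M_1^i, M_2^i$, the $2b$-cycle structure, the case distinction on the parity of $b$ in the definition of $M_2^i$, and the cross-branch $x_j^i$ vertices all come together, and it requires a careful (though elementary) covering/counting argument rather than a one-line observation. The extension direction, by contrast, is mostly bookkeeping once the tiling of the $2b$-cycle and the $x$-chain is written down explicitly.
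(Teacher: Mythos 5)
Your high-level plan (propagate along the $2b$-cycle to fix each branch's behaviour, use the $x$-vertices to force a unique distinguished branch $i^\star$, then read off $I_Q$; plus an explicit tiling for the extension direction) follows the paper's proof quite closely, but your step (3) is wrong, and it is not merely a complement-convention mismatch that fixes itself by relabelling. In the paper's argument, the distinguished branch $i^\star$ is the one whose internal $2b$-cycle is covered by the $M_1^{i^\star}$-edges, and each such edge $\{u_j^{i^\star}, w_j^{i^\star}\}$ forms a triangle of $T$ together with the boundary vertex $v_{z_j^{i^\star}}$. That triangle has all its edges inside $Q$, so these boundary vertices are exactly the $Q$-\emph{blocked} ones; hence $I_Q = F_{i^\star} \in \mathcal{F}$ directly. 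Your text both defines $i^\star$ as ``the branch none of whose boundary vertices is $Q$-blocked'' and concludes that exactly the $F_{i^\star}$ vertices are $Q$-free, which is the opposite picture. It also isn't well-defined as a single branch when the sets $F_i$ overlap (which they do in general): in \cref{fig:selection-gadget}, with $F_1 = \{a,b,c\}$ and $F_2 = \{b,c,d\}$, the solution shown blocks $\{b,c,d\}$, so no branch has all of its boundary vertices $Q$-free. Finally, your sentence ``matched only to internal $u$'s and $w$'s --- hence $Q$-free'' is self-contradictory under the paper's definition: a boundary vertex whose triangle consists entirely of internal-gadget edges is, by definition, $Q$-blocked.

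Once you invert this, the rest of your sketch aligns with the paper. The propagation in your step (2) is where the real combinatorial work lives, and the paper organizes it a bit differently from what you describe: first propagate \emph{within} a branch $i$ along $j$, using that every internal vertex has exactly one $M_1^i$-edge and one $M_2^i$-edge, to conclude that $T$ uses exactly one of $M_1^i, M_2^i$ on branch $i$ (call its index $a(i)$); then a simple counting/exclusion argument on the $x$-vertices (each $x_j^i$ must pick up $e_j^i$ or $e_j^{i+1}$, and these edges are shared with internal-branch triangles) forces exactly one index with $a(i^\star)=1$. Your ``all-or-nothing across $j$ for fixed $i$'' is a consequence of this, but is not the most direct invariant to establish first. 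The extension argument you outline is essentially the paper's explicit tiling and is fine modulo the same blocked/free inversion.
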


\begin{proof}
    First, observe that for every $i\in [r]$ and $j \in [b]$, any triangle containing the vertex $u^i_j$ contains precisely one of the two edges incident with $u^i_j$ in $M_1^i \cup M_2^i$: this is because no triangle containing $u^i_j$ completely avoids these two edges and on the other hand, no triangle uses both of these edges.
    Hence, for every $j \in [r-1]$ and every $a \in [2]$, if the triangle in $T$ containing $u^i_j$ uses an edge of $M_a^i$ (and hence, no edge from $M_{3-a}^i$), then the triangle in $T$ containing $u^i_{j+1}$ uses an edge of $M_a^i$ (and hence, no edge from $M_{3-a}^i$). 
    Therefore, for each $i\in [r]$, there exists an index $a(i) \in [2]$ with $(M_1^i \cup M_2^i) \cap E(T) = M_{a(i)}^i$. 
    
    Further, consider $i \in [r-1]$ and $j \in [b]$.
    There exist precisely two triangles in $X$ containing the vertex $x^i_j$: one of them uses the edge $e^i_j$ (and not the edge $e^i_{j+1}$) and the other uses the edge $e^i_{j+1}$ (and not the edge $e^i_j$).
    With above observations and the fact that every edge in contained in at most one triangle, this implies that the sequence $a(1), \dots, a(m)$ contains precisely one value equal to $1$.     
    So let $i^* \in [r]$ be unique such that $a(i^*) = 1$ holds.
    
    On the one hand, for every $j \in [b]$ the only vertex that forms a triangle with the edge $\{u_{i^*}^j, w_{i^*}^j\}$ is the vertex $v_{z_{i^*}}^j$, hence this vertex is $Q$-blocked.
    Now consider a value $u \in U \setminus \{z_{i^*}^1, \dots, z_{i^*}^b\} = U \setminus F_{i^*}$.
    For $v_u$ to be $Q$-blocked, $T$ has to use one of the edges of $\bigcup_{i \in [r]} M_1^i$ in a triangle with $v_u$ but this is impossible:
    First, no edge of $M_1^i$ with $i \neq i^* \in [r]$ is used by $T$ and second, the edges of $M_1^{i^*}$ form triangles with the vertices in $\{v_{z_{i^*}}^1, \dots, v_{z_{i^*}}^b\}$.
    Therefore, $v_u$ is $Q$-free.
    Altogether, we get $S_Q = \{v_{z_{i^*}}^1, \dots, v_{z_{i^*}}^b\}$ and therefore, also $I_Q = F_i \in \mF$ proving the first claim of the lemma.
    
    For the second claim, let $F_i = \{z^i_1\dots z^i_b\}$ be the set of $\mathcal{F}$ such that $\{v_x\colon x\in F_i\}$ are the vertices not used in $T'$. Then we extend $T'$ to a solution $T$ by adding the following triangles: First, we add all triangle consisting of the vertices $v_{z_i^j}, u^i_j, w^i_j$ for all $j\in [b]$ (using hte edges of the matching $M_1^i$). For each constraint $F_{i'}$ with $i' < i$, and for each edge $e^{i'}_j$ of $M_2^{i'}$, we add the triangle between the endpoints of $e^{i'}_j$ and $x^{i'}_j$. Finally, for each $F_{i'}$ with $i'>i$, and for each edge $e^{i'}_j$ of $M_2^{i'}$, we add the triangle between the endpoints of $e^{i'}_j$ and $x^{i'-1}_j$.
\end{proof}

Let $\states=\{\szero, \sone, \stwo\}$ be a set of states. A solution $T$ defines a state of $\states$ in each gadget. We use the following definition and lemmas to prove the correctness of the reduction.

\begin{definition}
    Let $T$ be a partition of $G$ into triangles, and $R$ some connection gadget. We define the state of $R$ defined by $T$ (denoted by $\statef_T(R) \in \states$) as the number of $R$-blocked in $T$ vertices among the entry vertices of $R$, i.e.,
    \[
        \sigma_T(R) = |\{u_i(R) \colon i \in [2], u_i(R) \text{ is $R$-blocked in $T$}\}|.
    \]
    Now let $Q$ be a path gadget. We define the state of $Q$ defined by $T$ (denoted by $\statef_T(Q)\in\states$) as the unique value $i$ such that $v_i$ is $Q$-free in $T$.
\end{definition}

\begin{lemma}\label{tripack-lb::lem:state-connection-gadget}
    Let $T$ be a partition of $G$ into triangles and $R$ some connection gadget. Then it holds that $u_1$ is $R$-blocked if and only if $v_1$ is $R$-free; and $u_2$ is $R$-blocked if and only if $v_2$ is $R$-free.
\end{lemma}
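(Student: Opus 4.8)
My plan is to analyse the internal structure of a single connection gadget $R$ and then run a short case distinction on how $T$ covers one carefully chosen vertex of $R$.

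First I would recall the anatomy of $R$: it is a chain of three $Z$-graphs and has eight vertices --- the entry vertices $u_1,u_2$, the exit vertices $v_1,v_2$, and four inner vertices $a_1,a_2$ (shared by the first two $Z$-graphs) and $b_1,b_2$ (shared by the last two). The crucial point is that none of $a_1,a_2,b_1,b_2$ is identified with a vertex of another gadget, so all of their edges lie inside $R$; hence in the partition $T$ each of these four inner vertices is covered by a triangle that is entirely contained in $R$. Next I would enumerate the triangles of $G$ contained in $R$: each $Z$-graph contains exactly two triangles, which share its ``diagonal'' edge, and the edge set of $R$ rules out any triangle using vertices of two distinct $Z$-graphs; this gives six triangles, say $T_1^+=\{u_1,u_2,a_2\}$, $T_1^-=\{u_1,a_1,a_2\}$, $T_2^+=\{a_1,a_2,b_2\}$, $T_2^-=\{a_1,b_1,b_2\}$, $T_3^+=\{b_1,b_2,v_2\}$, $T_3^-=\{b_1,v_1,v_2\}$. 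I would also record that $T_1^\pm$ only touch $\{u_1,u_2,a_1,a_2\}$ while $T_3^\pm$ only touch $\{b_1,b_2,v_1,v_2\}$, that the unique triangle of $R$ through $u_2$ (resp.\ through $v_1$) is $T_1^+$ (resp.\ $T_3^-$), and that the triangles of $R$ through $u_1$ (resp.\ through $v_2$) are exactly $T_1^\pm$ (resp.\ $T_3^\pm$).

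Then I would do the case distinction on the triangle of $T$ covering the inner vertex $a_2$: it must be one of $T_1^+,T_1^-,T_2^+$. In each case, using that triangles of $T$ are pairwise vertex-disjoint, the choice for $a_2$ determines which triangle of $R$ covers $a_1$, and then $b_1$ and $b_2$; and whenever all triangles of $R$ through some $w\in\{u_1,u_2,v_1,v_2\}$ already have a vertex in use, $w$ is necessarily covered by a triangle leaving $R$, i.e.\ $w$ is $R$-free. Carrying this through: if $a_2\in T_1^+$ then $u_1,u_2$ are $R$-blocked, $a_1\in T_2^-$, so $b_1,b_2$ are $R$-blocked and $v_1,v_2$ are $R$-free; if $a_2\in T_1^-$ then $u_1$ is $R$-blocked, $u_2$ is $R$-free, $b_2\in T_3^+$, so $v_2$ is $R$-blocked and $v_1$ is $R$-free; if $a_2\in T_2^+$ then $u_1,u_2$ are $R$-free, $b_1\in T_3^-$, so $v_1,v_2$ are $R$-blocked. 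In all three patterns both equivalences ``$u_1$ $R$-blocked $\iff$ $v_1$ $R$-free'' and ``$u_2$ $R$-blocked $\iff$ $v_2$ $R$-free'' hold, which proves the lemma (and, as a byproduct, shows these are the only three admissible states of $R$).

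I do not expect a genuine obstacle; the only points requiring care are verifying that the list of six triangles inside $R$ is complete --- in particular that no ``cross-$Z$-graph'' triangle exists, which follows directly from the adjacency structure of $R$ --- and carrying out the bookkeeping in the three cases without error.
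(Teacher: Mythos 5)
Your proof is correct and follows essentially the same approach as the paper: both arguments enumerate the six triangles contained in $R$, observe that the inner vertices $a_1,a_2,b_1,b_2$ must be covered by $R$-internal triangles, and resolve the lemma by a three-way case analysis. The paper cases directly on the blocked/free status of $u_1,u_2$ (after noting only three combinations are possible), whereas you case on which triangle covers $a_2$; these three cases are in bijection, and the bookkeeping is identical.
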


\begin{proof}
    Note that there exist only three possible cases: both $u_1$ and $u_2$ are free, both are blocked, or $u_1$ is blocked but $u_2$ is free---this is because any triangle containing $u_2$ whose edges are fully contained in $R$ also contains $u_1$, i.e., if $u_2$ is blocked, then also $u_1$ is. 
    Recall that $a_1, a_2, b_1, b_2$ are not boundary vertices and hence, for any $y \in \{a_1, a_2, b_1, b_2\}$ the triangle in $T$ containing $y$ needs to be contained in $R$.
    
    In the first case, i.e., both $u_1$ and $u_2$ are free, $T$ contains the triangle $\{a_1, a_2, b_2\}$ (this is the only way to have a triangle containing $a_2$ and none of $u_1$) and $\{b_1, v_1, v_2\}$ (this is the only way to have a triangle containing $b_1$ but none of $a_2$ and $b_2$), and hence, both $v_1$ and $v_2$ are blocked. 
    
    Analogously, if both $u_1$ and $u_2$ are blocked, then $T$ contains the triangles $\{u_1, u_2, a_2\}$ and $\{a_1, b_1, b_2\}$. Hence, both $v_1$ and $v_2$ are free. 
    
    And finally, if $u_1$ is blocked but $u_2$ is free, then $T$ contains the triangles $\{u_1, a_1, a_2\}$ and $\{b_1, b_2, v_2\}$. Hence, $v_1$ is free but $v_2$ is blocked.
\end{proof}

\begin{lemma}\label{tripack-lb::lem:same-state-on-path}
    Let $T$ be a partition of $G$ intro triangles and let $i \in [n]$. Then $T$ defines the same state on all path gadgets and all connection gadgets of the $i$.th path, i.e., for any $X, X'$ where each of $X$ and $X'$ is either a path or a connection gadget on the $i$th path we have $\sigma_T(X) = \sigma_T(X')$.
\end{lemma}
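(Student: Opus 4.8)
The statement says that within a single path (the $i$-th one), a solution $T$ induces the same state on every path gadget and every connection gadget occurring on that path. The natural strategy is to propagate the state along the path, alternating between the two gadget types, using the two preceding lemmas as the propagation steps. So first I would recall the structure of the $i$-th path: it is a sequence $Q_i^0, R_i^0, Q_i^1, R_i^1, \dots$ of alternating path gadgets and connection gadgets, where consecutive gadgets share boundary vertices (the exit vertices $v_y, v_{y'}$ of $Q_i^j$ are identified with the entry vertices $u_1, u_2$ of $R_i^j$, and the exit vertices $v_1, v_2$ of $R_i^j$ are identified with the entry vertices $v_x, v_{x'}$ of $Q_i^{j+1}$).

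\textbf{Key steps.} The proof would consist of two local claims and then an induction along the path. First, I would show that for a path gadget $Q$ on this path, the state $\sigma_T(Q)$ is determined by (and determines) the ``blocked/free'' pattern of its entry vertices $v_x, v_{x'}$ and, symmetrically, of its exit vertices $v_y, v_{y'}$. This uses \cref{tripack-lb::lem:state-selection-gadget}: the set of $Q$-blocked boundary vertices corresponds to some $F \in \mathcal{F}$, and for the path gadget $\mathcal{F}=\{\{x,x',\szero,\sone\},\{x,y',\szero,\stwo\},\{y,y',\sone,\stwo\}\}$, so state $\szero$ forces $v_x, v_{x'}$ blocked and $v_y, v_{y'}$ free, state $\sone$ forces $v_x$ blocked, $v_{x'}$ free, $v_y$ blocked, $v_{y'}$ free — and so on; in each case the number of blocked entry vertices equals $i$ for state $\statef = i$, and likewise the number of \emph{free} exit vertices equals $i$. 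Thus $\sigma_T(Q)$ equals the number of $T$-blocked entry vertices of $Q$ (matching the convention used for connection gadgets) and also equals the number of $T$-free exit vertices. Second, for a connection gadget $R$, \cref{tripack-lb::lem:state-connection-gadget} gives exactly that $u_a$ is $R$-blocked iff $v_a$ is $R$-free for $a \in [2]$, so the number of $R$-blocked entry vertices equals the number of $R$-free exit vertices; by definition this common number is $\sigma_T(R)$.

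\textbf{Gluing and induction.} Now observe that the shared vertices make these counts propagate: the entry vertices of $R_i^j$ \emph{are} the exit vertices of $Q_i^j$, so ``number of $T$-blocked entry vertices of $R_i^j$'' equals ``number of $T$-blocked exit vertices of $Q_i^j$''; but a boundary vertex shared between $Q_i^j$ and $R_i^j$ is in exactly one triangle of $T$, and that triangle is either contained in $Q_i^j$, or in $R_i^j$, or in neither. A short case check (using that the vertices $a_1,a_2,b_1,b_2$ and the internal matching vertices force certain triangles) shows that such a shared vertex cannot be simultaneously $Q_i^j$-free and $R_i^j$-free — indeed, being ``free'' for one of the two gadgets means its triangle lies in the other — so ``$Q_i^j$-blocked'' on the shared vertex is equivalent to ``$R_i^j$-free'' on it, hence ``number of $R_i^j$-blocked exit-of-$Q$-vertices'' $= 2 - (\text{number of }Q_i^j\text{-free exit vertices}) = 2 - \sigma_T(Q_i^j)$... wait, I must be careful with the direction; let me instead phrase it as: the number of $T$-free exit vertices of $Q_i^j$ (which equals $\sigma_T(Q_i^j)$ by the first claim) equals the number of $T$-blocked entry vertices of $R_i^j$ (which equals $\sigma_T(R_i^j)$ by definition), because a shared vertex is $R_i^j$-blocked iff its triangle lies in $R_i^j$ iff it is not $Q_i^j$-blocked iff (since it is a boundary vertex and used by $T$) it is $Q_i^j$-free. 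The identical argument at the $R_i^j$–$Q_i^{j+1}$ interface gives $\sigma_T(R_i^j) = \sigma_T(Q_i^{j+1})$. Chaining these equalities along the whole path yields $\sigma_T(X) = \sigma_T(X')$ for all gadgets $X, X'$ on the $i$-th path.

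\textbf{Main obstacle.} The delicate point is the interface bookkeeping: being ``$X$-blocked'' is a property relative to a fixed gadget $X$, and a shared vertex lives in two gadgets at once, so I must verify cleanly that for a boundary vertex $v$ shared by consecutive gadgets $X$ and $Y$, exactly one of ``$v$ is $X$-blocked'' and ``$v$ is $Y$-blocked'' can hold, and hence ``$v$ is $X$-free'' $\iff$ ``$v$ is $Y$-blocked''. This needs the observation that the triangle of $T$ through $v$ has all its edges in $G$, and every edge of $G$ incident to $v$ lies in $X$ or in $Y$ (by the construction, boundary vertices have no other incident edges except possibly those to the cliques $K_i, K_i'$ for the extreme gadgets $Q_i^0, Q_i^{m+1}$, which I would handle as a boundary case — but in fact those clique edges go to $v_x,v_{x'}$ of $Q_i^0$ and $v_y,v_{y'}$ of $Q_i^{m+1}$, i.e.\ the \emph{outer} side, not an interface with another path gadget, so they don't interfere with the internal propagation). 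Once that dichotomy is in hand, the rest is the mechanical chain of equalities above.
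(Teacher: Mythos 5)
Your proof follows essentially the same route as the paper's: read off the state from the blocked/free pattern of a path gadget's exit (resp.\ entry) vertices via \cref{tripack-lb::lem:state-selection-gadget} and the structure of $\mathcal{F}$, use \cref{tripack-lb::lem:state-connection-gadget} inside a connection gadget, observe that a shared boundary vertex is $X$-free iff it is $Y$-blocked because every triangle through it lies entirely in exactly one of the two gadgets, and chain these equalities along the path by induction. One small slip: for state $\szero$ the set $I_Q$ is $\{y,y',\sone,\stwo\}$, so $v_x, v_{x'}$ are \emph{free} and $v_y,v_{y'}$ are \emph{blocked} (the opposite of what you wrote), and similarly $v_y,v_{y'}$ are swapped in your state-$\sone$ example; these contradict your own (correct) general claim that the number of blocked entries and the number of free exits both equal the state value, but the summary counts you actually use downstream are right, so the argument goes through.
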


\begin{proof}
    We prove the claim by induction over the gadgets of the $i$th path.
    Let $Q$ be a path gadget followed by a connection gadget $R$ and let $s = \sigma_T(Q)$ be the state of $Q$ defined by $T$. 
    Then 
    it holds by the definition of the family $\mathcal{F}$ of a path gadget, and \cref{tripack-lb::lem:state-selection-gadget}, that the number of $Q$-free vertices among $v_y(Q), v_{y'}(Q)$ is equal to $s$. 
    By construction we have $v_y(Q) = u_1(R)$ and $v_{y'}(Q) = u_2(R)$.
    Furthermore, the every triangle containing one of these vertices is contained in precisely one of $R$ and $Q$.
    Hence, the vertex $v_y(Q) = u_1(R)$ (resp.\ $v_{y'}(Q) = u_2(R)$) is $Q$-free if and only if it is $R$-blocked.
    Hence, the number of $R$-blocked vertices among $u_1(R), u_2(R)$ in $R$ is also $s$. 
    Therefore, the state of $R$ defined by $T$ is equal to $s$.

    On the other hand, let $Q'$ be the path gadget following the connection gadget $R$. 
    By \cref{tripack-lb::lem:state-connection-gadget}, the number of $R$-free vertices among $v_1(R), v_2(R)$ equals $s$.
    As we have $v_1(R) = v_{x}(Q')$ and $v_2(R) = v_{x'}(Q')$ and any triangle using at least one of these vertices is fully contained in precisely one of $R$ and $Q$, the number of $Q'$-blocked vertices among $v_x(Q'), v_{x'}(Q')$. 
    Hence, by the definition of the family $\mathcal{F}$ of a path gadget and \cref{tripack-lb::lem:state-selection-gadget}, the state of $Q'$ defined by $T$ is $s$ as well.
\end{proof}

\begin{lemma}
    If $I$ is satisfiable, then there exists a solution $T$ of $G$.
\end{lemma}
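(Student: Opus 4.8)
The plan is to start from an assignment $\pi\colon \Var\to\{0,1,2\}$ satisfying all constraints of $I$ and to build a solution $T$ of $G$ that has ``state'' $\pi(v_i)$ on every path and connection gadget of the $i$th path and that realizes on every clause gadget $Z_j$ the family encoding the restriction of $\pi$ to $C_j$. All the combinatorics this needs has already been isolated in \cref{tripack-lb::lem:state-selection-gadget,tripack-lb::lem:state-connection-gadget,tripack-lb::lem:same-state-on-path}, so the proof mostly amounts to running those statements ``backwards''. In particular I would lean on the second part of \cref{tripack-lb::lem:state-selection-gadget}: if the rest of $G$ is triangulated so that exactly the boundary set $\{v_z\colon z\in F\}$ of a selection gadget $Q$ is left uncovered for some $F\in\mathcal F$, then $Q$ itself can be filled in. Hence it suffices to decide, for each selection gadget, which family it should ``release'' on its boundary, to fill in the connection gadgets and the cliques compatibly, and to check that the released boundary vertices agree with what is covered elsewhere.

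I would build $T$ gadget by gadget. For a real path gadget $Q_i^j$ (with $j\in I_i$), let $F$ be the unique member of its family with $v_{s_{\pi(v_i)}}\notin F$ and add the internal triangles realizing the blocked family $F$ as in the proof of \cref{tripack-lb::lem:state-selection-gadget}; inspecting the three families shows this leaves $Q_i^j$-free exactly the assignment vertex $v_{s_{\pi(v_i)}}$, exactly $\pi(v_i)$ of the two exit vertices and exactly $2-\pi(v_i)$ of the two entry vertices. For the endpoint gadgets $Q_i^0,Q_i^{m+1}$ pick, in the same way, the family leaving $\pi(v_i)$ boundary vertices free on the connection-gadget side (hence $2-\pi(v_i)$ free on the clique side). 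For a clause gadget $Z_j$ add the triangles realizing the family $F_j=\{(t,\pi((V_{C_j})_t))\colon t\in[d]\}$; this set lies in $\mathcal F_j$ \emph{precisely} because $\pi$ satisfies $C_j$, and this is the only point where satisfiability of $I$ is used. For a connection gadget $R$ on the $i$th path add the two internal triangles that make it have state $\pi(v_i)$, i.e.\ exactly the two triangles listed in the case $\sigma=\pi(v_i)$ of the proof of \cref{tripack-lb::lem:state-connection-gadget}. Finally, triangulate the graph on the vertices still uncovered, namely the cliques $K_i,K_i'$ together with the boundary vertices of $Q_i^0$ resp.\ $Q_i^{m+1}$ that the previous steps left uncovered on the clique side.

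The correctness check has two parts: that $T$ partitions $V(G)$ and that every triangle of $T$ has its edges in $G$. The latter is immediate from the gadget definitions. For the former, every internal vertex of every selection and connection gadget is covered exactly once by the triangles chosen above, so only shared boundary vertices need attention. An assignment vertex $v_{s_t}$ of $Q_i^j$ (identified with a boundary vertex of $Z_j$) is covered inside $Q_i^j$ iff $t\neq\pi(v_i)$ and covered inside $Z_j$ iff $t=\pi(v_i)$ --- hence exactly once. An entry/exit vertex shared by a path gadget and a connection gadget is covered in exactly one of the two, by the correspondence of \cref{tripack-lb::lem:state-connection-gadget} together with the observation --- which is \cref{tripack-lb::lem:same-state-on-path} read in reverse --- that giving all gadgets of the $i$th path the common state $\pi(v_i)$ forces, at each connection gadget, a consistent choice of which endpoint is blocked; concretely one checks along a single variable's path that the number of free exit vertices of each gadget equals the number of blocked entry vertices of the next and both equal $\pi(v_i)$. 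The clique-side boundary vertices of $Q_i^0$ and $Q_i^{m+1}$ are by construction exactly those covered in the last step.

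The step I expect to be the real obstacle is that last one: proving that the chain of cliques $K_1-K_2-\cdots-K_n-K_n'-\cdots-K_1'$, enriched with the leftover endpoint-boundary vertices, always admits a partition into triangles. Each $K_i$ picks up $2-\pi(v_i)$ extra vertices and each $K_i'$ picks up $\pi(v_i)$ of them, each adjacent to all of its own clique, so every block has five, six or seven vertices, the two blocks attached to a variable contribute twelve vertices together, and the total is $12n$. The natural approach is to walk along the chain, at each clique first using up each acquired vertex in a triangle with two clique vertices, then pulling out triangles inside the clique, and carrying a bounded number of clique vertices forward to be completed together with the next clique, using that the two leftover counts attached to $v_i$ sum to two to control the carry. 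Getting this bookkeeping fully correct --- in particular handling blocks that acquired two vertices, and the interaction between several such consecutive blocks --- is the only genuinely delicate part; everything else follows directly from the three lemmas already established.
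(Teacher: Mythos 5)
Your proposal follows essentially the same route as the paper: it fills the path, connection, and clause gadgets using the second direction of \cref{tripack-lb::lem:state-selection-gadget}, \cref{tripack-lb::lem:state-connection-gadget}, and \cref{tripack-lb::lem:same-state-on-path}, and then triangulates the clique chain using the fact that the boundary of $Q_i^0$ and $Q_i^{m+1}$ leaves exactly two free vertices per variable, so each pair $(K_i,K_i')$ together with its acquired boundary vertices contributes $5+5+2=12\equiv_3 0$ vertices. The paper disposes of the clique-chain bookkeeping that you flag as delicate by precisely the same counting argument, namely that the number of not-yet-packed vertices of $K_i$ and $K_i'$ is inductively a multiple of three, so your sketch is on target and complete in spirit.
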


\begin{proof}
    Let $\tau:\Var \rightarrow [2]_0$ be a satisfying assignment of $I$. For each $i\in[n]$, let $\tau_i = \tau(v_i)$. We choose the state $\tau_i$ for all path gadgets $Q_i^j$. 
    As implied by \cref{tripack-lb::lem:same-state-on-path}, and by the definition of the state of a path gadget,
    it holds that the only vertices not packed by this on the $i$th path are the vertices $v_{\tau_i}$ of each path gadget $Q_i^j$ for $j\notin\{0,m+1\}$ as well as some entry vertices of $Q_i^0$ and some exit vertices of $Q_i^{m+1}$.
    Hence, for each constraint $j \in [m]$ with $V_j = (v_{i_1}, \dots, v_{i_d})$,
    the vertices $v_{(z, \tau_{i_z})}$ for $z\in[d]$ are the only vertices on its boundary not packed by the solution, and it holds that $\{(z, \tau_{i_z})\colon z\in[d]\} \in \mathcal{F}$ since $\tau$ is a satisfying assignment.
    By \cref{tripack-lb::lem:state-selection-gadget}, we can choose the solution $T$ in $Z_j$ to pack exactly these vertices among its boundary.

    The only vertices left are all vertices of the cliques $K_i$ and $K'_i$, as well as entry vertices of $Q_i^0$, and exit vertices of $Q_i^{m+1}$ that are not packed yet. 
    For each $i\in[n]$ in ascending order, we pack the vertices $u_1, u_2$ of $Q_i^0$ not packed yet together with free vertices of $K_i$, and the vertices of $K_i$ with each other. 
    If some vertices of $K_i$ are left, we pack them with vertices of $K_{i+1}$. Finally, it holds by \cref{tripack-lb::lem:state-connection-gadget} and \cref{tripack-lb::lem:same-state-on-path} that among $u_1(Q_i^{0}), u_2(Q_i^{0}), v_1(Q_i^{m+1}), v_2(Q_i^{m+1})$ there are exactly two free
    vertices, which together with the vertices of $K'_i$ and $K'_{i+1}$ add up to $12\equiv_3 0$ vertices. Hence, on can show inductively that the number of vertices of $K_i$ and $K'_i$ that are not packed yet
    (i.e.\ must be packed with vertices of $K_{i+1}$ and $K'_{i+1}$) is also a multiple of three. It follows that the number of unpacked vertices of $K_n$ and $K'_n$ is also a multiple of three. Hence, we can pack the remaining vertices of $K_n$ and $K'_n$ with each other. This concludes the proof.
\end{proof}

\begin{lemma}
    If there exists a solution $T$ of $G$, then $I$ is satisfiable.
\end{lemma}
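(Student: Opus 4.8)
The plan is to extract a satisfying assignment from a given solution $T$ by reading off, for each variable, the state that $T$ induces on the corresponding path, and then to verify each constraint through its clause gadget. Concretely, I would define $\tau\colon\Var\to[2]_0$ as follows: by \cref{tripack-lb::lem:same-state-on-path}, for every $i\in[n]$ the solution $T$ defines a single common state $\sigma_T(\cdot)\in\states$ (which we identify with $[2]_0$) on all path and connection gadgets of the $i$th path, and we let $\tau(v_i)$ be this value. It then suffices to show that $\tau$ satisfies every constraint $C_j$. So fix such a $C_j$ with $V_j=((V_j)_1,\dots,(V_j)_d)$, and for each $s\in[d]$ write $v_{i_s}=(V_j)_s$; since $v_{i_s}\in\Var(C_j)$, the construction adds a path gadget $Q_{i_s}^j$ on the $i_s$th path, and hence $\sigma_T(Q_{i_s}^j)=\tau(v_{i_s})$.

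The next step is to translate the states of the $Q_{i_s}^j$ into statements about the clause gadget $Z_j$. Applying \cref{tripack-lb::lem:state-selection-gadget} to $Q_{i_s}^j$, the set of its $Q_{i_s}^j$-blocked boundary vertices is a member of the path-gadget family $\mathcal F=\{\{x,x',\szero,\sone\},\{x,y',\szero,\stwo\},\{y,y',\sone,\stwo\}\}$; every member of $\mathcal F$ contains exactly two of the three assignment vertices $v_{\szero},v_{\sone},v_{\stwo}$, so by the definition of the state of a path gadget the unique $Q_{i_s}^j$-free assignment vertex is the one corresponding to the value $\tau(v_{i_s})$, while the other two are $Q_{i_s}^j$-blocked. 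By construction, the three assignment vertices of $Q_{i_s}^j$ are identified with the boundary vertices $v_{(s,0)},v_{(s,1)},v_{(s,2)}$ of $Z_j$, the one corresponding to value $t$ being $v_{(s,t)}$; and since distinct gadgets of $G$ share only isolated boundary vertices, no triangle of $T$ can use edges from two gadgets, so a shared vertex cannot be blocked in both gadgets it belongs to. Hence, among $v_{(s,0)},v_{(s,1)},v_{(s,2)}$, the only one that can be $Z_j$-blocked is $v_{(s,\tau(v_{i_s}))}$, so the index set $I_{Z_j}$ of $Z_j$-blocked boundary vertices of $Z_j$ satisfies $I_{Z_j}\subseteq\{(s,\tau(v_{i_s}))\colon s\in[d]\}$.

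To finish, I would apply \cref{tripack-lb::lem:state-selection-gadget} to $Z_j$ itself, obtaining $I_{Z_j}\in\mathcal F_j=\{\{(i,x_i)\colon i\in[d]\}\colon x\in R_j\}$. Every member of $\mathcal F_j$ has exactly $d$ elements, one for each first coordinate in $[d]$, and the set $\{(s,\tau(v_{i_s}))\colon s\in[d]\}$ also has exactly $d$ elements, so the inclusion above must be an equality; thus $\{(s,\tau(v_{i_s}))\colon s\in[d]\}\in\mathcal F_j$, which by the definition of $\mathcal F_j$ means $(\tau(v_{i_1}),\dots,\tau(v_{i_d}))\in R_j$, i.e., $\tau$ satisfies $C_j$. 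As $C_j$ was arbitrary, $\tau$ satisfies all constraints and $I$ is a yes-instance of $d$-CSP-$3$.

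I expect the only genuinely delicate point to be the claim that the triangle of $T$ through a vertex shared by two gadgets is entirely contained in one of them (equivalently, that such a vertex is not blocked in both); this should follow cleanly from the fact that distinct gadgets overlap only in boundary vertices that are pairwise non-adjacent and have no edges to the other gadget's interior, but it is the step that makes the count of $Z_j$-blocked boundary vertices exact, so it must be argued rather than waved away. The remaining parts are a direct composition of \cref{tripack-lb::lem:state-selection-gadget} with \cref{tripack-lb::lem:same-state-on-path}.
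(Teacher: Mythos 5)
Your proposal is correct and follows essentially the same route as the paper: define $\tau$ from the common path state via Lemma~4.19, use the fact that a triangle through a shared boundary vertex lies entirely in one gadget, and then invoke Lemma~4.16 on $Z_j$ to read off membership in $R_j$. The only cosmetic difference is the direction of the containment you argue — you show $I_{Z_j}\subseteq\{(s,\tau(v_{i_s}))\colon s\in[d]\}$ and close the gap with an explicit cardinality count, while the paper argues directly that each $v_{(s,\tau(v_{i_s}))}$ must be $Z_j$-blocked (leaving the matching cardinality step implicit) — so if anything you are a bit more careful than the paper's own write-up at that point.
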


\begin{proof}
    It holds by \cref{tripack-lb::lem:same-state-on-path} that all path gadgets on the $i$th path have the same state defined by $T$. Let $a_i$ by the state defined on the $i$ path. We define the assignment $\tau: \Var \to [2]_0$ so that $\tau(v_i) = a_i$ for all $i\in [n]$. 
    We now show that $\tau$ satisfies all constraints in $\mathcal{C}$.
    So let $j \in [m]$ be arbitrary.
    For every $i \in [n]$ with $v_i \in \Var(C_j)$, 
    it holds that $v_{a_i}(Q_i^j)$ is $Q_i^j$-free by definition of the state of $Q_i^j$. 
    Since every triangle containing $v_{a_i}(Q_i^j)$ is fully contained in one of $Q_i^j$ and $Z_j$, the vertex $v_{(z, a_i)}$ must be $Z_j$-blocked, where $v_i$ is the $z$th variable of the constraint $C_j$.
    For $V_j = (v_{i_1},\dots v_{i_d})$, it follows by \cref{tripack-lb::lem:state-selection-gadget} that $\{(i, a_{i_z})\colon z\in[d]\} \in \mathcal{F}_j$. Hence, by the definition of $Z_j$, it holds that $(a_{i_1}, \dots a_{i_d})\in R_j$. Therefore, $\tau$ is a satisfying assignment of $I$.
\end{proof}

\begin{lemma}
    There exists a constant $k_0$ such that the graph $G$ has the cutwidth of at most $3n + k_0$.
    Furthermore, a linear arrangement of $G$ of such cutwidth can be computed in polynomial time from $I$.
\end{lemma}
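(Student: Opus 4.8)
The plan is to read the linear arrangement off the constraint indices, placing the vertices of $G$ in $m+2$ consecutive \emph{blocks} $\mathcal{L}_0, \mathcal{L}_1, \dots, \mathcal{L}_{m+1}$, where $\mathcal{L}_j$ handles the gadgets carrying the superscript~$j$. I would split every connection gadget $R$ into its \emph{head}, the first of the three $Z$-graphs making up $R$ (the only new vertices here being $a_1(R)$ and $a_2(R)$), and its \emph{tail}, the other two $Z$-graphs. Then $\mathcal{L}_0$ places, for $i=1,\dots,n$ in this order, the clique $K_i$, then the path gadget $Q_i^0$, then the head of $R_i^0$; symmetrically, $\mathcal{L}_{m+1}$ places, for $i=1,\dots,n$, the tail of the connection gadget entering $Q_i^{m+1}$, then $Q_i^{m+1}$, then $K_i'$; and for $j\in[m]$, block $\mathcal{L}_j$ processes the variables $v_i\in\Var(C_j)$ one after another, placing for each $v_i$ the tail of its incoming connection gadget, then $Q_i^j$, then the head of its outgoing connection gadget, and finally, once all these variables have been treated, the remaining (non-boundary) vertices of the clause gadget $Z_j$. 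Each vertex of $G$ is placed exactly once --- shared vertices together with a designated owner gadget, and every connection gadget split into exactly one head and one tail --- so this is a valid linear arrangement, and it is clearly computable in polynomial time from $I$.

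The reason for splitting connection gadgets this way is that the only edges joining the head of $R$ to its tail are $\{a_1(R),b_1(R)\},\{a_1(R),b_2(R)\},\{a_2(R),b_2(R)\}$, which form a path on three edges, i.e., a $Z$-cut. Consequently, whenever a cut of the arrangement has the head of some $R$ entirely on its left and the tail entirely on its right, $R$ contributes exactly three crossing edges. I would then classify the edges crossing an arbitrary cut into three groups: (i) the edges between $K_n$ and $K_n'$, of which there are $25$ and which may cross every cut because $K_n\in\mathcal{L}_0$ and $K_n'\in\mathcal{L}_{m+1}$; (ii) for every variable $v_i$, the edges of the unique connection gadget of $v_i$'s path whose head lies left of the cut and whose tail lies right of it, contributing at most three edges per variable by the $Z$-cut property, hence at most $3n$ in total; and (iii) the edges incident to the single gadget (or connection-gadget half, or clique) whose placement straddles the cut, together with --- inside a block $\mathcal{L}_j$ with $j\in[m]$ --- the edges from the already-placed path gadgets $Q_i^j$ to the not-yet-placed clause gadget $Z_j$. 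Since every gadget of the construction has $\bigoh_d(1)$ vertices and every vertex of $G$ has degree $\bigoh_d(1)$ (here $d$ is the fixed arity coming from \cref{lb:theo:lampis}, and each assignment vertex lies in exactly one clause gadget), group (iii) always has size $\bigoh_d(1)$.

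It then suffices to run a case distinction on the location of the cut --- between two blocks; inside $\mathcal{L}_0$ or $\mathcal{L}_{m+1}$; inside some $\mathcal{L}_j$ ($j\in[m]$) while a variable of $C_j$ is being processed; or inside $\mathcal{L}_j$ while $Z_j$ is being placed --- and to check that in each case groups (i) and (ii) together contribute at most $3n+\bigoh(1)$ while group (iii) stays $\bigoh_d(1)$. This bookkeeping is the part I expect to be the main obstacle. The delicate point is that inside a block $\mathcal{L}_j$ the wire of the variable $v_i$ currently being processed momentarily carries up to six (rather than three) edges --- namely in the short interval after its incoming connection gadget has been fully placed but before $Q_i^j$ has been --- and that each path gadget of $\mathcal{L}_j$ placed so far still carries a few edges towards the pending $Z_j$. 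One has to observe that at any position at most one variable is ``active'', that only the $d$ variables of $C_j$ ever touch $Z_j$, and that the $n-1$ remaining wires each still contribute exactly three edges, so that the sum over the three groups is $3(n-1)+\bigoh_d(1)=3n+\bigoh_d(1)$. As $d$ is fixed, the hidden $\bigoh_d(1)$ is an absolute constant $k_0$, so the cutwidth of this arrangement is at most $3n+k_0$, as claimed.
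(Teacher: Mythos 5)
Your proposal is correct and follows essentially the same approach as the paper: a column-wise (constraint-by-constraint) arrangement so that any cut meets the three-edge $Z$-cut of at most one connection gadget per variable (giving the $3n$ term), plus $\bigoh_d(1)$ further edges from the one gadget currently being straddled, the pending clause gadget, the cliques, and the $K_n$--$K'_n$ biclique. The only cosmetic difference is where the connection gadget is split: you place its first $Z$-graph with the source path gadget and the remaining two with the target, so the crossing edges are the middle $Z$-graph's $Z$-cut between $\{a_1,a_2\}$ and $\{b_1,b_2\}$, whereas the paper places the entire interior $\{a_1,a_2,b_1,b_2\}$ after the source, putting the $Z$-cut on the last $Z$-graph between $\{b_1,b_2\}$ and $\{v_1,v_2\}$ --- in both cases a path on exactly three edges.
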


\begin{proof}
    We now describe the desired linear arrangement with the given width. 
    In the arrangement, we will add one gadget at a time. Hence, the vertices of each gadget, except for its boundary, will form a consecutive segment of the arrangement. For two gadgets with a common boundary, the vertices of the boundary will appear between the corresponding segments of these two gadgets.
    
    For $i = 1, \dots, n$, we insert the vertices of $K_i$ followed by the vertices $Q_i^0$. 
    Then for each $j\in [m]$ in ascending order, we do the following: we add $Q_i^j$ followed by $R_i^j$ (excluding its boundary) if $Q_i^j$ exists. 
    We follow these by the vertices of $Z_j$ not placed yet. 
    In the very end, for $i = 1, \dots, n$, we add $Q_i^{m+1}$ followed by $K'_i$.
    
    We claim that this arrangement has cutwidth upper-bounded by $3n + k_0$, where $k_0$ is a constant. To see this, note that any cut of the arrangement can be crossed by the edges of at most on path gadget (together with the edges between some clique and this path gadget),
    the edges of one clause gadget, the edges of one connection gadget, the edges of one clique $K_i$, edges between $K_n$ and $K'_n$, and all edges between $b_1, b_2$ and $u_1, u_2$ of the connection gadgets. 
    Note that all these are constant except for the last, where each a cut can be crossed by at most one connection gadget from each path contributing at most three edges. 
    Hence, the cutwidth is at most $3n + k_0$.
\end{proof}

\begin{proof}[Proof of \cref{lb:theo:partition}]
    Suppose there exist $\varepsilon > 0$ and an algorithm $\mathcal{A}$ solving the \Tpartp problem in time $\ostar((\sqrt[3]{3}-\varepsilon)^{\ctw})$.
    Let $d$ be arbitrary but fixed.
    Choose an arbitrary value $0 < \delta < 3 - (\sqrt[3]{3} - \varepsilon)^3$.
    We can solve the $d$-CSP-3 problem as follows in time $\ostar((3-\delta)^n)$ where $n$ denotes the number of variables in $I$ contradicting SETH.
    Given an instance $I$ of $d$-CSP-$3$, 
    we first compute, in polynomial time, the graph $G$ as well as a linear arrangement of $G$ of cutwidth at most $3n+k_0$.
    The above claims show that $G$ admits a partition into triangles if and only if $I$ is satisfiable.
    Now we run the algorithm $\mathcal{A}$ on $G$, the running time is upper-bounded 
    \[\bigoh^*\big((\sqrt[3]{3}-\varepsilon)^{3n+k_0}) = \bigoh^*(\big(\sqrt[3]{3} - \varepsilon \big)^{3n}) \leq \bigoh^*((3-\delta)^n)
    \]
    to solve $I$.
\end{proof}

\section{Hamiltonian Cycle}\label{sec:hc-ub}

For a graph $G=(V,E)$, a set $C \subseteq E$ of edges is called a \emph{Hamiltonian cycle} of $G$ if $G[C]$ is connected and every vertex in $V$ has degree exactly two in $G[C]$, i.e., if $C$ induces a single cycle visiting all vertices of $G$.
In the \textsc{Hamiltonian Cycle} problem, we are given a graph $G$ and asked if there is a Hamiltonian cycle in $G$.
\probdef{\textsc{Hamiltonian Cycle}}{A graph $G=(V,E)$.}{Does $G$ admit a Hamiltonian cycle?} 

In this section we provide an algorithm solving the \textsc{Hamiltonian Cycle} problem in time $\bigoh^*((1 + \sqrt 2)^k)$ on graphs provided with a linear arrangement of cutwidth $k$.
For this we adapt the algorithm by Cygan et al.~\cite{DBLP:journals/jacm/CyganKN18} solving the problem in time $\bigoh^*((2 + \sqrt 2)^p)$
on graphs provided with a path decomposition of pathwidth $p$.
We will transform a linear arrangement of cutwidth $k$ into a path decomposition such that this path decomposition has a useful algorithmic property, namely, the dynamic-programming table as defined by Cygan et al.\ has only $\mathcal{O}^*((1 + \sqrt 2)^k)$ non-zero entries and there is an efficient way to determine the ``certainly zero'' entries. 
Now we provide 
the 
details.

\subsection{Known notation and results}
The algorithm by Cygan et al.~\cite{DBLP:journals/jacm/CyganKN18} strongly relies on their algebraic result about the $\mathbb{F}_2$-rank of a certain ``compatibility'' matrix reflecting, for each pair of perfect matchings, whether their union is a Hamiltonian cycle. We summarize the necessary parts of this result:
\begin{theorem}[\cite{DBLP:journals/jacm/CyganKN18}]\label{lem:number-of-base-matchings}
	Let $t \in \mathbb{N}$ be even.
	Then there exists a set $\mX_t$ of perfect matchings of the complete graph $K_t$ with the following properties:
	\begin{enumerate}
		\item $|\mX_t| = 2^{t/2-1}$,
		\item $\mX_t$ can be computed in time $\sqrt{2}^t t^{\bigoh(1)}$,
		\item for every matching $M \in \mX_t$, there exists a unique matching $M' \in \mX_t$ such that $M \cup M'$ forms a Hamiltonian cycle of $K_t$.
	\end{enumerate}
\end{theorem}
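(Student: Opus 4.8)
The plan is to recover this statement via the explicit construction of Cygan, Kratsch, and Nederlof~\cite{DBLP:journals/jacm/CyganKN18}, building $\mX_t$ by induction on the even integer $t$ and doubling its size whenever two vertices are added. For $t\in\{2,4\}$ I would take the base cases by hand: $\mX_2=\{\{v_1v_2\}\}$, under the convention that the Hamiltonian cycle of $K_2$ is the doubled edge, and $\mX_4=\{\{v_1v_2,v_3v_4\},\{v_1v_3,v_2v_4\}\}$, whose two elements together form a $4$-cycle. For $t\geq 6$ I would start from $\mX_{t-2}$ on $\{v_1,\dots,v_{t-2}\}$; the inductive form of the third property makes the Hamiltonian-compatibility relation on $\mX_{t-2}$ an involution without fixed points (an involution since $M\cup M'=M'\cup M$, and fixed-point-free since a perfect matching on at least four vertices is not a single cycle), so $\mX_{t-2}$ decomposes into $2^{t/2-3}$ compatible pairs. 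For each pair $\{M,M'\}$, writing $v_a$ for the $M$-neighbor and $v_b$ for the $M'$-neighbor of $v_1$, I would introduce the two new vertices $v_{t-1},v_t$ by subdividing twice the edge $\{v_1,v_a\}$ of the cycle $M\cup M'$, which makes it a Hamiltonian cycle of $K_t$ split into the two matchings $(M\setminus\{v_1v_a\})\cup\{v_1v_{t-1},v_tv_a\}$ and $M'\cup\{v_{t-1}v_t\}$, and symmetrically by subdividing $\{v_1,v_b\}$, giving $M\cup\{v_{t-1}v_t\}$ and $(M'\setminus\{v_1v_b\})\cup\{v_1v_{t-1},v_tv_b\}$. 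These four matchings go into $\mX_t$, each ``subdivided'' matching being designated the partner of the corresponding ``unchanged'' one.

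The first two properties would then be immediate: each of the $2^{t/2-3}$ pairs contributes exactly four matchings, so $|\mX_t|=2|\mX_{t-2}|=2^{t/2-1}$, and each of them is produced in time $t^{\bigoh(1)}$, giving total time $2^{t/2-1}t^{\bigoh(1)}=\sqrt{2}^{\,t}t^{\bigoh(1)}$. The substance of the proof is the third property, which I would establish by induction. Take $N,N'\in\mX_t$ with $N\cup N'$ a Hamiltonian cycle of $K_t$. Each of $N,N'$ has one of two shapes: it contains $\{v_{t-1},v_t\}$ (``unchanged''), or it contains $\{v_1,v_{t-1}\}$ together with an edge $\{v_t,v_c\}$, $v_c\notin\{v_1,v_{t-1}\}$ (``subdivided''). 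If both are unchanged, then $\{v_{t-1},v_t\}$ is a doubled edge forming a component of size two, contradicting connectivity; if both are subdivided, then $\{v_1,v_{t-1}\}$ is a doubled edge, again a contradiction. So exactly one of them, say $N$, is subdivided, with $N=(M\setminus\{v_1v_a\})\cup\{v_1v_{t-1},v_tv_a\}$ and $N'=K\cup\{v_{t-1}v_t\}$ for some $M,K\in\mX_{t-2}$. In $N\cup N'$ the vertices $v_{t-1},v_t$ are the internal vertices of the path on $v_1,v_{t-1},v_t,v_a$ (in this order); suppressing these two degree-$2$ vertices turns $N\cup N'$ into $M\cup K$ on $\{v_1,\dots,v_{t-2}\}$, and this operation preserves being a single cycle. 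Hence $M\cup K$ is a Hamiltonian cycle of $K_{t-2}$, so by the induction hypothesis $K$ is the unique compatible partner of $M$ in $\mX_{t-2}$; thus $\{M,K\}$ is precisely the pair that produced $N$, and $N'$ is forced to be the ``unchanged'' matching designated as $N$'s partner. The symmetric argument starting from an ``unchanged'' $N$ shows it too has a unique partner, and the designated pairs are compatible by construction; this yields the third property.

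I expect the main obstacle to be the (otherwise routine) bookkeeping around distinctness. One must verify that the four matchings created from one pair, and those created from distinct pairs, are pairwise distinct; this follows because each of the encodings above lets one read off which pair it came from (e.g.\ from a ``subdivided'' matching one recovers $M$ by deleting the edges at $v_{t-1},v_t$ and adding back $\{v_1,v_a\}$, where $v_a$ is the $N$-neighbor of $v_t$) and whether it is the subdivided side, so the maps ``subdivided matching $\mapsto$ recovered element of $\mX_{t-2}$'' and ``unchanged matching $\mapsto$ recovered element of $\mX_{t-2}$'' are bijections with disjoint images. One must also handle the base cases $t\in\{2,4\}$ and the degenerate convention for $K_2$ by hand. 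As an alternative to this hands-on construction, one can instead appeal to Cygan et al.'s determination that the $\mathbb{F}_2$-rank of the matchings-connectivity matrix $\mathcal{H}_t$ equals $2^{t/2-1}$, and read the three properties directly off the explicit basis they exhibit to witness this rank.
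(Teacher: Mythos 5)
The paper does not prove this theorem; it imports it verbatim from Cygan, Kratsch, and Nederlof, so there is no in-paper proof to compare against. Evaluating your proposal on its own merits: the inductive doubling construction is correct and establishes all three properties. I checked the load-bearing steps. Exactly one of $N,N'$ is ``subdivided'' (both unchanged would double $\{v_{t-1},v_t\}$, both subdivided would double $\{v_1,v_{t-1}\}$, and a Hamiltonian cycle on $t\geq 4$ vertices has no parallel edges). In $N\cup N'$ the two new vertices are consecutive degree-two vertices on the segment $v_1,v_{t-1},v_t,v_a$, and suppressing them yields a simple cycle on $\{v_1,\dots,v_{t-2}\}$ with edge set $M\cup K$; since $N\cap N'=\emptyset$ and $t-2\geq 4$, no doubled edge can arise in the suppressed cycle, so $M\cup K$ really is a Hamiltonian cycle of $K_{t-2}$. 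The induction hypothesis then forces $K$ to be the unique partner of $M$, so $N'$ is the designated unchanged partner; the symmetric case (an unchanged $N$) is analogous, and the designated pair is compatible because it is obtained by subdividing a Hamiltonian cycle. The cardinality and running-time bounds follow as you say. One detail worth making explicit: the recovery map from a subdivided matching back to its preimage in $\mX_{t-2}$ is well defined and unambiguous precisely because the induction hypothesis makes the compatibility relation a fixed-point-free involution, so each element of $\mX_{t-2}$ sits in exactly one pair; this is what makes your distinctness bookkeeping go through.

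As for the comparison with the source: this is essentially how Cygan et al.\ build their basis family as well --- they index the matchings by bitstrings in $\{0,1\}^{t/2-1}$, which is just a flattened encoding of the same binary recursion, and they verify the permutation-submatrix property by a similar peel-off-two-vertices argument. So your construction tracks the cited proof rather than departing from it. The ``alternative route'' you mention at the end --- appealing to the $\mathbb{F}_2$-rank of the matchings-connectivity matrix and reading the properties off the exhibited basis --- is not genuinely different: exhibiting that basis \emph{is} the content of this theorem, and it is exactly property (3) (the permutation-submatrix structure of the restriction to $\mX_t\times\mX_t$) that certifies the rank lower bound of $2^{t/2-1}$. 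The rank \emph{upper} bound is the separate and harder half of Cygan et al.'s rank theorem and is not needed for the statement at hand.
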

Observe that the third property implies that one can partition the set $\mX_t$ into pairs of matchings such that the union of two perfect matchings forms a Hamiltonian cycle, if and only if the two matchings are paired in this partition. 
In other words, the family $\mX_t$ induces a permutation submatrix of the compatibility matrix mentioned above.
Like Cygan et al., we will sometimes identify some ordered set, say $S$, of even cardinality $t$ with the vertex set $[t]$ of $K_t$ and by $\mX(S)$ denote the set obtained from $\mX_t$ by identifying the elements of $S$ with $[t]$. 

In the following, we assume that the graph $G$ is provided with a weight function $\omega \colon E(G) \to \bN$.
As many other algorithms solving connectivity problems, the algorithm of Cygan et al.~\cite{DBLP:journals/jacm/CyganKN18} makes use of the classic Isolation lemma (see \cite{DBLP:journals/combinatorica/MulmuleyVV87}) and samples $\omega$ in a certain probabilistic way.
Their algorithm works on a very nice path decomposition of the input graph $G$---in the next subsection we will show how to obtain a ``useful'' path decomposition of $G$ from a linear arrangement of $G$.  
Essentially, their algorithm processes the very nice path decomposition and for every bag counts, modulo 2, the number of subgraphs of the already processed graph such that every vertex in this subgraph has degree 2, except for vertices in the current bag which may have lower degree.
The dynamic-programming table refines these counts depending on the weight of the subgraph, its degree sequence on the bag, and whether this subgraph forms a single cycle with a certain matching defined on the vertex set of the bag.
A crucial implication of \cref{lem:number-of-base-matchings} (as proven in the their paper) is that instead of taking all such matchings into account, it suffices to consider only these ``base matchings'' from $\mX_t$ to solve the problem.
Now we summarize this more formally:

\begin{definition}[\cite{DBLP:journals/jacm/CyganKN18}]\label{def:dp-table}
A \emph{partial cycle cover} of a graph is a set of edges such that every vertex has at most two incident edges in this set.
For every node $x$ of the provided very nice path decomposition, every assignment $s \colon B_x \to \{0, 1, 2\}$ where $s^{-1}(1)$ has even cardinality, every matching $M \in \mX(s^{-1}(1))$, and every non-negative integer $w \in \bN$, the value $T_x[s, M, w]$ is defined as the number, modulo 2, of partial cycle covers $C$ of the graph $G_x$ with the following properties:
\begin{enumerate}
	\item $C \cup M$ is a single cycle,
	\item the total weight (with respect to $\omega$) of the edges in $C$ is $w$,
	\item every vertex $v \in B_x$ has precisely $s(v)$ incident edges in $C$,
	\item every vertex $v \in V(G_x) \setminus B_x$ has precisely two incident edges in $C$.
\end{enumerate}
\end{definition}
We say that a partial cycle cover $C$ of $G_x$ has \emph{footprint} $s$ on $x$ if it satisfies the last two items.
The main result of their paper can now be stated as follows. 
Let us remark that even though this is a more refined statement---they state the running time for the whole table $T_x$ at once---the version below immediately follows from their algorithm, in particular, that the computation of any fixed table entry requires only $\mathcal{O}(1)$ table entries of the predecessor node:
\begin{theorem}[\cite{DBLP:journals/jacm/CyganKN18}]\label{thm:single-entry-computation}
	Let $x$ be a non-first node of a very nice path decomposition of a graph $G$ and let $y$ denote its predecessor. 
	For any fixed $s \colon y \to \{0,1,2\}$, $M \in \mX(s^{-1}(1))$, and $w \in \bN$, given the table $T_y$, the value $T_x[s, M, w]$ can be computed in time $\bigoh^*(1)$ time by querying $\bigoh(1)$ entries of $T_y$.
\end{theorem}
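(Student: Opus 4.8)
The plan is to establish the statement by walking through the dynamic program of Cygan et al.~\cite{DBLP:journals/jacm/CyganKN18} node by node. A very nice path decomposition has three kinds of non-first nodes---introduce-vertex, forget-vertex, and introduce-edge---and for each I would write down the recurrence that expresses a single entry $T_x[s,M,w]$ as a $\bigoh(1)$-term combination of entries of $T_y$. This also yields the claimed per-entry running time, since the operations involved are only table look-ups, additions modulo $2$, the subtraction $w - \omega(e)$, and membership tests in $\mX(\cdot)$, all of which are polynomial given the explicit description of the base-matching family of \cref{lem:number-of-base-matchings}.

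The two easy cases are handled by the observation that introduce-vertex and forget-vertex nodes do not change the set $s^{-1}(1)$, so the matching index $M$ carries over verbatim. For an introduce-vertex node introducing $v$, the graph $G_x$ has no edge incident with $v$, so every partial cycle cover of $G_x$ leaves $v$ isolated: hence $T_x[s,M,w]=0$ unless $s(v)=0$, and when $s(v)=0$ we get $T_x[s,M,w]=T_y[s|_{B_y},M,w]$, a single look-up. For a forget-vertex node forgetting $v$ we have $G_x=G_y$, and the footprint requirement additionally forces $v$ to have degree $2$; thus $T_x[s,M,w]=T_y[s\cup\{v\mapsto 2\},M,w]$, again one look-up.

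The substantial case is an introduce-edge node introducing $e=\{u,v\}$ with $u,v\in B_x=B_y$ and $G_x=G_y+e$. A partial cycle cover $C$ of $G_x$ either avoids $e$---then $C$ is a partial cycle cover of $G_y$ with the same footprint, weight, and single-cycle status, contributing $T_y[s,M,w]$---or contains $e$. In the latter case $C\setminus\{e\}$ is a partial cycle cover of $G_y$ whose footprint $s'$ equals $s$ with the values at $u$ and $v$ decremented (so the contribution vanishes unless $s(u),s(v)\geq 1$), whose weight is $w-\omega(e)$, and for which the single-cycle condition on $C\cup M$ becomes the single-cycle condition on $(C\setminus\{e\})\cup\widehat{M}$, where $\widehat{M}$ is obtained from $M$ by rerouting it along $e$ and suppressing whichever of $u,v$ thereby reaches matching-degree two. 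The point I would need to verify is that this rerouted $\widehat{M}$ (together with the $\bigoh(1)$ degenerate subcases, such as $M$ already matching $u$ to $v$) again belongs to $\mX((s')^{-1}(1))$, so that $T_y[s',\widehat{M},w-\omega(e)]$ is an entry that the algorithm stores. This is precisely the matching-rerouting step in Cygan et al.'s correctness argument, which I would cite; granting it, the introduce-edge recurrence is a sum of two look-ups into $T_y$.

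Hence the only real obstacle is concentrated in the introduce-edge transition: one must check that the base-matching family of \cref{lem:number-of-base-matchings} is, up to a bounded number of alternatives, closed under the edge-rerouting operation, and that the corresponding index can be recomputed in polynomial time. Everything else is routine bookkeeping. Since the present statement merely refines the running-time bound of Cygan et al.---they compute the whole table $T_x$ from $T_y$, and the refinement is only that the dependence of a single entry on the predecessor table is $\bigoh(1)$-local---this obstacle is already resolved in their proof, so no new ideas are required.
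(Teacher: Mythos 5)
The paper does not re-derive this statement; it cites it from Cygan et al.~\cite{DBLP:journals/jacm/CyganKN18} and observes in a one-sentence remark that the per-entry $\bigoh(1)$-locality follows directly from their algorithm's transitions. Your sketch correctly unpacks that remark—handling the three node types of a very nice path decomposition and deferring the one genuinely nontrivial point (that the edge-rerouting of a base matching stays within the family $\mX$) to the cited work—which is exactly the stance the paper takes.
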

We will also use some further results from this paper but for the sake of readability we cite them later. 
\subsection{Constructing a Path Decomposition}
Let $G$ be a graph and let $\ell = v_1, \dots, v_n$ be a linear arrangement of $G$ of cutwidth at most $k$.
The aim 
of this subsection 
is to compute from $\ell$ a path decomposition of $G$ with certain nice properties to which we will later apply the algorithm from \cref{thm:single-entry-computation}.
We recall that by definition for every $i \in [n]$, the set $Y_i$ consists of all left end-points of the edges in the $(i-1)$th cut $E_{i-1}$ of $\ell$ together with the vertex $v_i$.
Thus, we have $|E_i| \leq k$ and $|Y_i| \leq k + 1$ for all $i \in [n]$.
It is well-known that the sequence $Y_1, \dots, Y_n$ of bags is a path decomposition of $G$ (see e.g., \cite{DBLP:conf/stacs/GroenlandMNS22} for the idea and \cite{DBLP:conf/stacs/BojikianCHK23} for a formal proof).
To reverse the ordering in which these bags are traversed, for $i \in [n]$, we define the sets
\begin{equation}\label{eq:X-path-dec}
 	X_i = Y_{n+1-i}.
\end{equation}
 Thus $X_1, \dots, X_n$ is also a path decomposition. 

The key to obtaining our algorithm is the fact that for the path decomposition $X_1, \dots, X_n$, in general, not all degree sequences on bags can be achieved by partial cycle packings.
Recall that, except for $v_{n+1-i}$, the bag $X_i$ consists of the left end-points of edges in the cut $E_{n-i}$.
Let $x_i$ denote the node corresponding to the bag $X_i$.
The ordering of introduce- and forget-nodes we later choose to make this decomposition very nice ensures the graph $G_{x_i}$ contains no edges with both end-points in $X_i \setminus \{v_{n+1-i}\}$.
Therefore, 
all edges of any partial cycle cover of the graph $G_{x_i}$ incident with $X_i \setminus \{v_{n+1-i}\}$ cross the cut $E_{i-1}$.
Therefore, if $s$ is a footprint of a such a partial cycle cover, then it holds that $\sum_{u \in X_i \setminus \{v_{n+1-i}\}} s(u) \leq k$.
Furthermore, every vertex $u \in X_i \setminus \{v_{n+1-i}\}$ has at least $s(u)$ incident edges in $E_{i-1}$.
In the following, we capture this formally.

\begin{definition}
	For a node $x$ of a path decomposition we define the sets 
	\[
		B_1(x) = \{u \in B_x \mid \deg_{G_x}(u) = 1\} \text{ and } B_2(x) = \{u \in B_x \mid \deg_{G_x}(u) \geq 2\}.
	\]
	We call a mapping $s \colon B_x \to \{0, 1, 2\}$ \emph{relevant}  for $x$  if all of the following hold:
	\begin{itemize}
		\item $|s^{-1}(1)|$ is even,
		\item $s^{-1}(2) \subseteq B_2(x)$,
		\item and $s^{-1}(1) \subseteq B_1(x) \cup B_2(x)$.
	\end{itemize}
\end{definition}

\begin{lemma}\label{lem:number-of-footprints}
	Let $x$ be a node of a very nice path decomposition of $G$, let $P$ be a partial cycle cover of $G_x$, and let $s \colon B_x \to \{0, 1, 2\}$ such that $P$ has footprint $s$. Then $s$ is relevant.
 	
 	Furthermore, if $|B_1(x)| + 2\cdot |B_2(x)| \leq k+\mathcal{O}(1)$ holds, then the number of pairs $(s, M)$ such that $s \colon B_x \to \{0, 1, 2\}$ is relevant and $M \in \mX(s^{-1}(1))$ is upper-bounded by $\bigoh((1+\sqrt{2})^k)$.
\end{lemma}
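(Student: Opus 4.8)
The plan is to prove the two claims separately, both by elementary arguments; no deep tool is needed beyond the cardinality bound from \cref{lem:number-of-base-matchings}.

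For the first claim, let $P$ be a partial cycle cover of $G_x$ with footprint $s$. That $s^{-1}(2)\subseteq B_2(x)$ is immediate: if $s(v)=2$ then $v$ has two incident edges of $P\subseteq E(G_x)$, so $\deg_{G_x}(v)\geq 2$; similarly, if $s(v)=1$ then $\deg_{G_x}(v)\geq 1$, so $v\in B_1(x)\cup B_2(x)$. For the parity condition I would apply the handshake identity to the graph $(V(G_x),P)$: the sum of degrees equals $2|P|$, every vertex of $V(G_x)\setminus B_x$ contributes exactly $2$, every vertex of $s^{-1}(0)\cup s^{-1}(2)$ contributes an even number, and hence the number of vertices contributing $1$, namely $|s^{-1}(1)|$, must be even. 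So $s$ is relevant.

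For the second claim, the idea is to bound the number of pairs $(s,M)$ by a product over the vertices of $B_x$. A relevant $s$ assigns a value in $\{0,1\}$ to each vertex of $B_1(x)$, a value in $\{0,1,2\}$ to each vertex of $B_2(x)$, and the value $0$ to every other vertex of $B_x$; dropping the parity constraint only enlarges this set, so it suffices to sum over all such assignments. Using $|\mX(s^{-1}(1))|\leq 2^{|s^{-1}(1)|/2}$, which follows from \cref{lem:number-of-base-matchings} and also covers the degenerate case $s^{-1}(1)=\emptyset$, the number of pairs is at most
\[
 \sum_{s} 2^{|s^{-1}(1)|/2} \;=\; \sum_{s}\ \prod_{v\in B_x} (\sqrt{2})^{[s(v)=1]} \;=\; \prod_{v\in B_x}\Big(\sum_{a\in S_v}(\sqrt{2})^{[a=1]}\Big),
\]
where $S_v$ is the set of values $s$ may take on $v$. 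A vertex of $B_1(x)$ contributes the factor $1+\sqrt{2}$, a vertex of $B_2(x)$ contributes $1+\sqrt{2}+1=2+\sqrt{2}$, and a degree-$0$ vertex of $B_x$ contributes $1$. Hence the number of pairs $(s,M)$ is at most $(1+\sqrt{2})^{|B_1(x)|}\,(2+\sqrt{2})^{|B_2(x)|}$.

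It then remains to compare this with $(1+\sqrt{2})^{|B_1(x)|+2|B_2(x)|}$. Since $(1+\sqrt{2})^2=3+2\sqrt{2}\geq 2+\sqrt{2}$, we get
\[
 (1+\sqrt{2})^{|B_1(x)|}\,(2+\sqrt{2})^{|B_2(x)|}\;\leq\;(1+\sqrt{2})^{|B_1(x)|}\big((1+\sqrt{2})^{2}\big)^{|B_2(x)|}\;=\;(1+\sqrt{2})^{|B_1(x)|+2|B_2(x)|},
\]
which by the hypothesis $|B_1(x)|+2|B_2(x)|\leq k+\mathcal{O}(1)$ is $(1+\sqrt{2})^{k+\mathcal{O}(1)}=\bigoh((1+\sqrt{2})^k)$. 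There is no real obstacle here; the only points requiring care are charging each vertex of $B_2(x)$ — which consumes two units of the budget $|B_1(x)|+2|B_2(x)|$ — against the at most $2+\sqrt{2}$ choices it contributes, exploiting the slack $2+\sqrt{2}\leq(1+\sqrt{2})^2$, and being slightly careful with the edge case $s^{-1}(1)=\emptyset$ when bounding $|\mX(s^{-1}(1))|$.
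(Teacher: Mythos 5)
Your proof is correct and follows essentially the same approach as the paper: you bound the number of pairs by $(1+\sqrt{2})^{|B_1(x)|}(2+\sqrt{2})^{|B_2(x)|}$ and close via $2+\sqrt{2}\leq(1+\sqrt{2})^2$, exactly as the paper does (the paper phrases the counting as a double binomial sum rather than your product-of-sums over vertices, but this is only a cosmetic rearrangement). The one genuinely different micro-step is the parity argument: you use the handshake identity on $(V(G_x),P)$, whereas the paper observes that all degree-$1$ vertices are endpoints of paths and each nontrivial path contributes exactly two endpoints -- both are equally valid and elementary.
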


\begin{proof}
	By definition of a partial cycle cover, the connected components of the subgraph induced by $P$ are paths and cycles.
	Furthermore, a vertex has the degree of $1$ in this subgraph if and only if it is a end-point of such a path of non-zero length.
	First, by definition of a footprint, all vertices $v \in V_x \setminus B_x$ have the degree of $2$ in $P$ and therefore, all end-points of such paths belong to $B_x$.
	Since each non-zero length path has precisely two end-points, we get that $|s^{-1}(1)|$ is even.
	
	Further, for every vertex $u \in B_x$ with $s(u) = 2$ we have $u \in B_2(x)$---this is because a vertex $u$ of degree $2$ in $P$, in particular, has at least two incident edges in $G_x$.
	Similarly, for every vertex $u \in B_x$ with $s(u) = 1$ we have $u \in B_1(x) \cup B_2(x)$---this is because a vertex $u$ of degree $1$ in $P$ has at least one incident edge in $G_x$.
	Thus, $s$ is relevant concluding the proof of the first claim.
	
	Now let $\ell_1 = |B_1(x)|$ and $\ell_2 = |B_2(x)|$ and suppose $\ell_1 + 2\ell_2 \leq k + \mathcal{O}(1)$ holds.
	Then we can upper-bound the number of pairs $(s, M)$ such that $s \colon B_x \to \{0, 1, 2\}$ is relevant and $M \in \mX(s^{-1}(1))$ by
	\[
		\sum_{0 \leq i_2 \leq \ell_2} {\ell_2 \choose i_2} \sum_{0 \leq i_1 \leq (\ell_2 - i_2) + \ell_1} {(\ell_2 - i_2) + \ell_1 \choose i_1} 2^{i_1/2 - 1}.
	\]
	This is for the following reason.
	The first sum considers all options for the set $s^{-1}(2)$: by above observation this set has to be a subset of $B_2(x)$.
	Similarly, the second sum considers the options for the set $s^{-1}(1)$ which, due to $s$ being relevant, has to be a subset of $(B_1(x) \cup B_2(x)) \setminus s^{-1}(2)$.
	Finally, for the fixed $s \colon B_x \to \{0, 1, 2\}$ with even $i_1 = |s^{-1}(1)|$ by \cref{lem:number-of-base-matchings} we have $\mX(i_1) = 2^{i_1/2 - 1}$, and therefore, that many options for the matching $M$. 
	In general, this upper bound can be proper since it permits odd values of $i_1$.
	In the remainder of the proof we show that the above value is upper-bounded by $\bigoh((1+\sqrt{2})^k)$.
	It holds that:
	{\allowdisplaybreaks
	\begin{align*}
		&\sum_{0 \leq i_2 \leq \ell_2} {\ell_2 \choose i_2} \sum_{0 \leq i_1 \leq (\ell_2 - i_2) + \ell_1} {(\ell_2 - i_2) + \ell_1 \choose i_1} 2^{i_1/2 - 1} = \\
		\frac{1}{2} &\sum_{0 \leq i_2 \leq \ell_2} {\ell_2 \choose i_2} \sum_{0 \leq i_1 \leq (\ell_2 - i_2) + \ell_1} {(\ell_2 - i_2) + \ell_1 \choose i_1} \sqrt{2}^{i_1} = \\
		\frac{1}{2} &\sum_{0 \leq i_2 \leq \ell_2} {\ell_2  \choose i_2} (1+\sqrt 2)^{(\ell_2 - i_2) + \ell_1} = \\
		\frac{1}{2} &(1+\sqrt{2})^{\ell_1} \sum_{0 \leq i_2 \leq \ell_2} {\ell_2 \choose i_2} (1+\sqrt 2)^{\ell_2 - i_2} = \\
		\frac{1}{2} &(1+\sqrt{2})^{\ell_1} (2 + \sqrt{2})^{\ell_2} \stackrel{2 + \sqrt{2} \leq (1 + \sqrt 2)^2}{\leq} \\
		\frac{1}{2} &(1+\sqrt{2})^{\ell_1} (1 + \sqrt{2})^{2 \ell_2} = \\
		\frac{1}{2} &(1+\sqrt{2})^{\ell_1 + 2 \ell_2} \stackrel{\ell_1 + 2\ell_2 \leq k + \mathcal{O}(1)}{\leq} \\
		\frac{1}{2} &(1+\sqrt{2})^{k + \mathcal{O}(1)} = \\
		&\mathcal{O}((1+\sqrt{2})^k) \\
	\end{align*} 
	}
	So the second claim of the lemma holds as well.
\end{proof}

Now we formally prove that we can obtain from $\ell$ a very nice path decomposition where every node $x$ satisfies the prerequisite of \cref{lem:number-of-footprints}, i.e., $B_1(x) + 2 \cdot B_2(x) \leq k+\bigoh(1)$.
The additional property about the last three bags is a technicality required later: shortly, the algorithm of Cygan et al.~\cite{DBLP:journals/jacm/CyganKN18} guesses an edge of the sought Hamiltonian cycle incident to a fixed vertex $v_1$, the vertex $v^*$ should be thought of as the other end-vertex of this edge.
\begin{lemma}\label{lem:useful-path-decomposition-hc-ub}
	From the linear arrangement $v_1, \dots, v_n$ of the graph $G$ and a vertex $v^*$ adjacent to $v_1$, 
	in polynomial time we can construct a very nice path decomposition of $G$ in which each node $x$ satisfies $|B_1(x)| + 2 \cdot |B_2(x)| \leq k+\bigoh(1)$.
	Furthermore, the last three nodes of this decomposition are introduce-edge-$v_1 v^*$, forget-vertex-$v_1$, and forget-vertex-$v^*$.
\end{lemma}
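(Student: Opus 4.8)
The plan is to derive the decomposition from the sequence $X_1, \dots, X_n$ of \eqref{eq:X-path-dec} after two modifications: force $v^*$ to stay in the bag until the very end, and refine the sequence into a very nice path decomposition by inserting the introduce- and forget-operations in a carefully chosen order --- this order is precisely what makes the degree bound work. Concretely, I would first set $X_i' := X_i \cup \{v^*\}$ for all $i \in [n]$; adding a single fixed vertex to every bag preserves both path-decomposition axioms, so $X_1', \dots, X_n'$ is still a path decomposition of $G$, now of width at most $k+2$ (recall $|Y_i| \le k+1$), and $X_n' = Y_1 \cup \{v^*\} = \{v_1, v^*\}$. To make it very nice, prepend introduce-vertex operations building up $X_1'$ from the empty bag and introducing no edges. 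For the transition from $X_i'$ to $X_{i+1}'$, write $j := n-i$; then $X_i = L_j \cup \{v_{j+1}\}$ and $X_{i+1} = L_{j-1} \cup \{v_j\}$ (from $X_i = Y_{n+1-i}$ and the definition of $Y$), and since $L_j \setminus \{v_j\} \subseteq L_{j-1}$ one gets $X_i \setminus X_{i+1} = \{v_{j+1}\}$. The transition is then: first introduce all not-yet-introduced edges of $G$ incident to $v_{j+1}$, then forget $v_{j+1}$, and finally introduce the vertices of $X_{i+1} \setminus X_i$ one at a time, with no further edge introductions. After reaching $X_n' = \{v_1, v^*\}$, append introduce-edge-$v_1 v^*$, forget-$v_1$, forget-$v^*$, ending at the empty bag, which yields exactly the three last nodes claimed. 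The single rule governing the order of edge-introductions is: \emph{each edge is introduced at the step at which the first of its two endpoints is forgotten}. Since $v^*$ is forgotten last of all vertices and $v_1$ just before it, this makes every edge incident to $v^*$ be introduced when its other endpoint is forgotten, consistently with the recipe above.

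Next I would verify that this is a legal very nice path decomposition: each vertex is introduced once and forgotten once, consecutive bags differ in at most one vertex, and both endpoint bags are empty. The only nontrivial point is that whenever an edge $\{v_p, v_r\}$ with $p < r$ is introduced --- i.e.\ at the step at which the first of $v_p, v_r$ to be forgotten leaves the bag --- the other endpoint is still present in that bag. This follows from the ``window'' structure of $X_1, \dots, X_n$: each vertex $v_p$ occupies the contiguous block of bags $X_i$ with $i \in \{n+1-q_p, \dots, n+1-p\}$, where $q_p$ is the largest index of a neighbour of $v_p$ (and $q_p = p$ if $v_p$ has no neighbour to its right); since $v_p \sim v_r$ forces $q_p \ge r$, the vertex $v_p$ lies in the bag throughout the transition in which $v_r$ is forgotten. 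The two modifications (keeping $v^*$ everywhere and postponing the forgetting of $v_1$ and $v^*$) only enlarge the relevant blocks, so the property is preserved. Also, no intermediate bag is larger than $\max(|X_i'|, |X_{i+1}'|) \le k+2$.

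The substantive part is the bound $|B_1(x)| + 2|B_2(x)| = \sum_{v \in B_x} \min(\deg_{G_x}(v), 2) \le k + \bigoh(1)$ at every node $x$. To each node $x$ I would attach a cut $E_j$ (if the bag of $x$ is $X_i'$ then $j = n-i$; if $x$ lies on the transition from $X_i'$ to $X_{i+1}'$ then again $j = n-i$) and show that each $v \in B_x$ with $\deg_{G_x}(v) \ge 1$ is either one of the $\bigoh(1)$ exceptional vertices --- which lie in $\{v_j, v_{j+1}, v^*\}$ --- or is a left endpoint of $E_j$ with $\deg_{G_x}(v) \le \deg_{E_j}(v)$. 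The reason is that, by the edge-introduction rule, at node $x$ every edge of $G_x$ incident to such a $v$ crosses $E_j$: edges of $G_x$ lying on the left side of $E_j$ are introduced only later, when one of their endpoints is forgotten. Since the non-exceptional vertices are distinct left endpoints of a single cut, the $\min(\cdot,2)$-clamp on their degrees sums to at most $\sum_v \deg_{E_j}(v) = |E_j| \le k$ (each edge of a cut has exactly one left endpoint), while each of the $\bigoh(1)$ exceptional vertices contributes at most $2$; altogether $k + \bigoh(1)$. I expect the main obstacle to be exactly this step: one must track, node by node through each transition, precisely which edges incident to a bag vertex have already been introduced, so that the identification of $\deg_{G_x}(v)$ with a cut-degree is valid --- this is routine but delicate, because introducing the edges incident to $v_{j+1}$ in the middle of a transition temporarily raises the degrees of the left endpoints of $E_j$, just before $v_{j+1}$ leaves the bag. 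The property about the last three nodes is immediate from the construction.
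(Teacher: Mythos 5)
Your construction is essentially the same as the paper's: both start from the reversed bag sequence $X_i = Y_{n+1-i}$, refine it into a very nice path decomposition by introducing the edges with right endpoint $v_{n+1-i}$ just before $v_{n+1-i}$ is forgotten, and then argue that for every node $x$ each edge of $G_x$ incident with a non-exceptional bag vertex crosses the associated cut $E_{n-i}$, so that the degree sum is bounded by $|E_{n-i}| + \bigoh(1) \le k + \bigoh(1)$. The only real difference is cosmetic: you keep $v^*$ in every bag from the outset (setting $X_i' = X_i \cup \{v^*\}$), whereas the paper builds the decomposition without special treatment of $v^*$ and afterwards relocates the forget-$v^*$ and introduce-edge-$v_1 v^*$ nodes to the end; both choices are fine, and yours even yields the stated ordering of the last three nodes directly. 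One small oversight to fix in the write-up: the per-transition recipe ``introduce all edges incident to $v_{j+1}$, forget $v_{j+1}$, then introduce new vertices'' must explicitly exclude the case $v_{j+1} = v^*$, where no vertex is forgotten (since $X_i' \subseteq X_{i+1}'$ then) and the edges incident to $v^*$ are instead introduced later, when their respective other endpoints are forgotten --- your stated invariant (``each edge is introduced when the first of its endpoints is forgotten'') already handles this correctly, but the concrete recipe as written contradicts it at that one transition.
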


\begin{proof}
	To make the proof description simpler, we will first obtain a path decomposition that satisfies all conditions up to the one describing the last three bags.
	In the end we will show how to make this condition satisfied as well by an easy change. 
	
	We start with the path decomposition of $G$ consisting of the nodes $x_1, \dots, x_n$ corresponding to the bags $X_1, \dots, X_n$ (see \eqref{eq:X-path-dec} for the definition) and make it very nice as described next. 
	Let us remark that the transformation we will carry out to make the decomposition very nice is standard---we will insert introduce-vertex-, introduce-edge- and forget-vertex-nodes---so the arising object will indeed be a very nice path decomposition of $G$. 
	The ordering of these introduce- and forget-nodes will be chosen in a careful so way that the desired property on the degrees of the vertices in any bag is satisfied by the arising path decomposition as we show later.

	First, we add a node $x_0$ with an empty bag before $x_1$.
	After that, we add the following nodes between $x_0$ and $x_1$---we will list them in the order they appear between $x_0$ and $x_1$.
	Since $v_n \in X_1 = Y_n$, we add an introduce-vertex-$v_n$-node and denote it by $x_0^0$. 
	Now let $t_0 = |X_1 \setminus \{v_n\}|$ and let $\{w^1_0, \dots, w_0^{t_0}\} = X_1 \setminus \{v_n\}$.
	For $j = 1, \dots, t_0$ we insert the introduce-vertex-$w_0^j$-node $x_0^{0,j}$. 
	Now observe that $X_1 \setminus \{v_n\}$ is precisely the set of neighbors of $v_n$. 
	So for $j = 1, \dots, t_0$ we insert an introduce-edge-$w_0^j v_n$-node $x_0^{1,j}$.For simplicity of the notation later, we define $r_0 = t_0$ and $u_0^j = w_0^j$ for every $j \in [r_0]$.
	
	Similarly, for every $i = 2, \dots, n$ we will now insert some nodes between $x_{i-1}$ and $x_i$---again, we will list the inserted nodes in the order they appear between $x_{i-1}$ and $x_i$.
	First, we claim that $X_{i-1} \setminus X_i = Y_{n+1-(i-1)} \setminus Y_{n+1-i} = \{v_{n+1-(i-1)}\}$ holds---this is true for the following reason.
	First, $v_{n+1-(i-1)}$ belongs to $Y_{n+1-(i-1)}$ and not to $Y_{n+1-i}$ by definition.
	Second, the vertex $v_{(n+1)-i}$ belongs to $Y_{n+1-i}$.
	And finally, every edge that belongs to $E_{n+1-(i-1)-1} = E_{n+1-i}$ and has its left endpoint not equal to $v_{n+1-i}$ also belongs to $E_{n+1-i-1}$.
	So we first insert the forget-vertex-$v_{n+1-(i-1)}$-node and denote it by $x_{i-1}^{f}$.
	
	Second, recall that we have $v_{n+1-i} \in X_i = Y_{n+1-i}$. 
	So if $v_{n+1-i} \notin X_{i-1}$ we now insert the introduce-vertex-$v_{n+1-i}$-node and denote it by $X_{i-1}^{0}$.
	Third, let $\{w^1_{i-1}, \dots, w^{t_{i-1}}_{i-1}\} = X_i \setminus (X_{i-1} \cup \{v_{n+1-i}\}) = Y_{n+1-i} \setminus (Y_{n+1-(i-1)} \cup \{v_{n+1-i}\})$.
	For $j = 2, \dots, t_{i-1}$ we insert an introduce-vertex-$w_{i-1}^j$-node and denote it by $x_{i-1}^{0,j}$.
	Now the vertices in the bag of $x_{i-1}^{0,t_{i-1}}$ are precisely the vertices of $X_i$. 
	
	And fourth, let $\{u^1_{i-1}, \dots, u^{r_{i-1}}_{i-1}\}$ be the set of vertices to the left of $v_{n+1-i}$ adjacent to $v_{n+1-i}$.
	For $j = 1, \dots, r_{i-1}$ we add an introduce-edge-$u_{i-1}^j v_{n+1-i}$-node and denote it by $x_{i-1}^{1,j}$.
	Observe that the following holds:
	First, $v_{n+1-i} \in Y_{n+1-i} = X_i$ holds by definition.
	Second, we have $u_{i-1}^j = v_q$ for some $q < n+1-i$ and the existence of the edge $u_{i-1}^j v_{n+1-i}$ implies that we have $u_{i-1}^j \in Y_{n+1-i} = X_i$.
	Since the vertices in the bag of $x_{i-1}^{0,t_{i-1}}$ are precisely the vertices in $x_i$, the bag of $x_{i-1}^{1,j}$ indeed contains the end-vertices of the edge it introduces.	
	
	Next for every $i \in [n]$, we suppress the node $x_i$ as the vertices in the bag of $x_{i-1}^{1,r_i}$ are (still) precisely the vertices in $X_i$---for simplicity we refer to $x_{i-1}^{1,r_i}$ as $x_i$.
	Finally, note that we have $X_n = \{v_1\}$.
	So to conclude the construction, we add a forget-vertex-$v_1$-node after $x_n$ and denote it by $x_{n+1}$.
	The constructed decomposition now consists of the bags:
	\begin{align*}
		& x_0, \\ 
		& \Bigl((x_i^f)_{\text{if } i \geq 1}, (x_i^0)_{\text{if } v_{n+1-(i+1)} \notin X_i}, x_i^{0,1}, \dots, x_i^{0,t_i}, \\
		& x_i^{1,1}, \dots, x_i^{1,r_i-1}, x_i^{1,r_i}=x_{i+1}\Bigr)_{i = 0, \dots, n-1}, \\
		& x_{n+1}. 
	\end{align*}
	This is indeed a very nice path decomposition of $G$: as for introduce- and forget-vertex-nodes, this is because we use a standard construction of making a given path decomposition very nice; and every edge, say $v_{n+1-q} v_{n+1-i}$ with $i < q$ is introduced, by construction, precisely once, namely in $x_{i-1}^{1, j}$ for some $j \in [r_{i-1}]$.
	
	Now we show that each node, say $x$, of this decomposition satisfies the claimed inequality $|B_1(x)| + 2 \cdot |B_2(x)| \leq k+\mathcal{O}(1)$. 
	The bags of $x_0$ and $x_{n+1}$ are empty so we can now restrict our attention to the remaining ones.
	First, we show that the bags of $x_1 = x_0^{1,r_0}, \dots, x_n = x_{n-1}^{1,r_{n-1}}$ satisfy the desired property---we will later show how this implies the claim for the remaining bags as well. 
	Recall that by construction, for every $i \in [n]$, the edges introduced on the path between $x_{i-1}$ and $x_i$ are precisely the edges whose right end-point is $v_{n+1-i}$.
	This implies that the vertices introduced up to $x_i$ are precisely the edges $E_{n-1} \cup \dots \cup E_{n-i}$.
	Therefore, it holds that $G_x = (\{v_{n+1-i}, \dots, v_n\} \cup B_{x_i}, E_{n-i} \cup \dots \cup E_{n-1})$ (see \cref{fig:lin-arr-to-pd} for an illustration).
	
	\begin{figure}[t]
        \centering
        \includegraphics[width=0.4\textwidth]{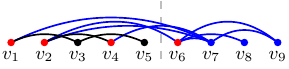}
        \caption{A linear arrangement of a graph $G$, the edges of $E_5$ are the edges crossing the gray vertical line. In red: the bag of $x_4$, in blue and red: the subgraph $G_{x_4}$, in black: the edges and vertices of $G$ outside $G_{x_4}$.}
        \label{fig:lin-arr-to-pd}
    \end{figure}
	
	Let $H = G_{x_i}$ and $v = v_{n+1-i}$ for shortness.
	For every vertex $v_j \in X_i \setminus \{v\} = Y_{n+1-i} \setminus \{v_{n+1-i}\}$ we have $j < (n+1)-i$, i.e., $j \leq n - i$.
	So for any two $u,w \in X_i \setminus \{v\}$ we have $uw \notin E(H)$---this is because the edge set of $H$ is $E_{n-i} \cup \dots \cup E_{n-1}$ and no edge of form $v_{j_1} v_{j_2}$ with $j_1, j_2 \leq n-i$ belongs to this set.
	Thus, for every edge $e$ of $H$ and every vertex $u \in X_i \setminus \{v\}$ incident with $e$, we have $e \in E_{((n+1)-i)-1} = E_{n-i}$ and $u$ is the left end-point of $e$.
		So we have 
		\begin{equation}\label{eq:bound}
			\sum_{u \in X_i \setminus \{v\}} \deg_H(u) = |E_{n-i}| \leq k.
		\end{equation}
		Thus, we get
		\begin{align*}
			&|B_1(x_i)| + 2 \cdot |B_2(x_i)| = \\
			&|\{u \in B_{x_i} \mid \deg_{H}(u) = 1\}| + 2\cdot |\{u \in B_{x_i} \mid \deg_{H}(u) \geq 2\}| \leq \\
			&|\{u \in X_i \mid \deg_{H}(u) = 1\}| + 2\cdot |\{u \in X_i \mid \deg_{H}(u) \geq 2\}| \leq \\
			&2 + |\{u \in X_i \setminus \{v\} \mid \deg_H(u) = 1\}| + 2\cdot|\{u \in X_i \setminus \{v\} \mid \deg_H(u) \geq 2\}| \leq \\
			&2 +\sum_{\substack{u \in X_i \setminus \{v\} \colon \\ \deg_H(u) = 1}} 1 + \sum_{\substack{u \in X_i \setminus \{v\} \colon \\ \deg_H(u) \geq 2}} \deg_H(u) = \\
			&2 + \sum_{u \in X_i} \deg_H(u) \stackrel{\eqref{eq:bound}}{\leq} \\
			&k + 2
		\end{align*}
		as desired.
		
		Now it remains to consider the ``intermediate'' nodes, say $x$, added in our construction of the very nice path decomposition and show that each of them satisfies 
		\begin{equation}\label{eq:desired-for-bags}
			\sum_{\substack{u \in B_x \colon \\ \deg_{G_x}(u) = 1}} 1 + \sum_{\substack{u \in B_x \colon \\ \deg_{G_x}(u) \geq 2}} 2  \leq k + 2
		\end{equation}
		which is then equivalent to $|B_1(x)| + 2 \cdot |B_2(x)| \leq k + 2$ by definition.
		We will now show that this result for the nodes $x_1, \dots, x_n$ implies the property for the remaining nodes as well:
		\begin{itemize}
		\item First, for every $i \in [n-1]$, we have $B_{x_i^f} \subseteq B_{x_i}$ and $G_{x_i^f} = G_{x_i}$.
		Therefore we also have $\deg_{G_{x_i^f}}(u) = \deg_{G_{x_i}}(u)$ for every $u \in B_{x_i^f}$ and
		\[
			\sum_{\substack{u \in B_{x_i^f} \colon \\ \deg_{G_{x_i^f}}(u) = 1}} 1 + \sum_{\substack{u \in B_{x_i^f} \colon \\ \deg_{G_{x_i^f}}(u) \geq 2}} 2 \leq \sum_{\substack{u \in B_{x_i} \colon \\ \deg_{G_{x_i}}(u) = 1}} 1 + \sum_{\substack{u \in B_{x_i} \colon \\ \deg_{G_{x_i}}(u) \geq 2}} 2 \leq k+2.
		\]
		So $x_i^f$ satisfies \eqref{eq:desired-for-bags} as $x_i$ does.
		
		\item Further, for every $i \in [n-1]_0$ such that the node $x_i^0$ exists, the path between $x_i^0$ and $x_{i+1}$ contains only introduce-vertex- and introduce-edge-nodes.
		So we have $B_{x_i^0} \subseteq B_{x_{i+1}}$ and $G_{x_i^0}$ is a subgraph of $G_{x_{i+1}}$.
		Therefore we also have $\deg_{G_{x_i^0}}(u) \leq \deg_{G_{x_{i+1}}}(u)$ for every $u \in B_{x_i^0}$ and
		\[
			\sum_{\substack{u \in B_{x_i^0} \colon \\ \deg_{G_{x_i^0}}(u) = 1}} 1 + \sum_{\substack{u \in B_{x_i^0} \colon \\ \deg_{G_{x_i^0}}(u) \geq 2}} 2 \leq \sum_{\substack{u \in B_{x_{i+1}} \colon \\ \deg_{G_{x_{i+1}}}(u) = 1}} 1 + \sum_{\substack{u \in B_{x_{i+1}} \colon \\ \deg_{G_{x_{i+1}}}(u) \geq 2}} 2 \leq k+2.
		\]
		So $x_i^0$ satisfies \eqref{eq:desired-for-bags} as $x_{i+1}$ does.
		
		\item And similarly, for every $i \in [n]$ and every $j \in [t_i]$, the path between $x_i^{0,j}$ and $x_{i+1}$ contains only introduce-vertex- and introduce-edge-nodes.
		So we have $x_i^{0,j} \subseteq x_{i+1}$ and $G_{x_i^{0,j}}$ is a subgraph of $G_{x_{i+1}}$.
		Therefore we also have $\deg_{G_{x_i^{0,j}}}(u) \leq \deg_{G_{x_{i+1}}}(u)$ for every $u \in B_{x_i^{0,j}}$, and $x_i^{0,j}$ satisfies \eqref{eq:desired-for-bags} as $x_{i+1}$ does.
		\item And finally, for every $i \in [n]$ and every $j \in [r_i]$ the path between $x_i^{1,j}$ and $x_{i+1}$ contains only introduce-vertex- and introduce-edge-nodes.
		Therefore, we have $B_{x_i^{1,j}} \subseteq B_{x_{i+1}}$ and $G_{x_i^{1,j}}$ is a subgraph of $G_{x_{i+1}}$.
		Therefore we also have $\deg_{G_{x_i^{1,j}}}(u) \leq \deg_{G_{x_{i+1}}}(u)$ for every $u \in B_{x_i^{1,j}}$, and $B_{x_i^{1,j}}$ satisfies \eqref{eq:desired-for-bags} as $x_{i+1}$ does. 
		\end{itemize}
		Altogether, every node $x$ of the constructed path decomposition satisfies $|B_1(x)|+2 \cdot |B_2(x)| \leq k + 2$.
		
		So it remains to ensure that the last three bags of the decomposition are as claimed in the lemma.
		Recall that the node $x_{n+1}$ is the (unique) forget-vertex-$v_n$-node.
		So we first take the (unique) forget-vertex-$v^*$-node and shift it to be placed right before $x_{n+1}$. 
		And second, we take the (unique) introduce-edge-$v_1 v^*$-node and move it right before the forget-vertex-$v^*$-node.
		Clearly, this remains a very nice path decomposition of $G$.
		To see that the bag of every node $x$ still satisfies $|B_1(x)|+2 \cdot |B_2(x)| \leq k + \bigoh(1)$, observe the following two properties.
		First, every bag gained at most one vertex due to these changes.
		Second, for every node $x$, the set of edges introduced so far (i.e., the set $E_x$) either remained unchanged, or it has lost precisely the edge $v_1 v^*$.
		Altogether, this implies that for every node $x$ the sum $|B_1(x)| + 2 \cdot |B_2(x)|$ increased by at most two, and is therefore bounded by $k + 4$.
\end{proof}
We can then run the dynamic-programming algorithm from \cref{thm:single-entry-computation} restricted to relevant footprints only to find out, with high probability, if the graph $G$ admits a Hamiltonian cycle. 
An important result implied by the paper by Cygan et al.\ can be stated as follows:
\begin{theorem}[\cite{DBLP:journals/jacm/CyganKN18}]\label{thm:cygan-et-al}
	If there exists a deterministic algorithm $\mathcal{A}$ that takes a graph $G$, a linear arrangement $v_1, \dots, v_n$ of $G$, a vertex $v^*$ adjacent to $v_1$, a weight function $\omega \colon E(G) \to [2\cdot|E(G)|]$, and a value $w^* \in [2 \cdot |V(G)| \cdot |E(G)|]_0$, runs in time $\alpha$, and outputs the number, modulo 2, of Hamiltonian paths of $G$ with end-vertices $v_1$ and $v^*$ of weight $w^*$, then there also exists an algorithm $\mathcal{B}$ with the following properties. 
	The algorithm $\mathcal{B}$ is a Monte-Carlo algorithm that takes a graph $G$ and a linear arrangement of $G$, runs in time $\ostar(\alpha)$, and solves the \textsc{Hamiltonian Cycle} problem.
	The algorithm $\mathcal{B}$ cannot give false positives and may give false negatives with probability at most $1/2$. 
\end{theorem}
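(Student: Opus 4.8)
The plan is to reduce \textsc{Hamiltonian Cycle} to the weighted Hamiltonian-path-parity problem solved by $\mathcal{A}$, using the by now standard combination of fixing one endpoint and applying the Isolation lemma to eliminate parity cancellations. First I would handle the degenerate cases ($|V(G)| \leq 2$ or $\deg_G(v_1) = 0$) directly and assume from now on that $|V(G)| \geq 3$. The key structural observation is that $G$ admits a Hamiltonian cycle if and only if there is a neighbor $v^*$ of $v_1$ such that $G$ admits a Hamiltonian path with endpoints $v_1$ and $v^*$: deleting one of the two edges incident with $v_1$ from a Hamiltonian cycle yields such a path, and conversely adding the edge $v_1 v^*$ to such a path yields a Hamiltonian cycle.

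Next I would describe the algorithm $\mathcal{B}$. On input $(G, \ell = v_1, \dots, v_n)$ it samples a weight function $\omega \colon E(G) \to [2\cdot|E(G)|]$ uniformly at random, and then for every neighbor $v^*$ of $v_1$ and every $w^* \in [2\cdot|V(G)|\cdot|E(G)|]_0$ it calls $\mathcal{A}(G, \ell, v^*, \omega, w^*)$; this range of $w^*$ suffices since every Hamiltonian path uses exactly $|V(G)| - 1$ edges, each of $\omega$-weight at most $2\cdot|E(G)|$. If some call returns $1$, then $\mathcal{B}$ outputs that $G$ has a Hamiltonian cycle, and otherwise it outputs that it does not. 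The running time is immediate: $\mathcal{B}$ performs at most $\deg_G(v_1)\cdot(2\cdot|V(G)|\cdot|E(G)|+1) = n^{\bigoh(1)}$ calls to $\mathcal{A}$, each taking time $\alpha$, plus polynomial overhead, hence $\mathcal{B}$ runs in time $\ostar(\alpha)$.

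For correctness, the absence of false positives is unconditional: if $\mathcal{A}(G, \ell, v^*, \omega, w^*)$ returns $1$, the number of Hamiltonian paths with endpoints $v_1, v^*$ of weight $w^*$ is odd and in particular positive, so $G$ has a Hamiltonian cycle regardless of the sampled $\omega$. For the false-negative bound, suppose $G$ has a Hamiltonian cycle and let $\mathcal{F} \subseteq 2^{E(G)}$ be the family of all edge sets that form a Hamiltonian path with $v_1$ as one of its endpoints (ranging over all possible second endpoints simultaneously); by the above observation $\mathcal{F} \neq \emptyset$. Applying the Isolation lemma~\cite{DBLP:journals/combinatorica/MulmuleyVV87} with universe $E(G)$ and family $\mathcal{F}$, with probability at least $1/2$ there is a unique set $P^* \in \mathcal{F}$ of minimum $\omega$-weight. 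Let $v^*$ be the second endpoint of $P^*$ and $w^* = \omega(P^*)$. Then $P^*$ is in fact the only Hamiltonian path with endpoints $v_1, v^*$ of weight exactly $w^*$: any other such path would also lie in $\mathcal{F}$ and have the minimum weight $w^*$, contradicting uniqueness. Hence $\mathcal{A}(G, \ell, v^*, \omega, w^*) = 1$ and $\mathcal{B}$ answers correctly, so the probability of a false negative is at most $1/2$.

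The only genuinely delicate point — and the step I would be most careful about — is the choice of universe and family for the Isolation lemma: taking $\mathcal{F}$ to consist of \emph{all} Hamiltonian paths with first endpoint $v_1$ (rather than a separate family for each candidate second endpoint, which would require a union bound over $\deg_G(v_1)$ many events and could push the error probability above $1/2$), and then verifying that uniqueness of the minimum-weight element of $\mathcal{F}$ really does force odd parity for the single query $(v^*, w^*)$ induced by $P^*$. Everything else is bookkeeping.
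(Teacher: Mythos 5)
Your proof is correct and is exactly the Isolation-lemma reduction that the paper credits to Cygan et al.~\cite{DBLP:journals/jacm/CyganKN18} without spelling it out. You correctly isolate over the single family of all Hamiltonian paths with one endpoint fixed at $v_1$ (rather than a separate family per candidate second endpoint), which is precisely what makes the $1/2$ error bound go through without a union bound over $\deg(v_1)$.
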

Let us remark that even though the paper by Cygan et al.~\cite{DBLP:journals/jacm/CyganKN18} does not speak about linear arrangements and cutwidth at all, this theorem is implied from their results since it merely states that counting certain Hamiltonian paths modulo 2 suffices to solve the \textsc{Hamiltonian Cycle} problem.

We will now show how to use the path decomposition constructed in the \cref{lem:useful-path-decomposition-hc-ub} to obtain the desired $\bigoh^*((1+\sqrt{2})^k)$ algorithm: we will run the dynamic-programming from \cref{thm:single-entry-computation} on it but make use of \cref{lem:number-of-footprints} by only considering relevant degree sequences. 
Now we are ready to put everything together and prove the main result of this section:
\begin{theorem}
	There exists a one-sided error Monte-Carlo algorithm that takes a graph $G$ together with a linear arrangement 
	$v_1, \dots, v_n$ 
	of $G$ of cutwidth at most $k$, runs in time $\bigoh^*((1+\sqrt{2})^k)$, and solves the {\textsc{Hamiltonian Cycle}} problem.
	The algorithm cannot give false positives and may give false negatives with probability at most $1/2$.
\end{theorem}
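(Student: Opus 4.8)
The plan is to combine the reduction of \cref{thm:cygan-et-al} with the structured path decomposition of \cref{lem:useful-path-decomposition-hc-ub} and the state bound of \cref{lem:number-of-footprints}. By \cref{thm:cygan-et-al} it suffices to exhibit a deterministic algorithm $\mathcal{A}$ that, given $G$, the linear arrangement $v_1, \dots, v_n$, a neighbour $v^*$ of $v_1$, a weight function $\omega \colon E(G) \to [2|E(G)|]$ and a target weight $w^* \in [2|V(G)||E(G)|]_0$, runs in time $\bigoh^*((1+\sqrt{2})^k)$ and returns, modulo $2$, the number of Hamiltonian $v_1$--$v^*$ paths of $G$ of weight $w^*$; plugging such an $\mathcal{A}$ into \cref{thm:cygan-et-al} immediately yields the claimed one-sided error Monte-Carlo algorithm for \textsc{Hamiltonian Cycle}.

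First I would invoke \cref{lem:useful-path-decomposition-hc-ub} on $v_1, \dots, v_n$ and $v^*$ to obtain, in polynomial time, a very nice path decomposition of $G$ with $\bigoh(n)$ nodes in which every node $x$ satisfies $|B_1(x)| + 2|B_2(x)| \leq k + \bigoh(1)$ and whose last three nodes are introduce-edge-$v_1 v^*$, forget-vertex-$v_1$, forget-vertex-$v^*$. The algorithm $\mathcal{A}$ then runs the dynamic program of \cref{def:dp-table} along this decomposition using the single-entry transition of \cref{thm:single-entry-computation}, with one modification: at every node $x$ it stores only the entries $T_x[s, M, w]$ for which $s$ is \emph{relevant} for $x$ (and, as before, $M \in \mX(s^{-1}(1))$ and $w$ ranges over the polynomially many candidate weights). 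Whenever the transition of \cref{thm:single-entry-computation} queries an entry $T_y[s', M', w']$ of the predecessor node $y$ with $s'$ not relevant for $y$, the algorithm treats it as $0$. This is sound because, by the first part of \cref{lem:number-of-footprints}, every partial cycle cover of $G_y$ has a relevant footprint, so $T_y[s', M', w'] = 0$ whenever $s'$ is not relevant; hence the restricted table agrees with the full table of \cref{def:dp-table} on every relevant entry, at every node. Extracting from the final (empty) bag the count, modulo $2$, of Hamiltonian $v_1$--$v^*$ paths of weight $w^*$ is then carried out exactly as by Cygan et al.: forcing the edge $v_1 v^*$ and subsequently forgetting both of its endpoints makes the relevant entry of the last bag count, modulo $2$, precisely the Hamiltonian cycles of $G$ through $v_1 v^*$ of the appropriate total weight, which correspond one-to-one to the sought Hamiltonian paths (with the target weight shifted by $\omega(v_1 v^*)$).

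For the running time, \cref{lem:number-of-footprints} bounds, at each of the $\bigoh(n)$ nodes, the number of pairs $(s, M)$ with $s$ relevant and $M \in \mX(s^{-1}(1))$ by $\bigoh((1+\sqrt{2})^k)$; together with the polynomial range of $w$ and the $\bigoh^*(1)$ cost per entry guaranteed by \cref{thm:single-entry-computation}, the total time of $\mathcal{A}$ is $\bigoh^*((1+\sqrt{2})^k)$, as required. The one step that needs genuine care is the correctness of restricting the table to relevant footprints --- concretely, that each of the $\bigoh(1)$ predecessor entries touched by the transition of \cref{thm:single-entry-computation} is either relevant (hence stored) or provably zero (hence safely read as $0$) --- but this is exactly what the first part of \cref{lem:number-of-footprints} supplies, uniformly over all nodes. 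Everything else, namely constructing the decomposition, the transition itself, and the reduction from \textsc{Hamiltonian Cycle} to the weighted modular counting problem, is provided verbatim by \cref{lem:useful-path-decomposition-hc-ub}, \cref{thm:single-entry-computation}, and \cref{thm:cygan-et-al}, respectively.
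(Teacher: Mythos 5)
Your proposal is correct and takes essentially the same route as the paper: appeal to \cref{thm:cygan-et-al} to reduce to counting, modulo $2$, Hamiltonian $v_1$--$v^*$ paths of prescribed weight; build the degree-balanced very nice path decomposition via \cref{lem:useful-path-decomposition-hc-ub}; run the Cygan et al.\ DP of \cref{def:dp-table} and \cref{thm:single-entry-computation} restricted to relevant footprints, answering queries to non-relevant entries with $0$, which is sound by the first part of \cref{lem:number-of-footprints}; and bound the time by the second part of that lemma. The only cosmetic divergence is the extraction: the paper reads the answer at the node $z$ just before the introduce-edge-$v_1v^*$-node, with footprint $s(v_1)=s(v^*)=1$ and matching $\{v_1,v^*\}$, thereby counting Hamiltonian $v_1$--$v^*$ paths in $G-e^*$ directly and with no weight shift, whereas you force $v_1v^*$ at the introduce-edge-node and read the final empty bag at weight $w^*+\omega(v_1v^*)$; both are valid (your variant needs the introduce-edge step to only include, not branch over, $e^*$, which is a trivial tweak), and both yield the claimed $\bigoh^*((1+\sqrt 2)^k)$ bound.
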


\begin{proof}
	Let $v^*$ be an arbitrary neighbor of $v_1$, let $\omega \colon E(G) \to [2\cdot|E(G)|]$ be a weight function, and let $w^* \in [2 \cdot |V(G)| \cdot |E(G)|]_0$ be an integer.
	Let $e^* = \{v_1, v^*\}$.
	By \cref{thm:cygan-et-al}, it suffices to provide a deterministic algorithm that runs in time $\bigoh^*((1+\sqrt{2})^k)$ and computes, modulo 2, the number of Hamiltonian cycles of $G$ of weight $w^*$ with respect to $\omega$.
	
	First, in polynomial time we compute a very nice path decomposition, say $\mP$, of $G$ in which each node $x$ satisfies $|B_1(x)| + 2 \cdot |B_2(x)| \leq k + \bigoh(1)$, and the last three nodes of this decomposition are introduce-edge-$v_1 v^*$, forget-vertex-$v_1$, and forget-vertex-$v^*$ (recall \cref{lem:useful-path-decomposition-hc-ub}).
	Now we run the following dynamic-programming algorithm on $\mP$. 
	Informally speaking, we simply run the algorithm by Cygan et al.~\cite{DBLP:journals/jacm/CyganKN18} (see \cref{thm:single-entry-computation})---however, now by \cref{lem:number-of-footprints}, we know that some table entries are certainly zero, and we can save time by not storing them in the dynamic-programming table and instead, answering the corresponding queries by zero directly.
	Now we provide the details of this approach.
	
	For a node $x$ of $\mP$, we define the table $R_x$ as the restriction of $T_x$ to relevant mappings $s$ and $w \in [|V(G)| \cdot 2 |E(G)|]$, i.e., for every relevant mapping $s \colon B_x \to \{0, 1, 2\}$, every $M \in \mX(s^{-1}(1))$, and every $w \in [2 \cdot |V(G)| \cdot |E(G)|]_0$ we define $R_x[s, M, w] = T_x[s, M, w]$ (see \cref{def:dp-table}).
	Our algorithm processes the bags in the order they occur in $\mP$. 
	First, since the decomposition is very nice, the first bag, say $x_0$, is empty it is easy to see (and also argued by by Cygan et al.~\cite{DBLP:journals/jacm/CyganKN18}) that 
	$R_{x_0}[\emptyset, \emptyset, 0] = 1$
	and
	$R_{x_0}[\emptyset, \emptyset, q] = 0$
	hold for any $q \in [2 \cdot |V(G)| \cdot |E(G)|]$.
	
	Now we process the remaining nodes in the order they occur in $\mP$.
	So let $x$ be a non-first node of $\mP$, let $y$ be its predecessor, and let the table $R_y$ be already computed. 
	We now describe how to compute the table $R_x$ from it.
	For this, we iterate through all relevant $s \colon B_x \to \{0, 1, 2\}$, all $M \in \mX(s^{-1}(1))$, and all $w \in [2 \cdot |V(G)| \cdot |E(G)|]_0$ and proceed as follows.
	By \cref{thm:single-entry-computation} it is possible to compute the entry $T_x[s, M, w]$ in time $\bigoh^*(1)$ given the table $T_y$. 
	To compute all entries of $R_x$, we use the algorithm given by this theorem by answering the queries of form $T_y[s', M', w']$ for $s' \colon Y \to \{0, 1, 2\}$, $M' \in \mX(s'^{-1}(1))$, and $w' \in [2 \cdot |V(G)| \cdot |E(G)|]_0$ as follows:
	\begin{itemize}
	 \item If $s'$ is relevant for $y$, then $R_y[s', M', w']$ is well-defined and equal to $T_y[s', M', w']$. So we correctly answer the query with $R_y[s', M', w']$.
	 \item Otherwise, $s'$ is not relevant for $y$. By \cref{lem:number-of-footprints}, no partial cycle cover of $G_y$ has footprint $s'$ and therefore, we correctly answer the query $T_y[s', M', w']$ with $0$.
	\end{itemize}
	Recall that the node $x$ of $\mP$ satisfies $|B_1(x)| + 2 \cdot |B_2(x)| \leq k + \bigoh(1)$ (recall \cref{lem:useful-path-decomposition-hc-ub}). 
	Then by \cref{lem:number-of-footprints}, there exist $\bigoh((\sqrt{2}+1)^k)$ pairs $(s, M)$ where $s$ is relevant for $x$ and $M \in \mX(s^{-1}(1))$.
	Altogether, the table $R_x$ can therefore be computed from $R_y$ in time
	\[
		\bigoh((\sqrt{2}+1)^k) \cdot \bigoh(|V(G)| \cdot |E(G)|) \cdot \bigoh^*(1) = \bigoh^*((\sqrt{2}+1)^k).
	\]
	Recall that the $\mP$ was computed in polynomial time, hence the number of bags is polynomial in the size of $G$. 
	Altogether, we can compute the table $R_x$ for every node $X$ in time $\bigoh^*((\sqrt{2}+1)^k)$.
	
	Now let $z$ denote the node of $\mP$ preceding the introduce-edge-$v_1 v^*$-node. 
	The property on the last three bags of the very nice path decomposition $\mP$ implies that we have $B_z = \{v_1, v^*\}$ and $G_z = G - e^*$.
	By definition, the number of Hamiltonian paths between $v_1$ and $v^*$ of weight $w^*$ is precisely the number of partial cycle covers $P$ of $G_z$ with the footprint $s \colon \{v_1, v^*\} \to \{0,1,2\}$ such that $s(v_1) = s(v^*) = 1$ and such that for the perfect matching $M = \{\{v_1, v^*\}\} \in \mX(\{v_1, v^*\})$ the union $P \cup M$ forms a single cycle.
	In other words, the value $T_z[s, M, w]$ is the number (modulo 2) of Hamiltonian paths of $G$ with end-points $v_1$ and $v^*$.
	This value is equal to $0$ if $s$ is not relevant for $z$, and it is equal to $R_z[s, M, w^*]$ otherwise.
	Hence, the number, modulo 2, of Hamiltonian paths of $G$ with end-vertices $v_1$ and $v^*$ of weight $w^*$ can be computed in time $\bigoh^*((1+\sqrt{2})^k)$.
	\cref{thm:cygan-et-al} then implies the claim.
\end{proof}

  \subsection{Lower Bound}

In this section we provide a lower bound construction showing that unless SETH fails, there is no algorithm solving the \textsc{Hamiltonian Cycle} problem in time $\bigoh^*((1 + \sqrt{2} - \varepsilon)^{\operatorname{ctw}})$ for any $\varepsilon > 0$ even optimal linear arrangements are provided as part of the input.
This implies that our algorithm from the previous section is optimal under SETH.

As with the algorithm, we will use the ideas by Cygan et al.~\cite{DBLP:journals/jacm/CyganKN18} who showed that unless SETH fails, there is no algorithm solving the \textsc{Hamiltonian Cycle} problem in time $\bigoh^*((2 + \sqrt{2} - \varepsilon)^{\operatorname{pw}})$ for any $\varepsilon > 0$ even when optimal path decompositions are provided as part of the input.
Their construction has too large cutwidth to be used directly---this is because the path gadgets they use in their construction are, basically, cliques and although a clique of size $\beta$ ``contributes'' only $\beta$ to pathwidth, it contributes $\Theta(\beta^2)$ to cutwidth which is too large for the desired lower-bound construction.
We will adapt this construction in such a way that between two consecutive path gadgets with ``boundary size'' $\beta$ we have a matching, i.e., ``sufficiently small'' contribution of only $\beta$ to cutwidth. 
We will also adapt the clause-gadget construction in such a way that the cutwidth remains bounded.

\begin{definition}
	Let $G$ be a graph and let $A, B \subseteq V(G)$.
	The set $E(A)$ is defined as $E(A) = \{uv \in E(G) \mid u, v \in A\}$ and the set $E(A, B)$ is defined as $E(A, B) = \{uv \in E(G) \mid u \in A, v \in B\}$.
	Further let $\mF \subseteq 2^{E(A)}$.
	A Hamiltonian cycle $C$ of $G$ is \emph{consistent} with $(A, \mF)$ if $C \cap E(A) \in \mF$ holds.
\end{definition}

Cygan et al.\ have proven the following useful technical result:
\begin{lemma}[Induced Subgraph Gadget \cite{DBLP:journals/jacm/CyganKN18}]\label{lem:induced-subgraph-gadget}
	Let $G$ be a graph and let $ab$ be an edge of $G$ such that every Hamiltonian cycle of $G$ uses the edge $ab$.
	Further let $A \subseteq V(G)$ with $a, b \notin A$ and let $\mF \subseteq 2^{E(A)}$. 
	Then there is a computable function $f$ such that in time $f(|A|) \cdot n^{\bigoh(1)}$ we can compute a graph $G'$ with the following properties.
	Let $S = V(G') \setminus V(G)$.
	\begin{enumerate}
	 \item It holds that $|S| \leq f(|A|)$.
	 \item The graph $G'$ is obtained from $G$ by taking the graph $G$, adding the set $S$ of fresh isolated vertices, making $A \cup S \cup \{a, b\}$ to a clique, and then removing some edges with both endpoints in $A \cup S \cup \{a, b\}$.
	 \item The graph $G$ admits a Hamiltonian cycle consistent with $(A, \mF)$ if and only $G'$ admits a Hamiltonian cycle.
	\end{enumerate}
	In this context, we say that the vertices $A \cup S \cup \{a, b\}$ \emph{belong to the induced subgraph gadget} for $(A, \mF)$ in $G'$.
\end{lemma}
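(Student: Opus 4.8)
This lemma is from Cygan et al.~\cite{DBLP:journals/jacm/CyganKN18}; here I only sketch the construction and the idea of its correctness. The plan is to build $G'$ by attaching to $G$ a gadget that \emph{encodes the constraint} ``$C\cap E(A)\in\mathcal F$'', exploiting two freedoms granted by the lemma: the mandatory edge $ab$ may be used as an anchor (it lies in every Hamiltonian cycle of $G$, so it can safely be replaced by a mandatory detour), and the whole of $A\cup S\cup\{a,b\}$ may be turned into a clique from which we delete any edges we like, with $|S|$ allowed to grow with $|A|$. Since $G'$ agrees with $G$ outside $A\cup\{a,b\}$, a Hamiltonian cycle of $G'$ restricted there behaves exactly as in $G$, so the whole of the argument lives inside $A\cup S\cup\{a,b\}$.

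\textbf{Key steps.}
First I would treat the case of a \emph{single} target $F\in\mathcal F$, i.e.\ $\mathcal F=\{F\}$: set $S=\{s_e\mid e\in F\}$ and let $G'$ be $G$ with all edges of $E(A)$ deleted and, for every $e=uv\in F$, the two edges $\{s_e,u\},\{s_e,v\}$ added. Since $s_e$ then has degree exactly two, every Hamiltonian cycle of $G'$ is forced to use both edges at $s_e$, i.e.\ to route through the path $u$--$s_e$--$v$; suppressing the degree-two vertices $s_e$ (turning each path $u$--$s_e$--$v$ back into the edge $uv$) is a bijection between Hamiltonian cycles of $G'$ and Hamiltonian cycles $C$ of $G$ with $C\cap E(A)=F$, because $G'$ contains no edge of $E(A)$ at all. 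Here $G'$ is a subgraph of the near-clique allowed by the lemma and $|S|=|F|\le\binom{|A|}{2}$. Second, for general $\mathcal F=\{F_1,\dots,F_r\}$ with $r\le 2^{\binom{|A|}{2}}$ a function of $|A|$, I would combine the single-pattern gadgets into one gadget realising their \emph{disjunction}: a ``selector'' hung on the anchor $ab$ lets a Hamiltonian traversal commit to one index $i$, activate the pattern-$F_i$ edges, and still cover every auxiliary vertex tied to the other patterns through inert, forced detours. Third, items~1 and~2 of the lemma follow from the construction and the bound on $r$ ($|S|=O(r\cdot|A|)\le f(|A|)$, computable in time $f(|A|)\cdot n^{O(1)}$), and item~3 follows as in the single-pattern case once the selector is understood: a Hamiltonian cycle of $G$ consistent with $F_i$ is rerouted through the selector in state $i$, while any Hamiltonian cycle of $G'$ must cover all auxiliary vertices, hence use all forced edges, hence be pinned to some state $i$ and realise exactly $F_i$ inside $A$, so contracting the detour back to $ab$ gives a Hamiltonian cycle of $G$ consistent with $(A,\mathcal F)$.

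\textbf{Main obstacle.}
The easy part is the single-pattern gadget above; the real work — and why this is a genuine lemma rather than a one-line remark — is the disjunction/selector gadget together with the proof that the resulting near-clique admits \emph{no} unintended Hamiltonian traversal. One must rule out cycles that thread through several selector states at once, that enter and leave $A$ in an uncontrolled fashion, or that shortcut a forcing chain; keeping every auxiliary vertex of low (``forcing'') degree is what makes this possible, and the full case analysis is exactly the content of \cite{DBLP:journals/jacm/CyganKN18}, from which we import the statement.
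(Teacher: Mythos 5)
The paper does not actually prove this lemma: it is stated as imported directly from Cygan et al.~\cite{DBLP:journals/jacm/CyganKN18}, and the text moves immediately to the reduction. Your proposal likewise ends by importing the statement, so there is no proof of the paper's to compare against; you are adding an expository sketch where the paper gives none. For what it is worth, your single-pattern gadget is correct: with $\mathcal F=\{F\}$, deleting all of $E(A)$ and subdividing exactly the edges of $F$ by degree-two vertices $s_e$ forces every Hamiltonian cycle of $G'$ to traverse each $u$--$s_e$--$v$ path, and suppression gives a bijection onto Hamiltonian cycles of $G$ meeting $A$ exactly in $F$; the sizes $|S|=|F|\le\binom{|A|}{2}$ and $r\le 2^{\binom{|A|}{2}}$ are as you state. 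Your identification of the real difficulty — building a disjunction/selector anchored at the forced edge $ab$ and ruling out cycles that mix selector states or re-enter $A$ unexpectedly — is also the right place to point, and you correctly defer that case analysis to the cited source. So the proposal is compatible with the paper's treatment (both cite rather than reprove), and the additional sketch you supply is sound as far as it goes.
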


\subparagraph*{Reduction}

Now we are ready to describe the lower-bound construction itself.
Let $d \geq 3$ be an arbitrary but fixed integer treated as a constant in the following.
And let $I$ be an arbitrary instance of $d$-\textsc{SAT}.

To simplify the arguments the proof is structured as follows.
We will first construct a so-called \emph{constrained} graph $(G, W)$ where the set $W$ consists of tuples of form $(A, \mF, a, b)$ with $A \subseteq V(G)$, $\mF \subseteq 2^{E(A)}$, and $a,b \in V(G) \setminus A$ are adjacent vertices in $G$.
The constrained graph $(G, W)$ will, first, be such that every Hamiltonian cycle of $G$ uses the edge $ab$ for every $(A, \mF, a, b) \in W$.
Second, it will have the property that $I$ is satisfiable if and only if $G$ admits a Hamiltonian cycle $C$ consistent with $(A, \mF)$ for every $(A, \mF, a, b) \in W$---in this case we will also say that $C$ is \emph{consistent} with $W$.
Furthermore, the sets $A \cup \{a, b\}$ and $A' \cup \{a', b'\}$ will be disjoint for every $(A, \mF, a, b) \neq (A', \mF', a', b') \in W$. 
After that, the graph $G'$ will be obtained from $G$ by applying the induced subgraph gadget (see \cref{lem:induced-subgraph-gadget}) to every tuple $(A, \mF, a, b) \in W$.
Then the correctness of this gadget directly implies that $G'$ admits a Hamiltonian cycle if and only if $I$ is satisfiable.
After that, we will show that the cutwidth of $G'$ is bounded.

So we start with the description of the graph $G$.
Let $\beta$ and $\gamma$ be two constants chosen later in such a way that 
\begin{equation}\label{eq:desired-ineq-hc-lb}
	\sum_{\substack{i_1 + i_2 = \beta \colon \\ i_1 \text{ is even, } i_1, i_2 > 0}} {\beta - 1 \choose i_1 - 1} \sqrt{2}^{i_1/2 - 1} \geq 2^\gamma
\end{equation}
holds.
Further, let $X$ and $C = \{C_1, \dots, C_m\}$ denote the sets of variables and clauses of $I$, respectively, and let $n = |X|$.
We may assume, without loss of generality, that $m \geq 3$ as otherwise, the instance can be solved in polynomial time.
For every $j \in [m]$ let $b^j_1, \dots, b^j_{t(j)}$ denote the set of variables occurring in the clause $C_j$, in particular, we then have $0 < t(j) \leq d$.
We may assume that $n$ is divisible by $\gamma$ by adding at most $\gamma - 1$ variables not occurring in any clause.
Let $q = n / \gamma$.
We partition the set $X$ into $q$ blocks $X_1, \dots, X_q$, each of size $\gamma$.
For every $i \in [q]$, the $2^\gamma$ truth-value assignments of the block $X_i$ will be represented by different ways a Hamiltonian cycle of the arising graph can visit a certain set of vertices (making the graph constrained) and this set of vertices will ``contribute'' $\beta$ to the cutwidth yielding the total cutwidth of $\beta q + \bigoh_{\beta, d}(1)$.

We now proceed with a formal description of the construction of $G$ and $W$.
\begin{enumerate}
 \item Initialize the graph $G$ and the set $W$ to be empty.
 \item Let 
 \[
	\mA = \{(s, M) \mid s \in \{1,2\}^\beta, s(1) = 1, |s^{-1}(2)| > 0, |s^{-1}(1)| \text{ is even}, M \in \mX(s^{-1}(1))\}.
 \]
 Here $\mX(s^{-1}(1))$ is the set of matchings from \cref{lem:number-of-base-matchings}.
 Fix an injective mapping 
 \[
		\psi \colon \{0, 1\}^\gamma \to \mA;
 \]
 recall that $\beta$ and $\gamma$ satisfy \eqref{eq:desired-ineq-hc-lb} thus such a mapping exists. 
 \item For every $i \in [q]$, $j \in [m]$, $k \in [\beta]$, and $r \in \{0,1\}$, add the vertex $v^r_{i,j,k}$. 
 For $i \in [q]$ and $j \in [m]$, denote $V_{i,j} = \{v_{i,j,k}^r \mid k \in [\beta], r \in \{0, 1\}\}$, $V'_{i,j} = \{v_{i,j,k}^r \mid k \in [\beta] \setminus \{1\}, r \in \{0, 1\}\}$, $V'_i = \{v_{i,j,k}^r \mid j \in [m], k \in [\beta] \setminus \{1\}, r \in \{0, 1\}\}, $ and for $r \in \{0, 1\}$ denote $V_{i,j}^r = \{v_{i,j,k}^r \mid k \in [\beta]\}$. 
 \item Each set $V'_{i,j}$ forms a clique.
 \item For every $i \in [q]$ and $j \in [m]$ we also add a path $v_{i,j,1}^0 a_{i,j} b_{i,j} c_{i,j} d_{i,j} v_{i,j,1}^1$ where $a_{i,j}$, $b_{i,j}$, $c_{i,j}$, $d_{i,j}$ are fresh vertices. 
In particular, the vertices $b_{i,j}$ and $c_{i,j}$ are of degree 2 and therefore, any Hamiltonian path of the arising graph $G$ will necessarily contain the subpath $a_{i,j}$, $b_{i,j}$, $c_{i,j}$, $d_{i,j}$.
 \item For every $i \in [q]$, every $j \in [m]$, and every $k \in [\beta]$, we add the edge $v_{i,j,k}^1 v_{i,j+1,k}^0$. We call these edges \emph{horizontal}.
 \item For every $i \in [q]$, add a clique on $\beta$ fresh vertices denoted $K_\beta^i$, and add an edge between all vertices
of the clique $K_\beta^i$ and all vertices from $V_{i,1}^0$.
 \item For every $i \in [q]$, add a clique on $\beta$ fresh vertices denoted $\hat{K}_\beta^i$, and add an edge between all vertices of the clique $\hat{K}_\beta^i$ and all vertices from $V_{i,m}^1$.
 \item For every $i \in [q-1]$, add a vertex $y'_i$ and make it adjacent to all vertices of $K^i_\beta$ and $K^{i+1}_\beta$. Furthermore, a vertex $y'_q$ is added and made adjacent with all vertices of $K_\beta^q$ and $K^1_\beta$.
 \item For every $i \in [q]$, $j \in [m]$, and $\phi \in \{0,1\}^\gamma$ we define the set $\eta_{i,j}(\phi)$ as follows. 
 \begin{enumerate}
		\item Let $(s, M) = \psi(\phi)$. For simplicity we will sometimes also use $\eta_{i,j}(s, M)$ as a shortcut for $\eta_{i,j}(\phi)$.
		\item Initially, $\eta_{i,j}(\phi)$ is empty.
		\item Let $\{\alpha_1, \dots, \alpha_\ell\} = s^{-1}(2)$ where $\ell = |s^{-1}(2)|$. We add the edges $v_{i,j,\alpha_x}^0 v_{i,j,\alpha_{x+1}}^0$ and $v_{i,j,\alpha_x}^1 v_{i,j,\alpha_{x+1}}^1$ to $\eta_{i,j}(\phi)$ for every $x \in [\ell-1]$. And we also add the edge $v_{i,j,\alpha_\ell}^0 v_{i,j,\alpha_\ell}^1$ to $\eta_{i,j}(\phi)$.
		\item Now let $M' \in \mX(s^{-1}(1))$ be unique such that $M \cup M'$ forms a Hamiltonian cycle. 
		And let $\rho^0_1 \tau^0_1, \dots, \rho^0_u \tau^0_u$ be the edges of $M$ not incident with $1$ while $\rho^1_1 \tau^1_1, \dots, \rho^1_u \tau^1_u$ are the edges of $M'$ not incident with $1$ where $u = (|s^{-1}(1)| - 2)/2$.
		Further, let $y_2$ be such that $\{1, y_2\} \in M$ and let $z_2$ be such that $\{1, z_2\} \in M'$.
		For every $x \in [u]$ and $r \in \{0, 1\}$ we add the edge $v^r_{i,j,\rho^r_x} v^r_{i,j,\tau^r_x}$ to $\eta_{i,j}(\phi)$.
		\item We also add the edges $v_{i,j,y_2}^0 v_{i,j,\alpha_1}^0$ and $v_{i,j,z_2}^1 v_{i,j,\alpha_1}^1$ to $\eta_{i,j}(\phi)$.
 \end{enumerate}
 \item For every $j \in [m]$ let 
 \[
 	P_j = \{i \in [q] \mid X_i \cap \{b_1^j, \dots, b_{t(j)}^j\} \neq \emptyset \},
\]
i.e., the indices $i \in [q]$  such that some variable in $X_i$ occurs in $C_j$. Since $I$ is an instance of $d$-\textsc{SAT}, we have $|P_j| \leq d$. 
\begin{figure}[t]
	\centering
	\includegraphics[width=0.6\textwidth]{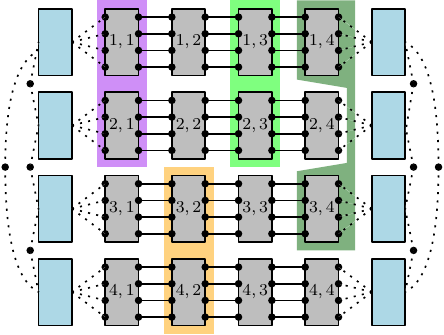}
	\caption{A sketch of the lower bound construction for $\beta = 4$ and $q = 4$. Cliques are blue, the sets $V_{i,j}$ are gray, for every column $j$, with a color we sketch the constraint $W_{\star, j}$. Dotted lines represent bicliques.}
	\label{fig:hc-lb}
\end{figure}
\begin{enumerate}
 \item\label{item:constraints-no-clause} For every $j \in [m]$ and every $i \in [q] \setminus P_j$ we add to the set $W$ the tuple $W_{i,j} = (V'_{i,j}, \mF_{i,j}, b_{i,j}, c_{i,j})$ where 
 \[
 	\mF_{i,j} = \{\eta_{i,j}(\phi) \mid \phi \in \{0,1\}^\gamma\}.
 \]
\item\label{item:clause-gadget} Let $p_1 \in P_j$ be arbitrary but fixed.
For every $j \in [m]$ we add to the set $W$ the tuple $W_{\star, j} = (V'_{\star,j}, \mF_j, b_{p_1, j}, c_{p_1, j})$ where
\[
	V'_{\star,j} = \bigcup_{p \in P_j} V'_{p, j}
\] 
and
\[
	\mF_j = \{\bigcup_{p \in P_j} \eta_{i,p}(\phi_p) \mid \forall p \in P_j \colon \phi_p \in \{0,1\}^\gamma, \exists p \in P_j \colon \phi_p \text{ satisfies } C_j\};
\]
Here ``$\phi_p \text{ satisfies } C_j$'' if and only if the truth-value assignment of $X_p$ obtained from $\phi_p$ by identifying $X_p$ with $\gamma$ satisfies $C_j$.
\end{enumerate}
\item For $i \in [q]$ and $j \in [m]$ we define $U_{i,j} = V_{i,j} \cup \{a_{i,j}, b_{i,j}, c_{i,j}, d_{i,j}\}$.
For $j \in [m]$ we also define $U_{\star, j} = \cup_{i \in P_j} U_{i,j}$ and for $i \in [q]$ we define $U_i = \cup_{j \in [m]} U_{i,j}$.
\end{enumerate}
This concludes the construction of $(G, W)$~(see \cref{fig:hc-lb} for an illustration).

Before moving on proving the correctness, let us make clear which parts of the construction are new. 
As for the ``path gadgets'', i.e., the graph induced on 
$\bigcup_{i,j} V_{i,j}$, we reuse them with the following difference.
In the construction of Cygan et al.~\cite{DBLP:journals/jacm/CyganKN18} the vertices $v_{i,j,k}^1$ and $v_{i,j+1,k}^0$ are merged into the same vertex: this way the pathwidth of the subgraph induced by these vertices is bounded by $\beta \cdot q + \bigoh(1)$ but the cutwidth is larger due to cliques induced by each of $V_{i,j}$.  
By having $v_{i,j,k}^1$ and $v_{i,j+1,k}^0$ being two different vertices with a single edge between them we ensure that the cutwidth of the arising subgraph is bounded by the desired value $\beta \cdot q + \bigoh(1)$.
However, this happens at the cost of decreasing the number of possible pairs $(s, M)$ we ``encode'' into a fixed $V_{i,j}$: this is because the $v_{i,j,k}^1$ and $v_{i,j+1,k}^0$ cannot be visited by a Hamiltonian cycle completely independently from each other (a formal proof will be provided later).
As for the clause gadgets we proceed differently.
There are two main reasons for this.
First, we doubt the correctness of the clause gadget by Cygan et al.~\cite{DBLP:journals/jacm/CyganKN18}: it is not a major issue and admits a quite simple fix.
Second, more crucially, the vertex being the clause gadget in their work can have unbounded degree which is inacceptable for us as that would make the cutwidth of the arising graph too large.
Instead, in our construction, the clause gadgets are the constraints added in \cref{item:clause-gadget}: crucially, since we reduce from $d$-SAT for a fixed $d$, the clause gadget for clause $C_j$ ``combines'' only a constant number of sets $V_{i,j}$ so that each clause gadget only contributes a constant number of edges to the total cutwidth. 
We will ensure that such edges do not overlap for different clauses thus implying that all clause gadgets totally increase the cutwidth of the construction by a constant only.

\begin{observation}
	The pair $(G, W)$ can be constructed from $I$ in polynomial time.
\end{observation}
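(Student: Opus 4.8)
The plan is to verify that each ingredient of the construction can be produced in time polynomial in $|I|$, the crucial point being that since $d$ is a fixed constant, the quantities $\beta$ and $\gamma$ (chosen only to satisfy \eqref{eq:desired-ineq-hc-lb}) are constants as well, so that every ``exponential-looking'' part of the construction --- the number $2^\gamma$ of truth-value assignments of a block, the set $\mA$, the mapping $\psi$, and the families $\mF_{i,j}$ and $\mF_j$ --- has size bounded by a constant depending only on $d$. The preprocessing steps (assuming $m \geq 3$, and padding $X$ with at most $\gamma-1$ dummy variables so that $\gamma \mid n$) are trivially polynomial, and then $q = n/\gamma$ is polynomial in $|I|$.

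First I would bound the size of $G$. The vertices $v^r_{i,j,k}$ number $2\beta q m$, the path vertices $a_{i,j}, b_{i,j}, c_{i,j}, d_{i,j}$ add $\bigoh(qm)$ more, and the cliques $K_\beta^i, \hat{K}_\beta^i$ together with the vertices $y'_i$ add only $\bigoh_\beta(q)$ further vertices; all of these are listable directly from the definitions in polynomial time. The edge set consists of the cliques on each $V'_{i,j}$ (of size $\bigoh(\beta^2)$ each), the constant-length paths through the $a,b,c,d$ vertices, the horizontal edges, the bicliques between $K_\beta^i$ (resp.\ $\hat{K}_\beta^i$) and $V_{i,1}^0$ (resp.\ $V_{i,m}^1$), the edges incident with the vertices $y'_i$, and the edges in the $\eta$-sets; each of these families is of polynomial size and can be enumerated directly from the explicit descriptions of the sets $V_{i,j}$, $V'_{i,j}$, $V^r_{i,j}$, and $U_{i,j}$.

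Next I would treat $W$ and the associated families. The mapping $\psi \colon \{0,1\}^\gamma \to \mA$ is fixed once and for all: to realize it one first computes $\mA$, which requires the sets $\mX(s^{-1}(1))$ for the $\bigoh_\beta(1)$ many tuples $s \in \{1,2\}^\beta$, each computable in time $\sqrt{2}^{|s^{-1}(1)|} |s^{-1}(1)|^{\bigoh(1)} = \bigoh_\beta(1)$ by \cref{lem:number-of-base-matchings}; hence $|\mA| = \bigoh_\beta(1)$ and $\psi$ is obtained as any fixed injection, which exists because $\beta, \gamma$ satisfy \eqref{eq:desired-ineq-hc-lb}. For fixed $i \in [q]$, $j \in [m]$, and $\phi \in \{0,1\}^\gamma$, the edge set $\eta_{i,j}(\phi)$ is given explicitly by the rules of step~10 from $\psi(\phi) = (s, M)$ and the unique partner matching $M'$, and has $\bigoh(\beta)$ edges; since $|\{0,1\}^\gamma| = 2^\gamma = \bigoh_\gamma(1)$, each family $\mF_{i,j} = \{\eta_{i,j}(\phi) \mid \phi \in \{0,1\}^\gamma\}$ has constant size and is computable in constant time. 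The sets $P_j$ are obtained by scanning the clauses of $I$ and recording which blocks $X_i$ meet $C_j$, in polynomial time, and satisfy $|P_j| \le d$; hence each family $\mF_j$ ranges over at most $(2^\gamma)^{|P_j|} \le (2^\gamma)^d = \bigoh_{\gamma, d}(1)$ tuples $(\phi_p)_{p \in P_j}$, each of which can be checked against $C_j$ in constant time, so $\mF_j$ is likewise computable in constant time. Finally, $W$ consists of the tuples $W_{i,j}$ for $j \in [m]$ and $i \in [q] \setminus P_j$, together with the tuples $W_{\star, j}$ for $j \in [m]$, so $|W| = \bigoh(qm)$, each tuple of constant description size.

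I do not expect a genuine obstacle here: the only care needed is the bookkeeping observation that $\beta, \gamma$, and $d$ are constants, so that the factors $2^\gamma$, $(2^\gamma)^d$, and $\sqrt{2}^\beta$ arising from $\mA$, $\psi$, $\mF_{i,j}$, and $\mF_j$ are all absorbed into constants, together with the remark that the sets $V_{i,j}$, $V'_{i,j}$, $U_{i,j}$ and the membership conditions defining the $\mF$'s are all given by completely explicit formulas. Assembling these observations shows that $(G, W)$ is produced from $I$ in time polynomial in $|I|$.
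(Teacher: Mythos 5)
Your proposal is correct, and since the paper presents this as an unproven observation, your careful accounting of sizes — noting that $\beta$, $\gamma$, $d$ are constants so that $|\mA|$, $2^\gamma$, $(2^\gamma)^d$, and the outputs of \cref{lem:number-of-base-matchings} are all $\bigoh_{\beta,\gamma,d}(1)$ — is exactly the argument the authors left to the reader. No gap.
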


\begin{claim}
	If $I$ is satisfiable, then the graph $G$ admits a Hamiltonian cycle consistent with~$W$.
\end{claim}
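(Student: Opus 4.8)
Let $\tau$ be a satisfying assignment of $I$. For every block $i\in[q]$ let $\phi_i\in\{0,1\}^\gamma$ be the restriction of $\tau$ to $X_i$ (identified with $\gamma$), let $(s_i,M_i)=\psi(\phi_i)$, and let $M_i'\in\mX(s_i^{-1}(1))$ be the unique matching with $M_i\cup M_i'$ a Hamiltonian cycle of $K_{|s_i^{-1}(1)|}$. The plan is to construct a Hamiltonian cycle $C$ of $G$ with $C\cap E(V'_{i,j})=\eta_{i,j}(\phi_i)$ for all $i\in[q]$, $j\in[m]$. This already yields consistency with $W$: for $i\notin P_j$ it is exactly consistency with $W_{i,j}$; and since no edge of $G$ runs between $V'_{p,j}$ and $V'_{p',j}$ for $p\neq p'\in P_j$, we get $C\cap E(V'_{\star,j})=\bigcup_{p\in P_j}\eta_{p,j}(\phi_p)$, which lies in $\mF_j$ because $\tau$ satisfies $C_j$ and hence so does at least one $\phi_p$ with $p\in P_j$; the edges $b_{i,j}c_{i,j}$ required by the tuples of $W$ will lie in $C$ by construction.

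\emph{The candidate cycle.} Let $C$ be the union of: (i) all forced subpaths $v^0_{i,j,1}\,a_{i,j}\,b_{i,j}\,c_{i,j}\,d_{i,j}\,v^1_{i,j,1}$; (ii) all edges of $\eta_{i,j}(\phi_i)$; (iii) the horizontal edges $v^1_{i,j,k}v^0_{i,j+1,k}$ for $j\in[m-1]$ and $k\in s_i^{-1}(1)$ (and no horizontal edge for $k\in s_i^{-1}(2)$); (iv) for each $i$, one edge from each vertex of $\{v^0_{i,1,k}\colon k\in s_i^{-1}(1)\}$ into $K_\beta^i$ and one from each vertex of $\{v^1_{i,m,k}\colon k\in s_i^{-1}(1)\}$ into $\hat K_\beta^i$, together with a suitable set of clique edges inside $K_\beta^i$ and inside $\hat K_\beta^i$; (v) for each $i\in[q]$, one edge from $y'_i$ to $K_\beta^i$ and one from $y'_i$ to $K_\beta^{i+1}$ (indices mod $q$). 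The precise clique edges and their endpoints in (iv) and (v) are fixed in the connectivity step below.

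\emph{Degrees.} A routine case analysis shows every vertex has degree exactly $2$ in $C$. The path-gadget vertices $a_{i,j},b_{i,j},c_{i,j},d_{i,j}$ already have degree $2$ in $G$; a vertex $v^r_{i,j,k}$ with $k\in s_i^{-1}(2)$ has exactly two incident edges of $\eta_{i,j}(\phi_i)$ and no other incident edge of $C$; a vertex $v^r_{i,j,k}$ with $k\in s_i^{-1}(1)\setminus\{1\}$ has exactly one incident edge of $\eta_{i,j}(\phi_i)$, completed by one horizontal edge (if $j$ is not the relevant boundary column) or one clique edge (if it is); the vertices $v^r_{i,j,1}$ use one path-gadget edge plus one horizontal or clique edge; and the vertices of $K_\beta^i,\hat K_\beta^i$ and the $y'_i$ are handled by the choices in (iv) and (v).

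\emph{Connectivity -- the crux.} Fix $i$ and let $H_i$ be the subgraph of $C$ spanned by the non-clique vertices of row $i$ using only the edges in (i)--(iii), i.e.\ \emph{excluding} the edges of (iv) into the cliques. In $H_i$ exactly the $2|s_i^{-1}(1)|$ vertices $\{v^0_{i,1,k},v^1_{i,m,k}\colon k\in s_i^{-1}(1)\}$ have degree $1$ and all others have degree $2$, so $H_i$ is a disjoint union of paths and cycles; the essential point -- exactly the path-gadget situation analysed by Cygan et al.~\cite{DBLP:journals/jacm/CyganKN18} -- is that $H_i$ contains \emph{no} cycle, since any such cycle would be a ``short circuit'' inside one column or between two consecutive columns, and $M_i\cup M_i'$ being a single cycle rules this out (in particular $M_i\cap M_i'=\emptyset$ when $|s_i^{-1}(1)|\geq 4$). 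Hence $H_i$ is a disjoint union of paths covering every non-clique vertex of row $i$, with endpoints exactly the listed degree-$1$ vertices. We then pick the clique edges inside $\hat K_\beta^i$ so that together with the $|s_i^{-1}(1)|$ edges into $\hat K_\beta^i$ they merge the $H_i$-paths ending at $\hat K_\beta^i$ pairwise, cover all $\beta$ vertices of $\hat K_\beta^i$, and create no cycle -- possible because $\hat K_\beta^i$ is a clique, $|s_i^{-1}(1)|$ is even and positive, and $|s_i^{-1}(1)|<\beta$. Afterwards row $i$ contributes paths all of whose endpoints lie in $K_\beta^i$; choosing the clique edges inside $K_\beta^i$ and the two edges of (v) incident to $K_\beta^i$ we splice these paths, the $\beta$ vertices of $K_\beta^i$, and the two $y'$-attachments into a single path from $y'_{i-1}$ (here $y'_0:=y'_q$) to $y'_i$. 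Closing the spine $K_\beta^1,y'_1,K_\beta^2,y'_2,\dots,K_\beta^q,y'_q$ into a cycle then turns $C$ into a single Hamiltonian cycle, finishing the proof. I expect the main obstacle to be making the two clique manipulations precise together with the ``no short circuit'' claim: the latter must be re-verified for our modified path gadget (split vertices joined by a matching, rather than Cygan et al.'s identified vertices) instead of being quoted verbatim, and the former is a careful but routine clique-routing argument with prescribed subpaths and endpoints.
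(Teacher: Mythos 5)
Your proposal matches the paper's construction step for step: the same candidate edge set ($\eta$-edges per column, horizontal edges on the $s^{-1}(1)$-indices, the forced subpaths, spliced clique paths, and the $y'_i$-links closing the spine), and you isolate the right crux, namely that removing vertex $1$ from the Hamiltonian cycle $M_i\cup M_i'$ yields a single path, so the pre-clique union of row $i$ contains no cycle. The paper carries out precisely the two steps you flag as ``routine but careful''---it enumerates the connected components of the pre-clique edge set explicitly and writes down the clique edges concretely---so your plan is faithful to the paper's own proof.
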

\begin{proof}
	Let $\phi \colon \{v_1, \dots, v_n\} \to \{0, 1\}$ be a truth-value assignment satisfying $I$.
	We now show how to construct a Hamiltonian cycle of $G$ satisfying $W$.
	First, let $i \in [q]$ be fixed. 
	Now let $(s, M) = \psi(\phi_{|_{X_i}})$.
	And let $M'$ be the unique matching in $\mX(s^{-1}(1))$ such that $M \cup M'$ form a Hamiltonian cycle.
	For every $j \in [m]$ we define the set $\mE_{i,j} = \eta_{i,j}(s, M)$.
	and 
	\[
		\mE_i = \bigcup_{j \in [m]} \mE_{i,j} \cup \bigcup_{j \in [m-1], k \in s^{-1}(1) \setminus \{1\}} \{v_{i,j,k}^1 v_{i,j+1,k}^0\}. 
	\]

	We now show that set $\mE_i$ induces a graph with a very particular structure.
	First, recall that by definition of $\mE_{i,j} = \eta_{i,j}(\psi_{X_i})$ a vertex $v_{i, j, k}^r$ (for $j \in [m]$, $k \in [\beta] \setminus \{1\}$, and $r \in \{0, 1\}$) has the degree of $s(k)$ in $\mE_{i,j}$.
	Therefore, $v_{i,j,k}^r$ has the degree of at most $2$ in $\mE_i$.
	Next we show that for every $j \in [m-1]$ the vertices $\{v_{i,j,k}^1, v_{i,j+1,k}^0 \mid k \in s^{-1}(1) \setminus \{1\}\}$ are connected by the path, denoted by $P_{i,j}$, with end-points $v_{i,j,z_2}^1$ and $v_{i,j+1,y_2}^0$. 
	This is true for the following reason.
	First, recall that $M \cup M'$ form a Hamiltonian cycle in the clique on the vertex set $s^{-1}(1)$. 
	So removing the vertex $1$ from this cycle, we obtain a path with end-points $y_2$ and $z_2$ visiting all elements of $s^{-1}(1) \setminus \{1\}$.
	The definitions of $\eta_{i,j}(s,M)$ and $\eta_{i,j+1}(s,M)$ then imply that in $\mE(i)$ we obtain the desired path $P_{i,j}$ from $v_{i,j,z_2}^1$ to $v_{i,j+1,y_2}^0$ that alternatingly visits the edges corresponding to $M'$ in $\eta_{i,j}(s,M')$, horizontal edges from $E(V_{i,j}, V_{i,j+1})$, the edges in $\eta_{i,j+1}(s,M)$ corresponding to $M$, and then horizontal edges from $E(V_{i,j}, V_{i,j+1})$ again.
	Now we are ready to describe the connected components of the graph induced by $\mE_i$.
	\begin{enumerate}
	 \item First, for every $x \in [u]$, there is a path on two vertices $v_{i,1,\rho_x^0}^0 v_{i,1,\tau_x^0}^0$.
	 \item Similarly, for every $x \in [u]$, there is a path on two vertices $v_{i,m,\rho_x^1}^1 v_{i,1,\tau_x^1}^1$.
	 \item And finally, the path 
	 \begin{align*}
	  &(v_{i,j,y_2}^0, v_{i,j,\alpha_1}^0, \dots, v_{i,j,\alpha_\ell}^0, v_{i,j,\alpha_\ell}^1, \dots, v_{i,j,\alpha_1}^1, v_{i,j,z_2}^1, P_{i,j}, v_{i,j+1,y_2}^0)_{j = 1, \dots, m-1}, \\
	  &v_{i,m,y_2}^0, v_{i,m,\alpha_1}^0, \dots, v_{i,m,\alpha_\ell}^0, v_{i,m,\alpha_\ell}^1, \dots, v_{i,m,\alpha_1}^1, v_{i,m,z_2}
	 \end{align*}
	 In simple words, this path repeats the following pattern for every $j \in [m]$ to visit all vertices of $V'_i$: 
	 First, on the side $r = 0$, start with the vertex with index $y_2$ and from it visit all vertices on the same side with indices from $s^{-1}(2)$.
	 After that we change the side to $r=1$ and visit the vertices with indices from $s^{-1}(2)$.
	 After that, we use the edge to the vertex with index $z_2$ on the same side, and finally, use the path $P_{i,j}$ to visit all the vertices with indices from $s^{-1}(1)\setminus \{1\}$ on that side as well as the vertices of $V_{i,j+1}^0$ with indices from $s^{-1}(1)\setminus \{1\}$.
	 This path then ends in $v_{i,j+1,y_2}^0$.
	\end{enumerate}
	
	Next, we define the edge set $\mJ_i$ as the edges of the path
	\[
		(v_{i,j,1}^0, a_{i,j}, b_{i,j}, c_{i,j}, d_{i,j}, v_{i,j,1}^1)_{j = 1, \dots, m}.
	\]
	Now the set $\mE_i \cup \mJ_i$ induces an even number of vertex-disjoint paths such that every vertex of $U_i$ is visited and moreover, we know, precisely, what the end-points of these paths are.
	Next, let $\{\xi^i_1, \dots, \xi^i_\beta\}$ be the elements of the clique $K_\beta^i$ and let $\{\hat\xi^i_1, \dots, \hat\xi^i_\beta\}$ be the elements of the clique $\hat K_\beta^i$.
	Now we define the set $\mK_i$ as follows:
	\begin{align*}
		\mK_i = &\{\xi^i_1 v^0_{i,1,1}\} \cup 
		\{\xi^i_\beta v_{i,1,\rho_1^0}^0\} \cup
		\{\xi^i_{x+1} v_{i,1,\tau_x^0}^0, \xi_i^{x+1}v_{i,1,\rho_{x+1}^0}^0 \mid x \in [u-1]\} \cup \\ 
		&\{v_{i,1,\tau_x^0}^0 \xi^i_{u+1}\} \cup 
		\{\xi^i_x \xi^i_{x+1} \mid x \in \{u+1, \dots, \beta-2\}\} \cup 
		\{\xi^i_{\beta-1} v_{i,1,y_2}^0\} \cup \\
		&\{v_{i,m,1}^1 \hat\xi^i_1\} \cup \{\hat\xi^i_1 v_{i,m,\rho_1^1}^1\} \cup \\
		&\{\hat\xi^i_{x+1} v_{i,m,\tau_{x}^1}^1, \hat \xi^i_{x+1} v_{i,m,\rho_{x+1}^1}^1 \mid x \in [u-1]\} \cup\\
		&\{v_{i,m,\tau_u^1}^1 \hat \xi^i_{u+1}\} \cup 
		\{\hat\xi^i_{x} \hat\xi^i_{x+1} \mid x \in \{u+1, \dots, \beta-1\}\} \cup 
		\{\hat\xi^i_\beta v_{i,m,z_2}^1\}
	\end{align*}
	Now the set $\mK_i \cup \mE_i \cup \mJ_i$ induces a single path visiting all vertices of $K_\beta^i$, $\hat K_\beta^i$, and $U_i$ and has end-points $\xi_i^1$ and $\xi_i^\beta$---let $\mP_i$ denote this path.
	Recall that by construction of $G$, first, the sets $\mP_i$ and $\mP_j$ are vertex-disjoint for any pair $i \neq j \in [q]$.
	And furthermore, the vertices of $G$ not visited by any of $\mP_1, \dots, \mP_q$ are precisely $y_1', \dots, y_q'$.
	We finally obtain a Hamiltonian cycle $C$ of $G$ as
	\[
		C = (\xi_1^i, \mP_i, \xi_\beta^i, y_i')_{i = 1, \dots, n}.
	\]
	By construction, it indeed visits every vertex of $G$ precisely once.
	
	It remains to argue that $C$ is consistent $W$.
	For this we show that all constraints in $W$ are satisfied.
	Let $j \in [m]$ be arbitrary.
	First, let $i \in [q] \setminus P_j$. 
	The intersection of $C$ with $E(V'_{i,j})$ is, by construction, precisely $\eta_{i,j}(\phi_{|_{X_i}}) \in \mF_{i,j}$, thus $C$ is consistent with $W_{i,j}$.

	Next recall that for any $i \neq i' \in [q]$, the graph $G$ contains no edges between $V'_{i,j}$ and $V_{i',j}$.
	Therefore we have
	\[
		C \cap E(V'_{\star, j}) = C \cap (\bigcup_{p \in P_j} E(V'_{p,j})) = \bigcup_{p \in P_j} (C \cap E(V'_{p, j})) = \bigcup_{p \in P_j} \eta_{p, j}(\phi_{|_{X_p}})
	\]
	where the last equality holds by construction of $C$.
	Since $\phi$ is a satisfying assignment of $I$, there exists an index $p \in P_j$ such that $\phi_{|_{X_p}}$ satisfies $C_j$.
	Therefore, the above intersection is an element of $\mF_j$ and $C$ is also consistent with $W_j$.
	Altogether, $C$ is a Hamiltonian cycle of $G$ consistent with $W$.
\end{proof}

\begin{claim}
	If the graph $G$ admits a Hamiltonian cycle consistent with $W$, then $I$ is satisfiable.
\end{claim}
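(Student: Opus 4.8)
The plan is to take a Hamiltonian cycle $C$ of $G$ that is consistent with $W$ and decode from it a satisfying assignment of $I$. First I would recover a single ``state'' per block and column. Since $V'_{\star,j} = \bigcup_{p \in P_j} V'_{p,j}$ is a disjoint union and $G$ contains no edge between $V'_{p,j}$ and $V'_{p',j}$ for $p \neq p' \in P_j$, consistency with $W_{\star,j}$ (together with the constraints $W_{i,j}$ for $i \in [q] \setminus P_j$) yields that $C \cap E(V'_{i,j}) = \eta_{i,j}(\phi_{i,j})$ for some $\phi_{i,j} \in \{0,1\}^\gamma$, and moreover for every $j \in [m]$ there is some $p \in P_j$ such that $\phi_{p,j}$ satisfies $C_j$. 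A short argument shows that $\phi \mapsto \eta_{i,j}(\phi)$ is injective: writing $(s,M) = \psi(\phi)$, the degree of $v_{i,j,k}^r$ in $\eta_{i,j}(\phi)$ equals $s(k)$ (which recovers $s$, using $s(1)=1$), the edges of $\eta_{i,j}(\phi)$ with both endpoints in $\{v_{i,j,k}^0 : k \in s^{-1}(1) \setminus \{1\}\}$ are exactly the edges of $M$ not incident with $1$ (which recovers $M$, as the edge of $M$ at $1$ is then forced), and $\psi$ itself is injective. Hence each $\phi_{i,j}$ is well defined; write $(s_{i,j}, M_{i,j}) = \psi(\phi_{i,j})$.

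The heart of the proof is to show that $\phi_{i,j}$ is independent of $j$. First I would propagate the degree pattern: for $k \in [\beta] \setminus \{1\}$ and two consecutive columns $j, j+1$, the only neighbour of $v_{i,j,k}^1$ outside $V'_{i,j}$ is $v_{i,j+1,k}^0$ (via a horizontal edge), and symmetrically the only neighbour of $v_{i,j+1,k}^0$ outside $V'_{i,j+1}$ is $v_{i,j,k}^1$; since in $C$ every vertex has degree $2$ and $v_{i,j,k}^1$ already has degree $s_{i,j}(k)$ inside $V'_{i,j}$, the cycle $C$ uses this horizontal edge if and only if $s_{i,j}(k) = 1$ if and only if $s_{i,j+1}(k) = 1$, so $s_{i,j} = s_{i,j+1} =: s_i$ for all $j$. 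Next I would propagate the matching. For a block $i$ and consecutive columns $j, j+1$, consider the subgraph of $C$ induced on $\{v_{i,j,k}^1 : k \in s_i^{-1}(1) \setminus \{1\}\} \cup \{v_{i,j+1,k}^0 : k \in s_i^{-1}(1) \setminus \{1\}\}$: its edges are precisely the edges of $M'_{i,j}$ not incident with $1$ (coming from $\eta_{i,j}$, where $M'_{i,j}$ is the unique mate of $M_{i,j}$ from \cref{lem:number-of-base-matchings}), the edges of $M_{i,j+1}$ not incident with $1$ (coming from $\eta_{i,j+1}$), and the horizontal perfect matching between the two sides. As $C$ is a single cycle this subgraph is a vertex-disjoint union of paths, and a degree count --- every vertex has degree $2$ except the mate of $1$ in $M'_{i,j}$ and the mate of $1$ in $M_{i,j+1}$, which have degree $1$, and there are no isolated vertices --- forces it to be a single path. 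Contracting the horizontal matching and re-inserting the two edges incident with $1$ then shows that $M'_{i,j} \cup M_{i,j+1}$ is a Hamiltonian cycle of the complete graph on $s_i^{-1}(1)$. By the uniqueness clause of \cref{lem:number-of-base-matchings}, the only mate of $M'_{i,j}$ is $M_{i,j}$, whence $M_{i,j+1} = M_{i,j}$. Combined with $s_{i,j} = s_i$ and injectivity of $\psi$, this gives $\phi_{i,j} = \phi_{i,j+1}$, so $\phi_{i,j} =: \phi_i$ depends only on $i$.

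Finally I would define the truth assignment $\phi \colon X \to \{0,1\}$ by letting the restriction of $\phi$ to $X_i$ be the truth-value assignment of $X_i$ obtained from $\phi_i \in \{0,1\}^\gamma$ under the fixed identification used in the construction. For any clause $C_j$, the first step provided an index $p \in P_j$ with $\phi_{p,j}$ satisfying $C_j$, and $\phi_{p,j} = \phi_p$ is precisely the restriction of $\phi$ to $X_p$; hence $\phi$ satisfies $C_j$. As $j$ was arbitrary, $\phi$ satisfies $I$.

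The step I expect to be the main obstacle is the propagation of the matching component $M_{i,j}$: one must identify exactly the subgraph that $C$ induces on the interface vertices between two consecutive columns, carry out the degree bookkeeping carefully (in particular ruling out extra cycle components using that $C$ is a single cycle, and locating the two degree-$1$ endpoints precisely as the mates of $1$), and then invoke the uniqueness in \cref{lem:number-of-base-matchings} in the right direction. The propagation of $s_{i,j}$ and the injectivity of $\eta_{i,j}$ are comparatively routine bookkeeping.
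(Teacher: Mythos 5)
Your proposal is correct and takes essentially the same approach as the paper's proof: recover $\phi_{i,j}$ (equivalently the pair $(s_{i,j},M_{i,j})$) from consistency with $W$, propagate $s_{i,j}$ and then $M_{i,j}$ across consecutive columns via the horizontal edges and the fact that $C$ is a single cycle, and read off a satisfying assignment using the clause constraint. The only substantive difference is in establishing $M_{i,j}=M_{i,j+1}$: you argue forward by exhibiting the interface subgraph of $C$ as a single path (ruling out stray cycle components because $C$ is Hamiltonian) and contracting it to show $M'_{i,j}\cup M_{i,j+1}$ is a Hamiltonian cycle of $K_{s^{-1}(1)}$, whereas the paper argues by contradiction, extracting a proper subcycle of $C$ from any cycle of $M'_{i,j}\cup M_{i,j+1}$ avoiding the vertex $1$; the two packagings hinge on the same fact and are interchangeable.
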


\begin{proof}
	Let $C$ be a Hamiltonian cycle of $G$ consistent with $W$.
	First, we show that for every $i \in [q]$ and every $j \in [m]$, there exists a truth-value assignment $\phi_{i,j} \in \{0, 1\}^\gamma$ such that $C \cap E(V'_{i,j}) = \eta_{i,j}(\phi_{i,j})$: 
	If $i \notin P_j$, this follows since $C$ is consistent with $W_{i,j}$ (see \cref{item:constraints-no-clause}).
	Since $C$ is consistent with $W_{\star, j}$ (see \cref{item:clause-gadget}) we have 
	\[
		C \cap E(V'_{\star, j}) = \bigcup_{p \in P_j} \eta_{p, j}(\phi_{p, j})
	\]
	for some $(\phi_{p, j} \in \{0,1\}^\gamma)_{p \in P_j}$ such that there exists $p \in P_j$ such that $\phi_{p,j}$ as an assignment to $X_p$ satisfies $C_j$.
	Recall that first, by construction, the set $\eta_{p, j}(\phi_p)$ contains only edges with both end-points in $V'_{p, j}$ and second, the set $E(V'_{p, j}, V'_{p', j})$ is empty for any $p \neq p' \in [q]$.
	Thus for every $p \in P_j$ we get 
	\[
		C \cap E(V'_{p, j}) = \eta_{p, j}(\phi_{p, j})
	\]
	as claimed.
	
	Let now $i \in [q]$ be arbitrary but fixed.
	We will now show that for any $j \in [m-1]$ it holds that $\phi_{i,j} = \phi_{i,j+1}$.
	For this let $(s_j, M_j) = \psi(\phi_{i,j})$ and let $(s_{i,j+1}, M_{i,j+1}) = \psi(\phi_{i,j})$.
	Let $k \in [\beta]$ be arbitrary.
	Observe that by construction the vertex $v_{i,j,k}^1$ has precisely $s_{i,j}(k)$ incident edges in $C \cap E(V'_{i,j}) = \eta_{i,j}(s_{i,j}, M_{i,j})$ and similarly, the vertex $v_{i,j+1,k}^0$ has precisely $s_{j+1}(k)$ incident edges in $C \cap E(V'_{i,j+1}) = \eta_{i,j+1}(s_{i,j+1}, M_{i,j+1})$.
	Further, the edge $v_{i,j,k}^1 v_{i,j+1,k}^0$ is the only edge incident to $v_{i,j,k}^1$ with the other end-point outside $V'_{i,j}$ and similarly, it is the only edge incident to $v_{i,j+1,k}^0$ with the other end-point outside $V'_{i,j+1}$.
	So if we have $s_{i,j}(k) = 1$, then the Hamiltonian cycle $C$ uses the edge $v_{i,j,k}^1 v_{i,j+1,k}^0$ and we thus have $s_{j+1}(k) \neq 2$.
	Therefore, $s_{i,j}(k) = 1$ implies $s_{i,j+1}(k) = 1$. 
	A symmetric argument shows that $s_{i,j+1}(k) = 1$ implies $s_{i,j}(k) = 1$. 
	Hence, $s_{i,j} = s_{i,j+1}$ holds.
	
	Now we show that $M_{i,j} = M_{i,j+1}$ holds too.
	Let $M'_j$ and $M'_{j+1}$ be the unique matchings from $\mX(s^{-1}_{i,j}(1)) = \mX(s^{-1}_{i,j+1}(1))$ such that both $M_{i,j} \cup M'_{i,j}$ and $M_{i,j+1} \cup M'_{i,j+1}$ are Hamiltonian cycles.
	Now suppose that $M'_{i,j} \neq M_{i,j+1}$ is not a Hamiltonian cycle.
	Since every vertex in $\mX(s^{-1}_j(1))$ has the degree of two in a union $M'_{i,j} \cup M_{i,j+1}$ of two matchings, the set $M'_{i,j} \cup M_{i,j+1}$ induces a union of at least two cycles.
	So let $D$ be a cycle in $M'_{i,j} \neq M_{i,j+1}$ that does not contain the vertex $1$.
	Let $d_1, \dots, d_t$ be the vertices of $D$ in the order they occur on this cycle.
	For simplicity of notation, let $d_{t+1} = d_1$.
	This cycle uses the edges of $M'_{i,j}$ and $M_{i,j+1}$ alternatingly (in particular, $t$ is even), without loss of generality we may assume that the edge $d_1 d_2$ belongs to $M'_{i,j}$.
	Consider the sequence
	\begin{equation}\label{eq:smaller-cycle}
		(v_{i,j,d_k}^1, v_{i, j, d_{k+1}}^1, v_{i, j+1, d_{k+1}}^0, v_{i, j+1, d_{k+2}}^0)_{k = 1, 3, \dots, t - 3, t - 1} 
	\end{equation}
	forming a cycle in $G$.
	Let $k \in \{1, 3, \dots, t - 3, t - 1\}$ be arbitrary.
	The edge $v_{i,j,d_k}^1 v_{i, j, d_{k+1}}^1$ belongs to $C$ by definition of $\eta_{i,j}(s_{i,j}, M_{i,j})$.
	Similarly, the edge $v_{i, j+1, d_{k+1}}^0 v_{i, j+1, d_{k+2}}^0$ belongs to $C$ by definition of $\eta_{i,j+1}(s_{i,j+1}, M_{i,j+1})$. 
	Finally, the edges $v_{i, j, d_{k+1}}^1 v_{i, j+1, d_{k+1}}^0$ and $v_{i, j+1, d_{k+2}}^0 v_{i, j, d_{k+2}}^1$ belong to $C$ as it holds that $s_{i,j}(d_{k+1}) = s_{i,j}(d_{k+2}) = 1$ and we argued above that in this case such edges belong to $C$. 
	Thus, the sequence \eqref{eq:smaller-cycle} witnesses that in the graph induced by $C$ there is a cycle that, for example, does not visit the vertex $v_{1,1,1}$---this contradicts the fact that $C$ is a Hamiltonian cycle of $G$.
	Hence, $M'_{i,j} \cup M_{i,j+1}$ forms a Hamiltonian cycle.
	Recall that for every matching in $\mX(s_{i,j}^{-1}(1)) = \mX(s_{i,j+1}^{-1}(1))$ there is a unique matching in this set such that their union forms a Hamiltonian cycle.
	So we have $M_{i,j+1} = M_{i,j}$ as claimed.
	Recall that $\psi$ is injective.
	Then $\psi(\phi_{i,j}) = (s_{i,j}, M_{i,j}) = (s_{i,j+1}, M_{i,j+1}) = \psi(\phi_{i,j+1})$ implies $\phi_{i,j} = \phi_{i,j+1}$ as claimed.
	
	With this we define the truth value assignment $\phi \colon \{v_1, \dots, v_n\} \to \{0,1\}$ as follows.
	Recall that $X_1, \dots, X_q$ partition the set of variables of $I$.
	Thus it suffices to define the restriction of $\phi$ to $X_i$ for every $i \in [q]$.
	So we let $\phi_{|_{X_i}} = \phi_{i,1}$.
	As shown above it then holds that $\phi_{|_{X_i}} = \phi_{i,j}$ for every $j \in [m]$.
	It remains to argue that $\phi$ satisfies $I$.
	So let $j \in [m]$ be arbitrary.
	Recall that by definition of $\phi_{i,j}$ for $i \in [q]$ we have
	\begin{align*}
		&C \cap E(V'_{\star, j}, V'_{\star, j}) \stackrel{E(V'_{p, j}, V'_{p', j}) \text { for } p \neq p' \in [q]}{=} C \cap \bigcup_{p \in P_j} E(V'_{p, j}, V'_{p, j}) = \\
		&\bigcup_{p \in P_j} C \cap E(V'_{p, j}, V'_{p, j}) = \bigcup_{p \in P_j} \eta_{p,j}(\phi_{p, j}) 
	\end{align*}
	Since $C$ is consistent with $W$, we have $\bigcup_{p \in P_j} \eta_{p,j}(\phi_{p, j}) \in \mF_j$.
	So by definition of $\mF_j$ there exists an index $p \in P_j$ such that $\phi_{p, j} = \phi_{|_{X_p}}$ satisfies $C_j$, i.e., $\phi$ satisfies $C_j$.
\end{proof}

Now we are ready to conclude the description of the reduction.
For this, we first recall that in the graph $G$ for every $i \in [q]$ and $j \in [m]$ the vertices $b_{i,j}$ and $c_{i,j}$ have the degree of two and are adjacent.
Thus any Hamiltonian cycle of $G$ uses the edge between these vertices.
Further, for every $(A, \mF, a, b) \in W$ we have $a, b \notin A$ by construction.
And finally, for every $(A, \mF, a, b) \neq (A', \mF', a', b') \in W$ we the sets $A \cup \{a, b\}$ and $A' \cup \{a', b'\}$ are disjoint.
Thus, the graph $G'$ obtained from $G$ by replacing every constraint $U = (A, \mF, a, b) \in W$ by an induced subgraph gadget $(A, a, b, \mF)$ from \cref{lem:induced-subgraph-gadget} is well-defined.
For every $j \in [m]$ and $i \in ([q] \cup \{\star\}) \setminus P_j$, let $S_{i,j}$ be the union of vertices $U_{i,j}$ together with the set of all vertices of $G'$ that belong to the induced subgraph gadget for the element $W_{i,j} \in W$.
Recall that the size of $U_{i,j}$ is linear in $\beta$ for every $i \in [q]$.
Furthermore, $I$ is an instance of $d$-SAT, thus the size of $U_{\star, j}$ is linear in $d \cdot \beta$.
Therefore, by \cref{lem:induced-subgraph-gadget}, there exists a computable function $g \colon \mathbb{N} \times \mathbb{N} \to \mathbb{N}$ such that for every $j \in [m]$ and $i \in ([q] \cup \{\star\}) \setminus P_j$ we have
\begin{equation}\label{eq:bound-sij}
	|S_{i,j}| \leq g(\beta, d).
\end{equation}
The correctness of the induced subgraph gadget (\cref{lem:induced-subgraph-gadget}) together with above claims immediately implies the following:
\begin{corollary}
	The instance $I$ is satisfiable if and only if the graph $G'$ admits a Hamiltonian cycle.  Furthermore, the graph $G'$ can be computed from $I$ in polynomial time.
\end{corollary}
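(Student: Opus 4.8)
The plan is to derive the corollary by feeding the two preceding claims into an iterated application of the induced subgraph gadget (\cref{lem:induced-subgraph-gadget}). Together the two claims state that $I$ is satisfiable if and only if $G$ admits a Hamiltonian cycle consistent with $W$, so it remains to show that the latter is equivalent to $G'$ having a Hamiltonian cycle, and that $G'$ is polynomial-time computable.

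First I would fix an arbitrary ordering $(A_1, \mF_1, a_1, b_1), \dots, (A_N, \mF_N, a_N, b_N)$ of the elements of $W$, where $N = |W|$, set $G_0 = G$, and let $G_t$ be the graph produced by \cref{lem:induced-subgraph-gadget} applied to $(A_t, a_t, b_t, \mF_t)$ inside $G_{t-1}$; by the construction of $G'$ we have $G_N = G'$. The core step is to prove, by induction on $t \in \{0, 1, \dots, N\}$, that $G_t$ admits a Hamiltonian cycle consistent with $\{(A_{t'}, \mF_{t'}) \mid t' > t\}$ if and only if $G$ admits a Hamiltonian cycle consistent with $W$; the case $t = N$ is exactly the corollary, and the base case $t = 0$ is trivial since $G_0 = G$. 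For the inductive step one uses two observations, both resting on the disjointness property recorded in the construction, namely that the sets $A_t \cup \{a_t, b_t\}$ are pairwise disjoint and $a_t, b_t \notin A_t$. First, by property (2) of \cref{lem:induced-subgraph-gadget} the passage from $G_{t-1}$ to $G_t$ alters only edges with both endpoints in $A_t \cup S_t \cup \{a_t, b_t\}$, where $S_t = V(G_t) \setminus V(G_{t-1})$ consists of fresh vertices; since this support is disjoint from $A_{t'} \cup \{a_{t'}, b_{t'}\}$ for every $t' \neq t$, the edge sets $E(A_{t'})$ with $t' > t$ are unchanged, so the equivalence supplied by \cref{lem:induced-subgraph-gadget} refines to: $G_t$ has a Hamiltonian cycle consistent with $\{(A_{t'}, \mF_{t'}) \mid t' > t\}$ iff $G_{t-1}$ has one consistent with $\{(A_{t'}, \mF_{t'}) \mid t' \geq t\}$, which combined with the induction hypothesis for $t-1$ yields the claim. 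Second, to be allowed to invoke \cref{lem:induced-subgraph-gadget} at step $t$ one must check that every Hamiltonian cycle of $G_{t-1}$ uses the edge $a_t b_t$: in $G$ the vertices $b_{i,j}$ and $c_{i,j}$ have degree exactly two and $\{a_t, b_t\} = \{b_{i,j}, c_{i,j}\}$ for appropriate $i, j$; since no earlier gadget insertion added an edge incident to $a_t$ or $b_t$ (its support being disjoint from $\{a_t, b_t\}$), these two vertices still have degree two in $G_{t-1}$, so the edge between them is forced.

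For the running time, $N = |W| = \bigoh(qm) = \bigoh(nm)$ is polynomial, each $|A_t|$ is bounded by $\bigoh(d\beta) = \bigoh(1)$ since $d$ and $\beta$ are fixed constants, and hence each gadget is computed in time $f(|A_t|) \cdot n^{\bigoh(1)} = n^{\bigoh(1)}$; together with the already-established polynomial-time construction of $(G, W)$ from $I$, this shows that $G'$ is computable from $I$ in polynomial time.

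I expect the only point requiring genuine care---the ``main obstacle'', modest as it is for a corollary---to be the bookkeeping in the inductive step: verifying that the precondition of \cref{lem:induced-subgraph-gadget} (the forced edge $a_t b_t$) survives all $t-1$ earlier gadget insertions, and that the ``consistent with the remaining constraints'' predicate is carried across each equivalence. Both collapse to the pairwise disjointness of the supports $A_t \cup \{a_t, b_t\}$, which is built into the construction by design.
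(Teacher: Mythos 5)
Your proof is correct and takes the same approach the paper intends: the paper dispenses with the corollary by noting it ``immediately'' follows from the two claims and \cref{lem:induced-subgraph-gadget}, and your inductive argument is precisely the careful version of that deduction, including the two sanity checks (that the forced edge $a_t b_t$ survives earlier gadget insertions, and that disjointness of supports carries the consistency predicate across each step) that make the chain of equivalences go through.
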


The last missing piece of the proof is the following claim providing an upper bound on the cutwidth of $G'$:
\begin{claim}
	The cutwidth of $G'$ is bounded by $\beta \cdot q + \bigoh_{\beta, d}(1)$.
	Furthermore, a linear arrangement of $G'$ of such cutwidth can be computed from $I$ in polynomial time.
\end{claim}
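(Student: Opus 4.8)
The plan is to exhibit an explicit linear arrangement of $G'$ and then to bound, by a short case analysis, the number of edges crossing every cut. The arrangement will be built \emph{column by column}: for $j = 1, \ldots, m$ we output the vertices of column $j$, and inside column $j$ we list its cells in increasing order of their row index, where the at most $d$ rows of $P_j$ are grouped into one consecutive \emph{super-cell}. The block of a regular cell $i \in [q] \setminus P_j$ will be the set $S_{i,j}$ (which already comprises $U_{i,j}$ together with the vertices of the induced-subgraph gadget for $W_{i,j}$ and, by \eqref{eq:bound-sij}, has size at most $g(\beta, d)$), and the block of the super-cell will be $S_{\star, j}$. In column $1$ we additionally insert each clique $K_\beta^i$ immediately in front of $V_{i,1}^0$ (which we may assume forms a contiguous segment of the arrangement) and the connector vertex $y'_i$ (for $i \in [q-1]$) right behind $K_\beta^i$, while $y'_q$ is placed at the very beginning of the arrangement; symmetrically, in column $m$ we insert each $\hat{K}_\beta^i$ directly behind $V_{i,m}^1$. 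Since this arrangement is given explicitly in terms of $I$ and of the constant-size objects of \cref{lem:induced-subgraph-gadget}, it is computable in polynomial time.

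For the cutwidth bound we fix an arbitrary cut and classify the edges of $G'$ crossing it. The only edges responsible for the leading term $\beta q$ are the \emph{horizontal} edges $v_{i,j,k}^1 v_{i,j+1,k}^0$. Grouping them into $q\beta$ tracks indexed by a row $i$ and a coordinate $k$, the vertices of track $(i,k)$ lying in column $j$ always precede those lying in column $j+1$ in our arrangement; hence a cut placed strictly between two blocks crosses at most one edge of each track --- at most $\beta q$ horizontal edges in total --- while a cut placed inside some block crosses at most two edges of each track of that block's own row and at most one edge of every other track --- at most $\beta q + \beta$ horizontal edges. Every remaining edge of $G'$ is \emph{local} to a single block: any edge with both endpoints inside one cell block $S_{i,j}$ or super-cell block $S_{\star, j}$ --- this includes the edges of the $a_{i,j}b_{i,j}c_{i,j}d_{i,j}$-paths, the $V'_{i,j}$-cliques, and all edges added by the induced-subgraph gadgets for $W_{i,j}$ and $W_{\star, j}$ --- crosses the cut only if the cut itself lies inside that block, and since that block has size at most $g(\beta, d)$ by \eqref{eq:bound-sij}, the number of such edges is at most $\bigoh(g(\beta, d)^2) = \bigoh_{\beta, d}(1)$. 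Likewise, the $\bigoh(\beta^2)$ edges incident to a clique $K_\beta^i$ or $\hat{K}_\beta^i$ (its internal edges together with its biclique to $V_{i,1}^0$, resp.\ $V_{i,m}^1$) and the $\bigoh(\beta)$ edges at each connector $y'_i$ lie entirely within the column-$1$, resp.\ column-$m$, region; only a constant number of these bundles can cross any single cut, so together they contribute $\bigoh(\beta^2)$ --- the only bundle that is not short is the set of $\beta$ edges between $y'_q$ and $K_\beta^q$, but it still adds merely $\beta$ to each cut it crosses. Summing the contributions, every cut of the arrangement is crossed by at most $\beta q + \bigoh_{\beta, d}(1)$ edges, which is the asserted bound.

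The step that requires the most care is the placement of the clique blocks and of the connector vertices. Since $K_\beta^i$ is joined to all of $V_{i,1}^0$ by a biclique on $\beta + \beta$ vertices, placing $K_\beta^i$ far from $V_{i,1}^0$ would push these $\beta^2$ edges across many cuts, so $K_\beta^i$ must sit directly next to $V_{i,1}^0$; and $y'_q$, which closes the cyclic ``spine'' $K_\beta^1, K_\beta^2, \ldots, K_\beta^q$ formed by the connectors, must be placed so that only one of its two incident $\beta$-bundles is stretched across column $1$. Once the arrangement is fixed in this way, the remaining case analysis over the location of the cut --- between two columns, between two cells of one column, inside a cell or gadget block, or inside the column-$1$ or column-$m$ clique region --- is routine. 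Together with the two preceding claims, this finishes the correctness of the reduction; combining the resulting equality $\ctw(G') = \beta q + \bigoh_{\beta, d}(1) = \frac{\beta}{\gamma} n + \bigoh_{\beta, d}(1)$ with the choice of $\beta, \gamma$ satisfying \eqref{eq:desired-ineq-hc-lb} then rules out $\ostar((1 + \sqrt{2} - \varepsilon)^{\ctw})$-time algorithms for \textsc{Hamiltonian Cycle} under SETH.
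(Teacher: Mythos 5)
Your proposal is correct and follows essentially the same plan as the paper: construct a column-by-column linear arrangement (with the $P_j$-rows merged into one super-cell per column, cliques $K_\beta^i$ hugging $V_{i,1}^0$, and $\hat K_\beta^i$ hugging $V_{i,m}^1$) and then bound each cut by a case analysis over edge types. The two versions differ in minor choices: the paper places the super-cell $S_{\star,j}$ at the \emph{end} of column $j$ and groups the $P_1$-cliques together with $S_{\star,1}$, while you interleave the super-cell among the regular cells and insert the cliques $K_\beta^i$ for $i\in P_1$ \emph{inside} the $S_{\star,1}$-region; both variants work, although yours requires the additional (correct) observation that the resulting non-contiguous $S_{\star,1}$-plus-cliques region is still of constant size $\bigoh_{\beta,d}(1)$, so the intra-region edges still only contribute $\bigoh_{\beta,d}(1)$ to any cut. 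Your ``track'' accounting for horizontal edges is a clean alternative to the paper's argument via the last fully-left column $j^*$ and cell $i^*$, and it yields the same conclusion; one small slip is the stated figure $\beta q+\beta$ for a cut landing inside a block: when the cut sits inside the super-cell $S_{\star,j}$, up to $|P_j|\le d$ rows (not one) can contribute two crossing edges per track, so the correct figure is $\beta q+d\beta$, which is of course still $\beta q+\bigoh_{\beta,d}(1)$. Similarly, the $\bigoh(\beta^2)$ estimate for the $y'_i$-edges is imprecise but harmless: at most $2|P_1|\le 2d$ of these $\beta$-bundles can be stretched across the column-$1$ region by your interleaving, giving $\bigoh(d\beta)=\bigoh_{\beta,d}(1)$.
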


\begin{proof}
	To prove the claim we provide the desired linear arrangement. 
	By construction, the vertex set of $G'$ consists of the vertices $y_1', \dots, y_q'$, the vertices of the cliques $K_\beta^1, \dots, K_\beta^q$ and $\hat K_\beta^1, \dots, \hat K_\beta^q$ and the sets $S_{i,j}$ for every $j \in [m]$ and $i \in ([q] \cup \{\star\}) \setminus P_j$.
	All of these sets are pairwise disjoint, i.e., they partition $V(G')$. 
	We define the linear arrangement, denoted by $\ell$, in which the vertices of $G'$ occur in the following ordering from left to right:
	\begin{enumerate}
		\item First, for every $i = 1, \dots, q$ with $i \notin P_1$ we place the vertices of $K_\beta^i$ (in some arbitrary ordering), after that the vertices of $S_{i,1}$ (in some arbitrary ordering) followed by the vertex $y_i'$, and then we move to the next $i$.
		\item Now we place the vertices of $(\bigcup_{i \in P_1} K_\beta^{i})$ (in some ordering), after that the vertices of $S_{\star,1}$ (in some ordering) followed by the vertices $y_i'$ with $i \in P_1$.
		\item Next, for $j = 2, \dots, m-1$ we place the following vertices (i.e., we only process $j+1$ after processing $j$):
		\begin{enumerate}
			\item First, for every $i \in [q] \setminus P_j$ we add the vertices from $S_{i,j}$ in an arbitrary ordering. 
			\item After that, we place all vertices of the set $S_{\star, j}$ in an arbitrary way. 
		\end{enumerate}
		\item Now for $i = 1, \dots, q$ with $i \notin S_m$ we place the vertices of $\hat K_\beta^i \cup S_{i,m}$ in an arbitrary ordering 
		and move to the next $i$.
		\item Finally, we place all vertices of $(\bigcup_{i \in P_m} \hat K^i_\beta) \cup S_{\star,m}$.
	\end{enumerate}
	We refer to \cref{fig:hc-lb-linear-arrangement} for an illustration.
	Observe that this is a valid linear arrangement of $G'$ as every vertex was placed precisely once, and it can also be computed in polynomial time from $I$.
	
	\begin{figure}[t]
		\centering
		\includegraphics[width=0.6\textwidth]{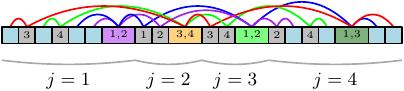}
		\caption{The constructed linear arrangement of the graph $G$ from \cref{fig:hc-lb}. The small numbers inside each ``block'' reflect the value of $i$. 
		For $i = 1, 2, 3, 4$, we fix the color blue, purple, red, and green, respectively, and represent with one line of this color all edges between $V_{i,j}$ and $V_{i,j+1}$ for every $j$. 
		The lines of the same color do not overlap for different values of $j$, and for a fixed value of $j$, there are $\beta$ of them.
		Every vertical line is crossed by at most one line of each color yielding the cutwidth of essentially $q \cdot \beta$.}
		\label{fig:hc-lb-linear-arrangement}
	\end{figure}

	Now we argue that every cut of this linear arrangement is crossed by at most $q \cdot \beta + \bigoh_{\beta, d}(1)$ edges.
	We say that two edges are \emph{non-overlapping} if there is no cut in this linear arrangement crossed by both of these edges.
	Let us recall that $I$ is an instance of $d$-\textsc{SAT} and therefore, we have $|P_j| \leq d$ and thus, $|P_j| \in \bigoh_d(1)$ for every $j \in [m]$---this will be crucial for the estimation.
	We will now cover the edges of $G'$ by several groups and provide an upper-bound on the value that each group contributes to any cut:
	\begin{itemize}
		\item First of all, the edges incident to $y_q'$ contribute at most $2\beta = \bigoh_\beta(1)$ to each cut.
		\item Next we consider the edges incident with $\{y_1', \dots, y_{q-1}'\}$. 
		If the vertices of $\{y_1', \dots, y_{q-1}'\} \cup \bigcup_{i \in [q]} V(K_\beta^i)$ would be ordered as 
		\begin{equation}\label{eq:ideal-clique}
			V(K_\beta^1), y_1', V(K_\beta^2), y_2', \dots, V(K_\beta^{q-1}), y_{q-1}', V(K_\beta^q), 
		\end{equation}
		then every cut would be crossed by at most $2 \beta$ such edges: this is because an edge incident with some $y_i'$ and an edge incident with some $y_{i'}'$ are non-overlapping in this ordering whenever $i, i' \in [q-1]$ and $|i - i'| > 1$.
		The ordering of $\{y_1', \dots, y_{q-1}'\} \cup \bigcup_{i \in [q]} V(K_\beta^i)$ in $\ell$ is obtained from \eqref{eq:ideal-clique} by taking at most $|P_1|$ indices $i$ and taking the vertices of the clique $V(K_\beta^i)$ as well as the vertex $y_i'$ and moving them elsewhere.
		Every vertex in $V(K_\beta^i) \cup \{y_i'\}$ has at most $2\beta$ incident edges with an end-point in $\{y_1', \dots, y_{q-1}'\}$.
		Hence, moving them around increases the cutwidth by at most $|P_1| \cdot (\beta + 1) \cdot 2 \beta$.
		Altogether, such edges thus contribute at most $2\beta + |P_1| \cdot (\beta + 1) \cdot 2 \beta = \bigoh_{\beta, d}(1)$ to any cut.
		\item Again, consider an arbitrary but fixed cut. 
		Similarly, we now show that there exist at most $|P_1|$ indices $i \in [q]$ such that the edges with one end-point in $K_\beta^i$ and the other end-point not equal to $y_i'$ cross this cut.
		To see this we consider the sets 
		\[
			\{u v \in E(G') \mid u \in V(K_\beta^i), v \neq y_i'\}_{i \in [q] \setminus P_1}, \{u v \mid u \in \bigcup_{i \in P_1} V(K_\beta^i), v \notin \{y_i' \mid i \in P_1\}\}.
		\]
		Now any two edges that belong to two different of these sets are non-overlapping.
		Thus there indeed exist at most $|P_1|$ indices with the claimed property.
		Now recall that for any fixed $i \in [q]$ the clique $K_{\beta}^i$ consists of $\beta$ vertices and each of them has the degree $2 \beta$.
		Thus, any cut is crossed by at most $|P_1| \beta^2 \in \bigoh_{\beta, d}(1)$ such edges.
		\item Analogously, for any fixed cut, there exist at most $|P_m|$ indices $i \in [q]$ such that the cut is crossed by edges with one end-point in $\hat{K}_\beta^i$.
		Any vertex in $\hat{K}_\beta^i$ has the degree of $2 \beta - 1$ and therefore, the cut is crossed by at most $|P_m| (2 \beta - 1) = \bigoh_{\beta, d}(1)$ such edges. 
		Note that the case described so far already cover all edges apart from those that have both end-points in $\bigcup_{j \in [m]} \bigcup_{i \in ([q] \cup \{\star\}) \setminus P_j} S_{i,j}$.
		\item Now we analyze the horizontal edges.
		Consider an arbitrary cut and let $j^* \in [m]$ be maximal such that for every $j \leq j^*$ all vertices of $\bigcup_{i \in ([q] \cup \{\star\}) \setminus P_j} S_{i, j}$ are to the left of this cut.
		The maximality implies that by construction of $\ell$, for every $j \in [m]$ with $j \geq j^*+2$ all vertices of $S_{i,j}$ are to the right of this cut.
		The choice of $j^*$ thus already implies that no edge of form $v_{i, j, k}^1 v_{i, j+1, k}^0$ crosses this cut for all $j \in [j^*-1] \cup \{j^*+2, \dots, m\}$ so we remain with the values $j \in \{j^*, j^*+1\}$.
		To simplify the upcoming definition we define the total ordering on the set $([q]  \cup \{\star\}) \setminus P_{j^*}$ such that $\star$ is maximal and $[q] \setminus P_{j^*}$ are ordered as natural numbers.
		So now let $i^* \in ([q] \cup \{\star\}) \setminus P_{j^*}$ be maximal such that for all $i \leq i^* \in ([q] \cup \{\star\}) \setminus P_{j^*}$ all vertices of $S_{i, j^*}$ are to the left of this cut.
		If $i^* = \star$ and $j^* = m$ hold, then no horizontal edges cross this cut.
		If $i^* = \star$ and $j^* \neq m$ hold, then 
		$q \cdot \beta$ horizontal edges cross this cut: those are precisely all horizontal edges of form $v_{i,j^*,k}^1 v_{i,j^*+1,k}$ where $i \in [q]$ and $k \in [\beta]$. 
		So we may now assume that $i^* \neq \star$ holds.
		Then:
		\begin{itemize}
			\item First, no edge of form $v_{i,j^*,k}^1 v_{i,j^*+1,k}^0$ crosses the cut for $i \leq i^*$.
			\item And second, no edge of form $v_{i,j^*+1,k}^1 v_{i,j^*+2,k}^0$ crosses the cut for $i > i^* + 1$.
		\end{itemize}
		Thus, only the following horizontal edges may cross the currently considered cut:
		\begin{itemize}
			\item The edges of form $v_{i, j^*, k}^1 v_{i, j^*+1, k}^0$ for $i > i^*$---there are at most $(q - i^*) \beta + |P_{j^*}| \beta$ such edges where the second addend is for $i = \star$.
			\item And the edges of form $v_{i, j^*+1, k}^1 v_{i, j^*+2, k}^0$ (if $j^*+2$ exists) for $i \leq i^*+1$---there are at most $(i^*+1) \beta + |P_{j^*}| \beta$ such edges where the second addend is for $i = \star$.
		\end{itemize}
		Altogether, we obtain that any fixed cut is crossed by at most $q \beta + \bigoh_{\beta, d}(1)$ horizontal edges.
		
		\item Finally, it remains to consider the edges with both endpoints in $S_{i,j}$ for some $j \in [m]$ and $i \in ([q] \cup \{\star\}) \setminus P_j$---for this recall that the vertices of $S_{i,j}$ are consecutive in $\ell$.
		Thus for any $(i, j) \neq (i', j')$ with $j, j' \in [m]$, $i \in ([q] \cup \{\star\}) \setminus P_j$, and $i' \in ([q] \cup \{\star\}) \setminus P_{j'}$ any edge with both end-points in $S_{i,j}$ is non-overlapping with any edge with both end-points in $S_{i',j'}$.
		Now recall that for any $j \in [m]$ and $i \in ([q] \cup \{\star\}) \setminus P_j$, the size of $S_{i, j}$ is bounded by $g(\beta, d)$ (see \eqref{eq:bound-sij}) and thus, the number of edges with both end-points in this set is bounded by $g(\beta, d)^2$.
		Altogether, any cut is crossed by $\bigoh_{\beta, d}(1)$ such edges.
	\end{itemize}
	By summing up the upper bounds over all edge types, we obtain that every cut of $\ell$ is crossed by at most $q \cdot \beta + \bigoh_{\beta, d}(1)$ edges as claimed.
\end{proof}
Now we are ready to put everything together to prove the desired lower bound.
\begin{theorem}
	Unless SETH fails, there is no algorithm solving the {\textsc{Hamiltonian Cycle}} problem in time $\bigoh^*((1+\sqrt{2}-\varepsilon)^{\operatorname{ctw}})$ for any $\varepsilon > 0$ even if a linear arrangement of the input graph of optimal width is provided with the input.
\end{theorem}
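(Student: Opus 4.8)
The statement is the routine wrap-up of the reduction built above, so the plan is the standard constant-balancing argument against SETH. First I would assume, for contradiction, that there is an $\varepsilon > 0$ and an algorithm $\mathcal{A}$ solving {\textsc{Hamiltonian Cycle}} in time $\bigoh^*((1+\sqrt 2 - \varepsilon)^{\operatorname{ctw}})$ when a linear arrangement of (near-)optimal width is supplied with the input. The goal is to extract from $\mathcal{A}$, for a suitable $\delta = \delta(\varepsilon) > 0$, an algorithm solving $d$-\textsc{SAT} in time $\bigoh^*((2-\delta)^n)$ for every fixed $d$, which contradicts SETH (for the $d$ it provides against $\delta$).

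The crux is the choice of the constants $\beta$ and $\gamma$ of the construction, and this is the only place where a genuinely new (though routine) estimate is needed. The left-hand side of \eqref{eq:desired-ineq-hc-lb} is exactly $|\mA|$, the number of pairs $(s,M)$ with $s \in \{1,2\}^{\beta}$, $s(1)=1$, $|s^{-1}(2)|>0$, $|s^{-1}(1)|$ even and $M \in \mX(s^{-1}(1))$, so it equals $\sum_{\,i_1 \text{ even},\,0<i_1<\beta}\binom{\beta-1}{i_1-1}\,|\mX_{i_1}|$. By \cref{lem:number-of-base-matchings} and the binomial theorem this sum is $\Theta\big((1+\sqrt 2)^{\beta}\big)$. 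Consequently \eqref{eq:desired-ineq-hc-lb} can be satisfied with $\gamma = \lfloor \beta \log_2(1+\sqrt 2)\rfloor - c$ for a fixed constant $c$, and then $\beta/\gamma \to \frac{\ln 2}{\ln(1+\sqrt 2)}$ from above as $\beta \to \infty$. On the other hand, since $\varepsilon > 0$ we have $\ln(1+\sqrt 2 - \varepsilon) < \ln(1+\sqrt 2)$ and hence $\frac{\ln 2}{\ln(1+\sqrt 2 - \varepsilon)} > \frac{\ln 2}{\ln(1+\sqrt 2)}$. So there is a nonempty window, and I would fix constants $\beta,\gamma$ (depending only on $\varepsilon$) with \eqref{eq:desired-ineq-hc-lb} holding and simultaneously $\beta/\gamma < \frac{\ln 2}{\ln(1+\sqrt 2 - \varepsilon)}$, i.e.\ $(1+\sqrt 2 - \varepsilon)^{\beta/\gamma} < 2$. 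Then set $\delta := 2 - (1+\sqrt 2 - \varepsilon)^{\beta/\gamma} > 0$.

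Given an instance $I$ of $d$-\textsc{SAT} on $n$ variables (for an arbitrary fixed $d$), I would pad $X$ by at most $\gamma-1$ fresh variables occurring in no clause so that $\gamma \mid n$, put $q = n/\gamma$, and run the polynomial-time construction to obtain $G'$ together with a linear arrangement of $G'$ of width $\beta q + \bigoh_{\beta,d}(1) = \tfrac{\beta}{\gamma}n + \bigoh_{\beta,d}(1)$, which is within an additive constant of $\operatorname{ctw}(G')$. By the corollary established above, $G'$ admits a Hamiltonian cycle if and only if $I$ is satisfiable. Feeding $G'$ with this arrangement to $\mathcal{A}$ therefore decides $I$ in time
\[
	\bigoh^*\!\big((1+\sqrt 2 - \varepsilon)^{\frac{\beta}{\gamma}n + \bigoh_{\beta,d}(1)}\big) = \bigoh^*\!\big(\big((1+\sqrt 2 - \varepsilon)^{\beta/\gamma}\big)^{n}\big) = \bigoh^*\!\big((2-\delta)^n\big),
\]
where the constant factor $(1+\sqrt 2 - \varepsilon)^{\bigoh_{\beta,d}(1)}$ is absorbed into $\bigoh^*$ because $\beta,\gamma,d$ are constants (and a $d$-\textsc{SAT} instance has only $\bigoh_d(n^d)$ clauses). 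As $d$ was arbitrary, this contradicts SETH, which proves the theorem.

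I expect no real obstacle beyond the window argument in the second paragraph: one must verify the asymptotics $|\mA| = \Theta((1+\sqrt 2)^{\beta})$ — so that \eqref{eq:desired-ineq-hc-lb} forces only $\beta/\gamma \geq \frac{\ln 2}{\ln(1+\sqrt 2)} - o(1)$ — and observe that the strict inequality $\ln(1+\sqrt 2-\varepsilon) < \ln(1+\sqrt 2)$ for $\varepsilon>0$ is precisely what makes this admissible range overlap the range $\beta/\gamma < \frac{\ln 2}{\ln(1+\sqrt 2-\varepsilon)}$ demanded by the hypothetical running time. Everything else (correctness of the reduction and the $\beta q + \bigoh_{\beta,d}(1)$ bound on $\operatorname{ctw}(G')$) has already been proven, so the remainder is bookkeeping.
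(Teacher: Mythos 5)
Your proposal is correct and takes essentially the same approach as the paper: the same asymptotic estimate $|\mA| = \Theta((1+\sqrt 2)^\beta)$ to satisfy \eqref{eq:desired-ineq-hc-lb} with $\beta/\gamma \to \log_2(1+\sqrt 2)^{-1}$, and the same constant-balancing argument exploiting the strict inequality $1+\sqrt 2-\varepsilon < 1+\sqrt 2$. The only cosmetic difference is that you work with natural logs and name the resulting savings $\delta$ up front, whereas the paper substitutes $\beta = \gamma / \log(1+\sqrt 2) + \bigoh(1)$ directly into the exponent and then sends $\gamma \to \infty$; the underlying inequality is identical.
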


\begin{proof}
	Suppose such $\varepsilon > 0$ exists.
	We now show that there exists $\varepsilon' > 0$ such that for every $d \in \mathbb{N}$ the $d$-\textsc{SAT} problem can be solved in time $\bigoh((2-\varepsilon')^{n})$ where $n$ denotes the number of variables in the instance---this would contradict SETH.
	So let $d$ be fixed and let $I$ be an instance of $d$-SAT.
	The claims of this section imply that for every fixed pair $\gamma$ and $\beta$ of constants, we can in polynomial time construct a graph $G'$ of cutwidth $n / \gamma \cdot \beta + \bigoh_{\beta, \gamma, d}(1)$ such that $G$ admits a Hamiltonian cycle if and only if $I$ is satisfiable.
	Now we choose the values $\beta$ and $\gamma$ to show the desired claim.
	
	First, there exist a positive constant $C$ such that for sufficiently large $\beta$, we have
	\[
		\sum_{\substack{i_1 + i_2 = \beta \colon \\ i_1 \text{ is even, } i_1, i_2 > 0}} {\beta - 1 \choose i_1 - 1} 2^{i_1/2 - 1} \geq \sum_{i_1 \in [\beta]_0} {\beta \choose i_1} \sqrt{2}^{i_1}/C = (1+\sqrt{2})^\beta / C 
	\]
	where the inequality follows from elementary analysis.
	Thus to satisfy \eqref{eq:desired-ineq-hc-lb} we can use $\beta$ with 
	\[
		\beta = \frac{\gamma}{\log(1+\sqrt{2})} + \bigoh(1)
	\]
	and it remains to fix $\gamma$.
	Running the $\bigoh^*((1+\sqrt{2}-\varepsilon)^{\operatorname{ctw}})$ algorithm on the graph $G'$ of cutwidth $n/\gamma \cdot \beta + \bigoh_{\beta,d}(1)$ we obtain the running time of
	\begin{align*}
		&(1+\sqrt{2}-\varepsilon)^{n/\gamma \cdot \beta + \bigoh_{\beta,d}(1)} = (1+\sqrt{2}-\varepsilon)^{n/\gamma \cdot (\frac{\gamma}{\log(1+\sqrt{2})} + \bigoh(1)) + \bigoh_{\beta,d}(1)} = \\
		&\bigoh_{\beta,d,\varepsilon} ((1+\sqrt{2}-\varepsilon)^{n/\gamma \cdot (\frac{\gamma}{\log(1+\sqrt{2})} + \bigoh(1))}) = 
		\bigoh_{\beta,d,\varepsilon} (2^{\log (1+\sqrt{2}-\varepsilon) (n/\gamma \cdot (\frac{\gamma}{\log(1+\sqrt{2})} + \bigoh(1)))}) = \\
		&\bigoh_{\beta,d,\varepsilon} (2^{n(\frac{\log (1+\sqrt{2}-\varepsilon)}{\log(1+\sqrt{2})} + \frac{\log (1+\sqrt{2}-\varepsilon) \bigoh(1)}{\gamma})}) = \\
		&\bigoh_{\beta,d,\varepsilon} (2^{n(\frac{\log (1+\sqrt{2}-\varepsilon)}{\log(1+\sqrt{2})} + \frac{\log ((1+\sqrt{2}-\varepsilon)^{\bigoh(1)})}{\gamma})})
	\end{align*} 
	modulo factors polynomial in the size of the input instance of $d$-\textsc{SAT}.
	Thus choosing $\gamma$ large enough such that $\frac{\log (1+\sqrt{2}-\varepsilon)}{\log(1+\sqrt{2})} + \frac{\log ((1+\sqrt{2}-\varepsilon)^{\bigoh(1)})}{\gamma} < 1$ is satisfied yields an $\bigoh_d((2 - \varepsilon')^n)$ algorithm for $d$-SAT for every $d \in \mathbb{N}$ contradicting SETH.	
\end{proof}

\section{Maximum Induced Matching}\label{sec:mim}

An \emph{induced matching} in a graph $G=(V,E)$ is a set $M\subseteq E$ of edges such that $M$ is a matching, and $(V(M), M)$ is an induced subgraph of $G$, i.e., for any pair of distinct edges $e_1, e_2\in M$, it holds that $e_1\cap e_2 = \emptyset$ and there is no edge between an endpoint of $e_1$ and an endpoint of $e_2$ (which, in turn, can be written as $N(e) \cap V(M) = \emptyset$). 
We define the \Mimp problem as follows:
\probdef{\Mimp}{A graph $G=(V,E)$ and an integer $b\in\mathbb{N}$.}{Does $G$ contain an induced matching of cardinality $b$?}
We call an induced matching of $G$ \emph{maximum} if $G$ contains no induced matching of larger cardinality.

In this section, we show that, assuming SETH, the \Mimp problem cannot be solved in time $\ostar((3-\varepsilon)^{k})$ on graphs provided with linear arrangements of cutwidth $k$ for any positive value $\varepsilon$. 
This proves the tightness of the algorithm of Chaudhary and Zehavi~\cite{DBLP:conf/wg/ChaudharyZ23a} with running time $\ostar(3^k)$ for all parameters treewidth, pathwidth and cutwidth, where $k$ is the value of the parameter.

\begin{theorem}\label{thm:mim}
    Assuming SETH, there is no algorithm that solves the $\Mimp$ problem on graphs given with linear arrangements of cutwidth $k$ in time $\ostar\big((3-\varepsilon)^{k}\big)$ for any positive value $\varepsilon$.
\end{theorem}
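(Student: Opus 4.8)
The plan is to reduce from the $d$-CSP-$B$ problem with $B = 3$ and use \cref{lb:theo:lampis}; since $(3-\varepsilon)^k$ is the target and each block of the CSP has $3$ states, we want a construction in which one unit of cutwidth can ``carry'' exactly three states of a partial solution. Concretely, I would design a \emph{path gadget} (one per CSP-variable, repeated along a path of such gadgets so that one copy sits under every constraint in which the variable occurs) whose restriction of any maximum induced matching to the gadget realises exactly three distinct ``states'' at its interface, and these three states should be distinguishable across a cut that has only one edge (so that $n$ variables give cutwidth $\approx n$). The key structural lemma will be: in the full graph $G$ we build, every maximum induced matching has size exactly $N$ for an explicit $N$ (counting the edges forced inside the gadgets plus one per path gadget), and conversely any induced matching of size $N$ must be ``state-respecting''—i.e., it picks, in each path gadget, one of the three canonical states, and it must pick the same state in all path gadgets of the same path (propagation), and the states picked at the constraint gadgets must be jointly consistent with the constraint. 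This is the standard Lokshtanov–Marx–Saurabh scheme adapted to \Mimp{}.

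The order of the argument: (i) define the gadgets—a \emph{variable/path gadget} with a designated edge whose membership in the matching, together with the behaviour at two interface vertices, encodes three states; a \emph{propagation/connection gadget} between consecutive path gadgets on the same path, built from small pieces so that it contributes $O(1)$ to the cutwidth but forces equality of states; and a \emph{constraint gadget} $Z_j$ for each constraint $C_j$, attached to the path gadgets of the $\le d$ variables it involves, designed so that a state-respecting induced matching extends to a maximum one inside $Z_j$ iff the chosen states form a tuple in $R_{C_j}$; plus ``filler'' gadgets at the ends to handle leftover vertices and to make bookkeeping of $N$ exact. (ii) Prove the ``budget lemma'': the maximum induced matching size of $G$ equals $N$, where inside every gadget a fixed number of edges is forced and an additional one edge per path gadget is available exactly under a valid state. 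Here I would argue one direction by exhibiting, from a satisfying assignment of the CSP, an explicit induced matching of size $N$ (selecting in each path gadget the canonical edges of the state corresponding to the variable's value, then extending consistently through connection gadgets, then extending inside each $Z_j$ using that the tuple lies in $R_{C_j}$, then greedily using the filler gadgets); and the other direction by a local exchange argument showing any induced matching of size $\ge N$ must be locally canonical in every gadget—this is where I expect the work. (iii) Prove the cutwidth bound: place the gadgets in the natural left-to-right order (all copies of variable $v_1$, then $v_2$, etc., with constraint gadgets interleaved where their variables live), show that every cut is crossed by $O(1)$ edges per active gadget plus exactly one ``state-carrying'' edge per path currently straddling the cut, hence cutwidth $\le n + O_d(1)$. (iv) Conclude: a $\ostar((3-\varepsilon)^{\ctw})$ algorithm would give $\ostar((3-\varepsilon)^{n+O_d(1)}) = \ostar((3-\delta)^n)$ for $d$-CSP-$3$, contradicting SETH via \cref{lb:theo:lampis}.

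The main obstacle is designing the gadgets so that \emph{three} is exactly the number of states per unit of cutwidth—\Mimp{} naturally wants to be ``two-sided'' (an edge is in or out), and the subtlety pointed to in the paper (the gadgets must be such that ``any maximum induced matching has some specific structural properties'') means I need the gadget's maximum induced matchings to come in exactly three flavours at the boundary, no more (to not over-count states) and no fewer (to encode all of $[2]_0$), and crucially the non-chosen flavours must not be ``upgradable'' by local swaps. Getting the interaction between the path gadget and the connection gadget right—so that a state is forced to propagate unchanged while the connection gadget stays cheap in cutwidth, unlike in the treewidth reductions where one can afford bicliques—will require a careful case analysis of which edges an optimal induced matching can take at the interface, analogous to \cref{tripack-lb::lem:state-connection-gadget} but for induced matchings rather than triangle partitions. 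A secondary technical point is making the global budget $N$ come out to an exact value independent of the assignment (so that ``size $\ge N$'' really forces state-respecting behaviour everywhere simultaneously), which the filler cliques/gadgets at the ends are there to handle; I would keep that bookkeeping explicit but deferred to a short lemma.

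\begin{proof}[Proof sketch]
	We reduce from $d$-CSP-$3$ and invoke \cref{lb:theo:lampis}; full details appear in the remainder of this section.
	Fix an instance $I$ with variable set $\Var$, $|\Var| = n$, and constraints $\mathcal{C} = \{C_1, \dots, C_m\}$.
	We construct in polynomial time a graph $G$, an integer $b$, and a linear arrangement of $G$ of cutwidth $n + \bigoh_d(1)$ such that $G$ has an induced matching of size $b$ if and only if $I$ is satisfiable.
	The building blocks are three kinds of constant-size gadgets: a \emph{path gadget} $Q_i^j$ for each variable $v_i$ and each constraint $C_j$ it appears in (plus two terminal copies $Q_i^0, Q_i^{m+1}$), a \emph{connection gadget} $R_i^j$ placed between consecutive path gadgets of the same path, and a \emph{clause gadget} $Z_j$ for each constraint $C_j$, attached to the path gadgets of the (at most $d$) variables of $C_j$; finite filler gadgets at the path ends absorb leftover vertices.
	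Each gadget is designed so that the restriction of any maximum induced matching to it realises exactly one of three canonical \emph{states}, corresponding to the three values in $[2]_0$, and these states are witnessed across cuts crossed by only a single edge of the gadget.

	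The correctness rests on a \emph{budget lemma}: the maximum induced-matching size of $G$ equals an explicit value $b$ obtained by summing, over all gadgets, a fixed number of forced edges plus one further edge per path gadget which is available precisely when the local state is one of the three canonical ones.
	For the ``if'' direction, a satisfying assignment $\pi \colon \Var \to [2]_0$ yields an induced matching of size $b$: in each path gadget $Q_i^j$ select the canonical edges of the state $\pi(v_i)$, propagate this choice unchanged through the connection gadgets (an analysis of the possible boundary behaviour shows a connection gadget forces equal states on its two sides), extend inside each clause gadget $Z_j$ using that the tuple $(\pi(v))_{v \in V_{C_j}}$ lies in $R_{C_j}$, and finally fill the terminal gadgets greedily.
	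For the ``only if'' direction, a local exchange argument shows that any induced matching of size at least $b$ must be state-respecting in every gadget simultaneously: the exact value of $b$ forbids any gadget from being ``non-canonical''; state-respecting matchings induce consistent states along each path and a tuple in each $R_{C_j}$, hence define a satisfying assignment.
	Finally, placing the gadgets left to right—first all copies of $v_1$, then those of $v_2$, and so on, with each $Z_j$ inserted among the copies of its variables—every cut is crossed by $\bigoh_d(1)$ edges internal to the currently active gadgets plus exactly one state-carrying edge per path straddling the cut, so the cutwidth is $n + \bigoh_d(1)$.
	A $\ostar((3-\varepsilon)^{\ctw})$ algorithm would therefore solve $d$-CSP-$3$ in time $\ostar((3-\varepsilon)^{n + \bigoh_d(1)}) = \ostar((3-\delta)^n)$ for a suitable $\delta > 0$, contradicting SETH by \cref{lb:theo:lampis}.
\end{proof}
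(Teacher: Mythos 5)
Your high-level plan (reduce from $d$-CSP-$3$ via Lampis' theorem, build path/connection/constraint gadgets, prove a budget lemma, bound cutwidth by $n + \bigoh_d(1)$) matches the paper's framing, but it diverges from the actual proof at a point that is not just a detail. You propose connection gadgets between consecutive path gadgets that \emph{force equality of states} while contributing $\bigoh(1)$ to cutwidth, and you acknowledge this is the hard step. The paper does not build such a gadget, and this is deliberate. Instead, consecutive path gadgets on a path are joined by a \emph{single edge}, with no state-forcing machinery, and the paper accepts that the state may change from one gadget to the next. Crucially, the paper's path-gadget state takes values in $\{0, 1, 2, \perp\}$ (the fourth possibility $\perp$, where two of the three backbone vertices are $M$-free, is genuinely achievable and your three-state picture does not account for it), and the key structural lemma shows that along any path the state sequence is forced to be monotone of the form $2, \dots, 2, 0, \dots, 0, (\perp,) 1, \dots, 1$, so at most three indices on each path see a state change. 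That is why each path consists of $m(3n+1)$ gadgets grouped into $3n+1$ segments of length $m$, so that pigeonhole over the at most $3n$ total change points yields one segment on which all paths hold a constant, non-$\perp$ state simultaneously; the constraint gadgets are then evaluated only on that segment.

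In other words, the pigeonhole-over-segments mechanism in the paper \emph{replaces} the state-forcing connection gadget you are assuming, and I do not see a way to carry out your plan as written: no construction is given for a constant-cutwidth gadget that (a) kills the $\perp$ state and (b) forces state equality across it, and both are nontrivial (the blocked-edge tool the paper does use only prevents both endpoints from being matched, which is weaker). Your budget-lemma and cutwidth arguments are in the right spirit and close to the paper's Lemma about the structure of any matching of size $b_I$ and the columnwise linear arrangement, but the missing propagation mechanism is the load-bearing idea of the paper and is absent from your sketch.
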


By \cref{lb:theo:partition}, for every $\varepsilon > 0$, there exists an integer $q$ such that $q$-CSP-$3$ problem cannot be solved in time $\bigoh((3-\varepsilon)^n)$ where $n$ denotes the number of variables in the instance. 
In order to prove \cref{thm:mim}, we thus provide, for every integer value $q$, a reduction from the $q$-CSP-$3$ problem for $B=3$ to \Mimp such that the cutwidth of the arising graph is bounded by $n + \bigoh(1)$. 
So let $q$ be an arbitrary but fixed integer.
Let $I$ be an instance of the $q$-CSP-$3$ problem.
We now describe the instance $(G_I, b_I)$ of the \Mimp problem resulting from the reduction.  
Let $x_1,\dots, x_n$ denote the variables of $I$ and let $C_1,\dots, C_j$ be its constraints. For the sake of shortness, we denote $V_j = V_{C_j}$ and $R_j = R_{C_j}$ for each $j\in[m]$.

\subparagraph*{Construction.}

We start by describing the global structure of $G_I$ and the provide the details.
The graph $G_I$ consists of $n$ sequences of path-like structures which we will call \emph{paths} for simplicity.
We will denote the $i$.th path by $P_i$. 
Each path consists of $m\cdot(3n+1)$ copies of constant-size gadgets called \emph{path gadgets}. 
Each path gadget $X$ contains an designated \emph{entry vertex} $v_0(X)$ and a designated \emph{exit vertex} $v_2(X)$. The exit vertex of each path gadget is adjacent to the entry vertex of the following path gadget on the same path, giving these paths their path-like structure.
We divide each path $P_i$ into $3n+1$ segments of $m$ path gadgets each and we denote the $k$.th segment by $P_i^k$ for $k \in [3n+1]$. 
We denote by $X_i^{k,1},\dots X_i^{k, m}$ the path gadgets of $P_i^k$ in the order they occur on this path. 
We also add a designated vertex $g_i$ to each path $P_i$ so that it is adjacent to the exit vertex of the last path gadget on $P_i$, i.e., to the vertex $v_2(X_i^{3n+1,m})$.
For every $j \in [m]$ we also add $3n+1$ constraint gadgets corresponding to $C_j$ (we call them $\mathscr{C}_j^1, \dots \mathscr{C}_j^{3n+1}$). 
A constraint gadget $\mathscr{C}_j^k$ will only be adjacent to some path gadgets of form $X_i^{k, j}$. 

\begin{figure}[t]
    \centering
    \includegraphics[width=0.5\textwidth]{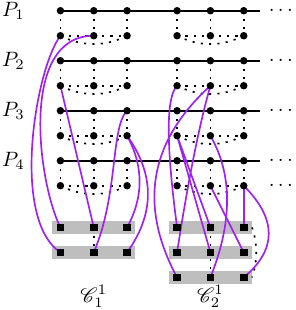}
    \caption{The first segment of the lower-bound construction for $n = 4$, $m = 2$ and the constraints $V_1 = (x_1, x_2, x_3)$, $R_1 = \{(1,0,2),(0,2,2)\}$ and $V_2 = (x_2, x_3, x_4)$, $R_2 = \{(0,0,2),(1,0,1),(1,1,2)\}$. Dotted lines sketch blocked edges. A dotted line between the gray boxed reflects that there is a blocked edge between any two vertices in different boxes. }  
    \label{fig:mim-lb}
\end{figure} 

The intersection of a maximum induced matching with a path gadget $X_i^{k,j}$ defines a so-called \emph{state} on this path gadget.
The construction will ensure that the states of this path gadget are in one-to-one correspondence with the $3$ assignments of the $i$.th variable $x_i$. 
The role of the paths $P_i$ is to force the existence of some value $k\in [3n+1]$, where for each path $P_i$, a solution defines the same state in all path gadgets on the $k$.th segment of this path.
A constraint gadget $\mathscr{C}$ corresponding to constraint $C$ then ensures that only states combinations are allowed that 
correspond to partial assignments satisfying this constraint. 
Hence, the existence of the segment $k$ with above properties guarantees the existence of an assignment satisfying all constraints $C_j$. To conclude the construction, it remains to define the above-mentioned gadgets.
For a gadget $X$ and a vertex $v$ of $X$, by $v(X)$ we refer to the vertex $v$ of this gadget---as we will use multiple copies of the same gadget, this will allow us to distinguish the vertices in different gadgets.

First, we define a \emph{blocked edge gadget} as a subgraph $\chi$ consisting of two special vertices $x$ and $y$ with an edge between them, and two vertex-disjoint paths on $4$ edges each, both having $x$ and $y$ as their endpoints and the internal vertices of these paths have no neighbors outside $\chi$. 
When we say we add a \emph{blocked edge} $\chi(u,v)$ between two vertices $u,v$ of $G_I$, that means we add a copy of $\chi$ and we identify $x$ with $u$ and $y$ with $v$ (see \cref{fig:mim-blocked-edge} for an illustration).

Now we define a \emph{path gadget}. 
A path gadget $X$ consists of a simple path on three vertices $v_0, v_1, v_2$, where $v_0$ is the \emph{entry vertex} of $X$ and $v_2$ is the \emph{exit vertex} of $X$. 
It also contains three vertices $u_0, u_1, u_2$ and add a blocked edge $\chi(v_c, u_c)$ for each $c\in[2]_0$. 
Finally, for each pair $c \neq c' \in [2]_0$, it contains a blocked edge $\chi(u_c,u_{c'})$.
For simplicity of the arguments in the following, we define the set 
\[
    U = \bigcup_{i \in [n], k \in [3n+1], j \in [m], p \in [2]_0} u_p(X_i^{j,k}).
\]

Finally, we describe a \emph{constraint gadget} $\mathscr{C}_j^k$ corresponding to the constraint $C_j$ for $j \in [m]$. 
Let $R_j = \{\alpha_1,\dots \alpha_{h_j}\}$. 
For every $r \in [h_j]$ the gadget $\mathscr{C}_j^k$ contains a set $A_r=\{a_r^1,\dots a_r^d\}$ of fresh vertices. 
For all $r\neq r' \in [h_j]$ and all $z,z'\in[d]$, it also contains the blocked edges $\chi(a_r^z, a_{r'}^{z'})$. 
For every $r\in[h_j]$ and $z\in[d]$, let $v_i = (V_j)_z$ be the $z$.th variable of the $j$.th constraint. Moreover, let $c = (\alpha_r)_z$. 
Then $\mathscr{C}_j^k$ also contains the edge $\{a_r^z, u_c(X_i^{k,j})\}$.

This concludes the construction of the graph $G_I$ (see \cref{fig:mim-lb} for an illustration). 
We also define the budget $b_I = 2 b_0 + (n+d)\cdot m(3n+1)$, where $b_0 = 6 \cdot nm(3n+1) + (3n+1)\sum_{j \in [m]} {h_j \choose 2} d^2$ is the number of blocked edges in $G_I$.
 
\begin{figure}[t]
    \centering
    \includegraphics[width=0.2\textwidth]{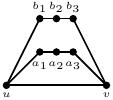}
    \caption{A blocked $\chi(u, v)$ edge between the vertices $u$ and $v$.}  
    \label{fig:mim-blocked-edge}
\end{figure} 

The useful property of the blocked edge gadget can now be stated as follows:
\begin{lemma}\label{mim::lem:blocked-edges}
    Let $G$ be a graph, let $u \neq v$ be two vertices of $G$, and let $G'$ be the graph arising from $G$ by adding a blocked edge $\chi(u,v)$ between two vertices $u \neq v$ of $G$. First, every induced matching of $G$ whose edges are incident with at most one of the vertices $u, v$ can be extended to an induced matching of $G'$ by adding two edges of $\chi(u, v)$ to it, and also in the arising matching, at most one of the vertices $u, v$ is incident with an edge of this matching. 
    Second, every maximum induced matching of $G'$ contains exactly two edges of $\chi(u, v)$ and the edges of $M$ are incident with at most one of the vertices $u,v$.  
\end{lemma}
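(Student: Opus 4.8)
The plan is to analyze the structure of induced matchings inside a single copy of the blocked-edge gadget, exploiting that the six internal vertices of $\chi$ have all of their neighbours inside $\chi$. Write $\chi(u,v)$ as the chord $\{u,v\}$ together with two internally disjoint $u$--$v$ paths $u\,p_1\,p_2\,p_3\,v$ and $u\,q_1\,q_2\,q_3\,v$, where $p_1,p_2,p_3,q_1,q_2,q_3$ are the internal vertices.

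For the first claim (the easy direction), given an induced matching $M$ of $G$ whose edges avoid, say, the vertex $v$, I would extend $M$ by the two edges $\{p_2,p_3\}$ and $\{q_2,q_3\}$ (if it is $u$ that is untouched instead, use the symmetric choice $\{p_1,p_2\}$ and $\{q_1,q_2\}$). The only neighbours that the four new internal vertices have outside themselves are $p_1,q_1$, which are internal and hence not in $V(M)$, and $v$, which is not in $V(M)$ by assumption; so no edge of $G'$ joins a vertex of $M$ to an endpoint of a new edge, and the two new edges are non-adjacent to each other. Since both new edges are incident to neither $u$ nor $v$, the ``at most one of $u,v$ is touched'' property is preserved.

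For the second claim I would proceed in four steps. \emph{(i) Upper bound.} Every edge of any induced matching $M$ of $G'$ that is incident to an internal vertex lies in $\chi$; partitioning the non-chord edges of $\chi$ into the two ``path families'' of four edges each, a short inspection of the $\binom{4}{2}$ pairs in one family shows any two distinct edges either share a vertex or are joined by an edge of $\chi$ — including the subtle pair $\{u,p_1\},\{p_3,v\}$, which is blocked by the chord $\{u,v\}$ — so at most one edge per family lies in $M$; and if $\{u,v\}\in M$ the chord forces $M\cap E(\chi)=\{\{u,v\}\}$. Hence $|M\cap E(\chi)|\le 2$ always. \emph{(ii) A maximum $M$ avoids the chord:} otherwise delete $\{u,v\}$ and re-insert $\{p_1,p_2\},\{q_1,q_2\}$, which is legal since no other edge of $M$ then touches $u$ or $v$, contradicting maximality. \emph{(iii) ``At most one of $u,v$'':} since $\{u,v\}$ is a genuine edge of $G'$ and $\{u,v\}\notin M$, the vertices $u$ and $v$ cannot both be matched in the induced matching $M$ (a chord between two distinct matching edges is forbidden). \emph{(iv) Exactly two:} split $M=M_{\mathrm{out}}\cup M_{\mathrm{in}}$ with $M_{\mathrm{in}}=M\cap E(\chi)$; because a maximum $M$ avoids the chord, $M_{\mathrm{out}}$ is an induced matching of $G$ touching at most one of $u,v$, so by the first claim it extends inside $G'$ by two gadget edges, and maximality forces $|M_{\mathrm{in}}|\ge 2$, hence $|M_{\mathrm{in}}|=2$ by (i).

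The main obstacle I anticipate is getting the upper-bound case analysis exactly right: in particular remembering that the chord $\{u,v\}$ also blocks the otherwise innocuous-looking pair of ``end edges'' lying on the \emph{same} path, and treating the case $\{u,v\}\in M$ separately. A secondary point requiring care is that $G'[V(G)]$ may differ from $G$ precisely by the chord $\{u,v\}$, so that $M_{\mathrm{out}}$ is guaranteed to be an induced matching of $G$ only after step (ii) establishes that a maximum $M$ avoids that chord.
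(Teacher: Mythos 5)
Your proof is correct and follows essentially the same strategy as the paper's: the same extension edges for the first claim, the same swap of the chord $\{u,v\}$ for two path edges to show a maximum matching avoids it, and a replace-by-two-gadget-edges argument forcing $|M\cap E(\chi)|=2$. The only noticeable difference is that you spell out the upper bound $|M\cap E(\chi)|\le 2$ in full (including that the chord blocks the ``end-edge'' pair $\{u,p_1\},\{p_3,v\}$ on a single path), which the paper dispatches with ``straight-forward to verify''.
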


\begin{proof}
    Let $u, a_1, a_2, a_3, v$ and $u, b_1, b_2, b_3, v$ be the two paths of length four between $u$ and $v$ in $\chi(u, v)$ (see \cref{fig:mim-blocked-edge} for an illustration). 

    First, let $M$ be an induced matching of $G$ that is incident with at most one of $u$ and $v$. 
    Without loss of generality, we assume that $v\notin V(M)$. 
    Then $M \cup \{\{a_2, a_3\}, \{b_2, b_3\}\}$ is also an induced matching arising from $M$ by adding two edges of $\chi(u, v)$ to it---this matching is induced because the neighbors of $a_2, a_3$ in $G'$ are $a_1$ and $v$ and we have $a_1, v \notin V(M)$ (and analogously for $b_2, b_3$).
    
    On the other hand, consider a maximum induced matching $M$ of $G'$.
    It is straight-forward to verify that $\chi(u,v)$ does not contain any induced matching of cardinality larger than two.
    Since $\chi(u, v)$ is an induced subgraph of $G'$, the induced matching $M$ contains at most two edges of $\chi(u, v)$.
    
    First, suppose that $M$ contains edges incident with both $u$ and $v$.
    If the edge incident with $u$ in $M$ would be not the same as the edge incident with $v$ in $M$, then the matching $M$ would not be induced as witnessed by the edge $\{u,v\}$ in $\chi(u,v)$.
    Therefore, in this case $M$ contains the edge $\{u, v\}$.
    The neighborhood of $\{u,v\}$ in $\chi(u, v)$ is $a_1, a_2, b_1, b_2$ so since $M$ is an induced matching, no further edge of $\chi(u, v)$ belongs to $M$.
    But then, for example, $(M \setminus \{\{u,v\}\}) \cup \{\{a_1, a_2\}, \{b_1, b_2\}\}$ is an induced matching of larger cardinality than $M$ contradicting the fact that $M$ is maximum.
    
    Thus, at most one of the vertices $u$ and $v$ is incident with the edges in $M$.
    Without loss of generality, we may assume that $v \notin V(M)$ holds.
    Let $t$ be the number of edges of $\chi(u,v)$ that belong to $M$ and $M^*$ denote the matching arising from $M$ by removing all edges from $\chi(u, v)$ from it.
    Then the set $M^* \cup \{\{a_2, a_3\}, \{b_2, b_3\}\}$ is also an induced matching of $G'$ of size $|M| - t + 2$.
    Hence, if $t \leq 1$ would hold, then $M$ would be not a maximum induced matching.
    So we have $t = 2$ and this concludes the proof of the second claim.
\end{proof}

\begin{lemma}\label{mim::lem:sat-to-sol}
    If $I$ is satisfiable, then $G_I$ admits an induced matching of size $b_I$.
\end{lemma}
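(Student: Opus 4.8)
The plan is to take a satisfying assignment $\pi \colon \{x_1, \dots, x_n\} \to \{0,1,2\}$ of $I$ and assemble an induced matching $M$ of $G_I$ of size exactly $b_I$ from three vertex-disjoint parts: a ``core'' part living on the paths $P_i$, a ``constraint'' part connecting constraint gadgets to path gadgets, and the contribution of the blocked-edge gadgets. First I would put every path gadget $X_i^{k,j}$ into the state $\pi(x_i)$, meaning that the core part leaves the vertex $v_{\pi(x_i)}(X)$ unmatched in every gadget $X$ of $P_i$, while all $u$-vertices are left unmatched by the core. Concretely: if $\pi(x_i)=0$ take the edge $v_1(X)v_2(X)$ in each gadget $X$ of $P_i$; if $\pi(x_i)=2$ take $v_0(X)v_1(X)$ in each gadget; and if $\pi(x_i)=1$ take all the ``inter-gadget'' edges $v_2(X)v_0(X')$ between consecutive gadgets $X,X'$ of $P_i$, together with the pendant edge joining $g_i$ to the exit vertex of the last gadget of $P_i$. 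A short inspection of $P_i$ (whose edges are only the two-edge paths $v_0-v_1-v_2$ inside the gadgets, the inter-gadget edges, and the pendant edge to $g_i$) shows that in each case the core part of $P_i$ is an induced matching of $P_i$ of size exactly $m(3n+1)$ avoiding $v_{\pi(x_i)}(X)$ and all $u$-vertices; here one sees the role of $g_i$, since in state $1$ one inter-gadget edge is ``missing'' at the end of $P_i$ and the pendant edge restores the count. Summed over $i$, the core part has size $n\cdot m(3n+1)$.

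Next I would add the constraint part. For each pair $(k,j)$ with $k\in[3n+1]$, $j\in[m]$, satisfiability of $C_j$ yields $r^\ast\in[h_j]$ with $\alpha_{r^\ast}=(\pi((V_j)_1),\dots,\pi((V_j)_d))$; for each $z\in[d]$, writing $x_i=(V_j)_z$ and $c=(\alpha_{r^\ast})_z=\pi(x_i)$, I add the edge $\{a_{r^\ast}^z, u_c(X_i^{k,j})\}$. This contributes exactly $d$ edges per pair $(k,j)$, hence $d\cdot m(3n+1)$ in total; these edges use pairwise distinct $a$-vertices (one tuple $\alpha_{r^\ast}$ per constraint gadget) and pairwise distinct $u$-vertices (the $d$ variables of a constraint are distinct, so the corresponding path gadgets are distinct), none of which is touched by the core part. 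Finally, to each blocked edge $\chi(u,v)$ of $G_I$ I add the two ``middle'' edges of its two length-four paths that avoid whichever of $u,v$ is already matched by the core and constraint parts. This is legitimate because for every blocked edge at most one of its two endpoints is used by the already-chosen edges: for $\chi(u_c,u_{c'})$ inside a path gadget only $u_{\pi(x_i)}$ can be matched; for $\chi(v_c,u_c)$ the core never matches $v_{\pi(x_i)}$ and a constraint edge can match only $u_{\pi(x_i)}$; for $\chi(a_r^z,a_{r'}^{z'})$ inside a constraint gadget only the vertices of $A_{r^\ast}$ are matched. By \cref{mim::lem:blocked-edges} this adds exactly two edges per blocked edge, i.e.\ $2b_0$ edges, so $|M|=2b_0+n\cdot m(3n+1)+d\cdot m(3n+1)=b_I$.

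It then remains to check that $M$ is an induced matching, which I expect to be the main (if routine) obstacle. That $M$ is a matching is immediate from the three parts being vertex-disjoint and each being a matching. For inducedness I would show that no edge of $G_I$ joins endpoints of two distinct edges of $M$ via a short case analysis resting on three observations: (i) the internal vertices of a blocked edge have no neighbours outside it, so a chosen blocked-edge edge can conflict only through the endpoints $u,v$, which are unmatched for the two edges we picked; (ii) within each path the core edges already form an induced matching of $P_i$, distinct paths share no vertex and no edge, and the constraint part uses only $a$- and $u$-vertices, so the only possible core/constraint conflict would be an edge $v_c(X)u_c(X)$ of a blocked edge $\chi(v_c,u_c)$, but whenever $u_c(X)$ is used by a constraint edge one has $c=\pi(x_i)$ and the core deliberately avoids $v_{\pi(x_i)}(X)$; (iii) among constraint edges (and between constraint edges and blocked-edge edges) no conflict arises because each $a_r^z$ is adjacent to exactly one $u$-vertex and gadgets of different columns share no edges. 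Assembling these, $M$ is the desired induced matching of $G_I$ of cardinality $b_I$.
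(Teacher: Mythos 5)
Your construction coincides with the paper's: the same state conventions on the path gadgets (state $c$ leaves $v_c$ free, with the pendant edge to $g_i$ absorbing the off-by-one in state~$1$), the same choice of constraint edges via the satisfied tuple $\alpha_{r^\ast}$, and the same final application of the blocked-edge lemma to add $2b_0$ edges. You spell out the inducedness verification in slightly more detail than the paper (which calls it straightforward), but the argument is the same; this is correct.
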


\begin{proof}
    Let $\tau \colon \{x_1, \dots, x_n\} \to [B]_0$ be an assignment satisfying $I$. 
    We now construct an induced matching $M$ of cardinality $b_I$ in $G_I$.
    For this, we start with $M$ being an empty set.
    Next for every $i \in [n]$ we proceed as follows:
    \begin{itemize}
     \item If $\tau(x_i) = 0$, then for every $k \in [3n+1]$ and every $j \in [m]$, we add the edge $\{v_1(X_i^{k,j}), v_2(X_i^{k,j})\}$ to $M$. 
     \item If $\tau(x_i) = 2$, then for every $k \in [3n+1]$ and every $j \in [m]$, we add the edge $\{v_0(X_i^{k,j}), v_1(X_i^{k,j})\}$ to $M$.
     \item If $\tau(x_i) = 1$, the we add the following edges to $M$.
     First for every $k \in [3n+1]$ and every $j \in [m-1]$, we add the edge $\{v_2(X_i^{k,j}), v_0(X_i^{k,j+1})\}$ to $M$.
     Also for every $k \in [3n]$ we add the edge $\{v_2(X_i^{k,m}), v_0(X_i^{k+1,1})\}$ to $M$.
     And finally, we add the edge $\{v_2(X_i^{3n+1,m}), g_i\}$.
     \end{itemize}
     Observe that for the currently set $M$ and $i \in [n]$, if $\tau(x_i) = a$ for some $a \in [2]_0$, then for any $k \in [3n+1]$ and $j \in [m]$, neither the vertex $u_a(X_i^{k,j})$ nor any of its neighbors is incident with an edge of $M$.
     Also $M$ constructed so far is clearly an induced matching of $G_I$ of cardinality $nm(3n+1)$.
     
     Now for every $j \in [m]$ we proceed as follows.
     Let $R_j = \{\alpha_1, \dots, \alpha_{h_j}\}$.
     Since $\tau$ is a satisfying assignment of $I$, there exists an in index $r \in [h_j]$ so that $\tau(V_j) = \alpha_r$.
     For every $z = 1, \dots, d$ we proceed as follows.
     Let $i = (V_j)_p$.
     By construction, for every $k \in [3n+1]$, there exists the edge $\{u_{\tau(x_i)} (X_i^{j,k}), a_r^z(\mathcal{C}_j^k)\}$---we add this edge to $M$.
     
     For the set $M$ constructed so far, it is straight-forward to verify that $M$ is an induced matching of $G_I$ and it has cardinality $m(3n+1)n + m(3n+1)d$.
     Furthermore, for every blocked edge in $G_I$, at most one end-vertex of this blocked edge is incident with an edge of $M$.
     Recall that $b_0$ is, by definition, the number of blocked edges in $G_I$.
     Now by applying \cref{mim::lem:blocked-edges} to every blocked edge of $G_I$ one after another, we can add $2b_0$ edges to $M$ so that the arising edge set is an induced matching of $G_I$ of cardinality $m(3n+1)n + m(3n+1)d + 2b_0 = b_I$.
\end{proof}

\begin{lemma}\label{mim::lem:form-of-solution}
    The size of a maximum induced matching of $G_I$ is at most $b_I$. Moreover, if $G_I$ contains an induced matching $M$ of size $b_I$, then all of the following properties apply:
    \begin{itemize}
        \item For every $i \in [n]$, 
        let $P_i^0$ be the path in $G_I$ induced by the vertices 
        \[
         \{g_i\} \cup \bigcup_{j \in [m], k \in [3n+1]} \left\{v_0(X_i^{k,j}), v_1(X_i^{k,j}), v_2(X_i^{k,j})\right\}.
        \] 
        Then $M$ contains exactly $m(3n+1)$ edges of $P_i^0$.
        \item For every $k \in [3n+1]$ and $j \in [m]$, there exists a value $r\in[h_j]$, such that first, the set $M$ contains all edges between $A_r(\mathscr{C}_j^k)$ and $U$ present in $G_I$ and second, for every $r' \neq r \in [h_j]$, the set $M$ contains no edges between $A_{r'}(\mathscr{C}_j^k)$ and $U$ present in $G_I$.
        \item The matching $M$ contains exactly two edges in each blocked edge gadget $\chi$ of $G_I$.
        \item The matching $M$ does not contain any edges of $G_I$ other than the ones described in the previous items.
    \end{itemize}
\end{lemma}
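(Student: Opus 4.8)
The plan is to derive both claims from a single counting argument. First I would partition the edge set of $G_I$ into three classes: the class $\mathcal{B}$ of all edges lying inside the $b_0$ blocked edge gadgets of $G_I$ (the direct edge $\{x,y\}$ together with the eight edges on the two length-$4$ paths of each copy of $\chi$); the class $\mathcal{P}$ of all \emph{path edges}, i.e.\ the edges $\{v_0(X),v_1(X)\}$ and $\{v_1(X),v_2(X)\}$ of every path gadget $X$, the edges joining $v_2$ of a path gadget to $v_0$ of the next path gadget on the same path, and the edges $\{v_2(X_i^{3n+1,m}),g_i\}$; and the class $\mathcal{C}$ of all constraint edges $\{a_r^z(\mathscr{C}_j^k),u_c(X_i^{k,j})\}$. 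Going through the construction shows that $\mathcal{B}$, $\mathcal{P}$, $\mathcal{C}$ are pairwise disjoint and cover $E(G_I)$; moreover $\mathcal{P}$ is exactly the edge set of the vertex-disjoint union $\bigcup_{i\in[n]}P_i^0$, and each $P_i^0$ is a path on $3m(3n+1)+1$ vertices.

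Next, for an arbitrary induced matching $M$ of $G_I$ I would bound $|M\cap\mathcal{B}|$, $|M\cap\mathcal{P}|$, $|M\cap\mathcal{C}|$ separately. Since each blocked edge gadget $\chi$ is an induced subgraph of $G_I$ with induced matching number $2$ (established inside the proof of \cref{mim::lem:blocked-edges}) and the edge sets of the $b_0$ gadgets partition $\mathcal{B}$, we get $|M\cap\mathcal{B}|\le 2b_0$. Since $M\cap\mathcal{P}$ is an induced matching of $\bigcup_i P_i^0$, a disjoint union of $n$ paths, and a path on $3t+1$ vertices has induced matching number $t$, we get $|M\cap\mathcal{P}|\le n\cdot m(3n+1)$. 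For $\mathcal{C}$, the key observation is that for every fixed constraint gadget $\mathscr{C}_j^k$, all edges of $M\cap\mathcal{C}$ whose $a$-endpoint lies in $\mathscr{C}_j^k$ in fact have this endpoint in a single set $A_r$: two such edges with $a$-endpoints $a_r^z$ and $a_{r'}^{z'}$, $r\ne r'$, are matched by $M$ to $U$-vertices, hence belong to distinct edges of $M$, so the edge $\{a_r^z,a_{r'}^{z'}\}\in E(G_I)$ would contradict $M$ being induced. As each set $A_r(\mathscr{C}_j^k)$ sends exactly $d$ edges to $U$, there are $m(3n+1)$ constraint gadgets, and every $\mathcal{C}$-edge has exactly one $a$-endpoint in exactly one $\mathscr{C}_j^k$, this yields $|M\cap\mathcal{C}|\le d\cdot m(3n+1)$. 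Summing the three bounds gives $|M|\le 2b_0+(n+d)m(3n+1)=b_I$, which is the first statement.

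Finally, suppose $|M|=b_I$; then all three inequalities above must be equalities. Equality in the $\mathcal{B}$-bound, together with the fact that no single gadget contributes more than $2$, forces $M$ to contain exactly two edges of every blocked edge gadget (third bullet). Equality in the $\mathcal{P}$-bound, together with the per-path bound $|M\cap E(P_i^0)|\le m(3n+1)$, forces $|M\cap E(P_i^0)|=m(3n+1)$ for every $i$ (first bullet). Equality in the $\mathcal{C}$-bound forces every constraint gadget $\mathscr{C}_j^k$ to contribute exactly $d$ edges of $M$, and by the single-$A_r$ observation these are exactly the $d$ edges from one set $A_r$ to $U$, i.e.\ precisely all edges between $A_r(\mathscr{C}_j^k)$ and $U$ present in $G_I$, with no edge between $A_{r'}(\mathscr{C}_j^k)$ and $U$ used for $r'\ne r$ (second bullet). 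Since $M=(M\cap\mathcal{B})\cup(M\cap\mathcal{P})\cup(M\cap\mathcal{C})$ and all three parts have just been described, $M$ contains no other edges (fourth bullet).

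I expect the only genuinely delicate points to be the bookkeeping that the three edge classes really partition $E(G_I)$ (so that the bounds add up without double counting, and so that $\mathcal{P}$ coincides with $E(\bigcup_i P_i^0)$) and the single-$A_r$ argument for $\mathcal{C}$; once these are in place, reading off the four bullets from the tightness of the three bounds is routine.
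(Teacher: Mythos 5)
Your proposal is correct and follows essentially the same approach as the paper's proof: the same three-way partition of $E(G_I)$ into blocked-gadget edges, path edges, and constraint edges, the same per-part upper bounds, and the same equality argument to read off the four bullets. The only superficial difference is that you bound the constraint edges via the adjacency $\{a_r^z,a_{r'}^{z'}\}\in E(G_I)$ directly, while the paper cites \cref{mim::lem:blocked-edges}; both are fine.
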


\begin{proof}
    Let $M$ be a maximum induced matching of $G_I$. 
    First, we observe that the edges of $G_i$ can be partitioned into the sets
    \begin{equation}\label{eq:edge-partition-tp}
        \{E(P^0_i)\}_{i\in[n]} \cup \{E(\chi)\}_{\chi \text{ is a blocked edge in } G_I} \cup \{E_j^k\}_{j\in[m],k\in[3n+1]}
    \end{equation}
    where 
    $E_j^k$ denotes the set of all edges incident with the vertices of $\mathscr{C}_j^k$.
    
    For every $i\in[n]$, the path $P_i^0$ is a simple path consisting of $3m(3n+1)$ edges. 
    Observe that if two edges of this path belong to $M$, then there are at least two edges of this path between them---otherwise, $M$ would not be an induced matching.
    Therefore, of any three consecutive edges of the path $P_i^0$, the set $M$ contains at most one.
    And therefore, $M$ contains at most $m(3n+1)$ edges of $P_i^0$.
    
    Now recall that by \cref{mim::lem:blocked-edges}, every blocked edge gadget of $G_I$ contains at most two edges of $M$, and there are $b_0$ blocked edges in $G_I$.
    Then, $M$ contains at most $2b_0$ edges of blocked edge gadgets.

    Further, consider $k \in [3n+1]$ and $j \in [m]$. 
    Recall that for any $r \neq r' \in [h_j]$ and any $z, z' \in [d]$, the vertices $a_r^z$ and $a_{r'}^{z'}$ are end-points of a blocked edge so by \cref{mim::lem:blocked-edges}, at most one of them is incident with an edge of $M$.
    So there exists at most one index $r \in [h_j]$ such that the vertices of $A_r$ are incident with edges in $M$.
    Finally, recall that by construction, every vertex in $A_r$ has exactly one neighbor in $U$ and we have $|A_r| = d$. 
    Recall that \eqref{eq:edge-partition-tp} is a partition of the edge set of $G_I$. 
    Therefore, first $M$ has the cardinality of at most:
    \begin{align*}
        (m(3n+1)) \cdot n + 2 b_0 + m(3n+1) \cdot d = b_I.
    \end{align*}
    And furthermore, if an induced matching of $G_I$ has the cardinality of exactly $b_I$, then for every set in this partition, the intersection of the induced matching with this set is exactly the upper bound we argued above, i.e., the second claim of the lemma holds as well. 
\end{proof}

\begin{lemma}\label{mim::lem:sol-to-sat}
    If $G_I$ contains an induced matching $M$ of size $b_I$, then $I$ is satisfiable.
\end{lemma}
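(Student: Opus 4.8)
The plan is to read off a satisfying assignment of $I$ from $M$, exploiting the rigidity forced by \cref{mim::lem:form-of-solution}. Fix an induced matching $M$ of $G_I$ with $|M|=b_I$. By \cref{mim::lem:form-of-solution}, $M$ is maximum, it contains exactly $m(3n+1)$ edges of each path $P_i^0$, exactly two edges of each blocked edge gadget, for every $j\in[m]$ and $k\in[3n+1]$ a single row index $r(j,k)\in[h_j]$ such that $M$ contains precisely the $d$ edges of $G_I$ between $A_{r(j,k)}(\mathscr{C}_j^k)$ and $U$, and no further edges. For a path gadget $X$ with inner path $v_0(X)\,v_1(X)\,v_2(X)$ put $e^0(X)=\{v_0(X),v_1(X)\}$ and $e^2(X)=\{v_1(X),v_2(X)\}$; since these share $v_1(X)$, I define the \emph{state} of $X$ to be $0$ if $e^2(X)\in M$, to be $2$ if $e^0(X)\in M$, and to be $1$ otherwise.

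First I would analyse how states behave along one path $P_i$. Its path $P_i^0$ has $3m(3n+1)$ edges, which I group into $m(3n+1)$ consecutive triples, one per path gadget, each consisting of the two inner edges of that gadget together with the connector edge leaving it (for the last gadget, the edge to $g_i$). As $M$ is induced it meets each triple in at most one edge, and since it meets $P_i^0$ in exactly $m(3n+1)$ edges it meets each triple in exactly one; in particular the connector leaving a gadget $X$ lies in $M$ iff $X$ has state $1$. A short case distinction using only that $M$ is an induced matching then shows: if $X$ has state $1$, the next gadget on $P_i$ has state $1$; and a gadget of state $2$ is preceded only by a gadget of state $2$. Hence the state sequence along $P_i$ has the shape $2\cdots2\;0\cdots0\;1\cdots1$ and changes at most twice.

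Next I would pick a ``good'' segment by pigeonhole. For $k\in\{2,\dots,3n+1\}$ let the window $W_i^k$ consist of the last gadget of $P_i^{k-1}$ together with all gadgets of $P_i^{k}$. The position intervals that govern constancy of $W_i^2,\dots,W_i^{3n+1}$ are pairwise disjoint, so each of the at most two state changes of $P_i$ spoils at most one of these windows; thus at most $2n$ of the $3n$ indices $k\in\{2,\dots,3n+1\}$ are bad for some path, and since $3n>2n$ there is an index $k^*$ with $W_i^{k^*}$ constant for every $i\in[n]$. Let $a_i\in[2]_0$ be the common state of the gadgets of $W_i^{k^*}$ and set $\tau(x_i)=a_i$. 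Crucially, every gadget $X_i^{k^*,j}$ with $j\in[m]$ has a predecessor on $P_i$ which still lies in $W_i^{k^*}$ and hence also has state $a_i$.

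Finally I would verify $\tau$ satisfies $I$; the core is the sub-claim: if $u_c(X)\in V(M)$ for a gadget $X=X_i^{k^*,j}$ and $c\in[2]_0$, then $X$ has state $c$. Since $u_c(X)$ is an endpoint of the blocked edge $\chi(v_c(X),u_c(X))$, \cref{mim::lem:blocked-edges} gives $v_c(X)\notin V(M)$. If $c=1$ this already forces state $1$ (states $0$ and $2$ place $v_1(X)$ in $V(M)$); if $c=2$ it forces state $2$ (states $0$ and $1$ place $v_2(X)$ in $V(M)$); if $c=0$, then $v_0(X)\notin V(M)$ excludes state $2$, while state $1$ would give $a_i=1$, so the predecessor of $X$ inside $W_i^{k^*}$ has state $1$ and its outgoing connector---incident to $v_0(X)$---lies in $M$, contradicting $v_0(X)\notin V(M)$. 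Granting the sub-claim, fix a constraint $C_j$, set $r=r(j,k^*)$, and for each coordinate $z\in[d]$ write $x_{i_z}=(V_j)_z$ and $c_z=(\alpha_r)_z$: the edge $\{a_r^z,u_{c_z}(X_{i_z}^{k^*,j})\}$ lies in $M$, hence $u_{c_z}(X_{i_z}^{k^*,j})\in V(M)$, and the sub-claim yields $\tau(x_{i_z})=a_{i_z}=c_z$. Thus $\tau(V_j)=(c_1,\dots,c_d)=\alpha_r\in R_j$, so $\tau$ satisfies $C_j$; as $j$ was arbitrary, $I$ is satisfiable. I expect the main obstacle to be exactly the $c=0$ case of the sub-claim---separating state $0$ from state $1$ when $u_0(X)\in V(M)$---since the local picture around a single path gadget is compatible with both, which is why one must strengthen the pigeonhole to a window rather than a bare segment so that every relevant gadget has a same-state predecessor; the state-propagation case analysis and the interval bookkeeping need care but are routine.
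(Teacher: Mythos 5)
Your proof is correct and follows essentially the same approach as the paper: define a state per path gadget from which inner edge of $P_i^0$ is used, show the state sequence along each path is monotone ($2^*0^*1^*$), pigeonhole over segments to find a row where all paths are constant, and read off a satisfying assignment using the forced constraint-gadget edges and induced-matching incompatibility with the blocked edges. Your only departures are cosmetic but tidy: you collapse the paper's fourth state $\perp$ into state $1$ and compensate by taking windows (one extra gadget to the left) in the pigeonhole, which removes the need to separately count $\perp$-gadgets when selecting the good segment.
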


\begin{proof}
    For every $i \in [n]$ we call a vertex $v$ of $P_i^0$ \emph{$M$-free} if 
    $v$ is not incident with any edge of $M \cap E(P_i^0)$.
    Let $X$ be a path gadget of $G_I$.
    Note that at least one of the vertices $v_0(X), v_1(X), v_2(X)$ is $M$-free---otherwise, $G[M]$ would contain a path on two edges contradicting the definition of an induced matching.
    We define the value $s_M(X)$ as follows.
    If at least two vertices of $v_0(X), v_1(X), v_2(X)$ are $M$-free, we set $s_M(X)=\perp$.
    Otherwise, we let $s_M(X)$ be the unique value $c \in [2]_0$ such that $v_c(X)$ is $M$-free.
    We call $s_M(X)$ the \emph{state} of $X$ defined by $M$.
    
    Let $i \in [n]$ be arbitrary but fixed.
    By \cref{mim::lem:form-of-solution}, $M$ is a maximum induced matching of $G_I$, and $M$ contains $m(3n+1)$ edges of $P_i^0$. 
    The set $M$ is an induced matching and therefore, of any three consecutive edges of the path $P_i^0$, $M$ contains at most one.
    We have $|E(P_i^0)| = 3m(3n+1)$ and $|M \cap E(P_i^0)| = m(3n+1)$ so if we (uniquely) partition the path $P_i^0$ into the sets consisting of three consecutive edges each, and denote this partition by $\mQ$, then each of the sets in $\mQ$ contains exactly one edge of $M$.
    
    Now we make some observations to show that the sequence of states of the path gadgets on the path $P_i$ is very restricted. 
    First, let $X$ be a path gadget on $P_i$ and let $w$ denote the vertex following $v_2(X)$ on $P_i^0$.
    We define the set $E^X = \{\{v_0(X), v_1(X)\}, \{v_1(X), v_2(X)\}, \{v_2(X), w\}\}$.
    Since this set belongs to $\mQ$, exactly one of the edges in $E^X$ belongs to $M$.
    \begin{itemize}
        \item If we have $\{v_0(X), v_1(X)\} = E^X \cap M$, then by definition we have $s_M(X) = 2$.
        \item If we have $\{v_1(X), v_2(X)\} = E^X \cap M$, then by definition we have $s_M(X) = 0$.
        \item If we have $\{v_2(X), w\} = E^X \cap M$, then 
        we have $s_M(X) \in \{1, \perp\}$: this is because the vertex $v_1(X)$ is $M$-free by definition of an induced matching while the vertex $v_0(X)$ may be incident with an edge of $M$ whose other end-vertex belongs to the gadget preceding $X$ on $P_i$.
    \end{itemize}
    Also note that the above implications are actually if-and-only-if relations, as $\{2\}, \{0\}, \{1, \perp\}$ partition the set of possible states.
    
    Now suppose that $s_M(X) \in \{\perp, 1\}$ holds.
    We first claim that if there exists a gadget $Y$ directly following $X$ on $P_i$, then we have $s_M(Y) = 1$.
    If $Y$ would have the state $2$ or $0$, then by above observations $G[V(M)]$ would contain a path on two ($v_2(X), v_0(Y), v_1(Y)$) or three ($v_2(X), v_0(Y), v_1(Y), v_2(Y)$) edges, respectively, contradicting the definition of an induced matching.
    Thus it holds that $s_M(Y) \in \{\perp, 1\}$.
    Let now $w'$ be the vertex following $v_2(Y)$ on $P_i^0$.
    Above we have shown that then $E^X \cap M = \{\{v_2(X), w = v_0(X)\}\}$ and $E^Y \cap M = \{\{v_2(Y), w'\}\}$ hold. 
    Thus, none of the vertices $v_0(Y)$ and $v_2(Y)$ is $M$-free 
    and we indeed have $s_M(Y) = 1$. 
    
    Next, suppose that $s_M(X) = 0$ holds.
    In this case we claim that if there exists a gadget $Y$ directly following $X$ on $P_i$, then we have $s_M(Y) \neq 2$.
    So suppose that we have $s_M(Y) = 2$.
    Then it holds that $\{v_1(X), v_2(X)\} = E^X \cap M$ and $\{w = v_0(Y), v_1(Y)\} = E^X \cap M$ and therefore $G[V(M)]$ contains a path $v_1(X), v_2(X), v_0(Y), v_1(Y)$ on three edges contradicting the definition of an induced matching.
    
    So altogether, this implies that along the path $P_i$ we first see the path gadgets of state $2$, then the path gadgets of state $0$, then at most one path gadget of state $\perp$, and then the path gadgets of state $1$ where each of these subsequences may, in general, be empty.
    More formally, this can be stated as follows: on the path $P_i$, there exist at most three gadgets $X$ satisfying at least one of the following properties:
    \begin{itemize}
        \item $s_M(X) = \perp$,
        \item or there exists a path gadget $Y$ such that $Y$ directly follows $X$ on $P_i$ and we have $s_M(X) \neq s_M(Y)$.
    \end{itemize}
    Hence, on the paths $P_1, \dots, P_n$ at most $3n$ such gadgets exist.
    Therefore, there exists an index $k \in [3n+1]$ such that for every $i \in [n]$, we have
    \[
        s_M(X_i^{k,1}) = s_M(X_i^{k,2}) = \dots = s_M(X_i^{k,m}) \neq \perp.
    \]

    Now we are finally ready to define the assignment $\tau \colon \{x_1, \dots, x_n\} \to [2]_0$ such that $\tau(v_i)=s_M(X_i^{k,1})$ holds for every  $i\in[n]$. 
    It remains to show that $\tau$ satisfies $I$.
    For this let $j \in [m]$ be arbitrary.
    Let $V_j = (x_{q_1}, \dots, x_{q_d})$ be the variables occurring in the constraint $C_j$ and let $R_j = \{\alpha_1, \dots, \alpha_{h_j}\}$ be the set of assignments satisfying $C_j$.
    By \cref{mim::lem:form-of-solution}, there exists a value $r \in [h_j]$ such that $M$ contains all edges between $A_r(\mathcal{C}_j^k)$ und $U$ present in $G_I$.
    Recall that for every $z \in [d]$, the vertex $a_r^z(\mathcal{C}_j^k)$ is adjacent only to the vertex $u_{c_z}(X_{q_z}^{k,j})$ from the set $U$ where $(c_1, \dots, c_d) = \alpha_r$ is the $r$.th assignment satisfying $C_j$.
    I.e., we have $\{a_r^z(\mathcal{C}_j^k), u_{c_z}(X_{q_z}^{k,j})\} \in M$ for every $z \in [d]$.
    It remains to show that we have $\tau(x_{q_z}) = c_z$ for every $z \in [d]$.
    
    So suppose we have 
    \[
        c_z \neq \tau(x_{q_z}) = s_M(X_{q_z}^{k,1}) = s_M(X_{q_z}^{k,j}).
    \]
    First, the choice of $k$ implies that $s_M(X_{q_z}^{k,j}) \neq \perp$ holds, i.e., we have $s_M(X_{q_z}^{k,j}) \in [2]_0$.
    Then $\perp \neq s_M(X_{q_z}^{k,j}) \neq c_z$ implies that the vertex $v_{c_z}(X_{q_z}^{k,j})$ is not $M$-free.
    By definition of being not $M$-free, one of the edges, say $e$, of $P_i^0$ incident with $v_{c_z}(X_{q_z}^{k,j})$ belongs to $M$. 
    Recall that the edge $v_{c_z}(X_{q_z}^{k,j}) u_{c_z}(X_{q_z}^{k,j})$ belongs to $G_I$ (as it is contained in the blocked edge between these vertices).
    But then the subgraph induced on the end-vertices of the edges $e$ and $a_r^z(\mathcal{C}_j^k), u_{c_z}(X_{q_z}^{k,j})$, both in $M$, contains a path on three edges contradicting the fact that $M$ is an induced matching.
    Hence, we have 
    \[
        \tau(x_{q_1}, \dots, x_{q_d}) = (c_1, \dots, c_d) = \alpha_r \in R_j,
    \]
    i.e., $\tau$ satisfies the constraint $C_j$ as desired.
\end{proof}

\begin{lemma}\label{mim::lem:cutwidth}
    The cutwidth of $G_I$ is upper-bounded by $n+\bigoh_d(1)$. 
    Furthermore, the graph $G_I$ and a linear arrangement of $G_I$ of cutwidth at most $n+\bigoh_d(1)$ can be computed from $I$ in polynomial time.
\end{lemma}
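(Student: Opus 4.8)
The plan is to exhibit, in polynomial time, a linear arrangement of $G_I$ in which every cut is crossed by at most $n$ ``inter-column'' edges together with a number of ``local'' edges that depends only on $d$. For this I first organize $G_I$ into $m(3n+1)$ \emph{columns}: the $(k,j)$-th column (for $k\in[3n+1]$, $j\in[m]$) consists of the path gadgets $X_1^{k,j},\dots,X_n^{k,j}$ together with the constraint gadget $\mathscr{C}_j^k$. The first step is to classify the edges of $G_I$. Since the internal vertices of every blocked-edge gadget have no neighbours outside that gadget, and $\mathscr{C}_j^k$ is adjacent only to path gadgets of the form $X_i^{k,j}$, every edge falls into exactly one of three classes: \emph{local} edges have both endpoints inside a single path gadget or inside a single constraint gadget (this covers the edges of the internal path $v_0 v_1 v_2$ of each path gadget and all edges of blocked-edge gadgets, and each gadget contains only $\bigoh_d(1)$ of them); \emph{constraint} edges are the edges $\{a_r^z,u_c(X_i^{k,j})\}$, each of which stays inside column $(k,j)$, and there are at most $d\cdot h_j\le d\cdot 3^d=\bigoh_d(1)$ of them per column since $R_j\subseteq\{0,1,2\}^d$; and \emph{path} edges are the edges $\{v_2(X),v_0(X')\}$ joining two consecutive gadgets on some path $P_i$ (together with $\{v_2(X_i^{3n+1,m}),g_i\}$), which are the only edges whose two endpoints may lie in two different columns.

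Next I define the arrangement $\ell$. The columns are placed in the lexicographic order of $(k,j)$ --- this is exactly the order in which the gadgets of each path $P_i$ are visited along $P_i$. Inside the $(k,j)$-th column I place first all vertices of $X_1^{k,j}$, then all vertices of $X_2^{k,j}$, and so on up to $X_n^{k,j}$, and finally all vertices of $\mathscr{C}_j^k$; inside each path gadget I order the vertices so that $v_0$ precedes $v_2$, and I place the internal vertices of each blocked edge directly next to its endpoints. After the last column I append $g_1,\dots,g_n$. Every vertex of $G_I$ is placed exactly once, $G_I$ has polynomially many vertices, and the ordering is explicit, so $G_I$ and $\ell$ can be computed from $I$ in polynomial time.

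It remains to bound the width of $\ell$. Fix a cut $\sigma$; by construction $\sigma$ lies inside the consecutive block of at most one gadget and inside the block of at most one column. The crucial observation is that for every path $P_i$ at most one path edge of $P_i$ crosses $\sigma$: the gadgets of $P_i$ appear in $\ell$ in their traversal order along $P_i$, so those lying left of $\sigma$ form a prefix of this order; if the split between left and right occurs cleanly between two consecutive gadgets of $P_i$ then exactly the edge joining them crosses, and if it occurs inside a single gadget of $P_i$ the convention ``$v_0$ before $v_2$'' guarantees that at most one of the two path edges incident with that gadget crosses $\sigma$. Hence at most $n$ path edges cross $\sigma$. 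Every other edge crossing $\sigma$ is either local to the unique gadget that $\sigma$ passes through or a constraint edge of the unique column containing $\sigma$, and by the classification above there are only $\bigoh_d(1)$ of these. Summing the contributions gives width at most $n+\bigoh_d(1)$, which proves the lemma.

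I expect the only point needing genuine care to be the path-edge count: one has to run the short case distinction for a cut that falls just before, strictly inside, or just after the $P_i$-gadget of the current column and verify that the incoming and outgoing path edges of that gadget behave as claimed. Everything else is routine bookkeeping, resting on the fact that the lexicographic order of columns agrees with the traversal order of each $P_i$ and that all remaining edges are confined either to one gadget or to one column.
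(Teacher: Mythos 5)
Your proposal is correct and follows essentially the same approach as the paper's own proof: you place gadgets column by column in lexicographic order of $(k,j)$ (which the paper also does, with $k$ as the outer loop and $j$ as the inner loop), and you bound the cut by observing that at most one path edge per path $P_i$ crosses, giving $n$, while all remaining edges are confined to a single gadget or a single column and contribute only $\bigoh_d(1)$. Your added care about placing $v_0$ before $v_2$ within each path gadget is a nice touch that tightens the per-path bound to exactly one crossing edge (the paper omits this but the final asymptotic bound holds regardless, since a cut can lie inside at most one path gadget at a time).
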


\begin{proof}
    We describe a linear arrangement $\ell$ of $G$, and then we bound its cutwidth. 
    To construct $\ell$ we start with an empty linear arrangement and proceed as follows by always adding the vertices to the very right.
    First we iterate over $k = 1, \dots, 3n+1$.
    For a fixed value of $k$ we iterate over $j = 1, \dots, m$ as follows.
    For every $i = 1, \dots, n$, we place all vertices of $X_i^{j,k}$ consecutively.
    After processing all values of $i$, we add all vertices of $\mathcal{C}_j^k \setminus U$.
    In the very end, we add the vertices $g_1, \dots, g_n$.
    Observe that every vertex of $G_I$ was added exactly once to $\ell$---in particular, every blocked edge gadget belongs to exactly one path or constraint gadget.
    Clearly, $\ell$ and $G$ can be constructed from $I$ in polynomial time.

    For $i \in [n]$ and a non-last path gadget $X$ on $P_i$, by $f(X)$ we denote the path gadget directly following $X$ on $P_i$.
    Now we consider an arbitrary cut and bound the number of edges crossing this cut depending on the ``type'' of the edge:
    \begin{itemize}
     \item For every $i \in [n]$, at most one of the edges 
     \begin{align*}
        &\left\{\{v_2(X), v_0(f(X))\} \mid \exists (k,j) \in ([3n+1] \times [m]) \setminus \{(3n+1, m)\} \colon X = X_i^{j,k}\right\} \cup \\
        &\left\{\{v_2(X_i^{3n+1, m}), g_i\right\}
     \end{align*}
        crosses the cut.
        Therefore, totally over all $i \in [n]$, at most $n$ such edges cross the cut.
    \item There exists at most one path gadget $X$ such that the edges with both end-points in $X$ cross this cut.
    Since every path gadget contains $\bigoh(1)$ vertices, only $\bigoh(1)$ such edges cross this cut.
    \item There exists at most one pair $(k, j) \in [3n+1] \times [m]$ such that the edges with at least one end-vertex in $\mathcal{C}_j^k \setminus U$ cross this cut.
    Recall that we have $|\mathcal{C}_j^k| \in \bigoh_d(1)$ by construction and every vertex in $\mathcal{C}_j^k \setminus U$ has at most one neighbor outside $\mathcal{C}_j^k \setminus U$. 
    Therefore, at most $\bigoh_d(1)$ such edges cross this cut.
    \end{itemize}
    These cases cover all edges of $G_I$ and therefore, the cutwidth of $\ell$ is upper-bounded by $n+\bigoh_d(1)$.
\end{proof}

\begin{proof}[Proof of \cref{thm:mim}]
Assume that there exists an algorithm that solves the \Mimp problem in time $\ostar((3-\varepsilon)^{\ctw})$ for some $\varepsilon > 0$. 
Let $d$ be arbitrary.
We show that $d$-CSP-$3$ can be solved in time $\ostar((3-\varepsilon)^n)$ where $n$ denotes the number of variables contradicting SETH.
Given an instance $I$ of the $d$-CSP-$3$, 
we first construct an equivalent instance $(G_I, b_I)$ of the \Mimp problem together with a linear arrangement of $G_I$ of cutwidth at most $n+\bigoh_d(1)$ as described above. 
Then we run the above algorithm on $(G_I, b_I)$ in time $\ostar((3-\varepsilon)^{n+\bigoh_d(1)}) = \ostar((3-\varepsilon)^{n})$. 
The algorithm is correct as the instances are equivalent.
\end{proof}

\section{Max Cut}\label{sec:maxcut}

It is folklore that the \textsc{Max Cut} problem can be solved in time $\ostar(2^t)$ if the input graph is provided with a tree 
decomposition of pathwidth $p$ (see e.g., \cite{DBLP:books/sp/CyganFKLMPPS15}).
Lokshtanov et al.~\cite{DBLP:journals/talg/LokshtanovMS18} have shown that unless SETH fails, the problem cannot be solved in time $\ostar((2 - \varepsilon)^p)$ for any $\varepsilon > 0$ even if the optimal path decomposition is provided with the input.
It is well-known that for any graph, the inequality $\operatorname{pw} \leq \operatorname{ctw} + \mathcal{O}(1)$ holds (see e.g., \cite{DBLP:journals/tcs/Bodlaender98}).
In this section, we strengthen the result by Lokshtanov et al.\ by showing that, unless SETH fails, the problem cannot be solved in time $\ostar((2 - \varepsilon)^k)$ for any $\varepsilon > 0$ if the input graph is provided with a linear arrangement of cutwidth $k$.
This implies that the $\mathcal{O}^*(2^p)$ algorithm solving \textsc{Max Cut} for graphs of pathwidth $p$ is optimal for cutwidth as well.

Before providing the lower bound, let us briefly explain why the lower-bound construction by Lokshtanov et al.~\cite{DBLP:journals/talg/LokshtanovMS18} does not work as the desired lower bound for cutwidth.
In their reduction, they first define a weighted graph and after that, replace every edge $uv$ of weight $w$ by $w$ paths of length 3 between the vertices $u$ and $v$.
In their construction some weights are as large as $3n$ (where $n$ denotes the number of variables in the original instance of the SAT problem) and therefore, the arising unweighted graph has the cutwidth of at least $3n$ being already too large for the desired no $\mathcal{O}^*(2^{\operatorname{ctw} - \varepsilon})$ lower bound.

We will follow their idea but will have to carry out multiple adaptations in order to bound the cutwidth of the arising graph.
For the sake of completeness, we provide the whole construction and emphasize the points that divert from the construction by Lokshtanov el al.~\cite{DBLP:journals/talg/LokshtanovMS18}. 	

\begin{theorem}
    Assuming SETH, there is no algorithm that solves the {\textsc{Max Cut}} problem on graphs given with linear arrangements of cutwidth $k$ in time $\ostar\big((2-\varepsilon)^{k}\big)$ for any positive value $\varepsilon$.
\end{theorem}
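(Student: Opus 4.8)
I would follow the classical SETH lower-bound scheme of Lokshtanov–Marx–Saurabh for \textsc{Max Cut} parameterized by pathwidth, but make the adaptations announced in the paragraph above. Concretely, I would reduce from \textsc{SAT} (or, more conveniently, from a suitable $d$-CSP-$2$ / $d$-\textsc{SAT} via \cref{lb:theo:lampis}), building a graph $G$ and a linear arrangement of $G$ of cutwidth $n + \bigoh(1)$ (where $n$ is the number of variables), together with a threshold $t$, so that $G$ has a cut of size at least $t$ iff the formula is satisfiable. The core idea is unchanged: encode a truth assignment of the $n$ variables by the side of the cut on which $n$ designated ``variable'' vertices lie, propagate this choice along $n$ horizontal ``path-like'' structures (one per variable), and use a constant-size clause gadget for each clause that is connected to only the $d$ relevant variable-paths. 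The essential difference from LMS is that, wherever their construction uses an edge of large weight $w$ realized as $w$ internally-disjoint paths of length $3$ between the same pair of endpoints, I would instead spread these parallel paths out so that no single cut of the linear arrangement is crossed by more than $\bigoh(1)$ of them for a fixed pair; since a variable is carried by only one path and each clause touches only $d$ paths, any cut of the arrangement then meets only $n + \bigoh_d(1)$ edges.

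**Key steps, in order.** First, I would fix the weighted intermediate graph exactly as in LMS (variable gadgets, clause gadgets, the weighting that makes the cut size of a ``consistent'' assignment exactly computable), and then describe the \emph{unweighted} realization: replace a weight-$w$ edge $uv$ not by $w$ length-$3$ paths clustered together but by $w$ length-$3$ paths inserted at $w$ different ``times'' along the arrangement, or alternatively replace heavy edges by $\bigoh(1)$-cutwidth subdivided gadgets that still force the two endpoints to be separated in any maximum cut. Second, I would prove the arithmetic correctness: a \textsc{Max Cut} of $G$ that puts the variable vertices on sides encoding an assignment $\tau$ has a precisely determined size, increased by a fixed amount per satisfied clause, so that $G$ admits a cut of size $\geq t$ iff $\tau$ satisfies all clauses. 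This splits, as usual, into the ``completeness'' direction (given a satisfying $\tau$, explicitly describe a cut of size $t$, checking each gadget locally) and the ``soundness'' direction (given a cut of size $\geq t$, argue every gadget must be cut optimally, read off $\tau$ from the variable vertices, and deduce satisfiability). Third, I would describe the linear arrangement explicitly—place the vertices of each clause gadget consecutively, interleave them with the segments of the $n$ variable paths in the order in which clauses are processed, and put the ``endpoint'' cliques / terminals at the two ends—and then bound the cutwidth by grouping edges into types (propagation edges along the $n$ paths, intra-gadget edges, clause-to-path edges) and checking each group contributes at most $n$ or at most $\bigoh_d(1)$ to any cut, exactly as in the \textsc{Induced Matching} and \textsc{Hamiltonian Cycle} cutwidth bounds of \cref{sec:mim,sec:hc-ub}. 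Finally, I would assemble the reduction: assuming an $\ostar((2-\varepsilon)^{\ctw})$ algorithm, run it on $G$ to solve $d$-\textsc{SAT} in time $\ostar((2-\varepsilon)^{n + \bigoh_d(1)}) = \ostar((2-\varepsilon)^n)$, contradicting SETH.

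**Main obstacle.** The real work is the cutwidth bookkeeping: LMS's weights force the naive unweighted realization to have cutwidth $\Theta(n \cdot \max\text{weight})$, which is far too large, so the crux is redesigning the heavy-edge gadgets and the variable-propagation paths so that (i) they still enforce the exact cut-size arithmetic needed for correctness, yet (ii) along the chosen linear arrangement every vertical cut is crossed by only $n + \bigoh_d(1)$ edges—in particular the ``bundles'' of parallel paths realizing a single heavy edge must be temporally staggered so a cut sees at most one path of each bundle at a time, and the clause gadgets must have bounded degree and touch only the $d$ paths of their variables. Verifying that this staggering does not break the local optimality arguments in the soundness direction (a maximum cut might, a priori, exploit the staggered structure) is the delicate point; I expect the clean way around it is to make each heavy-edge gadget a disjoint union of identical $\bigoh(1)$-size ``separator-forcing'' components whose \emph{individual} maximum cut unambiguously separates the two endpoints, so that optimality is forced component-by-component regardless of where the components sit in the arrangement.
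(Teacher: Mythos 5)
Your proposal has a genuine gap at its core. You say you would keep the weighted intermediate graph ``exactly as in LMS,'' where the odd-cycle edges of each clause gadget carry weight $\Theta(n)$, and then tame the cutwidth of the unweighted realization by ``temporally staggering'' the $w$ parallel length-$3$ paths that replace a weight-$w$ edge, so that ``a cut sees at most one path of each bundle at a time.'' This cannot work: if $u$ and $v$ are the two endpoints of such a bundle, then in any linear arrangement they occupy two fixed positions, and every cut strictly between them is crossed by every single one of the $w$ paths (each path must get from $u$ on one side to $v$ on the other). No placement of the internal subdivision vertices can lower this below $w$. So with LMS's $3n$-weight clause edges you are stuck with cutwidth $\Theta(n)$ coming from a single clause gadget, which destroys the $n+\bigoh_d(1)$ bound you need.

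The fix the paper actually uses is to rework the weighted intermediate graph itself so that \emph{no} edge has weight exceeding $\bigoh_d(1)$, after which the standard weight-to-parallel-paths replacement is harmless. Concretely: the single synchronization vertex of LMS becomes a synchronization path $Q$ on $2m$ vertices; each variable is represented by a path $P_i$ on $2m$ vertices rather than a single vertex; and, crucially, the clause cycle $\hat C_j$ for a clause on $t(j)\leq d$ literals uses edge-weight $3t(j)$ rather than $3n$. The weight $3t(j)$ is enough to force a unique uncut cycle edge because the clause gadget only touches $\bigoh(d)$ unit-weight edges (one to $Q$ plus $2t(j)$ to the variable paths), so the penalty for cutting fewer than $4t(j)$ cycle edges already outweighs whatever one could gain on the incident light edges. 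Your proposal does correctly identify that the clause gadget should only touch the $d$ relevant paths, but it misses that this bounded interaction is precisely what lets the \emph{weights themselves} be shrunk to $\bigoh_d(1)$; without shrinking the weights, no amount of arrangement cleverness rescues the cutwidth.
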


\begin{proof}
    For a partition $(V_0, V_1)$ of the vertex set of a graph $G$, we say that an edge $e$ of $G$ is \emph{crossing this partition} (or simply \emph{is crossing}) if $e$ has one end-vertex in $V_0$ and the other in~$V_1$.
    
    Let $d \geq 3$ be an arbitrary but fixed integer treated as a constant in the following.
    We will show that if we can solve \textsc{Max Cut} in time $\mathcal{O}^*((2-\varepsilon)^{\operatorname{ctw}})$ for some $\varepsilon > 0$, then we can also solve $d$-\textsc{SAT} in time $\mathcal{O}^*((2-\varepsilon)^n)$ where $n$ denotes the number of variables in the instance.
    Since $d$ is chosen arbitrarily and independently of $\varepsilon$, this would contradict SETH.
    For this, let $I$ be an arbitrary instance of $d$-\textsc{SAT}.
    Let $U = \{v_1, \dots, v_n\}$ and $C = \{C_1, \dots, C_m\}$ denote the sets of variables and clauses of $I$, respectively.
    For every $j \in [m]$ let $b^j_1, \dots, b^j_{t(j)}$ denote the set of variables occurring in the clause $C_j$, in particular, we then have $0 < t(j) \leq d$.
    We may assume that no variable occurs both positively and negatively in $C_j$ as otherwise, this clause can be discarded without changing the satisfiability of the instance.
    In the remainder of the proof we construct an equivalent instance of \textsc{Max Cut} of cutwidth upper-bounded by $n + \mathcal{O}_d(1)$ together with a linear arrangement of this cutwidth.

    To simplify the arguments, we will first provide an instance of \textsc{Weighted Max Cut} equivalent to $I$. After that we will use a simple replacement used by Lokshtanov et al.~\cite{DBLP:journals/talg/LokshtanovMS18} to obtain an equivalent instance of (unweighted) \textsc{Max Cut} of cutwidth bounded by $n + \mathcal{O}_d(1)$.

    \begin{figure}[t]
        \centering
        \includegraphics[width=0.8\textwidth]{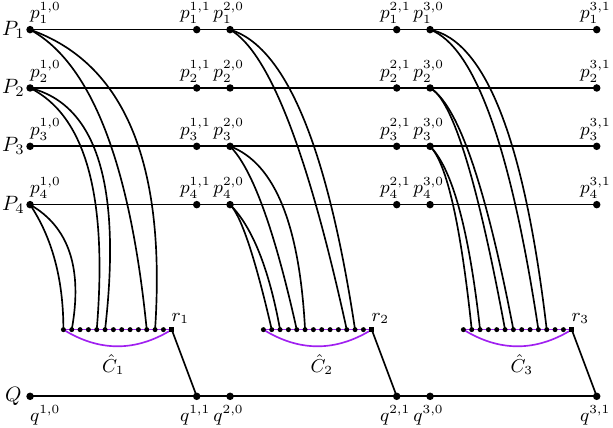}
        \caption{An example of the weighted graph for $d = 3$, $n = 4$, and clauses $\overline{v_1} \lor v_2 \lor v_4$, $\overline{v_1} \lor v_3 \lor \overline{v_4}$, and $v_1 \lor \overline{v_2} \lor \overline{v_3}$. The black edges have unit weight while a purple edge has the weight of 3 because every clause contains $d = 3$ variables.}
        \label{fig:maxcut-lb}
    \end{figure}
    
    So we now construct a weighted graph $(G, \omega \colon E(G) \to \bN^+)$ and refer to \cref{fig:maxcut-lb} for illustration. 
    Unless otherwise specified, the edges have unit weights---only the edges of the cycles $\hat C_1, \dots, \hat C_m$ defined later will possibly have non-unit weights.
    First, we create a so-called \emph{synchronization path} $Q = q^{1,0} q^{1,1} q^{2,0} q^{2,1} \dots q^{m,0} q^{m,1}$ on $2m$ vertices. 
    The target value of the number of crossing edges, also called the \emph{budget}, defined later will ensure that all edges of the synchronization path are crossing, i.e., every second vertex belongs to the same side of the partition.  
    Thus, the edges with one end-point on the synchronization path will later ensure that some vertices are assigned to a fixed side of the partition.
    Next, for every $i \in [n]$, we create a path $P_i = p_i^{1,0} p_i^{1,1} p_i^{2,0} p_i^{2,1} \dots p_i^{m,0} p_i^{m,1}$ on $2m$ fresh vertices.
    The budget will, again, ensure that all edges of $P_i$ are crossing and therefore, there are only two possible ways how the vertices of $P_i$ can be assigned to the sides of the partition---these two options will then correspond to two truth-value assignments of the variable $v_i$.
    Next for every $j \in [m]$, we create an odd cycle 
    \[  
        \hat C_j = c_j^{1,1}, c_j^{1,2}, c_j^{1,3}, c_j^{1,4}, c_j^{2,1}, c_j^{2,2}, c_j^{2,3}, c_j^{2,4}, \dots, c_j^{t(j),1}, c_j^{t(j),2}, c_j^{t(j),3}, c_j^{t(j),4}, r_j
    \]
    on $4t(j)+1$ fresh vertices such that every edge of this cycle has weight $3t(j)$. 
    For every $j \in [m]$, we also add an edge $q^{j,1} r_j$. 
    Finally, for every $j \in [m]$ and every $z \in [t(j)]$, we proceed as follows. 
    Let $i = i(j, z) \in [n]$ be such that $b^j_z = v_i$, i.e., $v_i$ is the $z$-th variable occurring in the clause $C_j$.
    We define the vertex $s(j, z) = p_i^{j, 0}$.
    If $v_i$ occurs positively in $C_j$, we add the edges $c_j^{z, 1} s(j, z)$ and $c_j^{z, 2} s(j, z)$ and we define the value $f(j,z) = 1$. 
    Otherwise, i.e., if $v_i$ occurs negatively in $C_j$, we add the edges $c_j^{z, 2} s(j, z)$ and $c_j^{z, 3} s(j, z)$ and we define the value $f(j,z) = 2$. 
    This concludes the construction of the weighted graph $(G, \omega)$.
    In only remains to define the total weight of the crossing edges in the sought solution.

    Before doing so, let us emphasize how much $(G, \omega)$ differs from the construction by Lokshtanov et al.~\cite{DBLP:journals/talg/LokshtanovMS18}.
    First, the paths $P_1, \dots, P_n$: Lokshtanov et al. use a single vertex for every $i \in [n]$ and then make it adjacent to, in worst-case, at least $2m$ vertices of the clause gadgets---leading to too large cutwidth.
    Similarly, they use a single synchronization-vertex instead of the path $Q$, and we had to eliminate it for the same reason.
    Further, in their reduction, for every $j \in [m]$, all edges of the odd cycle $\hat{C}_j$ have the weight of $3n$: to make the graph unweighted they replace each such edge by $3n$ paths of length $3$ with the same end-vertices---this leads to the cutwidth of at least $3n$ which is too large for our purposes. 
    
    To simplify the following arguments, we will now partition the edges of $G$ into the so-called \emph{groups} and define the so-called \emph{budgets} corresponding to each of the groups. 
    First, the set $E_1$ contains all edges on the synchronization path $Q$ as well as the edges with one end-vertex on $Q$ and the other in $\{r_1, \dots, r_m$\}.
    We define the budget $B_1 = 3m-1$.
    Second, the set $E_2$ contains all edges of the paths $P_1, \dots, P_n$.
    We define the budget $B_2 = n \cdot (2m-1)$. 
    Finally, for every $j \in [m]$, the set $E^j_3$ contains 
    all edges incident with $V(\hat C_j)$ apart from those in $E_1$.
    We define $B^j_3 = 12t(j)^2 + t(j) + 1$.
    With those we can define $E_3 = \bigcup_{j \in [m]} E_3^j$ and $B_3 = \sum_{j \in [m]} B_3^j$.
    Finally, we define $B = B_1 + B_2 + B_3$.
    Note that $E_1, E_2, E_3^1, E_3^2, \dots, E_3^m$ partition the edge set of $G$.
    Now we prove some useful properties of the partitions of the vertex set of $G$.

    \begin{observation}\label{obs:crossing-path}
        Let $x \in \mathbb{N}^+$, let $P = w^{1,0} w^{1,1} w^{2,0} w^{2,1} \dots w^{x,0} w^{x,1}$ be a path on $2x$ vertices in $G$, and let $(V_0, V_1)$ be a partition of $V(G)$.
        Then exactly $2x-1$ edges of $P$ are crossing $(V_0, V_1)$ if and only if there exists an index $y \in \{0, 1\}$ with $V_0 \cap V(P) = \{w^{\ell, y} \mid \ell \in [x]\}$.
    \end{observation}

    \begin{proof}
        The only way to make all $2x-1$ edges of the path $P$ crossing is to enforce that the vertices along this path alternate between $V_0$ and $V_1$, i.e., either to have $V_0 \cap V(Q) = \{w^{1,0}, w^{2,0}, \dots, w^{m,0}\}$ and $V_1 \cap V(Q) = \{w^{1,1}, w^{2,1}, \dots, w^{m,1}\}$, or to have $V_0 \cap V(Q) = \{w^{1,1}, w^{2,1}, \dots, w^{m,1}\}$ and $V_1 \cap V(Q) = \{w^{1,0}, w^{2,0}, \dots, w^{m,0}\}$.
        On the other hand, if the vertices alternate like this, then all edges of $P$ are crossing.
    \end{proof}    
    
    \begin{claim}\label{claim:e-1-edges}
        For any partition $(V_0, V_1)$ of the vertex set of $G$, at most $B_1$ edges of $E_1$ are crossing. 
        Moreover, if exactly $B_1$ edges of $E_1$ are crossing, then there exists an index $y \in \{0, 1\}$ with first, $V_0 \cap V(Q) = \{q^{\ell, y} \mid \ell \in [m]\}$, and second, for all $j \in [m]$ we have $r_j \in V_y$.
    \end{claim}

    \begin{proof}
        First, $|E_1| = B_1$ so the first part of the claim is obvious.
        Second, if all $B_1$ edges of $E_1$ are crossing, in particular, all edges of $Q$ are crossing so by \cref{obs:crossing-path}, the second part of the claim follows. 
        Finally, all edges of $E_1$ being crossing also implies that for every $j \in [m]$ we have $r_j \in V_y$ if and only if $q^{j,1} \in V_{1-y}$.
        This, in turn, is equivalent to $q^{j,0} \in V_{y}$ concluding the proof.
    \end{proof}

    \begin{claim}\label{claim:e-2-edges}
         For any partition $(V_0, V_1)$ of the vertex set of $G$, at most $B_2$ edges of $E_2$ are crossing. 
         Moreover, if exactly $B_2$ edges in $E_2$ are crossing, then for every $i \in [n]$ there exists an index $y(i) \in \{0, 1\}$ with $V_0 \cap V(P_i) = \{p_i^{\ell, y(i)} \mid \ell \in [m]\}$.
    \end{claim}

    \begin{proof}
        The set $E_2$ consists of $n$ vertex-disjoint paths consisting of $2m-1$ edges each and therefore, at most all $n \cdot (2m-1) = B_2$ edges in $E_2$ are crossing so the first part trivially holds.
        Further, if all $B_2$ edges in $E_2$ are crossing, in particular, for every $i \in [n]$, all edges of the path $P_i$ are crossing.       
        \cref{obs:crossing-path} implies that there exists an index $y(i) \in \{0,1\}$ with $V_0 \cap V(P_i) = \{p_i^{\ell, y(i)} \mid \ell \in [m]\}$
    \end{proof}

    \begin{claim}\label{claim:e-3-edges}
         Let $j \in [m]$. For any partition $(V_0, V_1)$ of the vertex set of $G$, the total weight of the crossing edges in $E^j_3$ is at most $B^j_3$. 
         Moreover, if the total weight of the crossing edges in $E^j_3$ is exactly $B^j_3$, then the following hold.
         First, there exists a unique not crossing edge $e$ on the cycle $\hat C_j$.
         Furthermore, there exists an index $z^*(j) \in [t(j)]$ such that $c_j^{z^*(j), f(j, z^*(j))}$ and $c_j^{z^*(j), f(j, z^*(j))+1}$ are the end-vertices of $e$.
         Finally, both edges $c_j^{z^*(j), f(j,z^*(j))} s(j,z^*(j))$ and $c_j^{z^*(j), f(j,z^*(j)+1} s(j,z^*(j))$ 
         are crossing.
    \end{claim}
    Before proving the claim, let us briefly explaining what it means less formally.
    The set $E_3^j$ consists of the edges of the odd cycle $\hat C_j$ together with all edges with one end-point on $\hat C_j$ and the other on $V(P_1) \cup \dots V(P_n)$.
    The claim states that if the total weight of the crossing edges in $E_3^j$ is $B_3^j$, then on the cycle $\hat C_j$ only one edge is not crossing.
    And the end-points of this edge (therefore, lying on the same side of the partition) form a triangle with a vertex in $V(P_1) \cup \dots V(P_n)$ on a different side of this partition---thus making the two edges of this triangle crossing.
    \begin{proof}
        Recall that $\hat C_j$ is an odd cycle, in particular, it is not bipartite so at least one edge of $\hat C_j$ is not crossing.
        First, consider a partition $(V_0, V_1)$ such that at least two edges of $\hat C_j$ are not crossing.
        In this case at most $4t(j) - 1$ edges of $\hat C_j$ and at most all of the $2t(j)$ edges between $\hat C_j$ and $V(P_1) \cup \dots \cup V(P_n)$ are crossing.
        Recall that each of the former edges has the weight of $3t(j)$ and each of the latter edges has a unit weight. 
        Then the total weight of crossing edges in $E^j_3$ is at most
        \[
            (4t(j) - 1) \cdot 3t(j) + 2t(j) = 12t(j)^2 - t(j) < 12t(j)^2 + t(j) + 1 = B_3^j.
        \]

        So in the following we may assume that for a partition $(V_0, V_1)$, exactly one of the edges of $\hat C_j$ is not crossing, i.e., exactly $4t(j)$ of these edges are crossing.
        Thus, the total weight of the edges in $\hat C_j$ that are crossing is exactly $4t(j) \cdot 3t(j) = 12t(j)^2$.
        Now consider the matching $M$ consisting of the $t(j)$ edges 
        \[
            c_j^{1, f(j,1)} c_j^{1, f(j,1)+1}, c_j^{2, f(j,2)} c_j^{2, f(j,2)+1}, \dots, c_j^{t(j), f(j,t(j))} c_j^{t(j), f(j,t(j))+1}
        \]
        that all belong to $\hat C_j$.
        Recall that the end-vertices of these edges are exactly the end-vertices of the $2t(j)$ edges with one end-vertex in $V(\hat C_j)$ and the other one in $V(P_1) \cup \dots \cup V(P_n)$.
        Furthermore, for every $z \in [t(j)]$, the vertices $c_j^{z, f(z,1)}$ and $c_j^{z, f(z,1)+1}$ have a common neighbor $s(j,z)$ in $V(P_1) \cup \dots \cup V(P_n)$.
        
        First, consider the case that all of the edges in $M$ are crossing.
        Then for every $z \in [t(j)]$, one of the vertices $c_j^{z, f(j,z)}$ and $c_j^{z, f(j,z)+1}$ belongs to $V_0$ and the other belongs to $V_1$ -- thus, exactly one of the edges $c_j^{z, f(j,z)} s(j,z)$ and $c_j^{z, f(j,z)+1} s(j,z)$ is crossing (no matter to which of the two sets $V_0$ and $V_1$ the vertex $s(j,z)$ belongs to).
        Therefore, the total weight of the crossing edges in $E^j_3$ is equal to $12t(j)^2 + t(j) < B^j_3$.

        Finally, we remain with the case where there exists a unique index $z^*(j) \in [t(j)]$ such that the edge $c_j^{z^*(j), f(j,z^*(j))}, c_j^{z^*(j), f(j,z^*(j))+1}$ is not crossing.
        First, for every $z \neq z^*(j) \in [t(j)]$, one of the vertices $c_j^{z, f(j,z)}$ and $c_j^{z, f(j,z)+1}$ then belongs to $V_0$ and the other belongs to $V_1$---thus, exactly one of the edges $c_j^{z, f(j,z)} s(j,z)$ and $c_j^{z, f(j,z)+1} s(j,z)$ is crossing.
        And further, either both edges $c_j^{z^*(j), f(j,z^*(j))} s(j,z^*(j))$ and $c_j^{z^*(j), f(j,z^*(j))+1} s(j,z^*(j))$ are crossing or none of them.
        First, this implies that the total weight of crossing edges in $E^j_3$ is at most
        \[
              12t(j)^2 + (t(j) - 1) + 2 = 12t(j)^2 + t(j) + 1 = B^j_3.
        \]
        And second, the upper bound $B_j^3$ can only be achieved if both edges $c_j^{z^*(j), f(j,z^*(j))} s(j,z^*(j))$ and $c_j^{z^*(j), f(j,z^*(j))+1} s(j,z^*(j))$ are crossing.
        This concludes the proof.
    \end{proof}

    With these claims in hand we are ready to prove that the instance $I$ is satisfiable if and only if the weighted graph $(G, \omega)$ admits a partition such that the total weight of the edges crossing this partition is at least $B$.

    \begin{claim}
        If $I$ is satisfiable, the weighted graph $(G, \omega)$ admits a partition such that the total weight of the edges crossing this partition is at least $B$.
    \end{claim}
    \begin{proof}
        Let $\phi \colon \{v_1, \dots, v_n\} \to \{0, 1\}$ be a satisfying assignment of $I$.
        We will now construct a partition $(V_0, V_1)$ of the vertex set of $G$ such that the total weight of the crossing edges is at least $B$.

        First, we put the vertices $q^{1,0}, q^{2,0}, \dots, q^{m,0}$ into $V_0$ and the vertices $q^{1,1}, q^{2,1}, \dots, q^{m,1}$ into $V_1$.
        Further, we put all vertices $r_1, r_2, \dots, r_m$ into $V_0$.
        Note that now all edges in $E_1$ are crossing, i.e., the total weight of the crossing edges in $E_1$ is equal to $(2m-1) + m = B_1$.

        Next, for every $i \in [n]$ we proceed as follows.
        We put the vertices $p_i^{1, \phi(i)}, p_i^{2, \phi(i)}, \dots, p_i^{m, \phi(i)}$ into $V_1$ and the vertices $p_i^{1, 1-\phi(i)}, p_i^{2, 1-\phi(i)}, \dots, p_i^{m, 1-\phi(i)}$ into $V_0$.
        Note that no all edges of the path $P_i$ are crossing.
        This implies that the total weight of crossing edges in $E_2$ is equal to $n \cdot (2m-1) = B_2$.

        Finally, for every $j \in [m]$, we partition the vertices of the cycle $\hat C_j$ as follows.
        The assignment $\phi$ satisfies the instance $i$ so there exists an index $z = z(j) \in [t(j)]$ such that we have $\phi(b^j_z) = 1$ if and only if $b^j_z$ occurs positively in $C_j$---note that there might be multiple $z \in [t(j)]$ with this property, we fix an arbitrary one.
        First, for all $1 \leq z' < z$ we put:
        \begin{itemize}
            \item the vertex $c_j^{z',1}$ into $V_1$,
            \item the vertex $c_j^{z',2}$ into $V_0$,
            \item the vertex $c_j^{z',3}$ into $V_1$,
            \item the vertex $c_j^{z',4}$ into $V_0$,
        \end{itemize}
        i.e., starting with $c_j^{1,1}$ we put vertices alternatingly into $V_1$ and $V_0$ until we reach $c_j^{z,1}$.
        Second, if $b^j_z = v_{i(j,z)}$ occurs positively in $C_j$, we put 
        \begin{itemize}
            \item $c_j^{z,1}$ into $V_1$,
            \item $c_j^{z,2}$ into $V_1$,
            \item $c_j^{z,3}$ into $V_0$,
            \item $c_j^{z,4}$ into $V_1$.
        \end{itemize}
        Otherwise, i.e., if $b^j_z = v_{i(j,z)}$ occurs negatively in $C_j$, we put 
        \begin{itemize}
            \item $c_j^{z,1}$ into $V_1$,
            \item $c_j^{z,2}$ into $V_0$,
            \item $c_j^{z,3}$ into $V_0$,
            \item $c_j^{z,4}$ into $V_1$.
        \end{itemize}
        Finally, for every $z < z' \leq t(j)$ we put 
        \begin{itemize}
            \item the vertex $c_j^{z',1}$ into $V_0$,
            \item the vertex $c_j^{z',2}$ into $V_1$,
            \item the vertex $c_j^{z',3}$ into $V_0$,
            \item the vertex $c_j^{z',4}$ into $V_1$,
        \end{itemize}
        i.e., starting with $c_j^{z+1, 1}$ we put the vertices alternatingly in $V_0$ and $V_1$ until we reach $r_j$.
        
        Recall that we already assigned the vertex $r_j$ to $V_0$.
        By construction, there exists exactly one edge on $\hat C_j$ that is not crossing, namely the edge $c_j^{z, f(j,z)} c_j^{z, f(j,z) + 1}$.
        So the total weight of the crossing edges on the cycle $\hat C_j$ is equal to $4t(j) \cdot 3t(j) = 12t(j)^2$.
        Also, for every $z' \neq z \in [t(j)]$, exactly one of the vertices $c_j^{z', f(j,z')}$ and $c_j^{z', f(j,z') + 1}$ belongs to $V_0$ so exactly one of the edges $c_j^{z', f(j,z')} s(j,z')$ and $c_j^{z', f(j,z')+1} s(j,z')$ is crossing.
        Furthermore, the following holds.

        If $v_{i(j,z)}$ occurs positively in $C_j$, then by the choice of the value $z$, we have $\phi(v_{i(j,z)}) = 1$ and therefore (by the construction of $V_0 \cap V(P_i)$ and $V_1 \cap V(P_i)$) we have $p_i^{j,0} \in V_0$.
        Recall that by construction above, we have $c_j^{z,1} \in V_1$ and $c_j^{z,2} \in V_1$, i.e., we have $c_j^{z,f(j,z)} \in V_1$ and $c_j^{z,f(j,z)+1} \in V_1$.
        Therefore, the two edges $c_j^{z,f(j,z)} p_i^{j,0}$ and $c_j^{z,f(j,z)+1} p_i^{j,0}$ are crossing.

        If $v_{i(j,z)}$ occurs negatively in $C_j$, then by the choice of the value $z$, we have $\phi(v_{i(j,z)}) = 0$ and therefore (by the construction of $V_0 \cap V(P_i)$ and $V_1 \cap V(P_i)$) we have $p_i^{j,0} \in V_1$.
        Recall that by construction above, we have $c_j^{z,2} \in V_0$ and $c_j^{z,3} \in V_0$, i.e., we have $c_j^{z,f(j,z)} \in V_0$ and $c_j^{z,f(j,z)+1} \in V_0$.
        Therefore, the two edges $c_j^{z,f(j,z)} p_i^{j,0}$ and $c_j^{z,f(j,z)+1} p_i^{j,0}$ are crossing.
        
        Altogether, the total weight of the crossing edges in $E_3^j$ is at least:
        \[
            12t(j)^2 + (t(j) - 1) + 2 = 12t(j)^2 + t(j) + 1 = B_j^3
        \]
        And therefore, the total weight of the crossing edges is at least
        \[
            B_1 + B_2 + \sum_{j \in [m]} B_3^j = B
        \]
        as desired.
    \end{proof}

    \begin{claim}
        If the weighted graph $(G, \omega)$ admits a partition such that the total weight of the edges crossing this partition is at least $B$, then $I$ is satisfiable.
    \end{claim}
    \begin{proof}
        Let $(V_0, V_1)$ be a partition of $V(G)$ such that the total weight of the edges crossing it is at least $B$.
        Without loss of generality we may assume that $q^{1,0} \in V_0$ holds.
        Recall that the sets $E_1, E_2, E_3^1, E_3^2, \dots, E_3^m$ partition the edge set of $G$.
        By \cref{claim:e-1-edges}, \cref{claim:e-2-edges}, and \cref{claim:e-3-edges} the total weight of crossing edges in $E_1, E_2, E_3^1, \dots, E_3^m$ is at upper-bounded by $B_1, B_2, B_3^1, \dots, B_3^m$, respectively.
        At the same time, $E_1, E_2, E_3^1, \dots, E_3^m$ partition the edge set of $G$ and we have $B_1 + B_2 + B_3^1 + \dots + B_3^m = B$.
        Hence, the total weight of crossing edges in $E_1, E_2, E_3^1, \dots, E_3^m$ is at exactly $B_1, B_2, B_3^1, \dots, B_3^m$, respectively.
        Then we know the following:
        \begin{itemize}
            \item 
            We have $q^{1,0}, q^{2,0}, \dots, q^{m,0} \in V_0$ and $q^{1,1}, q^{2,1}, \dots, q^{m,1} \in V_1$.
            Furthermore, for every $j \in [m]$, we have $r_j \in V_0$ (see \cref{claim:e-1-edges}).

            \item For every $i \in [n]$, there exists an index $y(i) \in \{0,1\}$ with $V_0 \cap V(P_i) = \{p_i^{\ell, y(i)} \mid \ell \in [m]\}$ (see \cref{claim:e-2-edges}).

            \item For every $j \in [m]$, there exists an index $z^*(j) \in [t(j)]$ such that the following holds.
            The edge $c_j^{z^*(j), f(j, z^*(j))} c_j^{z^*(j), f(j, z^*(j))+1}$ is the unique edge of $\hat{C}_j$ not crossing the partition. 
            Furthermore, both edges $c_j^{z^*(j), f(j,z^*(j))} s(j,z^*(j))$ and $c_j^{z^*(j), f(j,z^*(j))+1} s(j,z^*(j))$ are crossing (see \cref{claim:e-3-edges}).
        \end{itemize}
        
        Now we define the truth-value assignment $\phi \colon \{v_1, \dots, v_n\} \to \{0, 1\}$ via $\phi(v_i) = 1 - y(i)$ for every $i \in [n]$.
        In the remainder of this proof we will show that $\phi$ is a satisfying assignment for $I$.
        For this let $j \in [m]$ be arbitrary.
        Recall that the edges $c_j^{z^*(j), f(j,z^*(j))} s(j,z^*(j))$ and $c_j^{z^*(j), f(j,z^*(j))+1} s(j,z^*(j))$ are crossing.
        Let $i = i(j) \in [n]$ be such that $s(j,z^*(j)) = p_i^{j,0}$---by construction, the variable $u_i$ occurs in $C_j$.        
        Recall that on the path $\hat C_j - c_j^{z^*(j), f(j, z^*(j))} c_j^{z^*(j), f(j, z^*(j))+1}$ every edge is crossing, i.e., vertices alternate between $V_0$ and $V_1$ along this path, and we have $r_j \in V_0$.
        Now we make a case distinction.

        First, suppose that $v_i$ occurs positively in $C_j$.
        By construction, we then have $f(j,z^*(j)) = 1$.
        Thus, the distance between the vertices $r_j$ and $c_j^{z^*(j), 1}$ on the aforementioned path is odd and therefore, we have $c_j^{z^*(j), 1} \in V_1$.
        Since the edge $c_j^{z^*(j), 1} p_i^{j,0}$ is crossing, we then have $p_i^{j,0} \in V_0$ and hence, $\phi(v_i) = 1$ by construction.
        So $\phi$ satisfies the clause $C_j$.

        Now suppose that $v_i$ occurs negatively in $C_j$.
        By construction, we then have $f(j,z^*(j)) = 2$.
        Thus, the distance between the vertices $r_j$ and $c_j^{z^*(j), 2}$ on the aforementioned path is even and therefore, we have $c_j^{z^*(j), 2} \in V_0$.
        Since the edge $c_j^{z^*(j), 2} p_i^{j,0}$ is crossing, we then have $p_i^{j,0} \in V_1$ and hence, $\phi(v_i) = 0$ by construction.
        So $\phi$ satisfies the clause $C_j$.

        Altogether, we obtain that $\phi$ is indeed a satisfying assignment of $I$. 
    \end{proof}

    Now we show how to get rid of the weight function $\omega$ and obtain an equivalent insance $G'$ of (unweighted) \textsc{Max Cut}.
    To obtain it, we employ the following transformation by Lokshtanov et al.~\cite{DBLP:journals/talg/LokshtanovMS18}.
    We define the unweighted graph $G'$ which is obtained from $G$ by replacing every edge $e = \{u, v\}$ with $\omega(e)$ paths of length $3$ (i.e., 3 edges long) between $u$ and $v$, the internal vertices of these paths are then fresh vertices of degree $2$ in $G'$.

    \begin{lemma}[\cite{DBLP:journals/talg/LokshtanovMS18}]
        Let $W = \sum_{e \in E(G)} \omega(e)$.
        Then the graph $(G, \omega)$ admits a partition in which the total weight of the crossing edges is at least $B$ if and only if the graph $G'$ admits a partition in which at least $B' = 2W + B$ edges are crossing.
    \end{lemma}

    Finally, we show that the cutwidth of $G'$ is at most $n+\mathcal{O}_d(1)$.

    \begin{claim}
        The cutwidth of the graph $G'$ is at most $n+\mathcal{O}_d(1)$.
        Moreover, the graph $G'$ and a linear arrangement $\ell'$ of $G'$ of cutwidth at most $n+\mathcal{O}_d(1)$ can be computed from $I$ in polynomial time.
    \end{claim}

    \begin{proof}
        To provide the desired linear arrangement, we describe the ordering in which the vertices of the weighted graph $G$ are placed, then explain how to put the subdivision-vertices that arose in the construction of $G'$, and finally, provide an upper bound on the cutwidth of this linear arrangement.

        As usually done in similar lower-bound constructions, we arrange vertices ``columnwise''.
        We initialize $\ell$ to be empty and in the following, we always add vertices to the right of the last vertex placed so far.
        We iterate through $j = 1, \dots, m$ and place the vertices of the $j$. th ``column'', i.e., we add the vertices 
        \[
            q^{j, 0}, q^{j, 1}, p_1^{j, 0}, p_1^{j,1}, p_2^{j, 0}, p_2^{j,1}, \dots, p_n^{j, 0}, p_n^{j,1}
        \]
        in this ordering and after that, we add the vertices of the odd cycle $\hat C_j$ in the ordering they appear on this cycle starting with $c_j^{1,1}$.
        Observe that after we finish this procedure, every vertex of $G$ has been placed exactly once and the vertices of $V(G') \setminus V(G)$ are still missing.
        Let $\ell$ denote this linear arrangement of $G$.

        Now for every edge $uv$ of $G$ let $a_1, b_1, \dots, a_{\omega(uv)}, b_{\omega(uv)}$ be the vertices of $G'$ such that the paths $u a_1 b_1 v, \dots, u a_{\omega(uv)} b_{\omega(uv)} v$ in $G'$ arose by the subdivision of the edge $uv$.
        Without loss of generality, let $u$ occur before $v$ in the constructed linear arrangement---otherwise, we swap the roles of $u$ and $v$.
        To obtain $\ell'$, we insert into $\ell$, for every $uv \in E(G)$ and every $i \in [\omega(uv)]$, the vertices $a_i$ and $b_i$ into arbitrary positions between $u$ and $v$ in such a way that $a_i$ occurs before $b_i$.
        This concludes the construction of the linear arrangement $\ell$ of $G'$.
        
        It remains to bound the cutwidth of $\ell$.
        We say that two edges are \emph{non-overlapping} in some linear arrangement if no cut of this arrangement crosses both of these edges.
        For the sake of readability we first, state some properties of $\ell$ and then conclude with an upper bound on the cutwidth of $\ell'$:
        \begin{itemize}
            \item Any two different edges of $Q$ are non-overlapping in $\ell$.
            \item For every $i \in [n]$, any two different edges of $P_i$ are non-overlapping in $\ell$.
            \item For every $j \neq j' \in [m]$, any edge incident with a vertex of $\hat C_j$ and any edge incident with a vertex of $\hat C_{j'}$ are non-overlapping.
        \end{itemize}
        Now, crucially, observe that by construction of $\ell'$ the following holds: if two edges $e$ and $e'$ of $G$ are non-overlapping in $\ell$, then any edge subdividing $e$ is non-overlapping with any edge subdividing $e'$ in $\ell'$.
        
        So now consider an arbitrary cut in $\ell'$.
        The above properties imply the following:
        \begin{itemize}
            \item Edges of $Q$ have unit weight in $\omega$ so at most one edge subdividing an edge of $Q$ crosses the current cut.
            \item The edges of $P_1, \dots, P_n$ are also of unit weight.
            Thus, for every $i \in [n]$, at most one edge subdividing an edge of $P_i$ crosses this cut. Hence, at most $n$ edges subdividing the edges of $P_1, \dots, P_n$ cross the current cut.
            \item There exists at most one index $j \in [m]$ such that the edges of $G'$ subdividing the edges incident with $\hat C_j$ in $G$ cross this cut.
            Now recall that the total weight of the edges incident with $\hat C_j$ in $G$ is upper-bounded by 
            \[
                1 + 2 \cdot d + (4d + 1) \cdot 3 d \in \bigoh_d(1): 
            \]
            the first addend is for the unique edge with the other end-vertex in $V(Q)$, the second addend is for the edges with the other end-vertex in $V(P_1) \cup \dots \cup V(P_n)$, and the last one is for the edges of $\hat C_j$.
            Hence, the current cut is crossed by $\bigoh_d(1)$ edges subdividing the edges incident with $\hat C_1, \dots, \hat C_m$.
        \end{itemize}
        Altogether, any cut in $\ell'$ is crossed by at most $n + \mathcal{O}_d(1)$ edges and therefore, the cutwidth of $\ell'$ is bounded by this value.
        \end{proof}

    Now we are ready to conclude the proof of the theorem.
    Suppose there exists a value $\varepsilon$ such that the \textsc{Max Cut} problem can be solved in time $\mathcal{O}^*((2-\varepsilon)^k)$ when the input graph is provided with a linear arrangement of cutwidth $k$.
    Then for any fixed $d \geq 2$ we could solve the \textsc{$d$-SAT} problem in time $\mathcal{O}^*((2-\varepsilon)^n)$ where $n$ denotes the number of variables in the input instance as follows.
    Let $I$ be an input instance.
    As described above, in polynomial time we can compute a graph $G'$, an integer $B'$, a linear arrangement of $G'$ of cutwidth at most $n+\mathcal{O}_d(1)$ such that $I$ is satisfiable if and only if $G'$ admits a partition with at least $B'$ crossing edges.
    Now by our assumption, we can decide if $G'$ admits such a partition in time $\mathcal{O}^*((2-\varepsilon)^{n+\mathcal{O}_d(1)}) = \mathcal{O}^*((2-\varepsilon)^{n})$ (recall that $d$ is a constant) and output the answer.
    This would give a $\mathcal{O}^*((2-\varepsilon)^n)$ algorithm for \textsc{$d$-SAT} for any $d \geq 2$ contradicting SETH.
\end{proof}
 
 \section{Conclusion and Future Work}\label{sec:conclusion}

Our results together with previous work~\cite{DBLP:conf/stacs/BojikianCHK23,DBLP:journals/tcs/JansenN19} show an interesting variety of behaviors for the complexity of problems relative to treewidth/pathwidth vs.\ relative to cutwidth: We may get the same tight bound, a small decrease in complexity, or a substantial decrease. We also discovered two more rare examples of tight exponential bounds with non-integral bases, especially \textsc{Triangle Packing}, which has an integral base relative to treewidth. In our opinion, this makes parameterization by cutwidth a very good test bed for deepening the understanding of dynamic programming and width parameters.

There are several avenues for future work: (i) Edge-disjoint packing problems may be intractable for treewidth/pathwidth but may have tight bounds for cutwidth. (ii) Going beyond classical problems, it would be interesting to determine the tight complexity of $(\sigma, \rho)$-domination problems for cutwidth which is known for treewidth~\cite{DBLP:conf/soda/FockeMINSSW23}. (iii) Closing the gap for \textsc{List-$H$-Homomorphism} left by Groenland et al.~\cite{DBLP:conf/icalp/GroenlandMNPR24} is a challenging open question.

\bibliography{ref}

\end{document}